\pdfoutput=1
\documentclass[11pt]{article}
\usepackage[T1]{fontenc}
\usepackage{lmodern}
\usepackage[a4paper,margin=1in]{geometry}
\usepackage{amsmath,amssymb,amsthm}
\usepackage{tikz}
\usetikzlibrary{arrows.meta,positioning,calc}
\usepackage{booktabs}
\usepackage{listings}
\usepackage{algorithm}
\usepackage{algpseudocode}
\usepackage[hidelinks]{hyperref}
\theoremstyle{plain}
\newtheorem{theorem}{Theorem}[section]
\newtheorem{lemma}[theorem]{Lemma}
\newtheorem{proposition}[theorem]{Proposition}
\newtheorem{corollary}[theorem]{Corollary}
\theoremstyle{definition}
\newtheorem{definition}[theorem]{Definition}
\newtheorem{problem}[theorem]{Problem}
\theoremstyle{remark}
\newtheorem{remark}[theorem]{Remark}

\newcommand{\Suff}{\mathrm{Suff}}
\newcommand{\Ach}{\mathrm{Ach}}
\newcommand{\stops}{\mathrm{stops}}
\newcommand{\Non}{\mathrm{Non}}
\newcommand{\Unq}{\mathrm{Unq}}
\newcommand{\auth}{\mathrm{auth}}
\newcommand{\conf}{\mathrm{conf}}
\newcommand{\secure}{\mathrm{sec}}
\newcommand{\trustlat}{\mathbb{T}}
\newcommand{\goals}{\mathbb{G}}
\newcommand{\completion}{\mathcal{D}}
\newcommand{\attacks}{\mathcal{H}}
\newcommand{\blocker}{\mathrm{b}}
\newcommand{\weakest}{\mathrm{W}}
\newcommand{\strongest}{\mathrm{S}}
\newcommand{\sig}[2]{\mathsf{sig}_{#1}(#2)}
\newcommand{\ach}{\Vdash}
\newcommand{\numNProtocols}{ten}

\newcommand{\numNRows}{18}

\newcommand{\numMaxN}{80}

\newcommand{\numLBoundMaxStatesOne}{3\,247}
\newcommand{\numLBoundMaxStatesTwo}{3\,309\,263}

\newcommand{\numLBoundTimeout}{600}
\newcommand{\numLBtwoBudget}{2\,097\,152}
\newcommand{\numLCpsaLoadMax}{6.4}
\newcommand{\numLCpsaLoadMin}{3.6}
\newcommand{\numLCpsaMaxAny}{1.4}
\newcommand{\numLCpsaMaxIqr}{0.04}
\newcommand{\numLCpsaReps}{5}
\newcommand{\numLCpsaRunEnd}{2026-10-06 22:49:27}
\newcommand{\numLCpsaRunStart}{2026-10-06 22:46:06}
\newcommand{\numLCpsaShareMax}{99}
\newcommand{\numLCpsaShareMin}{96}
\newcommand{\numLCpsaStepLimit}{2000}
\newcommand{\numLCpsaStrandBound}{12}
\newcommand{\numLCpsaTimeout}{600}
\newcommand{\numLCpsaVersion}{4.4.9}
\newcommand{\numLDuneVersion}{3.19.1}
\newcommand{\numLGadgetMaxN}{6}

\newcommand{\numLLibLines}{2\,483}
\newcommand{\numLLoadMax}{4.9}
\newcommand{\numLLoadMin}{1.7}
\newcommand{\numLMachine}{Apple M3 Max, 14 cores, 36 GB}
\newcommand{\numLMaxH}{2\,048}
\newcommand{\numLMaxN}{80}
\newcommand{\numLMaxU}{6}
\newcommand{\numLMaxW}{29\,886}
\newcommand{\numLMinU}{2}
\newcommand{\numLNAlgAtBound}{459}

\newcommand{\numLNAlgRuns}{459}

\newcommand{\numLNAnalyserLeastAtBound}{5}
\newcommand{\numLNAnalyserPrunedAtBound}{3}
\newcommand{\numLNAnalyserRows}{5}
\newcommand{\numLNBoundRuns}{30}
\newcommand{\numLNBoundTimeouts}{8}

\newcommand{\numLNBthreeRuns}{520}
\newcommand{\numLNBtwoBudget}{362}
\newcommand{\numLNChannelAgree}{6}
\newcommand{\numLNChannelChanged}{4}
\newcommand{\numLNChannelRows}{6}
\newcommand{\numLNCpsaGoalRuns}{41}

\newcommand{\numLNCpsaMonotone}{13}
\newcommand{\numLNCpsaOverBound}{8}
\newcommand{\numLNCpsaPoints}{216}
\newcommand{\numLNCpsaRejected}{29}
\newcommand{\numLNCpsaRejectedRows}{5}
\newcommand{\numLNCpsaReplayRuns}{221}
\newcommand{\numLNCpsaRows}{13}
\newcommand{\numLNCpsaShrinkReplayRuns}{131}
\newcommand{\numLNDiffBoundedLines}{460}
\newcommand{\numLNDiffCore}{300}
\newcommand{\numLNDiffCoreMismatch}{0}
\newcommand{\numLNDiffCpsaFields}{17}
\newcommand{\numLNDiffCpsaRows}{13}
\newcommand{\numLNDigestInstances}{466}
\newcommand{\numLNDigestMismatch}{0}
\newcommand{\numLNFalse}{6}
\newcommand{\numLNGadget}{30}
\newcommand{\numLNGadgetAgree}{30}
\newcommand{\numLNGadgetTrusts}{1\,072}
\newcommand{\numLNGreedyFastest}{10}
\newcommand{\numLNGreedyInW}{13}
\newcommand{\numLNLevelwiseAgree}{13}
\newcommand{\numLNLevelwiseFasterThanLoop}{9}
\newcommand{\numLNLevelwiseFewerCalls}{8}
\newcommand{\numLNLoopFasterThanExh}{7}

\newcommand{\numLNPaddedRuns}{145}
\newcommand{\numLNRows}{18}
\newcommand{\numLNRowsAgree}{18}
\newcommand{\numLNScaleBlockerDominates}{15}
\newcommand{\numLNScaleRuns}{2\,305}
\newcommand{\numLNScaleSlowRuns}{15}

\newcommand{\numLNScaleTimeouts}{49}
\newcommand{\numLNSeeds}{10}
\newcommand{\numLNShrinkAgree}{13}
\newcommand{\numLNShrinkFasterThanLoop}{6}
\newcommand{\numLNShrinkFewerReplays}{8}
\newcommand{\numLNShrinkMoreCalls}{7}
\newcommand{\numLNShrinkRuns}{458}

\newcommand{\numLNSteps}{11}
\newcommand{\numLNTestsFailed}{0}
\newcommand{\numLNTestsPassed}{13\,333}
\newcommand{\numLOS}{macOS 26.5.2}
\newcommand{\numLOcamlVersion}{5.1.0}
\newcommand{\numLPaddedTimeoutN}{25}
\newcommand{\numLRowCap}{16}
\newcommand{\numLRunEnd}{2026-10-04 18:07:22}
\newcommand{\numLRunStart}{2026-10-04 14:28:50}
\newcommand{\numLScaleHeapCap}{3\,000}
\newcommand{\numLScaleMaxTime}{189}
\newcommand{\numLScaleTimeout}{600}
\newcommand{\numLScaleWorkers}{10}
\newcommand{\numLSelectTimeout}{60}
\newcommand{\numLSignedAsFound}{5}
\newcommand{\numLSignedPruned}{3}
\newcommand{\numLSourcesSha}{12dda0ec723b}
\newcommand{\numLTwoKeyPruned}{4}

\title{Trust a Few:\\ The Weakest Assumptions a Protocol Needs}
\author{\begin{tabular}[t]{c@{\hspace{3em}}c}
Bhumika Mittal & Aalok Thakkar\\
{\normalfont\small School of Computer Science} & {\normalfont\small Vachani School of Advanced Computing}\\
{\normalfont\small Georgia Institute of Technology} & {\normalfont\small Ashoka University}
\end{tabular}}
\date{}

\begin{document}
\maketitle
\begin{abstract}
Protocol verifiers check whether a protocol meets a security goal under
stated trust assumptions, such as a key remaining secret, a value being
fresh, or a channel being authentic. They do not identify which
assumptions the goal requires. Rowe, Guttman, and Liskov asked for the
weakest assumptions under which a protocol achieves a goal and left the
question for future research. We answer this question for assumptions
about keys, values, and channels. Call a run that violates the goal an
attack, and call the assumptions that would rule it out its stopping set.
We show that the weakest trust is exactly the set of minimal hitting sets
of all minimal attacks: the smallest sets of assumptions that contain a
member of every minimal attack's stopping set. Several such sets may
exist, none weaker than another. The answer is unique when either/or
assumptions are allowed. A single weakest trust exists exactly when every
minimal attack has only one stopping assumption. A Galois connection
between assumptions and goals explains the need for disjunction. Goal
conclusions allow disjunction, so the strongest goal for given
assumptions exists whenever shape analysis terminates. Hypotheses do not,
so the weakest assumptions for a goal need not exist. With no
protocol-specific argument, the same structure gives the trust required
by a conjunction of goals, an exact condition under which composed
protocols need no extra trust, and a restatement of the
attack-containment tenet of Rowe et al., with attacks compared by their
stopping sets. We compute the weakest trust using a loop that asks a protocol verifier whether a candidate suffices,
records what stops each attack it reports, and recomputes the candidates.
Deciding whether some trust of at most $k$ assumptions suffices is
NP-hard. Using a bounded analyser of our own and CPSA as verifiers, we
find the weakest trust for \numNRows{} goals across \numNProtocols{}
protocols (counting our running example) and their variants. On every goal, the results agree with
evaluating every trust using the same verifier. In our model of the
adopted fix of Kerberos \textsc{pkinit}, the analysis identifies an
assumption in the stated hypothesis that is not needed for the client's
authentication of the server. It also finds channel assumptions under
which Needham--Schroeder meets its goal in our bounded analyser, with one
instance per role and channels read as limits on injecting and reading
messages. On synthetic instances with up to \numMaxN{} candidate
assumptions and a set-family oracle in place of a verifier, every
completed run of the loop makes exactly as many goal queries as there are
minimal attacks and weakest trusts together, when each reported attack
has a minimal stopping set. The loop is Reiter's, the query bound is an
instance of known results on learning monotone Boolean functions, and the
verifier is a standard one. What is new is the characterisation in the
goal language of Rowe et al.\ and what follows from it.
\end{abstract}

\section{Introduction}
\label{sec:intro}

\begin{quote}
\emph{Love all, trust a few, do wrong to none.}\\
\hspace*{\fill}Shakespeare, \emph{All's Well That Ends Well}, Act~1, Scene~1
\end{quote}

The Countess gives this advice to her son Bertram as he leaves home for
the King's court, a place full of people whose honesty he cannot check. A
security protocol is sent into the same position. It must exchange messages with
anyone on the network, and the adversary may control every message it
receives. It relies on a few things it cannot check itself, such as a key
never being leaked, a value being fresh, or a channel being authentic. It
must still meet its goal: no one is deceived, and no secret leaks.

Protocol verifiers such as ProVerif, Tamarin, and CPSA check the last part
of that advice for a chosen middle part \cite{proverif,tamarin,cpsaspec}.
Given the assumptions, they decide whether the protocol meets its goal.
When the check succeeds, the designer knows that the assumptions are
enough. However, the designer does not learn which of them the goal
requires, unless the check is repeated under each combination of
assumptions, as some of the analyses discussed in Section~\ref{sec:related}
do. This paper computes the middle part, the few things a protocol must trust for a
given goal.

In 2005, the Kerberos public-key extension \textsc{pkinit}, already widely
deployed, was found to admit a man-in-the-middle attack. An insider could
relay a client's request to the authentication server under its own
identity and learn the session keys \cite{cervesato}. Two repairs were
proposed. The researchers who found the flaw suggested binding the
client's identity into the server's signature. The authors of the
standard, wanting to keep the message format, suggested replacing a nonce
with a keyed hash of the client's request. (Our CPSA model of this fix,
which is not shipped with CPSA, simplifies it to an unkeyed hash, as
Section~\ref{sec:eval-rq1} explains.) Both repairs stop the attack, and a
verifier confirms this in seconds. The debate between them was about what
each repair relies on: which keys must remain uncompromised, which values
must be fresh, and what a deployment must provide for the guarantee on
paper to hold. A standards committee weighs this when it chooses a
repair, and so does the implementer of a constrained device deciding what
the hardware must protect. The closest tool support checks a claim
against each member of a lattice of adversary or threat models and
reports the models under which the claim holds. These lattices are built
from capabilities such as compromising keys or state \cite{basincremers},
infecting devices or compromising keys \cite{jacommemfa,jacommeedhoc}, or
exploiting weak hash functions \cite{chevalhash}. This addresses the dual
question, which we compare in Section~\ref{sec:related}.

Write $\tau \ach \chi$ for ``under the assumptions $\tau$, protocol $\Pi$
achieves goal $\chi$ against a Dolev--Yao adversary''
\cite{dolevyao,thayer}, and call a set of assumptions $\tau$ a
\emph{trust}. The relation links three objects. Fixing two and solving for
the third gives three problems (Fig.~\ref{fig:triangle}). Solving for the
protocol is protocol synthesis, attempted since Perrig and Song
\cite{apg} and still hard. Solving for the goal is the problem Rowe,
Guttman, and Liskov solved \cite{rowe}. For a fixed protocol and fixed
assumptions, the protocol achieves a single strongest goal, which their
shape analysis computes. Shape analysis is the search that CPSA performs.
This paper solves for the trust.

\begin{figure}[t]
\centering
\begin{tikzpicture}[scale=0.9]
  \node[label={[font=\footnotesize\itshape]above:love all}] (P) at (90:1.7) {protocol $\Pi$};
  \node[label={[font=\footnotesize\itshape]below:trust a few}] (T) at (210:2.1) {trust $\tau$};
  \node[label={[font=\footnotesize\itshape]below:do wrong to none}] (G) at (330:2.1) {goal $\chi$};
  \draw (P) -- (T); \draw (T) -- (G); \draw (G) -- (P);
  \node[align=center,font=\scriptsize] at (0,-2.9)
    {fix $\Pi,\tau$, solve for strongest $\chi$: shape analysis \cite{rowe}\\
     fix $\Pi,\chi$, solve for weakest $\tau$: this paper\\
     fix $\tau,\chi$, solve for $\Pi$: protocol synthesis \cite{apg}};
\end{tikzpicture}
\caption{The three objects a protocol analysis relates, with the clause of
the Countess's advice that each answers, and the three problems obtained by
solving for each in turn.}
\label{fig:triangle}
\end{figure}

Rowe, Guttman, and Liskov also posed the trust problem \cite{rowe}. They
express a security goal as an implication $\forall\bar{x}.\, (\Phi
\Rightarrow \Psi)$, where the hypothesis $\Phi$ is a conjunction of atoms
and the conclusion $\Psi$ is a disjunction of existentially quantified
conjunctions. For a fixed hypothesis, the goals the protocol achieves
have a single strongest member. For a fixed conclusion, however, Rowe et
al.\ observe that there need not be a single weakest hypothesis under
which the protocol achieves it. They describe finding such hypotheses as
a form of abduction. They report knowing of no tools designed to support
the ``impoverish-as-possible'' reasoning it requires, and leave it for
future research \cite[Section~6.4]{rowe}. Alves-Foss and Soule generate
necessary preconditions for protocol goals in a belief logic
\cite{alvesfoss}, and the threat-model lattices above answer the dual
question for adversary capabilities by enumeration. To our knowledge, the
question has not been answered for hypotheses of the goal language of
\cite{rowe}.

\paragraph{The answer.}
Call a run of the protocol that violates the goal an \emph{attack}, and
call the assumptions that would rule out an attack its \emph{stopping
set}. The \emph{minimal attacks} are those whose stopping sets contain no
other attack's stopping set. The least a protocol must trust to meet a goal
is exactly the set of minimal ways to stop all of its minimal attacks: the
smallest trusts that contain a member of every minimal attack's stopping
set. There may be several, none weaker than another. The answer is unique
once \emph{either/or} assumptions, such as ``$k_1$ or $k_2$ is
uncompromised'', are allowed. A protocol verifier can compute it. There is
a single weakest trust, with no either/or, exactly when every minimal
attack is stopped by one assumption. The weakest trust and the minimal
attacks determine each other, by a classical result of Edmonds and
Fulkerson and of Isbell \cite{edmonds,isbell}. Section~\ref{sec:example}
works all of this out by hand on a signed challenge-response. The same
example refutes the informal criterion for uniqueness, which leaves out the
word ``minimal''.

\paragraph{Why either/or is needed.}
Assuming more trust cannot make a goal fail, and weaker goals are no
harder to achieve. Because of these two properties, when both the
strongest goal achieved under a trust and the weakest trust sufficient for
a goal exist, these maps form a Galois connection, a pair of maps that
determine each other. Birkhoff's polarities \cite{birkhoff} always give a
connection between sets of trusts and sets of goals. Between single trusts
and single goals, the connection exists exactly when the weakest trust is
unique. Both maps produce a disjunction. For the strongest goal, the
disjunction ranges over the minimal runs compatible with a hypothesis. The
goal language of \cite{rowe} allows disjunctions in conclusions, so the
strongest goal exists whenever shape analysis terminates
\cite{rowe,guttmantosca,ramsdell}. For the weakest trust, the disjunction
ranges over the minimal sufficient trusts. However, hypotheses are
restricted to conjunctions, so a single weakest trust need not exist. This
gap comes from the goal language and is the same for every protocol.
Allowing either/or hypotheses resolves this asymmetry through the
disjunctive completion of abstract interpretation \cite{cousot}. With
finitely many candidate assumptions, this completion is computable, though
not cheaply.

\paragraph{What is ours.}
We propose no new search algorithm. The order theory, the duality between
attacks and trusts, and the bound on verifier queries are known results,
whose sources Section~\ref{sec:related} names. Chatterjee, Henzinger, and
Jobstmann showed through an either/or example that weakest environment
assumptions in synthesis need not be unique \cite{chatterjee}. The finite
goal sets of Rowe et al.\ \cite[Section~3.4]{rowe} can also express
either/or hypotheses by using one goal per disjunct. What is new is the
weakest trust in the goal language of \cite{rowe}, studied as the
counterpart of the strongest goal under achievement. We explain how the
restriction to conjunctive hypotheses causes non-uniqueness. We
characterise the weakest trust as the minimal ways to stop every minimal
attack and derive the uniqueness criterion of Theorem~\ref{thm:unique} from
blocker duality. This criterion corrects the informal one. We also observe
that restricting hypotheses to trust atoms (A2) is what keeps the set of
candidate assumptions finite (A1), which both the completion and the
computation require. We compute the answer with a protocol verifier as an
oracle in a hitting-set loop, and a replay query recovers the stopping
sets of reported attacks. The restriction to trust atoms also defines the
scope of our results. Rowe et al.\ ask about hypotheses that are arbitrary
conjunctions of atoms, including atoms about role positions and
parameters. We answer the question for hypotheses built from trust atoms,
and Section~\ref{sec:limits} discusses what the general case would
require.

\paragraph{Contributions.}
\begin{enumerate}
\item \emph{A finite space of trusts} (Section~\ref{sec:lattice}). Six
  modelling assumptions fix the setting, each stated with what it gives and
  what it costs. We read the channel levels of M\"odersheim and Vigan\`o
  \cite{modersheim} as restrictions on adversary transitions in strand
  spaces. Under this reading, Lemma~\ref{lem:generators} shows that a
  secure channel is exactly an authentic and confidential one. Thus, the
  candidate assumptions are a finite set $U$ of independent atoms about
  keys, values, and channels, and a trust is any subset of $U$. The only
  property of the protocol semantics used later is that assuming more only
  removes runs, together with its refinement, separable stopping (A6).
\item \emph{Why the answer needs either/or}
  (Section~\ref{sec:galois}). The strongest-goal map $\strongest$ and the
  weakest-trust map $\weakest$ determine each other whenever both exist.
  Two sufficient trusts can have an insufficient intersection, so a single
  weakest trust need not exist. With either/or trusts, the weakest trust
  exists and is unique. As a formula, it is the disjunction of the minimal
  sufficient trusts (Theorem~\ref{thm:completion}). Thus, every goal has a
  unique weakest hypothesis when hypotheses may be either/or combinations
  of trust atoms.
\item \emph{The weakest trust is the set of minimal ways to stop every
  minimal attack} (Section~\ref{sec:blocker}). Each attack has a stopping
  set, and a trust rules out a run exactly when one of its members does.
  Thus, a trust suffices exactly when it meets the stopping set of every
  minimal attack, and $\weakest(\chi)$ is the family of minimal such
  trusts. This duality is Reiter's \cite{reiter}. Theorem~\ref{thm:unique}
  follows from it: the weakest trust is unique if and only if every minimal
  attack has a single stopping assumption. The informal criterion, which
  leaves out ``minimal'', is false, as the example of
  Section~\ref{sec:example} shows. An attack that no assumption stops gives
  the answer $\mathsf{false}$. This separates ``assume more'' from ``no
  assumption over these atoms suffices''.
\item \emph{Consequences} (Section~\ref{sec:consequences}). The trust
  required by a conjunction of goals is computed from the trust each goal
  requires. Provided every component attack carries over to the
  composition, two composed protocols need at least the combined trust of
  their parts. Theorem~\ref{thm:composition} shows that they need exactly
  that trust if and only if every minimal attack on the composition is
  stopped by no less than some minimal attack on a part. We conjecture that
  disjoint encryption, under which neither protocol creates or opens the
  other's encrypted messages, suffices. Table~\ref{tab:composition} gives
  two pairs where the combination is exact and one where it is not. In the
  last pair, multi-protocol attacks \cite{cremersares}, which pass messages
  of one protocol off as messages of the other, make the composition need
  more. The tight trusts, those equal to the weakest trust of their own
  strongest goal, correspond one to one, in order, with the goals that are
  the strongest their own weakest trust achieves. Because a conclusion may
  contain trust atoms, nearly every trust is tight. Hence, this
  correspondence does not by itself remove unneeded assumptions. Rowe et
  al.\ \cite{rowe} state the tenet that a system is at least as secure as
  another exactly when every attack on the first is also an attack on the
  second. For two protocols over a common situation and goal language, we
  restate this tenet as a theorem by comparing attacks through their
  stopping sets. This is a restatement of their tenet in our terms rather
  than a derivation.
\item \emph{A computation} (Section~\ref{sec:computing}). Running CPSA's
  search once from the empty trust misses attacks. A loop that asks a
  verifier for attacks computes $\weakest(\chi)$ and the minimal attacks
  together, with the verifier as its only oracle. A replay query computes
  stopping sets from any verifier's attacks. The \textsc{pkinit} analysis
  of \cite{rowe} follows the pattern of this loop by hand. Given the
  minimal attacks, checking a candidate answer takes quasi-polynomial time
  \cite{fredman}. Given only a verifier, checking it requires the blocker
  of the candidate and one goal query per member. Whether the answer can be
  listed in time polynomial in its size is open. The loop makes at most
  $|\attacks(\chi)| + |\weakest(\chi)|$ goal queries when the verifier's
  attacks have minimal stopping sets, not counting the replay queries. This
  bound is an instance of known results on learning monotone Boolean
  functions with membership queries and on dualisation
  \cite{angluin,bshouty,biochibaraki,gunopulos}. Deciding whether some
  trust of at most $k$ assumptions suffices is NP-hard, by a reduction that
  encodes the hitting-set problem as a protocol.
\item \emph{An implementation and an evaluation} (Section~\ref{sec:eval}).
  We evaluate an OCaml implementation using our bounded analyser and CPSA.
  It computes $\weakest(\chi)$ for \numNRows{} goals across
  \numNProtocols{} protocols, counting the running example
  (Table~\ref{tab:results}). On each goal, the result agrees with
  evaluating every trust using the same oracle. This comparison tests the
  loop and the stopping-set computation, but not the oracle. The
  evaluation also reports which answers depend on the analyser's bound
  (Table~\ref{tab:bound}). Every reported attack is a run of the protocol,
  so every trust that suffices without a bound meets each stopping set the
  loop recovers. Sufficiency over the bounded analyser holds only for the
  finite set of sessions it searches. We also use synthetic instances from
  the hardness reduction, with up to \numMaxN{} candidate assumptions and a
  set-family oracle in place of a verifier. On these, the loop makes exactly
  the number of goal queries that the cost bound allows whenever the
  reported attacks have minimal stopping sets, while enumeration needs up
  to $2^n$ calls (Table~\ref{tab:scale}). The conjunct-dropping heuristic
  of \cite{rowe} returns one weakest trust and cannot detect
  non-uniqueness.
\end{enumerate}

\section{An Example: What the Weakest Trust Looks Like}
\label{sec:example}

On a small protocol, the weakest trust can be found by hand by listing the
attacks and what stops each. In a signed challenge-response protocol, two
parties $A$ and $B$ exchange messages so that $B$ can be sure it has just
spoken with $A$. Challenger $B$ sends a fresh challenge $N$ to prover $A$.
The prover responds by signing $N$ together with both names, and $B$
accepts if the signature verifies under $A$'s public key:
\[
B \rightarrow A :\ N
\qquad\qquad
A \rightarrow B :\ \sig{sk_A}{N, A, B}.
\]
The situation $\sigma$ specifies a completed run of $B$ as challenger with
$A$ on challenge $N$. The goal $\chi$ is $B$'s recent agreement with $A$:
$A$ must have run the prover role with $B$ on the same $N$ and sent its
response after $B$ sent $N$. Consider the three candidate assumptions
\[
U = \{\Non(sk_A),\ \Non(sk_B),\ \Unq(N)\},
\]
stating that the signing keys are uncompromised and the challenge is
fresh. With all three assumed, $B$'s conclusion is sound. Only $A$ can
produce a signature under $sk_A$, since $\Non(sk_A)$ keeps the key from the
adversary. $A$ signs $(N, A, B)$ only as prover for $B$ on challenge $N$.
Since $N$ originates only at $B$'s transmission, $A$ received it, and
signed, after $B$ sent it.

The goal requires only two of the three. The unique weakest trust is
$\{\Non(sk_A),\Unq(N)\}$, because every attack on $B$'s goal is stopped by
one of these two assumptions and each of them is needed to stop some
attack. Thus, $B$'s goal does not require $sk_B$ to remain uncompromised.

The answer follows from the attacks. An \emph{attack} is a run that
satisfies $\sigma$, with nothing assumed beyond it, but violates $\chi$. Its
\emph{stopping set} $\stops(x)$ is the set of atoms of $U$ that rule out
the run when assumed. Section~\ref{sec:blocker} gives the general
definitions. Fig.~\ref{fig:runs} illustrates three attacks as strand
diagrams.
\begin{itemize}
\item \emph{Forgery} (Fig.~\ref{fig:runs}(a)). The adversary holds $sk_A$,
  receives $N$, and signs $(N, A, B)$ itself. $B$ completes its run without
  $A$ running the prover role at all, so $\chi$ fails. Because the run uses
  $sk_A$, assuming $\Non(sk_A)$ rules it out. It uses no other key, and
  $N$ originates once, so neither of the other atoms rules it out. Hence,
  $\stops(x) = \{\Non(sk_A)\}$.
\item \emph{Replay} (Fig.~\ref{fig:runs}(b)). An earlier session, in which
  $B$ challenged $A$ with the same value $N$ and $A$ answered, has left
  $\sig{sk_A}{N, A, B}$ in the adversary's knowledge. In the test session,
  $B$ sends $N$ again, and the adversary replays the recorded signature in
  response. $A$ did sign $(N, A, B)$ for $B$, but before $B$'s test
  challenge, so the recency required by $\chi$ fails. $N$ originates twice,
  once on each of $B$'s challenger strands, so $\Unq(N)$ rules the run out.
  The run uses no key, so neither key assumption does, and $\stops(x) =
  \{\Unq(N)\}$.
\item \emph{Double forgery} (Fig.~\ref{fig:runs}(c)). Both parties run both
  roles. In addition to the forgery, the run contains a second session in
  which $A$, acting as challenger, sends $N_a$ and accepts
  $\sig{sk_B}{N_a, B, A}$, which the adversary also forged. $B$'s run fails
  $\chi$ as in the forgery. The run uses both signing keys, so each key
  assumption rules it out on its own, and $N$ originates once. Thus,
  $\stops(x) = \{\Non(sk_A), \Non(sk_B)\}$. The second session plays no
  part in $B$'s goal. However, an attack need not be a minimal run, and a
  search may find this run before the forgery, as happens in
  Section~\ref{sec:eval}.
\end{itemize}
A \emph{reflection} attack, where $B$'s challenge is sent back to $B$ as
prover, does not work. In this case, $B$ would sign $(N, B, A)$, which it
rejects when acting as challenger. Nor does a run in which the adversary,
as a principal $I$, obtains from $A$ a signature on $(N, A, I)$, since $B$
rejects it for the same reason. The names inside the signed message
exclude both runs.

\begin{figure}[t]
\centering
\tikzset{
  rn/.style={circle,fill,inner sep=0pt,minimum size=4pt},
  an/.style={rectangle,draw,fill=black!12,inner sep=0pt,minimum size=5pt},
  msg/.style={-{Stealth[length=4pt]},semithick},
  rec/.style={-{Stealth[length=4pt]},semithick,dashed},
  strand/.style={double,double distance=1.2pt,-{Implies},shorten <=2pt,shorten >=2pt},
  lab/.style={font=\scriptsize},
  act/.style={font=\scriptsize,align=left},
  hd/.style={font=\small,anchor=base},
}
\begin{minipage}[b]{0.38\textwidth}
\centering
\begin{tikzpicture}
  \node[hd] at (0,0.25) {$B$};
  \node[hd] at (2.9,0.25) {adversary};
  \node[rn] (b1) at (0,0) {};
  \node[rn] (b2) at (0,-1.4) {};
  \node[an] (p1) at (2.9,0) {};
  \node[an] (p2) at (2.9,-1.4) {};
  \draw[strand] (b1) -- (b2);
  \draw[strand] (p1) -- node[act,right] {signs\\with $sk_A$} (p2);
  \draw[msg] (b1) -- node[lab,above] {$N$} (p1);
  \draw[msg] (p2) -- node[lab,above] {$\sig{sk_A}{N,A,B}$} (b2);
  \node[lab] at (1.45,-2.15) {$\stops(x) = \{\Non(sk_A)\}$};
\end{tikzpicture}

\smallskip
(a) forgery
\end{minipage}
\hfill
\begin{minipage}[b]{0.58\textwidth}
\centering
\begin{tikzpicture}
  \node[hd] at (0,0.25) {$B$};
  \node[hd] at (3,0.25) {$A$};
  \node[hd] at (6,0.25) {adversary};
  \node[font=\scriptsize,rotate=90] at (-1.35,-0.6) {earlier};
  \node[font=\scriptsize,rotate=90] at (-1.35,-3.2) {test};
  \node[rn,label={[lab]left:$N$ orig.}] (e1) at (0,0) {};
  \node[rn] (e2) at (0,-1.2) {};
  \node[rn] (a1) at (3,0) {};
  \node[rn] (a2) at (3,-1.2) {};
  \node[rn,label={[lab]left:$N$ orig.}] (b1) at (0,-2.6) {};
  \node[rn] (b2) at (0,-3.8) {};
  \node[an] (p0) at (6,-1.2) {};
  \node[an] (p1) at (6,-2.6) {};
  \node[an] (p2) at (6,-3.8) {};
  \draw[strand] (e1) -- (e2);
  \draw[strand] (a1) -- (a2);
  \draw[strand] (b1) -- (b2);
  \draw[strand] (p0) -- node[act,right] {records} (p1);
  \draw[strand] (p1) -- node[act,right] {replays} (p2);
  \draw[msg] (e1) -- node[lab,above] {$N$} (a1);
  \draw[msg] (a2) -- node[lab,above] {$\sig{sk_A}{N,A,B}$} (e2);
  \draw[rec] (a2) -- node[lab,above] {copy} (p0);
  \draw[msg] (b1) -- node[lab,above,pos=0.3] {$N$} (p1);
  \draw[msg] (p2) -- node[lab,above,pos=0.7] {$\sig{sk_A}{N,A,B}$} (b2);
  \node[lab] at (3,-4.55) {$\stops(x) = \{\Unq(N)\}$};
\end{tikzpicture}

\smallskip
(b) replay
\end{minipage}

\bigskip
\begin{minipage}[b]{0.56\textwidth}
\centering
\begin{tikzpicture}
  \node[hd] at (0,0.25) {$B$};
  \node[hd] at (3.2,0.25) {adversary};
  \node[hd] at (6.4,0.25) {$A$};
  \node[rn] (b1) at (0,0) {};
  \node[rn] (b2) at (0,-1.2) {};
  \node[an] (p1) at (3.2,0) {};
  \node[an] (p2) at (3.2,-1.2) {};
  \node[an] (p3) at (3.2,-2.0) {};
  \node[an] (p4) at (3.2,-3.2) {};
  \node[rn] (a1) at (6.4,-2.0) {};
  \node[rn] (a2) at (6.4,-3.2) {};
  \draw[strand] (b1) -- (b2);
  \draw[strand] (a1) -- (a2);
  \draw[strand] (p1) -- node[act,right] {signs with $sk_A$} (p2);
  \draw[strand] (p3) -- node[act,left,align=right] {signs with $sk_B$} (p4);
  \draw[msg] (b1) -- node[lab,above] {$N$} (p1);
  \draw[msg] (p2) -- node[lab,above] {$\sig{sk_A}{N,A,B}$} (b2);
  \draw[msg] (a1) -- node[lab,above] {$N_a$} (p3);
  \draw[msg] (p4) -- node[lab,above] {$\sig{sk_B}{N_a,B,A}$} (a2);
  \node[lab] at (3.2,-3.95) {$\stops(x) = \{\Non(sk_A), \Non(sk_B)\}$};
\end{tikzpicture}

\smallskip
(c) double forgery
\end{minipage}
\hfill
\begin{minipage}[b]{0.4\textwidth}
\centering
\begin{tikzpicture}
  \node[hd] at (0,0.25) {$B$};
  \node[hd] at (3.1,0.25) {adversary};
  \node[rn] (b1) at (0,0) {};
  \node[rn] (b2) at (0,-2.0) {};
  \node[an] (p1) at (3.1,0) {};
  \node[an] (p2) at (3.1,-1.0) {};
  \node[an] (p3) at (3.1,-2.0) {};
  \draw[strand] (b1) -- (b2);
  \draw[strand] (p1) -- node[act,right] {encrypts\\with $k_1$} (p2);
  \draw[strand] (p2) -- node[act,right] {encrypts\\with $k_2$} (p3);
  \draw[msg] (b1) -- node[lab,above] {$N$} (p1);
  \draw[msg] (p3) -- node[lab,above] {$\{\!|\{\!|N,A,B|\!\}_{k_1}|\!\}_{k_2}$} (b2);
  \node[lab] at (1.55,-2.75) {$\stops(x) = \{\Non(k_1), \Non(k_2)\}$};
\end{tikzpicture}

\smallskip
(d) forgery, two-key variant
\end{minipage}
\caption{Strand diagrams of attacks on the signed challenge-response
(a)--(c) and a forgery on its two-key variant (d). Each column represents a
principal. Filled circles mark regular-strand nodes, while shaded squares
mark adversary actions, labelled beside them. Double arrows connect
successive nodes of a strand, single arrows represent transmissions, and
the dashed arrow in (b) represents the adversary copying a transmission.
``$N$ orig.'' marks where $N$ originates. The stopping set is shown below
each run.}
\label{fig:runs}
\end{figure}

\begin{table}[t]
\caption{Attacks on the signed challenge-response and on its two-key
variant, with their stopping sets. A trust is sufficient when it meets
every row. The double-forgery row contains the forgery row, so every trust
that meets the forgery row also meets it. Deleting it leaves the minimal
attacks $\attacks(\chi)$, whose stopping sets determine whether the
weakest trust is unique (Section~\ref{sec:blocker}).}
\label{tab:example}
\centering
\begin{tabular}{llll}
\toprule
protocol & attack & $\stops(x)$ & status\\
\midrule
signed & forgery & $\{\Non(sk_A)\}$ & minimal\\
 & replay & $\{\Unq(N)\}$ & minimal\\
 & double forgery & $\{\Non(sk_A), \Non(sk_B)\}$ & redundant\\
\midrule
two-key & forgery & $\{\Non(k_1), \Non(k_2)\}$ & minimal\\
 & replay & $\{\Unq(N)\}$ & minimal\\
\bottomrule
\end{tabular}
\end{table}

Table~\ref{tab:example} collects the stopping sets. A trust is sufficient
exactly when it contains at least one atom from every row, as
Section~\ref{sec:blocker} proves for every protocol. The forgery and
replay rows have one entry each, so every sufficient trust contains
$\Non(sk_A)$ and $\Unq(N)$. Together, these assumptions also stop the
double forgery, through $\Non(sk_A)$. Since the double-forgery row
contains the forgery row, it is not minimal, and the minimal attacks are
$\attacks(\chi) = \{\{\Non(sk_A)\}, \{\Unq(N)\}\}$. Dropping either atom of
$\{\Non(sk_A),\Unq(N)\}$ leaves a row unmet. $\Non(sk_B)$ occurs only in
the row that is not minimal, so the goal does not need it.

The usual informal account of non-uniqueness goes wrong here. It says that
two incomparable weakest trusts appear when some attack has two stopping
assumptions. The double-forgery row has two entries, yet the answer is
unique, because that row is not minimal. Thus, an attack with two stopping
assumptions does not necessarily lead to two incomparable weakest trusts.
Non-uniqueness happens exactly when some minimal attack has at least two
stopping assumptions, as Section~\ref{sec:blocker} proves. Such an attack is a
single run that either of two assumptions rules out, so the run uses both
of the things those assumptions protect.

The two-key variant provides such a minimal attack. It replaces the
signature with a nested encryption under two symmetric keys $k_1$ and
$k_2$ shared by $A$ and $B$:
\[
B \rightarrow A :\ N
\qquad\qquad
A \rightarrow B :\ \{\!|\{\!|N, A, B|\!\}_{k_1}|\!\}_{k_2},
\]
with $U = \{\Non(k_1), \Non(k_2), \Unq(N)\}$ and with $\sigma$ and $\chi$
as before. To forge a response, the adversary needs both keys, since it
cannot produce the inner encryption without $k_1$ or the outer one without
$k_2$ (Fig.~\ref{fig:runs}(d)). Thus, the forgery is a single run, and
protecting either key stops it. Its stopping set $\{\Non(k_1),
\Non(k_2)\}$ represents an either/or choice over one run, rather than a
requirement to assume both. Replay is as before, with stopping set
$\{\Unq(N)\}$. The double forgery of this variant uses the same two keys,
has the same stopping set as the forgery, and adds no row. There are now
two weakest trusts,
\[
\{\Non(k_1),\Unq(N)\}
\qquad\text{and}\qquad
\{\Non(k_2),\Unq(N)\},
\]
neither weaker than the other. Their intersection $\{\Unq(N)\}$ is not
sufficient, since the forgery survives it. Thus, the sufficient trusts are
not closed under intersection, which Section~\ref{sec:galois} discusses in
general. The hardness proof of Section~\ref{sec:cost} uses the same
nesting, with one key per element of a set.

The example contains the main result. Under the modelling assumptions of
Section~\ref{sec:lattice}, the following hold for every protocol,
situation, and goal. Section~\ref{sec:galois} proves the last item, and
Section~\ref{sec:blocker} proves the others.
\begin{itemize}
\item A trust suffices exactly when it contains an atom from the stopping
  set of every minimal attack. The weakest trust is the family of smallest
  such trusts, that is, the minimal ways to stop all the minimal attacks.
\item The weakest trust and the minimal attacks determine each other. Read
  as formulas, the attacks give a conjunction of either/or clauses, here
  $\Unq(N) \wedge (\Non(k_1) \vee \Non(k_2))$. The weakest trusts give
  the same condition as an either/or of conjunctions, here $(\Non(k_1)
  \wedge \Unq(N)) \vee (\Non(k_2) \wedge \Unq(N))$.
\item A single weakest trust exists exactly when every minimal attack has
  one stopping assumption.
\item When there is no single weakest trust, no hypothesis of the goal
  language of \cite{rowe} expresses the answer, because hypotheses there
  are conjunctions of atoms. (A finite set of goals in the sense of
  \cite[Section~3.4]{rowe}, one per weakest trust, can express it.) The
  either/or trust above does express it, and when either/or trusts are
  allowed, the weakest trust is unique.
\end{itemize}
Computing the answer requires every minimal attack, and a single analysis
can miss some. Started from the empty trust, an enrich-by-need analysis
(the search CPSA performs) finds the forgery and stops, because the replay
appears only after $\Non(sk_A)$ is assumed, as Section~\ref{sec:obvious}
explains.
Instead, the loop of Section~\ref{sec:computing} asks the verifier about
each candidate trust in turn. Section~\ref{sec:eval} recovers both answers
of this section in this way (Table~\ref{tab:results}).

\section{Background: Protocols, Goals, and Achievement}
\label{sec:background}

To state the main result precisely, we need four things: what a run of a
protocol is, what the trust assumptions say about a run, what a goal is,
and when a protocol achieves a goal. We take all four from the strand-space
model of Thayer, Herzog, and Guttman \cite{thayer}, the goal language of
Guttman \cite{guttmanshapes,guttmanestablishing}, and its use by Rowe,
Guttman, and Liskov \cite{rowe}, whose question we answer.
Everything in this section is cited except two objects: the situation
$\sigma$ (Definition~\ref{def:situation}) and Problem~\ref{prob:weakest}.
Where the sources differ in presentation, we say which one we follow.
Nothing later depends on the points where they differ. The running example
is the signed challenge-response of Section~\ref{sec:example}.

\subsection{Messages and Strands}

Messages are terms built from names, nonces, and keys. What one principal
does in one session is a strand, the list of messages it sends and
receives.

\begin{definition}[Terms {\cite{thayer}}]
\label{def:terms}
Fix disjoint sets of \emph{atoms}: names, texts (nonces and other data), and
keys. Each key $K$ has an inverse $K^{-1}$. A symmetric key is its own
inverse, and the private key $sk_A$ and public key $pk_A$ of a principal $A$
are inverse to each other. \emph{Terms} form the free algebra over the atoms
with two constructors: pairing $(t_0, t_1)$ and encryption $\{\!|t|\!\}_K$
with $K$ a key. There is no equational theory. We write $\sig{sk_A}{t}$ for
$\{\!|t|\!\}_{sk_A}$. Since $sk_A^{-1} = pk_A$ is public, anyone can extract
$t$, but only a holder of $sk_A$ can produce the term. The \emph{subterm}
relation $\sqsubseteq$ is the least reflexive relation with $t \sqsubseteq
(t_0, t_1)$ whenever $t \sqsubseteq t_0$ or $t \sqsubseteq t_1$, and $t
\sqsubseteq \{\!|t_0|\!\}_K$ whenever $t \sqsubseteq t_0$. A key used to
encrypt is not thereby a subterm of the ciphertext.
\end{definition}

In the running example, the response $\sig{sk_A}{N, A, B}$ has $N$, $A$, $B$,
and the triple $(N, A, B)$ among its subterms, but not $sk_A$. Thus, $A$ can
send the response without sending its key.

\begin{definition}[Strands {\cite{thayer}}]
\label{def:strands}
A \emph{node} is a pair of a sign and a term: $+t$ (transmission of $t$) or
$-t$ (reception of $t$). A \emph{strand} is a finite sequence of nodes, the
local behaviour of one principal in one session or of one adversary step. A
\emph{strand space} is a set of strands. For nodes we write $n_1 \Rightarrow
n_2$ when $n_2$ immediately follows $n_1$ on the same strand, and $n_1
\rightarrow n_2$ when $n_1 = +t$ and $n_2 = -t$ lie on any strands, a
possible delivery of $t$. $\mathrm{msg}(n)$ is the term of node $n$.
\end{definition}

A \emph{protocol} $\Pi$ is a finite set of \emph{roles}, each a strand whose
terms contain variables, the role's \emph{parameters}. The \emph{regular}
strands of $\Pi$ are the instances of its roles under substitutions of terms
for parameters. They are the honest principals' sessions. In the running
example, the challenger role is $\langle +N,\ -\sig{sk_a}{N, a, b}\rangle$
and the prover role is $\langle -N,\ +\sig{sk_a}{N, a, b}\rangle$, with
parameters $a$, $b$, and $N$. $B$'s session with $A$ is the challenger
strand with $a = A$ and $b = B$.

\begin{definition}[Adversary strands {\cite{thayer,guttmanshapes}}]
\label{def:adversary}
The \emph{adversary} (or \emph{penetrator}) strands are the instances of
the following traces, for atoms $a$, terms $g$, $h$, and keys $K$:
\begin{center}
\begin{tabular}{ll@{\qquad}ll}
create & $\langle +a\rangle$ &
tee & $\langle -g,\ +g,\ +g\rangle$\\
flush & $\langle -g\rangle$ &
pair & $\langle -g,\ -h,\ +(g, h)\rangle$\\
separate & $\langle -(g, h),\ +g,\ +h\rangle$ &
encrypt & $\langle -K,\ -h,\ +\{\!|h|\!\}_K\rangle$\\
decrypt & $\langle -K^{-1},\ -\{\!|h|\!\}_K,\ +h\rangle$ & &
\end{tabular}
\end{center}
\end{definition}

These strands give the Dolev--Yao adversary \cite{dolevyao}. It controls
every delivery, and it can pair, separate, encrypt with keys it has, and
decrypt with keys whose inverses it has, but nothing else. Thayer et al.\
give the penetrator text strands and key strands, the latter only for keys
in a fixed set $K_P$ \cite{thayer}. We follow the later presentations
\cite{guttmanshapes,guttmanestablishing}, in which the adversary may create
any atom and the keys it may not have are named by non-origination.

\subsection{Bundles and Origination}

A run of a protocol is formally a bundle. The trust assumptions about keys
and nonces are statements about where in a run a value first appears.

\begin{definition}[Bundle {\cite{thayer}}]
\label{def:bundle}
A \emph{bundle} is a finite set of nodes $\mathcal{B}$ of a strand space,
with a set of edges $\rightarrow_{\mathcal{B}} \subseteq\, \rightarrow$
between nodes of $\mathcal{B}$, such that
\begin{enumerate}
\item every reception node $n_2 \in \mathcal{B}$ has exactly one $n_1$ with
  $n_1 \rightarrow_{\mathcal{B}} n_2$;
\item if $n_2 \in \mathcal{B}$ and $n_1 \Rightarrow n_2$, then $n_1 \in
  \mathcal{B}$;
\item the graph with edges $\rightarrow_{\mathcal{B}} \cup \Rightarrow$ on
  $\mathcal{B}$ is acyclic.
\end{enumerate}
Its reflexive transitive closure is a partial order $\preceq_{\mathcal{B}}$,
the causal order of the bundle. A bundle of $\Pi$ is one whose strands are
regular strands of $\Pi$ and adversary strands.
\end{definition}

A bundle is one complete run, and we call it a run in prose. The three
conditions say that every reception was sent by exactly one earlier
transmission, that each strand present has all its earlier nodes, and that
nothing depends on its own future. A transmission may be delivered to
several receptions. Every nonempty set of nodes of a bundle has
$\preceq_{\mathcal{B}}$-minimal members \cite{thayer}. This is the
induction principle used throughout strand-space proofs. Each diagram of
Fig.~\ref{fig:runs} is a bundle, with the adversary's strands drawn as
shaded squares.

\begin{definition}[Origination {\cite{thayer}}]
\label{def:origination}
An atom $v$ \emph{originates} at a node $n$ if $n$ is a transmission, $v
\sqsubseteq \mathrm{msg}(n)$, and $v \not\sqsubseteq \mathrm{msg}(n')$ for
every earlier node $n'$ of the same strand. In a bundle $\mathcal{B}$, $v$ is
\emph{non-originating}, written $\Non(v)$, if it originates at no node of
$\mathcal{B}$, and \emph{uniquely originating}, written $\Unq(v)$, if it
originates at exactly one node of $\mathcal{B}$.
\end{definition}

A value originates where it is first sent on a strand. The two predicates
are the strand-space forms of the two trust assumptions on values.
$\Non(sk_A)$ says that $sk_A$ is never sent by anyone, the adversary's
create strands included. $A$ may still sign with it, since a key used to
encrypt is not a subterm. Thus, $\Non(sk_A)$ means that $sk_A$ is
uncompromised. $\Unq(N)$ says that $N$ enters the run at one point only,
so no earlier session can have used it. It means that $N$ is fresh. Both
are properties of a whole bundle, and whether a bundle satisfies one of
them does not depend on any other atom.

In the forgery of Fig.~\ref{fig:runs}(a), the adversary obtains $sk_A$ on a create strand, so
$sk_A$ originates and the run violates $\Non(sk_A)$. In the replay of
Fig.~\ref{fig:runs}(b), $N$ originates on both of $B$'s challenger strands,
so the run violates $\Unq(N)$.

\begin{definition}[Skeletons and shapes {\cite{guttmanshapes,cpsaspec}}]
\label{def:shapes}
A \emph{skeleton} describes the regular part of a partial execution: a finite
set of regular nodes closed under $\Rightarrow$-predecessors, a partial order
on them, and sets of atoms assumed non-originating and uniquely originating.
It is \emph{realised} if it is the regular part of some bundle that extends
its order and satisfies its origination assumptions. Skeletons are related by
homomorphisms, which map strands to strands by a substitution and preserve
order and origination assumptions. The \emph{shapes} of a skeleton
$\mathbb{A}$ are the realised skeletons $\mathbb{B}$ with a homomorphism
$\mathbb{A} \to \mathbb{B}$ that are minimal in the homomorphism order of
\cite{guttmanshapes}. Every realised skeleton reachable
from $\mathbb{A}$ by a homomorphism is reachable from one of its shapes.
\end{definition}

In words, a skeleton is the honest part of a partial run together with the
assumptions made about it. Its shapes are the minimal ways to extend it,
by adding regular strands, ordering nodes, or identifying values, into the
regular part of a run. For $B$'s completed challenger strand with
$\Non(sk_A)$ and $\Unq(N)$ assumed, there is one shape. It adds a prover
strand of $A$ that receives $N$ after $B$ sends it and sends
the signature that $B$ receives. With nothing assumed, $B$'s strand is
already realised, since the adversary can forge the signature, and it is
its own shape. Section~\ref{sec:obvious} returns to this case.

We use skeletons and shapes only through the tool that computes them.
Enrich-by-need analysis, implemented in CPSA \cite{cpsaspec}, starts from a
skeleton and adds regular behaviour as needed to explain each reception the
adversary cannot produce, until every branch is realised. When it terminates,
it returns finitely many shapes. Termination is not guaranteed in
general, so Section~\ref{sec:galois} states it as an assumption where a
result requires it. Section~\ref{sec:computing} uses CPSA as an oracle and
relies only on this output.

\subsection{The Goal Language}

A goal says: whenever the honest activity named in its hypothesis occurred
and the trust assumptions hold, the activity named in its conclusion also
occurred. The weakest trust is a statement about the hypotheses of such
goals, so we need their exact form.

\begin{definition}[Atoms and goals {\cite{guttmanestablishing,rowe}}]
\label{def:goals}
The goal language of $\Pi$ is a first-order language with variables ranging
over nodes and over messages. Its atoms are:
\begin{itemize}
\item \emph{role-position} atoms $\rho_i(n)$, one for each role $\rho$ of
  $\Pi$ and position $i$ on it: node $n$ is the $i$-th node of an instance of
  $\rho$;
\item \emph{parameter} atoms $p(n, v)$, one for each parameter $p$ of a role:
  the strand of $n$ instantiates $p$ with $v$;
\item $\mathsf{Preceq}(m, n)$: $m$ precedes $n$ in the causal order;
  $\mathsf{Coll}(m, n)$: $m$ and $n$ lie on the same strand; and equality of
  terms;
\item the origination atoms $\Non(v)$, $\Unq(v)$, and $\mathsf{UnqAt}(n,
  v)$, the last saying that $v$ originates uniquely, at $n$.
\end{itemize}
A \emph{security goal} is a closed formula
\[
\forall \bar{x}.\,\bigl(\Phi \Rightarrow \Psi\bigr),
\qquad
\Psi \;=\; \bigvee_{1 \le i \le j} \exists \bar{y}_i.\, \Delta_i,
\]
where $\Phi$ and each $\Delta_i$ are conjunctions of atoms and $j \ge 0$. The
grammar is
\begin{align*}
\Phi, \Delta &::= \alpha \mid \Phi \wedge \Phi
&
\Psi &::= \mathsf{false} \mid \exists \bar{y}.\, \Delta \mid \Psi \vee \Psi
&
G &::= \forall \bar{x}.\,(\Phi \Rightarrow \Psi)
\end{align*}
with $\alpha$ ranging over atoms. $\Phi$ is the \emph{hypothesis} and $\Psi$
the \emph{conclusion}. There is no negation and no implication inside $\Phi$
or $\Psi$ (formulas of this form are called geometric).
\end{definition}

The hypothesis is a conjunction, but the conclusion may be a disjunction.
This difference explains why the forward question below has a single
answer and the inverse one may not, as Section~\ref{sec:galois} shows. The case
$j = 0$ makes $\Psi$ the empty disjunction $\mathsf{false}$, expressing that
no run satisfies the hypothesis. Secrecy goals take this form, with a hypothesis that includes a node of a
\emph{listener} role $\langle -v\rangle$ receiving the secret $v$
\cite{guttmanestablishing}. The case $j = 1$ covers authentication and
agreement. In the running example, writing $\mathsf{Chl}_i$ and
$\mathsf{Prv}_i$ for the positions of the challenger and prover roles,
recent agreement of $B$ with $A$ is the goal
\begin{align*}
\forall m_1, m_2, a, b, N.\ \bigl(\,
&\mathsf{Chl}_1(m_1) \wedge \mathsf{Chl}_2(m_2) \wedge
  \mathsf{Coll}(m_1, m_2) \wedge a(m_2, a) \wedge b(m_2, b) \wedge N(m_2, N)\\
&{}\wedge \Non(sk_a) \wedge \Unq(N)\\
\Rightarrow\ &\exists n.\ \mathsf{Prv}_2(n) \wedge a(n, a) \wedge b(n, b)
  \wedge N(n, N) \wedge \mathsf{Preceq}(m_1, n)\,\bigr).
\end{align*}
The first line of the hypothesis describes $B$'s completed run, and the
second line is the trust. Without $\Non(sk_a)$, the adversary can create $sk_a$ and
sign. Without $\Unq(N)$, a prover strand from a session that precedes $m_1$
can supply the signature, and $\mathsf{Preceq}(m_1, n)$ fails.

\begin{definition}[Satisfaction and achievement {\cite{guttmanestablishing,rowe}}]
\label{def:achieve}
Let $\mathcal{B}$ be a bundle of $\Pi$ and $\eta$ an assignment of nodes of
$\mathcal{B}$ to node variables and terms to message variables. An atom
holds at $(\mathcal{B}, \eta)$ when its reading above is true of the nodes
and terms $\eta$ assigns, with $\mathsf{Preceq}$ read as
$\preceq_{\mathcal{B}}$ and origination read in $\mathcal{B}$ as in
Definition~\ref{def:origination}. Conjunction, disjunction, existential and
universal quantification are read as usual, and $\mathcal{B} \models G$ when
$G$ holds at $(\mathcal{B}, \eta)$ for every $\eta$. The protocol $\Pi$
\emph{achieves} $G$ if every bundle of $\Pi$ satisfies $G$.
\end{definition}

Rowe et al.\ and Guttman state achievement over the executions of $\Pi$,
and we take executions to be bundles, as above. Two facts follow from the
definitions and are recorded in the sources. Achievement is closed under
consequence: if $\Pi$ achieves $G'$ and $G'$ implies $G$, then $\Pi$
achieves $G$ \cite[Lemma~1]{rowe}. Because goals are positive existential
in their conclusions, they are also preserved along homomorphisms. This
lets a goal be checked on the finitely many shapes instead of on all
bundles \cite{guttmanestablishing}.

\subsection{What Rowe, Guttman, and Liskov Did}

Rowe et al.\ answered the forward question. For a fixed hypothesis, the
goals the protocol achieves have a strongest member. Our question is the
inverse, so we state their answer in the form we use. Fix a hypothesis
$\Phi$ and consider the goals with hypothesis $\Phi$ that $\Pi$ achieves,
ordered by implication. Let $\mathbb{A}_\Phi$ be the skeleton that $\Phi$
describes, and for a skeleton $\mathbb{S}$ let $\delta(\mathbb{S})$ be the
conjunction of atoms describing its nodes, roles, parameters, order, and
origination assumptions. Shape analysis sentences are due to Guttman
\cite{guttmantosca} and Ramsdell \cite{ramsdell}. We state the form in
which Rowe et al.\ use them.

\begin{theorem}[Shape analysis sentence {\cite[Theorem~2]{rowe}}]
\label{thm:sas}
Suppose enrich-by-need terminates on $\mathbb{A}_\Phi$ with shapes
$\mathbb{S}_1, \dots, \mathbb{S}_j$. Let
\[
\Psi_\Phi \;=\; \bigvee_{1 \le i \le j} \exists \bar{y}_i.\,
\delta(\mathbb{S}_i),
\]
where $\bar{y}_i$ are the variables of $\delta(\mathbb{S}_i)$ not among the
variables $\bar{x}$ of $\Phi$. Then
\begin{enumerate}
\item $\Pi$ achieves $\forall \bar{x}.\,(\Phi \Rightarrow \Psi_\Phi)$, the
  \emph{shape analysis sentence} for $\Phi$; and
\item for every conclusion $\Psi$, if $\Pi$ achieves $\forall
  \bar{x}.\,(\Phi \Rightarrow \Psi)$ then the shape analysis sentence implies
  $\forall \bar{x}.\,(\Phi \Rightarrow \Psi)$.
\end{enumerate}
\end{theorem}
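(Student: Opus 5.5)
The proof has two parts, soundness (item~1) and strongest-ness (item~2). Both rest on the same bridge between bundles and skeletons: a bundle $\mathcal{B}$ with an assignment $\eta$ satisfying $\Phi$ determines a homomorphism $h:\mathbb{A}_\Phi \to \mathbb{S}_{\mathcal{B}}$. Here $\mathbb{S}_{\mathcal{B}}$ is the realised skeleton formed by the regular part of $\mathcal{B}$ together with its causal order and the origination facts true in it. Conversely, a homomorphism from $\mathbb{A}_\Phi$ into a realised skeleton yields a satisfying assignment of $\Phi$ in any bundle realising that skeleton. I will establish this correspondence first, by unfolding Definition~\ref{def:achieve} atom by atom. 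The role-position and parameter atoms give strands and substitutions, $\mathsf{Preceq}$ gives order, and the origination atoms give the origination assumptions that homomorphisms must preserve.

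For item~1, take any bundle $\mathcal{B}$ of $\Pi$ and any $\eta$ with $\Phi$ true at $(\mathcal{B},\eta)$. This gives $h:\mathbb{A}_\Phi\to\mathbb{S}_{\mathcal{B}}$. By Definition~\ref{def:shapes}, every realised skeleton reachable from $\mathbb{A}_\Phi$ is reachable from some shape. So there are $i$ and $g:\mathbb{S}_i\to\mathbb{S}_{\mathcal{B}}$ with $g$ composed after the embedding of $\mathbb{A}_\Phi$ into $\mathbb{S}_i$ equal to $h$. The condition on the composite is exactly what makes $\eta$ extend consistently to the variables $\bar{y}_i$. Interpret each $\bar y_i$ by the image under $g$ of the corresponding node or term. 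Every atom of $\delta(\mathbb{S}_i)$ is positive and preserved by homomorphisms, so it holds in $\mathcal{B}$. Hence the $i$-th disjunct of $\Psi_\Phi$ holds.

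For item~2, suppose $\Pi$ achieves $\forall\bar{x}.(\Phi\Rightarrow\Psi)$. It suffices to show $\forall\bar x.(\Phi\Rightarrow\Psi_\Phi)$ implies $\forall\bar x.(\Phi\Rightarrow\Psi)$, and for that it suffices that each disjunct $\exists\bar y_i.\delta(\mathbb{S}_i)$ together with $\Phi$ implies $\Psi$ in every structure. Fix $i$. The shape $\mathbb{S}_i$ is realised, so some bundle $\mathcal{B}_i$ of $\Pi$ has $\mathbb{S}_i$ as its regular part. This bundle satisfies $\Phi$ under the canonical assignment, so by achievement it satisfies some disjunct $\exists\bar z.\Delta$ of $\Psi$. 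I then need to transfer this to an arbitrary model of $\Phi\wedge\delta(\mathbb{S}_i)$. The witnesses for $\Delta$ in $\mathcal{B}_i$ should lie within the regular part described by $\mathbb{S}_i$, or be expressible in terms of it. A positive existential formula true in the canonical model of $\delta(\mathbb{S}_i)$ is then entailed by $\delta(\mathbb{S}_i)$, by the standard canonical-model (diagram) argument for geometric formulas.

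The main obstacle is this last transfer. Witnesses of $\Delta$ in $\mathcal{B}_i$ might use regular strands or order facts present in the particular bundle but not recorded in $\mathbb{S}_i$. To avoid this, I would choose $\mathcal{B}_i$ whose regular part is exactly $\mathbb{S}_i$, with no extra regular strands and with order on regular nodes exactly that of $\mathbb{S}_i$. Realisation in Definition~\ref{def:shapes} guarantees such a bundle, up to adversary strands, which the goal language cannot mention. I would also use that homomorphisms preserve the goal language's atoms, as cited from \cite{guttmanestablishing}. Then $\delta(\mathbb{S}_i)$ is a complete positive diagram of the relevant structure, and any positive existential consequence true in it follows. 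Closure of achievement under consequence \cite[Lemma~1]{rowe} finishes the argument.
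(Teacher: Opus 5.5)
The paper does not prove this theorem. It imports it as a paraphrase of \cite[Theorem~2]{rowe}, so your argument has to stand on its own.

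Item~1 is the standard argument and is sound, with one correction. Definition~\ref{def:shapes} says only that a realised skeleton reachable from $\mathbb{A}_\Phi$ is reachable from some shape. You need more: the given homomorphism $h$ must factor as $g \circ \alpha_i$ through the homomorphism $\alpha_i : \mathbb{A}_\Phi \to \mathbb{S}_i$ that enrich-by-need produces. Without that commutation, the witnesses you pick for $\bar y_i$ need not agree with $\eta$ on $\bar x$. The factorisation is part of the theory of shapes, which are minimal as homomorphisms out of $\mathbb{A}_\Phi$ and not merely as skeletons. Cite it from there rather than from the definition.

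The genuine gap is the transfer step of item~2. Your diagram argument needs a bundle $\mathcal{B}_i$ whose atomic facts are exactly those recorded in $\delta(\mathbb{S}_i)$. Realisation does not provide one, and in general none exists, for three reasons.
\begin{enumerate}
\item Definition~\ref{def:shapes} gives a bundle whose order only \emph{extends} that of $\mathbb{S}_i$.
\item Under Definition~\ref{def:achieve}, node variables range over all nodes of the bundle, adversary nodes included. So the goal language does see adversary strands, for instance through $\mathsf{Preceq}$ or $\mathsf{UnqAt}$.
\item Most seriously, a bundle is finite while the supply of atoms is infinite. Hence every bundle satisfies $\Non(v)$ for all but finitely many atoms $v$, whatever $\mathbb{S}_i$ declares.
\end{enumerate}
Take $\Psi = \exists y.\,\Non(y)$. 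It holds in every bundle of $\Pi$, so $\Pi$ achieves $\forall\bar x.\,(\Phi \Rightarrow \Psi)$ for every $\Phi$. Now consider the running example with the empty trust. Its only shape is $B$'s challenger strand itself, and its $\delta$ contains no origination atom. So $\Phi \wedge \delta(\mathbb{S})$ holds in structures where $\Non$ is empty, such as $\mathbb{S}$ read with its empty set of declared non-originating atoms. The condition you reduce item~2 to, that each $\Phi \wedge \exists\bar y_i.\,\delta(\mathbb{S}_i)$ entails $\Psi$ in every structure, therefore fails. No choice of $\mathcal{B}_i$ can rescue it.

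The argument goes through once satisfaction and achievement are evaluated on realised skeletons. Read $\mathsf{Preceq}$ as the skeleton's order and the origination atoms as its declared assumptions, and let node variables range over its nodes. Then $\mathbb{S}_i$ is itself the canonical model of $\delta(\mathbb{S}_i)$, and no realising bundle is needed. Achievement gives $\Psi$ in $\mathbb{S}_i$ under the assignment induced by $\alpha_i$. Any structure and assignment satisfying $\Phi \wedge \delta(\mathbb{S}_i)$ receive a homomorphism from $\mathbb{S}_i$ that agrees with the assignment on $\bar x$. That homomorphism preserves the positive existential formula $\Psi$. If ``implies'' is instead read over the bundles of $\Pi$, as the order of Definition~\ref{def:goalorder} reads it for conclusions, item~2 is immediate from achievement and needs no transfer at all.
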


The statement is a paraphrase in our notation. Rowe et al.\ \cite{rowe}
give the exact form, including how the variables of each shape are matched
to those of $\Phi$. When there are no shapes, $j = 0$ and $\Psi_\Phi =
\mathsf{false}$. In this case, no run satisfies $\Phi$, which is how a
secrecy goal is established. In the running example, with $\Phi$ the
hypothesis of the goal above, there is one shape, the one described after
Definition~\ref{def:shapes}. Its $\delta$ names $A$'s prover strand and
its place after $B$'s challenge, so the shape analysis sentence implies
recent agreement.

The theorem presupposes that enrich-by-need terminates on
$\mathbb{A}_\Phi$, and it characterises the strongest conclusion when it
does.
It does not say when the search terminates. Thus, the forward map, from a
hypothesis to the strongest achieved conclusion, has a value wherever
enrich-by-need terminates, and CPSA computes it there.

\subsection{Situations and the Inverse Question}

The inverse question varies the trust and keeps the rest of the hypothesis
fixed, so the two parts must be separated. The hypothesis of a goal mixes
two kinds of atoms. Role-position, parameter, order, and equality atoms
specify which runs are under discussion. Origination atoms, and the channel
assumptions of Section~\ref{sec:lattice}, specify what is trusted. In the
goal above, the first line of the hypothesis is of the first kind, and
$\Non(sk_a) \wedge \Unq(N)$ is of the second.

\begin{definition}[Situation]
\label{def:situation}
A \emph{situation} $\sigma$ is a conjunction of atoms describing the local
runs under discussion: which roles reached which positions with which
parameters, in which order. It may also contain origination atoms that are
held fixed and not varied, such as freshness that the scenario itself
supplies. The \emph{trust} is the remaining part of the hypothesis, a
conjunction $\tau$ of atoms drawn from a finite set $U$ fixed in
Section~\ref{sec:lattice}, so that every hypothesis considered has the form
$\sigma \wedge \tau$.
\end{definition}

In the running example, $\sigma$ is the first line of the hypothesis above,
$B$'s completed challenger run with $A$ on $N$, and $\tau$ is drawn from
$U = \{\Non(sk_A), \Non(sk_B), \Unq(N)\}$. The question is how much must be
assumed beyond $\sigma$.

\begin{problem}[Weakest trust]
\label{prob:weakest}
Given a protocol $\Pi$, a situation $\sigma$, and a goal conclusion $\chi$,
determine the weakest trusts $\tau$ such that $\Pi$ achieves
$\forall\bar{x}.\,(\sigma \wedge \tau \Rightarrow \chi)$.
\end{problem}

Rowe et al.\ pose this problem as ordering the hypotheses that share a
conclusion. They observe that maximal elements need not be unique and
describe the search for them as abductive, a search for the weakest
explanation of a known consequence, which requires
``impoverish-as-possible'' reasoning. They write that they are ``unaware of any tools that are specifically designed to support
this type of reasoning'' \cite[Section~6.4]{rowe}.
Problem~\ref{prob:weakest} is their question restricted to hypotheses of the
form $\sigma \wedge \tau$ (assumption A2 of Section~\ref{sec:lattice}).

\begin{remark}[What minimality means, and what it is for]
\label{rem:minimality}
Minimality is relative to a given protocol, situation, and goal. If
authentication requires $\Non(sk_A)$ and $\Unq(N)$ but not $\Non(sk_B)$,
then $\Non(sk_B)$ is unnecessary \emph{for this authentication claim},
although it may be essential for secrecy or for another goal proved from
the same assumptions. Designers also do not err by assuming more. A uniform
baseline, such as assuming that all honest long-term keys are
uncompromised, is simpler to state, and it is the standard against which
protocols are compared \cite{basincremers}. Assuming more trust cannot make
a goal fail, so a stronger hypothesis is never
wrong, only possibly unneeded. Computing $\weakest(\chi)$ identifies which
assumptions a goal requires. It does not ask anyone to deploy at the
minimum. A designer needs this information as soon as an assumption
carries a cost, such as a hardware trust anchor that costs money in every
device.
\end{remark}

Theorem~\ref{thm:sas} and Problem~\ref{prob:weakest} ask about one relation
in opposite directions, and Section~\ref{sec:galois} relates the two.

\section{The Trust Lattice}
\label{sec:lattice}

The main result concerns one relation, $\tau \ach \chi$: under trust
$\tau$, the protocol achieves goal $\chi$. Section~\ref{sec:background}
gave the runs and goals. To state the result, we also need the set of
trusts and the relation itself. Six modelling assumptions fix these. For
each, we state what we assume, what it gives, and what it costs.

\begin{description}
\item[A1 (Finitely many candidates).] \emph{Assume:} the set $U$ of
  assumption atoms, the \emph{candidate assumptions}, is finite.
  \emph{Gives:} the trusts are the subsets of $U$, which form a finite
  Boolean lattice, so every family of trusts has a least upper bound and
  a greatest lower bound, and computing the minimal ways to stop every
  attack is a finite computation.
  \emph{Costs:} the keys, nonces, and channels in play must be named by the
  situation. Thus, the answer concerns a bounded scenario, such as $B$'s
  one run with $A$ in the running example, rather than a protocol
  quantified over unboundedly many sessions.
\item[A2 (Trust hypotheses only).] \emph{Assume:} only the trust part of
  the hypothesis varies, that is, the atoms $\Non(v)$ and $\Unq(v)$ and the
  channel assumptions. Which runs are under discussion, as given by
  role-position and parameter atoms, is fixed by $\sigma$. \emph{Gives:} a hypothesis is a
  subset of a finite set, so the hypotheses form a lattice. \emph{Costs:}
  Rowe et al.\ ask about all conjunctions of atoms of the goal language
  \cite{rowe}, and we do not answer their question in that generality. The
  restriction narrows the question itself, and Section~\ref{sec:lim-a2}
  discusses what the general case would require.
\item[A3 (Safety goals).] \emph{Assume:} every violation of a goal has a
  finite witness. Secrecy, aliveness (the partner ran the protocol), and
  non-injective agreement (the partner ran it with the same values) are of
  this kind. \emph{Gives:} a violation, when there is one, is a finite run.
  Hence, a search through runs finds it, and failure of a goal is
  semi-decidable. Deciding achievement requires more: an oracle that also
  terminates when the goal holds. An enrich-by-need analysis is such an oracle
  wherever it terminates, and Section~\ref{sec:galois} states termination
  as an assumption. (The family of stopping sets of
  Section~\ref{sec:blocker} is finite under A1 alone, as a family of
  subsets of a finite set. A3 lets an oracle find its members.)
  \emph{Costs:} injective agreement is outside our scope. Its statement
  requires a conclusion that equates the partner strands matched to two
  runs of the observing role, and we do not treat such conclusions as goals
  of Definition~\ref{def:goals}. The exclusion concerns what the goals
  express. The length of a violation plays no part in it.
\item[A4 (Monotonicity).] \emph{Assume:} assuming more only removes runs.
  \emph{Gives:} Lemma~\ref{lem:monotone}, which every later result uses.
  \emph{Costs:} none we are aware of. An assumption that added runs would
  not be a trust assumption in any usual sense.
\item[A5 (No equational theory).] \emph{Assume:} perfect cryptography over
  a free term algebra, so two terms are equal
  only when they are built the same way. \emph{Gives:} whether an
  assumption rules out a given run is well defined and decidable, so the
  stopping sets of Definition~\ref{def:attacks} are well defined.
  \emph{Costs:} Diffie--Hellman and other protocols that require algebraic
  laws are outside our scope, as they are for the analyses that would instantiate the oracle of
  Section~\ref{sec:computing}.
\item[A6 (Separable stopping).] \emph{Assume:} a set of assumptions rules
  out a run exactly when one of its members rules out that run alone. No
  pair of assumptions rules out a run that neither rules out by itself.
  \emph{Gives:} a trust suffices exactly when it contains at least one
  stopping assumption of every attack. This is a hitting condition over
  sets computed one atom at a time (Proposition~\ref{prop:hitting}).
  \emph{Costs:} a run must be one fixed bundle with one fixed assignment of
  its nodes and values to the variables of $\sigma$, so an adversary with
  two routes to one message gives two runs. Lemma~\ref{lem:separable} proves
  A6 for the atoms of $U$ below, and a richer assumption language would
  have to prove it again.
\end{description}

\subsection{Channels}
\label{sec:channels}

Designers trust channels as well as keys and nonces, assuming that a
channel is authentic or confidential. This subsection describes how those
assumptions enter, so that $U$ can contain them. The origination atoms
$\Non(v)$ and $\Unq(v)$ of Definition~\ref{def:origination} are hypothesis
atoms of the goal language. Channel assumptions are not, and cannot be
expressed as such within the fragment of Definition~\ref{def:goals}. ``Every message
received on $c$ was sent on $c$ by its claimed sender'' is a universal
statement with an existential inside, and ``nothing sent on $c$ reaches the
adversary'' is a negative one. Neither is a conjunction of atoms. Thus,
channel assumptions enter as restrictions on the adversary, which remove
bundles from the model, while origination atoms enter as conjuncts of the
hypothesis, which remove bundles from consideration. The proofs below treat
the two kinds separately.

\paragraph{The levels.}
We take three levels from M\"odersheim and Vigan\`o \cite{modersheim},
who build on Dilloway and Lowe \cite{dilloway}. On \emph{authentic}
channels, the receiver can be sure who sent the message. On
\emph{confidential} channels, only the intended receiver can read the
messages. \emph{Secure} channels are both. Kamil and Lowe
\cite{kamillowe} give channel specifications in strand spaces. They extend
the adversary with network capabilities beyond the Dolev--Yao ones,
including \emph{faking} (sending a message it composed as if from an
honest sender) and \emph{hijacking} (changing the sender or receiver
identity of a message on a channel). They specify each channel property as
a restriction on what the adversary may do. Following them, we read each
level as a restriction on adversary transitions. The two
readings in the definition below, the inject/read reading and the hijack
reading, are ours.

\begin{definition}[Channel bundles]
\label{def:channels}
Fix a finite set $C$ of channels and label every node of every regular
strand of $\Pi$ with a channel in $C$. Classify each edge $n_1 \rightarrow
n_2$ of a bundle by the \emph{capabilities} it uses:
\begin{itemize}
\item both nodes adversary nodes: none;
\item both nodes regular and on the same channel: none (\emph{delivery});
\item $n_1$ an adversary node, $n_2$ a regular reception on $c$:
  $\mathrm{inject}(c)$;
\item $n_1$ a regular transmission on $c$, $n_2$ an adversary node:
  $\mathrm{read}(c)$;
\item $n_1$ a regular transmission on $c'$, $n_2$ a regular reception on
  $c \neq c'$: in the \emph{inject/read reading}, both $\mathrm{read}(c')$
  and $\mathrm{inject}(c)$; in the \emph{hijack reading}, the single
  capability $\mathrm{hijack}(c)$.
\end{itemize}
The channel assumptions forbid capabilities:
$\auth(c)$ forbids $\mathrm{inject}(c)$, and also $\mathrm{hijack}(c)$ in
the hijack reading; $\conf(c)$ forbids $\mathrm{read}(c)$; and $\secure(c)$
forbids every capability that the adversary uses to place a foreign message
on $c$ or to learn a message sent on $c$. For
a set $R$ of channel assumptions, $\mathcal{B}(R)$ is the set of bundles of
$\Pi$ none of whose edges uses a capability forbidden by a member of $R$.
\end{definition}

In the inject/read reading, the adversary has two channel capabilities. It
may inject on $c$, by delivering to a reception on $c$ a message from one
of its own strands. It may also read from $c$, by receiving a message that an
honest principal sent on $c$. A message moved from one channel to another
passes through the adversary's knowledge, so it uses both. Under
$\auth(c)$, every reception on $c$ is delivered from a transmission on $c$,
possibly one that was delivered before. Thus, authenticity alone does not
prevent replay of a genuine message. The goals of A3 are non-injective,
and freshness, when a goal requires it, is the separate atom $\Unq$. Under
$\conf(c)$, honest messages on $c$ still reach their receivers on $c$, but
never the adversary. This reading is used throughout the paper, except in the hijack rows of
Section~\ref{sec:eval} and in the proof of Lemma~\ref{lem:generators},
which covers both readings.

\paragraph{Hijacking.}
The hijack reading adds a third capability of our own, which lets the
adversary move a message sent on $c'$ to a reception on $c$ without reading
it. This capability is a variant of Kamil and Lowe's hijacking, which keeps
the channel and changes the sender or receiver identity. The two are close
when, as in the evaluation, a channel is named by its endpoints, so that
changing an identity changes the channel. The hijack reading changes what
the levels mean. $\auth(c)$ still forbids every way of placing a foreign
message on $c$. However, $\conf(c')$ now forbids less. It keeps the
adversary from learning what is sent on $c'$, but no longer keeps that
message on $c'$. Lemma~\ref{lem:generators} still holds,
with the same proof, provided $\secure(c)$ is read as in
Definition~\ref{def:channels}. Some formulations give the secure level a
further restriction that neither $\auth(c)$ nor $\conf(c)$ imposes, for
example that a message may not be redirected away from $c$, or a binding to
a session or an endpoint. Such a restriction requires its own candidate
assumption per channel, which enlarges $U$ and changes nothing else in the
paper. Lemmas~\ref{lem:monotone} and~\ref{lem:separable} hold in both
readings, since in both, each channel assumption forbids a fixed set of
capabilities.

The choice of reading changes answers. In Lowe's attack on
Needham--Schroeder, the adversary injects on the channel $A{\to}B$ both
messages the responder receives. It also reads the responder's reply on
$B{\to}A$ to pass it to the initiator as its own. In the inject/read
reading, either $\auth(A{\to}B)$ or $\conf(B{\to}A)$ removes the run. In
the hijack reading, the reply can be moved to the initiator unread, so
$\conf(B{\to}A)$ no longer removes it. With one instance per role, only the
authentic branch survives.
Section~\ref{sec:eval} reports the channel rows in both readings.

\paragraph{The instantiation in the evaluation.}
In Section~\ref{sec:eval}, a channel is a directed pair of principals,
and a message travels on the channel from its sender to its intended
receiver as the role sees them. Candidate assumptions exist only for
channels between honest principals. A channel with the adversary at one end
carries no assumption, because the adversary is a legitimate endpoint
there. It reads what is sent to it and sends what it likes in its own
name.

\subsection{The Candidate Assumptions}

The trusts should be plain sets of atoms, so that the combinatorics of
Section~\ref{sec:blocker} applies. The three channel levels prevent this.
Ordered by strength, they give each channel the order $\text{none} <
\auth(c), \conf(c) < \secure(c)$, and the combinatorics would have to be
redone over partial orders. The next lemma removes this obstacle: the top
level is the conjunction of the other two, so it need not be a candidate
assumption.

Say that an assumption $a$ is \emph{at least as strong as} $b$ if, in
every strand space with channel-labelled regular nodes, every bundle
admitted by $a$ is admitted by $b$. Here, a bundle admits an origination
atom when it satisfies it under Definition~\ref{def:origination}, and it
admits a channel assumption $r$ when it lies in $\mathcal{B}(\{r\})$. A set of assumptions
is admitted when each member is, and two sets are \emph{equivalent} when
they admit the same bundles.

\begin{lemma}[Candidate assumptions]
\label{lem:generators}
For each channel $c$, assuming $\secure(c)$ is equivalent to assuming
$\auth(c)$ and $\conf(c)$ together. Consequently, writing
\[
U \;=\; \{\Non(v)\}_{v} \,\cup\, \{\Unq(v)\}_{v} \,\cup\,
        \{\auth(c), \conf(c)\}_{c},
\]
every trust is equivalent to a subset of $U$, and the elements of $U$ are
pairwise incomparable in strength.
\end{lemma}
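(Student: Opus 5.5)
The plan has three parts: the equivalence, the reduction of every trust to a subset of $U$, and pairwise incomparability. Everything works at the level of edge capabilities from Definition~\ref{def:channels}.

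\textbf{Equivalence.} A bundle lies in $\mathcal{B}(R)$ exactly when none of its edges uses a capability forbidden by a member of $R$. Since this condition is edge by edge, $\mathcal{B}(\{\auth(c),\conf(c)\}) = \mathcal{B}(\{\auth(c)\})\cap\mathcal{B}(\{\conf(c)\})$. So it suffices to show that the capabilities forbidden by $\secure(c)$ are exactly the union of those forbidden by $\auth(c)$ and by $\conf(c)$. Definition~\ref{def:channels} says $\secure(c)$ forbids every capability used to place a foreign message on $c$ or to learn a message sent on $c$. In the inject/read reading, the capabilities that place a message on $c$ are $\mathrm{inject}(c)$, including the cross-channel case, which uses $\mathrm{inject}(c)$ together with $\mathrm{read}(c')$. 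The only capability that learns a message on $c$ is $\mathrm{read}(c)$. The union is therefore $\{\mathrm{inject}(c),\mathrm{read}(c)\}$, which is what $\auth(c)$ and $\conf(c)$ forbid between them. In the hijack reading, placing a message on $c$ adds $\mathrm{hijack}(c)$, and $\auth(c)$ forbids that as well. Learning remains $\mathrm{read}(c)$. The two sets agree, so each bundle is admitted by $\secure(c)$ iff it is admitted by both $\auth(c)$ and $\conf(c)$. This part is a case check over the five edge classes.

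\textbf{Reduction to subsets of $U$.} Take any trust, a conjunction of origination atoms and channel assumptions at any level. Replace each $\secure(c)$ by the pair $\auth(c),\conf(c)$. Admission of a set is the conjunction of admission of its members, so this replacement preserves the admitted bundles by the equivalence above. The result is a subset of $U$.

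\textbf{Pairwise incomparability.} For each ordered pair $a\neq b$ in $U$, I will exhibit one strand space with channel-labelled nodes and one bundle admitted by $a$ but not by $b$. This is the step I expect to be the main obstacle, because it needs small explicit witnesses for every combination of kinds. Using a single regular strand together with adversary strands, the witnesses are as follows.
\begin{itemize}
\item To violate $\Non(v)$ while satisfying the other atom: a create strand $\langle +v\rangle$ whose output is flushed.
\item To violate $\Unq(v)$ while satisfying $\Non(v)$: two create strands for $v$. For $\Unq(v)$ against $\Non(v)$, use one create strand of $v$, which satisfies $\Unq(v)$ and violates $\Non(v)$.
\item To violate $\auth(c)$ alone: a create strand delivering to a regular reception on $c$. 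This uses only $\mathrm{inject}(c)$, and no atom of interest originates twice.
\item To violate $\conf(c)$ alone: a regular transmission on $c$ of a term not containing $v$, received by a flush strand. This uses only $\mathrm{read}(c)$.
\end{itemize}
For two different atoms $v,w$ or channels $c,c'$, the same witnesses work because each one touches only the named atom or channel. In particular, the $\auth(c)$ witness must deliver to a reception on $c$ only, and I must check that it has no cross-channel edge, which would also count as $\mathrm{read}(c')$. Choosing all witnesses so that they contain no regular-to-regular edge between distinct channels avoids the reading-dependent case. The same witnesses then serve for both readings.
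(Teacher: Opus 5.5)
Your argument for the secure level and for the reduction to subsets of $U$ is correct and is the paper's. In both readings the forbidden sets coincide, so $\mathcal{B}(R\cup\{\secure(c)\})=\mathcal{B}(R\cup\{\auth(c),\conf(c)\})$ for every $R$. The gap is in the incomparability table, which fails on several ordered pairs as written.

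\emph{The pair $(\Non(v),\Unq(v))$.} Your witness for violating $\Unq(v)$ while satisfying $\Non(v)$, two create strands of $v$, violates $\Non(v)$ as well. This pair is in scope, since $U$ as written contains both atoms for each $v$. It needs a bundle in which $v$ originates nowhere. Such a bundle satisfies $\Non(v)$ and fails $\Unq(v)$, because unique origination requires exactly one origination point.

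\emph{Transfer to distinct atoms.} Your claim that the witnesses carry over ``because each one touches only the named atom'' is false whenever the admitted atom $a$ is $\Unq(w)$. $\Unq(w)$ is not satisfied vacuously, so a bundle in which $w$ never originates does not satisfy it. This breaks $(\Unq(w),\Non(v))$ and $(\Unq(w),\Unq(v))$ for $v\neq w$, and also $(\Unq(w),\auth(c))$ and $(\Unq(w),\conf(c))$, since none of your witnesses for $b$ originates $w$. Likewise, the invariant ``no atom of interest originates twice'' protects $\Unq$ atoms but not $\Non$ atoms. For $a=\Non(v)$, $b=\auth(c)$, the atom your create strand injects must not be $v$, and the proposal does not say so.

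\emph{The repair.} Use the paper's device: fix an atom $u$ different from every value mentioned in $U$, and build the channel witnesses from it.
\begin{itemize}
\item $I_c$ is a create strand $\langle +u\rangle$ delivered to a regular one-node strand $\langle -u\rangle$ on $c$. It uses only $\mathrm{inject}(c)$.
\item $R_c$ is a regular one-node strand $\langle +u\rangle$ on $c$ received by a flush strand. It uses only $\mathrm{read}(c)$.
\end{itemize}
Both admit every $\Non(v)$ with $v$ named in $U$. Whenever the admitted atom is $\Unq(w)$, add one isolated create strand $\langle +w\rangle$. It has no edges, so it uses no capability, and it affects only the atoms on $w$.

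For pairs of origination atoms, take:
\begin{itemize}
\item $\langle +w\rangle$ for $(\Non(v),\Non(w))$;
\item $\langle +u\rangle$ alone for $(\Non(v),\Unq(w))$, including $w=v$;
\item $\langle +v\rangle$ alone for $(\Unq(v),\Unq(w))$ with $w\neq v$;
\item $\langle +v\rangle$, plus $\langle +w\rangle$ if $w\neq v$, for $(\Unq(v),\Non(w))$.
\end{itemize}

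Your remaining witnesses are fine: a channel assumption against $\Non(v)$ or $\Unq(v)$, and channel against channel. Your choice to avoid regular-to-regular cross-channel edges, so that one set of witnesses serves both readings, is also right. That is why $I_c$ and $R_c$ each have a single adversary--regular edge.
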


\begin{proof}
The secure level forbids exactly the union of what the other two forbid.
For incomparability we give, for each ordered pair of atoms, a small bundle
that the first admits and the second does not.

\emph{The secure level.} A bundle lies in $\mathcal{B}(R)$ exactly when it
uses no capability in the union of the sets forbidden by the members of $R$.
By Definition~\ref{def:channels}, the set forbidden by $\secure(c)$ is the
union of the sets forbidden by $\auth(c)$ and by $\conf(c)$: in the
inject/read reading, $\{\mathrm{inject}(c), \mathrm{read}(c)\}$, and in the
hijack reading, $\{\mathrm{inject}(c), \mathrm{hijack}(c),
\mathrm{read}(c)\}$. This matches the secure level of
\cite{modersheim}, a channel that is both authentic and confidential, on
which the adversary can neither place a foreign message nor learn what is
sent. Hence $\mathcal{B}(R \cup
\{\secure(c)\}) = \mathcal{B}(R \cup \{\auth(c), \conf(c)\})$ for every $R$,
and the origination part of a trust is untouched, so the two trusts admit the
same bundles. Replacing each $\secure(c)$ in a trust by $\auth(c)$ and
$\conf(c)$ gives an equivalent subset of $U$.

\emph{Incomparability.} For distinct $a, b \in U$ we exhibit a bundle that
admits $a$ and not $b$. Let $u$ be an atom different from every value
mentioned in $U$. Three kinds of bundle suffice: bundles made of adversary
create strands $\langle +w\rangle$ only, which have no regular nodes and
therefore admit every channel assumption; the bundle $I_c$ consisting of a
create strand $\langle +u\rangle$ and a regular one-node strand $\langle
-u\rangle$ on $c$, joined by an edge that uses $\mathrm{inject}(c)$ and
nothing else; and the bundle $R_c$ consisting of a regular one-node strand
$\langle +u\rangle$ on $c$ and an adversary flush strand $\langle -u\rangle$,
joined by an edge that uses $\mathrm{read}(c)$ and nothing else. $I_c$ and
$R_c$ originate only $u$, so they admit $\Non(v)$ for every $v$ in $U$.
\begin{itemize}
\item $a$ a channel assumption, $b = \Non(v)$ or $b = \Unq(v)$: two create
  strands $\langle +v\rangle$ admit $a$ and originate $v$ twice, violating
  both.
\item $a$ a channel assumption, $b = \auth(c)$ with $a \neq b$: $I_c$ uses
  only $\mathrm{inject}(c)$, which $a$ does not forbid, and $b$ does.
\item $a$ a channel assumption, $b = \conf(c)$ with $a \neq b$: $R_c$ uses
  only $\mathrm{read}(c)$, which $a$ does not forbid, and $b$ does.
\item $a$ an origination atom, $b$ a channel assumption: take $I_c$ if $b =
  \auth(c)$ and $R_c$ if $b = \conf(c)$, and if $a = \Unq(v)$ add one create
  strand $\langle +v\rangle$. The bundle admits $a$ and uses a capability $b$
  forbids.
\item $a = \Non(v)$, $b = \Non(w)$ with $w \neq v$: one create strand
  $\langle +w\rangle$.
\item $a = \Non(v)$, $b = \Unq(w)$: one create strand $\langle +u\rangle$,
  in which $w$ originates nowhere.
\item $a = \Unq(v)$, $b = \Non(w)$: one create strand $\langle +v\rangle$,
  with a second create strand $\langle +w\rangle$ if $w \neq v$.
\item $a = \Unq(v)$, $b = \Unq(w)$ with $w \neq v$: one create strand
  $\langle +v\rangle$, in which $w$ originates nowhere.
\end{itemize}
Every ordered pair of distinct elements of $U$ falls under one case, so no
element of $U$ is at least as strong as another.
\end{proof}

Incomparability is over all strand spaces, as the definition of strength
states. Within one protocol and situation, two atoms of $U$ may be
comparable, and $\sigma$ may make an atom unsatisfiable, for instance
when $\sigma$ itself has a value originate twice and $U$ offers $\Unq$ of
that value.

$\Non(v)$ and $\Unq(v)$ are incomparable but admit no bundle in common.
Thus, a trust containing both makes the hypothesis unsatisfiable and
achieves every goal vacuously. A choice of $U$ that offers both atoms for
one value invites such answers. In practice, $U$ offers $\Non$ for keys and $\Unq$ for
nonces, as in the running example, where $U = \{\Non(sk_A), \Non(sk_B),
\Unq(N)\}$.

\subsection{Trusts, Goals, and Achievement}

With $U$ fixed, a trust is a subset of $U$, a goal is a conclusion, and
$\ach$ relates them. The two properties of $\ach$ that later sections use
are Lemma~\ref{lem:monotone} and Lemma~\ref{lem:polarity}.

\begin{definition}[Trust lattice]
\label{def:trustlat}
The \emph{trust lattice} is $\trustlat = 2^{U}$ ordered by inclusion. A
trust $\tau \in \trustlat$ splits into its origination part $\tau_O$, the
atoms $\Non(v)$ and $\Unq(v)$, read as the conjunction $\bigwedge \tau_O$ in
the hypothesis, and its channel part $\tau_C$, the assumptions $\auth(c)$ and
$\conf(c)$, read as the restriction to $\mathcal{B}(\tau_C)$.
\end{definition}

We read $\tau \le \tau'$ as ``$\tau'$ assumes at least as much''.
$\trustlat$ is a finite Boolean lattice: meets are intersections, joins are
unions, and every subset has both. By Lemma~\ref{lem:generators}, nothing
is lost by taking subsets of $U$ in place of arbitrary combinations of
levels. By its incomparability clause, no candidate assumption implies
another across all strand spaces. In the running example, $\trustlat$ has
eight elements, from $\emptyset$ to all of $U$.

\begin{definition}[Achievement under a trust]
\label{def:ach}
Fix $\Pi$ and $\sigma$, with free variables $\bar{x}$. For $\tau \in
\trustlat$ and a conclusion $\chi$ whose free variables are among $\bar{x}$,
write $\tau \ach \chi$ if for every bundle $\mathcal{B} \in
\mathcal{B}(\tau_C)$ and every assignment $\eta$,
\[
\mathcal{B}, \eta \models \sigma \wedge \textstyle\bigwedge \tau_O
\quad\text{implies}\quad
\mathcal{B}, \eta \models \chi .
\]
That is, $\Pi$ achieves $\forall\bar{x}.\,(\sigma \wedge \tau \Rightarrow
\chi)$ in the model restricted by $\tau$'s channel assumptions. With no
channel assumptions this is achievement in the sense of
Definition~\ref{def:achieve}.
\end{definition}

In the running example, with $\chi$ recent agreement,
$\{\Non(sk_A), \Unq(N)\} \ach \chi$ holds by the argument of
Section~\ref{sec:example}, while $\{\Non(sk_A)\} \ach \chi$ fails because
the replay of Fig.~\ref{fig:runs}(b) satisfies $\sigma \wedge
\Non(sk_A)$ and violates $\chi$.

\begin{definition}[The goal order]
\label{def:goalorder}
$\goals$ is the set of conclusions with free variables among $\bar{x}$,
identified up to equivalence over the bundles of $\Pi$, and ordered by
\[
\chi \le \chi' \iff \text{for every bundle } \mathcal{B} \text{ of } \Pi
\text{ and every } \eta,\ \mathcal{B}, \eta \models \chi' \text{ implies }
\mathcal{B}, \eta \models \chi .
\]
\end{definition}

Thus, $\chi \le \chi'$ means that $\chi'$ implies $\chi$, so a stronger
goal is higher. The greatest element is $\mathsf{false}$, the secrecy
conclusion, and the least is the conclusion that holds everywhere. The
conclusion fragment is closed under both connectives up to equivalence. A
disjunction of conclusions is a conclusion, and a conjunction of two
conclusions is one after distributing $\wedge$ over $\vee$ and renaming the
existential variables apart. Hence, $\goals$ is a lattice whose join is conjunction and
whose meet is disjunction.

\paragraph{Why the order on goals is reversed.}
Definition~\ref{def:goalorder} reverses the implication order. With it,
both orders put ``more'' at the top: more assumed in $\trustlat$, more
guaranteed in $\goals$. Three things then follow directly. The strongest
goal of \cite{rowe} is a greatest element, as its name says. The trust
required by a conjunction of goals is the join of the trusts required by
each. Achievement is monotone in the
trust and antitone in the goal (Lemma~\ref{lem:polarity}). Therefore, a
stronger goal has fewer sufficient trusts, and the map $\chi \mapsto
\Suff(\chi)$ is antitone under inclusion. The weakest-trust map
$\weakest$, which picks the least element of $\Suff(\chi)$ where one
exists, and the strongest-goal map $\strongest$ are monotone. Under the implication order, $\weakest$ and
$\strongest$ would be antitone. The content is the
same, and Section~\ref{sec:galois} uses the monotone form.

The next lemma makes A4 explicit. Together with its refinement
Lemma~\ref{lem:separable} (A6), it is the only property of the protocol
semantics that the rest of the paper uses.

\begin{lemma}[Assuming more cannot hurt]
\label{lem:monotone}
If $\tau \subseteq \tau'$ and $\tau \ach \chi$, then $\tau' \ach \chi$.
That is, if $\Pi$ achieves
$\forall\bar{x}.\,(\sigma \wedge \tau \Rightarrow \chi)$ in the adversary
model restricted by $\tau$'s channel assumptions, then it achieves the
corresponding sentence for $\tau'$.
\end{lemma}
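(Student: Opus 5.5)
The plan is to unfold Definition~\ref{def:ach} for $\tau'$ and reduce every instance to one for $\tau$, treating the two parts of a trust separately as Definition~\ref{def:trustlat} splits them. Write $\tau \subseteq \tau'$ as $\tau_O \subseteq \tau'_O$ and $\tau_C \subseteq \tau'_C$, which holds because the split is by kind of atom. Fix a bundle $\mathcal{B} \in \mathcal{B}(\tau'_C)$ and an assignment $\eta$ with $\mathcal{B}, \eta \models \sigma \wedge \bigwedge \tau'_O$. The goal is to show $\mathcal{B}, \eta \models \chi$.

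For the channel part, I will show that $\mathcal{B}(\cdot)$ is antitone: $R \subseteq R'$ implies $\mathcal{B}(R') \subseteq \mathcal{B}(R)$. This is immediate from Definition~\ref{def:channels}. A bundle lies in $\mathcal{B}(R)$ exactly when none of its edges uses a capability in $F(R) = \bigcup_{r \in R} F(r)$, where $F(r)$ is the fixed set of capabilities $r$ forbids. Since $F$ is monotone under union, any bundle avoiding $F(R')$ also avoids $F(R)$. This holds in both the inject/read and hijack readings, because in each, an assumption forbids a fixed set of capabilities. Hence $\mathcal{B} \in \mathcal{B}(\tau_C)$.

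For the origination part, $\bigwedge \tau'_O$ is a conjunction that includes every conjunct of $\bigwedge \tau_O$. Satisfaction of a conjunction is satisfaction of each conjunct (Definition~\ref{def:achieve}), so $\mathcal{B}, \eta \models \sigma \wedge \bigwedge \tau_O$. Applying $\tau \ach \chi$ to this $\mathcal{B}$ and $\eta$ then gives $\mathcal{B}, \eta \models \chi$, as required. The sentence-level restatement in the lemma is the same fact read through the last line of Definition~\ref{def:ach}.

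No step is a real obstacle. The only point needing care is that the two parts act by different mechanisms: channel assumptions shrink the model and origination atoms strengthen the hypothesis. The argument must therefore check that both go the right way, and not treat $\tau$ uniformly as a conjunction. I would also remark that the argument uses nothing about $\chi$ or $\sigma$, only that the hypothesis is positive. This is why A4 costs nothing for the atoms of $U$.
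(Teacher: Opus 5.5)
Your proposal is correct and follows the paper's proof. The paper makes the same two checks: $\mathcal{B}(\tau'_C) \subseteq \mathcal{B}(\tau_C)$ because the forbidden capabilities only grow, and $\bigwedge \tau'_O$ contains every conjunct of $\bigwedge \tau_O$. It packages them as the inclusion $X(\tau') \subseteq X(\tau)$ of candidate counterexample pairs, which is your pointwise argument stated for sets.
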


\begin{proof}
A larger trust leaves fewer runs that could break the goal, in both its
channel part and its origination part. For a trust $\tau$ let
\[
X(\tau) = \{(\mathcal{B}, \eta) : \mathcal{B} \in \mathcal{B}(\tau_C),\
\mathcal{B}, \eta \models \sigma \wedge \textstyle\bigwedge \tau_O\},
\]
the pairs that could witness a failure of $\chi$. By
Definition~\ref{def:ach}, $\tau \ach \chi$ if and only if every pair in
$X(\tau)$ satisfies $\chi$. It suffices to show $X(\tau') \subseteq
X(\tau)$. Let $(\mathcal{B}, \eta) \in X(\tau')$.

\emph{Channel part.} $\tau_C \subseteq \tau'_C$, so the capabilities
forbidden by $\tau_C$ are among those forbidden by $\tau'_C$. A bundle none
of whose edges uses a capability forbidden by $\tau'_C$ therefore uses none
forbidden by $\tau_C$, and $\mathcal{B}(\tau'_C) \subseteq
\mathcal{B}(\tau_C)$. Hence, $\mathcal{B} \in \mathcal{B}(\tau_C)$.

\emph{Origination part.} $\tau_O \subseteq \tau'_O$, so $\bigwedge \tau'_O$
contains every conjunct of $\bigwedge \tau_O$, and $\mathcal{B}, \eta
\models \sigma \wedge \bigwedge \tau'_O$ gives $\mathcal{B}, \eta \models
\sigma \wedge \bigwedge \tau_O$.

Hence $(\mathcal{B}, \eta) \in X(\tau)$, and if every pair in $X(\tau)$
satisfies $\chi$ then so does every pair in $X(\tau')$.
\end{proof}

The two parts act differently. An origination atom shrinks the set of
runs a goal speaks about, while a channel assumption shrinks the set of
runs that exist. Both act monotonically, and nothing later needs to
distinguish them except Lemma~\ref{lem:separable}, which uses the same
split.

Two sets describe the relation from each side: the trusts that suffice for
a goal, and the goals a trust achieves.

\begin{definition}[Sufficient trusts and achieved goals]
\label{def:suffach}
\[
\Suff(\chi) = \{\tau \in \trustlat : \tau \ach \chi\},
\qquad
\Ach(\tau) = \{\chi \in \goals : \tau \ach \chi\}.
\]
\end{definition}

In the running example, $\Suff(\chi)$ consists of the two trusts that
contain both $\Non(sk_A)$ and $\Unq(N)$, with or without $\Non(sk_B)$.

\begin{lemma}[Polarity]
\label{lem:polarity}
$\Suff(\chi)$ is an up-set of $\trustlat$ and $\Ach(\tau)$ is a down-set of
$\goals$. Thus, $\ach$ is monotone in the trust and antitone in the goal.
\end{lemma}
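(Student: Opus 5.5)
The lemma has two halves, and I would prove each directly from the definitions.

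\emph{$\Suff(\chi)$ is an up-set.} This is Lemma~\ref{lem:monotone} restated. If $\tau \in \Suff(\chi)$ and $\tau \le \tau'$ in $\trustlat$, that is $\tau \subseteq \tau'$, then $\tau \ach \chi$ gives $\tau' \ach \chi$. Hence $\tau' \in \Suff(\chi)$. Nothing beyond citing that lemma is needed.

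\emph{$\Ach(\tau)$ is a down-set.} Let $\chi' \in \Ach(\tau)$ and $\chi \le \chi'$ in $\goals$. By Definition~\ref{def:goalorder}, every pair $(\mathcal{B}, \eta)$ with $\mathcal{B}$ a bundle of $\Pi$ that satisfies $\chi'$ also satisfies $\chi$. To show $\tau \ach \chi$, I take $\mathcal{B} \in \mathcal{B}(\tau_C)$ and $\eta$ with $\mathcal{B}, \eta \models \sigma \wedge \bigwedge \tau_O$. Since $\tau \ach \chi'$, we get $\mathcal{B}, \eta \models \chi'$. The bundles in $\mathcal{B}(\tau_C)$ are bundles of $\Pi$, because Definition~\ref{def:channels} defines $\mathcal{B}(R)$ as a subset of the bundles of $\Pi$. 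So the goal order applies to $\mathcal{B}$ and yields $\mathcal{B}, \eta \models \chi$.

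Two details need care. First, $\goals$ consists of equivalence classes, so I must check that $\tau \ach \chi$ depends only on the class of $\chi$. It does, because equivalence is taken over all bundles of $\Pi$, and these include those in $\mathcal{B}(\tau_C)$. Second, the order quantifies over every $\eta$, and here $\eta$ assigns nodes of $\mathcal{B}$ and terms, just as in Definition~\ref{def:ach}, so the quantifiers match.

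The final sentence then combines the two halves. If $\tau \le \tau'$ and $\chi \le \chi'$, then $\tau \ach \chi'$ implies $\tau' \ach \chi'$ and $\tau \ach \chi$. That is, $\ach$ is monotone in the trust and antitone in the goal. The only step with real content is the observation that channel-restricted bundles are still bundles of $\Pi$, which is what lets the goal order, defined over all bundles of $\Pi$, be applied inside the restricted model. I expect no real obstacle.
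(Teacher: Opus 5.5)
Your proposal is correct and follows the paper's proof: the up-set half is Lemma~\ref{lem:monotone}, and the down-set half applies the goal order of Definition~\ref{def:goalorder}, using the observation that every bundle in $\mathcal{B}(\tau_C)$ is a bundle of $\Pi$. Your checks that $\ach$ is well defined on equivalence classes of goals and that the quantifiers over $\eta$ match are sound. The paper leaves both implicit.
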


\begin{proof}
A larger trust achieves at least as much, and a goal implied by an
achieved goal is achieved. The first claim is Lemma~\ref{lem:monotone}.
For the second, let $\chi \le \chi'$ and $\tau \ach \chi'$. For every
bundle $\mathcal{B} \in \mathcal{B}(\tau_C)$ and every $\eta$ satisfying
$\sigma \wedge \bigwedge \tau_O$, $\mathcal{B}, \eta \models \chi'$. Since
$\mathcal{B}$ is a bundle of $\Pi$, Definition~\ref{def:goalorder} gives
$\mathcal{B}, \eta \models \chi$. Hence, $\tau \ach \chi$. This is the form of
\cite[Lemma~1]{rowe} for restricted models.
\end{proof}

Lemma~\ref{lem:polarity} is the only structural input to the next section.

\section{Why the Weakest Trust Needs Either/Or Assumptions}
\label{sec:galois}

In the two-key variant of Section~\ref{sec:example}, the trusts
$\{\Non(k_1), \Unq(N)\}$ and $\{\Non(k_2), \Unq(N)\}$ both achieve recent
agreement, and neither contains the other. However, their intersection
$\{\Unq(N)\}$ does not achieve it. Thus, there is no single weakest trust.
By contrast, the strongest goal of Rowe et al.\ is unique whenever their
analysis terminates.

Rowe et al.\ ask the forward question: the strongest goal a trust
achieves. It has one answer whenever the analysis terminates, because the
goal language allows disjunctions in conclusions. When the analysis finds
several shapes, the strongest goal says that one of them occurred, and
that disjunction is again a goal, by Theorem~\ref{thm:sas}. The backward
question asks for the weakest trust that achieves a goal, and it requires
the same disjunction in the assumptions. For the two-key variant, the
answer is ``the first trust or the second''. Hypotheses, however, are
restricted to conjunctions of atoms and cannot express this. When
either/or assumptions, hypotheses that are disjunctions of trusts, are
allowed, every goal has exactly one weakest trust, as
Theorem~\ref{thm:completion} and Corollary~\ref{cor:diagnosis} show.

A Galois connection between trusts and goals explains both facts. It
consists of two maps: $\weakest$, from goals to trusts, and $\strongest$,
from trusts to goals. When these maps exist, a trust $\tau$ is at least $\weakest(\chi)$
exactly when it achieves $\chi$, or equivalently, when $\chi$ is at most
$\strongest(\tau)$. However, these maps need not always exist. Birkhoff's
polarities \cite{birkhoff}, on which formal concept analysis rests
\cite{ganter}, give every relation between two sets an antitone Galois
connection between the subsets of each. For achievement, since
$\Suff(\chi)$ is an up-set and $\Ach(\tau)$ a down-set, the polarity connects up-sets of trusts with
down-sets of goals. The maps $\weakest$ and $\strongest$ between
individual trusts and goals are a different object. They exist exactly
when every $\Suff(\chi)$ has a least element and every $\Ach(\tau)$ has a
greatest element, as the paragraph after Theorem~\ref{thm:galois}
explains. Theorem~\ref{thm:nofilter} shows that the first condition can
fail. Goals are ordered as in Definition~\ref{def:goalorder}, with
stronger goals higher, up to equivalence. The two questions become two
maps.

\begin{definition}[The two maps]
\label{def:maps}
$\strongest(\tau)$ is the greatest element of $\Ach(\tau)$, the strongest
goal achieved under $\tau$, and $\weakest(\chi)$ is the least element of
$\Suff(\chi)$, the weakest trust achieving $\chi$, whenever these elements
exist.
\end{definition}

The backward map can fail to exist. Because $\trustlat$ is a finite
lattice, the intersection $\bigcap \Suff(\chi)$ of all sufficient trusts
always exists in $\trustlat$. Definition~\ref{def:maps} asks for more: this
intersection must itself achieve $\chi$, so that $\Suff(\chi)$ has a least
element. In the two-key variant, it does not, as
Theorem~\ref{thm:nofilter} shows.

When both maps exist, four facts follow from the order alone. Both maps
are monotone: a stronger goal has a weakest trust at least as large,
and a larger trust has a strongest goal at least as strong. The weakest
trust for a join of goals, the weakest goal that implies each of
them, is the union of their weakest trusts, whenever that join exists in
$\goals$. Dually, the strongest goal achieved under the intersection of
several trusts is the strongest goal implied by each of their strongest
goals. A round trip from a trust to its strongest goal
and back to the weakest trust for that goal can only drop assumptions, and
a second round trip drops nothing more. A round trip from a goal to its
weakest trust and back can only strengthen the goal, and again a second
round trip changes nothing.

Order theory has names for these facts, and the statement below uses them.
The map $\weakest$ is the \emph{left adjoint} and $\strongest$ the
\emph{right adjoint} of the connection, written $\weakest \dashv
\strongest$. A \emph{kernel} (or interior) operator on a poset $P$ is a map
$\kappa : P \to P$ that is monotone, deflationary ($\kappa(p) \le p$), and
idempotent ($\kappa(\kappa(p)) = \kappa(p)$). A \emph{closure} operator is
the dual: monotone, inflationary ($p \le \kappa(p)$), and idempotent. The
round trips are a kernel operator on trusts and a closure operator on
goals.

\begin{theorem}[Achievement is a Galois connection]
\label{thm:galois}
Suppose $\strongest(\tau)$ exists for every $\tau$ and $\weakest(\chi)$
exists for every $\chi$. Then $\strongest$ and $\weakest$ are monotone and,
for all $\tau$ and $\chi$,
\begin{equation}
\weakest(\chi) \le \tau
\iff
\tau \ach \chi
\iff
\chi \le \strongest(\tau).
\label{eq:adjunction}
\end{equation}
That is, $\weakest \dashv \strongest$. Consequently, $\weakest$ preserves
all joins that exist in $\goals$, $\strongest$ preserves all meets that
exist in $\trustlat$, $\weakest \circ \strongest$ is a kernel operator on
trusts, and $\strongest \circ \weakest$ is a closure operator on goals.
\end{theorem}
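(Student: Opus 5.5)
The plan is to derive the two equivalences in \eqref{eq:adjunction} directly from Definition~\ref{def:maps} and Lemma~\ref{lem:polarity}, and then obtain everything else by standard order-theoretic arguments that use only the adjunction.

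\emph{Step 1: the equivalences.} For the left equivalence, suppose first that $\tau \ach \chi$. Then $\tau \in \Suff(\chi)$. By Definition~\ref{def:maps}, $\weakest(\chi)$ is the least element of $\Suff(\chi)$, so $\weakest(\chi) \le \tau$. Conversely, suppose $\weakest(\chi) \le \tau$. The trust $\weakest(\chi)$ lies in $\Suff(\chi)$, and $\Suff(\chi)$ is an up-set by Lemma~\ref{lem:polarity}, so $\tau \in \Suff(\chi)$. The right equivalence is the dual argument. If $\tau \ach \chi$, then $\chi \in \Ach(\tau)$, and since $\strongest(\tau)$ is the greatest element of $\Ach(\tau)$, we get $\chi \le \strongest(\tau)$. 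Conversely, $\strongest(\tau) \in \Ach(\tau)$ and $\Ach(\tau)$ is a down-set, so any $\chi \le \strongest(\tau)$ belongs to $\Ach(\tau)$.

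\emph{Step 2: monotonicity.} I will derive it from the adjunction rather than prove it separately. Taking $\tau = \strongest(\tau)$'s counterpart, that is, setting $\tau = \weakest(\chi)$ in \eqref{eq:adjunction}, gives $\chi \le \strongest\weakest(\chi)$. Setting $\chi = \strongest(\tau)$ gives $\weakest\strongest(\tau) \le \tau$. Now let $\chi \le \chi'$. Then $\chi \le \chi' \le \strongest\weakest(\chi')$, and the adjunction yields $\weakest(\chi) \le \weakest(\chi')$. The argument for $\strongest$ is symmetric.

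\emph{Step 3: preservation of joins and meets.} Let $\chi = \bigvee_i \chi_i$ be a join that exists in $\goals$. For every $\tau$, the following are equivalent: $\weakest(\chi) \le \tau$; $\chi \le \strongest(\tau)$; $\chi_i \le \strongest(\tau)$ for all $i$; and $\weakest(\chi_i) \le \tau$ for all $i$. Hence $\weakest(\chi)$ is the join of the $\weakest(\chi_i)$ in $\trustlat$, namely their union. The same computation, dualised, shows that $\strongest$ preserves every meet that exists in $\trustlat$.

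\emph{Step 4: the round trips.} The composite $\weakest\circ\strongest$ is monotone, as a composite of monotone maps, and it is deflationary by Step 2. For idempotence, I will use the triangle identity $\strongest\weakest\strongest = \strongest$. One direction, $\ge$, comes from inflationarity applied at $\strongest(\tau)$. The other, $\le$, comes from applying the monotone map $\strongest$ to $\weakest\strongest(\tau) \le \tau$. Applying $\weakest$ to both sides of the identity gives idempotence. The closure case for $\strongest\circ\weakest$ is dual.

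\emph{The main obstacle.} Nothing here is deep. The only delicate point is bookkeeping. Goals are equivalence classes ordered by reverse implication over bundles of $\Pi$, while the maps are defined through $\ach$, which quantifies over the restricted models $\mathcal{B}(\tau_C)$. The down-set property of $\Ach(\tau)$ must therefore be the one proved in Lemma~\ref{lem:polarity}, which handles exactly this mismatch, since each restricted bundle is still a bundle of $\Pi$. I would also note that joins in $\goals$ are conjunctions and need not exist for infinite families, which is why the statement restricts to joins that exist.
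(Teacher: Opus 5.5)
Your proposal is correct and follows the paper's proof step for step. It gets the equivalences from the least/greatest-element definitions together with Lemma~\ref{lem:polarity}, obtains unit and counit by substituting $\tau = \weakest(\chi)$ and $\chi = \strongest(\tau)$, reads monotonicity and preservation of joins and meets off the adjunction, and derives idempotence of the composites from the triangle identities. The differences are cosmetic: you get idempotence of $\weakest\circ\strongest$ by applying $\weakest$ to the single identity $\strongest\weakest\strongest = \strongest$ where the paper records both triangle identities, and the garbled opening of your Step~2 (``$\tau = \strongest(\tau)$'s counterpart'') should simply read ``set $\tau = \weakest(\chi)$''.
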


\begin{proof}
The two equivalences are the definitions of least and greatest element
combined with the up-set and down-set properties of
Lemma~\ref{lem:polarity}, and the rest follows from them by order theory
alone. Since the proof uses nothing about $\ach$ other than
Lemma~\ref{lem:polarity} and the two existence hypotheses,
Theorem~\ref{thm:completion} can reuse it with another lattice in place of
$\trustlat$.

\emph{The equivalences.} Fix $\tau$ and $\chi$. If $\tau \ach \chi$, then
$\tau \in \Suff(\chi)$, and $\weakest(\chi)$, the least element of
$\Suff(\chi)$, satisfies $\weakest(\chi) \le \tau$. Conversely, if
$\weakest(\chi) \le \tau$, then since $\weakest(\chi) \in \Suff(\chi)$ and
$\Suff(\chi)$ is an up-set, $\tau \in \Suff(\chi)$, that is, $\tau \ach
\chi$. This is the left equivalence. For the right one, if $\tau \ach
\chi$, then $\chi \in \Ach(\tau)$, so $\chi \le \strongest(\tau)$, the
greatest element of $\Ach(\tau)$. Conversely, if $\chi \le
\strongest(\tau)$, then since $\strongest(\tau) \in \Ach(\tau)$ and
$\Ach(\tau)$ is a down-set, $\chi \in \Ach(\tau)$, that is, $\tau \ach
\chi$. Chaining the two gives \eqref{eq:adjunction}, which is the defining
property of an adjunction $\weakest \dashv \strongest$ between posets
\cite{birkhoff,ganter}.

\emph{Unit and counit.} Taking $\tau = \weakest(\chi)$ in
\eqref{eq:adjunction}, the left side holds, so
\begin{equation}
\chi \le \strongest(\weakest(\chi)) \quad\text{for every } \chi.
\label{eq:unit}
\end{equation}
Taking $\chi = \strongest(\tau)$, the right side holds, so
\begin{equation}
\weakest(\strongest(\tau)) \le \tau \quad\text{for every } \tau.
\label{eq:counit}
\end{equation}

\emph{Monotonicity.} Let $\chi_1 \le \chi_2$. By \eqref{eq:unit},
$\chi_1 \le \chi_2 \le \strongest(\weakest(\chi_2))$, and
\eqref{eq:adjunction}, read from right to left with $\chi = \chi_1$ and
$\tau = \weakest(\chi_2)$, gives $\weakest(\chi_1) \le \weakest(\chi_2)$.
Dually, let $\tau_1 \le \tau_2$. By \eqref{eq:counit}, $\weakest(\strongest(\tau_1)) \le
\tau_1 \le \tau_2$, and \eqref{eq:adjunction} with $\chi =
\strongest(\tau_1)$ gives $\strongest(\tau_1) \le \strongest(\tau_2)$.

\emph{Joins and meets.} Let $\{\chi_i\}_{i \in I}$ be a family of goals
whose join $\chi = \bigvee_i \chi_i$ exists in $\goals$. For every trust
$\tau$,
\[
\weakest(\chi) \le \tau
\iff \chi \le \strongest(\tau)
\iff \chi_i \le \strongest(\tau) \text{ for all } i
\iff \weakest(\chi_i) \le \tau \text{ for all } i,
\]
using \eqref{eq:adjunction} twice and the definition of a join in between.
Taking $\tau = \weakest(\chi)$ shows that $\weakest(\chi)$ is an upper bound
of the $\weakest(\chi_i)$. Taking $\tau$ an arbitrary upper bound of them
shows $\weakest(\chi) \le \tau$. So $\weakest(\chi)$ is the least upper
bound: $\weakest(\bigvee_i \chi_i) = \bigvee_i \weakest(\chi_i)$, and in
particular the join on the right exists. The argument for $\strongest$ is
the mirror image: if $\tau = \bigwedge_i \tau_i$ exists in $\trustlat$,
then for every $\chi$, $\chi \le \strongest(\tau)$ iff $\weakest(\chi) \le
\tau$ iff $\weakest(\chi) \le \tau_i$ for all $i$ iff $\chi \le
\strongest(\tau_i)$ for all $i$, so $\strongest(\tau)$ is the greatest
lower bound of the $\strongest(\tau_i)$.

\emph{The composites.} Apply $\strongest$, which is monotone, to
\eqref{eq:counit}: $\strongest(\weakest(\strongest(\tau))) \le
\strongest(\tau)$. Apply \eqref{eq:unit} with $\chi = \strongest(\tau)$:
$\strongest(\tau) \le \strongest(\weakest(\strongest(\tau)))$. Hence
\begin{equation}
\strongest \circ \weakest \circ \strongest = \strongest,
\qquad\text{and dually}\qquad
\weakest \circ \strongest \circ \weakest = \weakest,
\label{eq:triangle}
\end{equation}
the second because $\weakest$, applied to \eqref{eq:unit}, gives
$\weakest(\chi) \le \weakest(\strongest(\weakest(\chi)))$, and
\eqref{eq:counit} at $\tau = \weakest(\chi)$ gives the reverse. Now
$\weakest \circ \strongest$ is monotone as a composite of monotone maps, deflationary by \eqref{eq:counit}, and
idempotent because $\weakest \strongest \weakest \strongest = \weakest
\strongest$ by \eqref{eq:triangle}, so it is a kernel operator. Likewise
$\strongest \circ \weakest$ is monotone, inflationary by \eqref{eq:unit}
and idempotent by \eqref{eq:triangle}, so it is a closure operator.
\end{proof}

Theorem~\ref{thm:galois} assumes that both maps exist, and whether they do
is the point at issue. Each side has an exact answer. The map
$\strongest(\tau)$ exists whenever enrich-by-need terminates from $\sigma
\wedge \tau$, and we assume this.

\begin{description}
\item[Termination assumption.] For every trust $\tau \in \trustlat$,
  enrich-by-need started from the skeleton of $\sigma \wedge \tau$
  terminates, in the model restricted by $\tau$'s channel assumptions.
\end{description}

\noindent The assumption covers every atom of $U$, origination and channel
atoms alike, and it is not guaranteed in general. For trusts of
origination atoms, \cite[Theorem~2]{rowe}, restated as
Theorem~\ref{thm:sas}, then characterises
$\strongest(\tau)$ as the shape analysis sentence. That theorem
presupposes termination and does not supply it. A channel atom restricts
the adversary and leaves the hypothesis unchanged. For channel atoms, we
also assume that the characterisation carries over to the restricted
model, a case that \cite{rowe} does not cover and that
Section~\ref{sec:lim-channels} discusses. The left adjoint need not exist. An
up-set of a lattice has a least element exactly when it is a
\emph{principal filter}, the set of all elements above a single element.
Hence, $\weakest(\chi)$ exists in $\trustlat$ if and only if $\Suff(\chi)$
is a principal filter. By Theorem~\ref{thm:unique}, this happens exactly
when every minimal attack has a single stopping assumption. It fails for
the two-key variant of Section~\ref{sec:example}, but $\weakest(\chi)$
always exists in the disjunctive completion, as
Theorem~\ref{thm:completion} shows. Where the
hypothesis of Theorem~\ref{thm:galois} fails, the completion supplies the
left adjoint.

The two-key variant witnesses that a single weakest trust can fail to
exist.

\begin{theorem}[The sufficient trusts do not form a filter]
\label{thm:nofilter}
There are a protocol, a situation, and a goal $\chi$ with $\tau_1, \tau_2
\in \Suff(\chi)$ and $\tau_1 \cap \tau_2 \notin \Suff(\chi)$. Hence
$\Suff(\chi)$ has no least element in $\trustlat$ and $\weakest(\chi)$ is
undefined there.
\end{theorem}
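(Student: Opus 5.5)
The witness will be the two-key variant of Section~\ref{sec:example}. The protocol has challenger role $\langle +N,\ -\{\!|\{\!|N,a,b|\!\}_{k_1}|\!\}_{k_2}\rangle$ and the matching prover role. The situation $\sigma$ is $B$'s completed challenger strand with $A$ on $N$, and $\chi$ is recent agreement. The candidate set is $U = \{\Non(k_1),\Non(k_2),\Unq(N)\}$. I take $\tau_1 = \{\Non(k_1),\Unq(N)\}$ and $\tau_2 = \{\Non(k_2),\Unq(N)\}$, so that $\tau_1\cap\tau_2 = \{\Unq(N)\}$. Three facts must be shown: $\tau_1 \ach \chi$, $\tau_2 \ach \chi$, and $\{\Unq(N)\}\not\ach\chi$. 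The conclusion about least elements is then pure order theory. Any least element $\tau_0$ of $\Suff(\chi)$ would satisfy $\tau_0 \subseteq \tau_1$ and $\tau_0\subseteq\tau_2$, hence $\tau_0 \subseteq \tau_1\cap\tau_2$. By Lemma~\ref{lem:polarity}, $\Suff(\chi)$ is an up-set, so $\tau_1\cap\tau_2$ would then lie in $\Suff(\chi)$, which is a contradiction.

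Non-sufficiency is the easy direction, because one explicit bundle suffices. I use the forgery of Fig.~\ref{fig:runs}(d). The adversary creates $k_1$ and $k_2$ and receives $N$ from $B$. It pairs $N$ with $A$ and $B$, encrypts under $k_1$ and then under $k_2$, and delivers the result to $B$. I check the three bundle conditions of Definition~\ref{def:bundle} directly. In this bundle $N$ originates only at $B$'s first node, so $\Unq(N)$ holds, and $\sigma$ holds as well. The bundle contains no prover strand, so $\chi$ fails. There are no channel atoms, so $\mathcal{B}(\tau_C)$ is the set of all bundles.

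Sufficiency of $\tau_1$ is the substantive step and the main obstacle. I would argue it with the standard strand-space authentication test. Fix a bundle satisfying $\sigma\wedge\Non(k_1)\wedge\Unq(N)$. The term $\{\!|N,A,B|\!\}_{k_1}$ occurs in the message $B$ receives. Take a $\preceq$-minimal node where it appears as a subterm; the induction principle after Definition~\ref{def:bundle} provides one. Such a node cannot lie on an adversary strand. An encrypt strand would need $k_1$, which some earlier node would have to carry as a subterm, since only create strands originate atoms and only transmissions carry keys as terms. Then $k_1$ would originate somewhere, contradicting $\Non(k_1)$. The other adversary strands cannot produce the ciphertext first, because each only receives a term containing it before sending it.

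So the minimal node is regular. I will check that the only regular transmission containing such a term is a prover's response with $a=A$, $b=B$ and the same $N$. That gives the prover strand required by $\chi$. Since $N$ originates uniquely at $B$'s first node, and the prover receives $N$ before sending, the bundle order yields $\mathsf{Preceq}(m_1,n)$.

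For $\tau_2$ the argument is symmetric, applied to the outer encryption under $k_2$. A regular node producing $\{\!|\{\!|N,A,B|\!\}_{k_1}|\!\}_{k_2}$ is again a prover response. The delicate point I would check carefully is that non-minimal uses cannot interfere. In particular, a challenger's reception does not emit the ciphertext, and the same agreement goal is met by the matched prover strand. Alternatively, this step could be discharged by the CPSA run reported in Section~\ref{sec:eval}, which yields a single shape for each $\tau_i$.
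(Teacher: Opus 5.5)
Your proposal is correct. The least-element argument and the forgery of Fig.~\ref{fig:runs}(d) as the witness for $\{\Unq(N)\}\notin\Suff(\chi)$ are exactly the paper's.

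Where you differ is the sufficiency of $\tau_1$ and $\tau_2$. The paper does not reason about bundles there at all. It takes the two minimal attacks of Table~\ref{tab:example}, with stopping sets $\{\Non(k_1),\Non(k_2)\}$ and $\{\Unq(N)\}$, and applies Proposition~\ref{prop:hitting}, since each $\tau_i$ meets both. That is shorter and reuses the blocker machinery. But it rests on the list of attacks being complete. By Proposition~\ref{prop:hitting}, that completeness is equivalent to the sufficiency claim itself. Section~\ref{sec:example} supports it only informally (``the adversary needs both keys''), and Section~\ref{sec:eval} confirms it only on the bounded analyser's pool. Your minimal-origination argument proves $\tau_i \ach \chi$ directly in the unbounded bundle semantics, so it discharges what the paper's route presupposes.

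Three small repairs are needed.
\begin{itemize}
\item State the sort assumption. Your step that the only regular transmission carrying $\{\!|N,A,B|\!\}_{k_1}$ (or the outer ciphertext) as a subterm is a prover response with $a=A$, $b=B$, $n=N$ needs sorted parameters: $N$ a nonce, $a$ and $b$ names. Definition~\ref{def:strands} allows arbitrary terms to be substituted for parameters. A prover strand whose $a$ contains the ciphertext would originate it as a buried subterm. A challenger strand whose $N$ is such a term would too, though $\Unq(N)$ excludes that case. In either case your minimal node would not witness $\chi$. The paper makes this assumption explicit only in the hardness reduction.
\item Drop the fallback to a CPSA run. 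The two-key row was computed by the bounded analyser (marked A in Table~\ref{tab:results}), whose sufficiency verdicts hold only within its pool, so it could not replace your argument.
\item Fix the phrase ``only create strands originate atoms'', which is false, since $B$ originates $N$. All you need is that any node carrying $k_1$ forces $k_1$ to originate somewhere. The same minimal-node argument gives that.
\end{itemize}
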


\begin{proof}
Two trusts that each protect a different key both stop the forgery, and
their intersection protects neither. Take the two-key variant of
Section~\ref{sec:example}, in which the response is nested under $k_1$ and
$k_2$. The situation is $B$'s completed challenger run, and $\chi$ is
$B$'s recent agreement with $A$, over $U = \{\Non(k_1), \Non(k_2),
\Unq(N)\}$. By Section~\ref{sec:example} (Table~\ref{tab:example}), there
are two minimal attacks. In the forgery, the adversary builds the response
and must therefore hold both keys, so protecting either key removes it. The replay
returns a response from an earlier session that used $N$, and only the
freshness of $N$ removes it. Put
\[
\tau_1 = \{\Non(k_1), \Unq(N)\},
\qquad
\tau_2 = \{\Non(k_2), \Unq(N)\}.
\]
Both are sufficient. Each stops the forgery run by one of its keys and
the replay run by freshness (formally, each meets both stopping sets
$\{\Non(k_1),\Non(k_2)\}$ and $\{\Unq(N)\}$, and
Proposition~\ref{prop:hitting} applies). Their intersection $\tau_1 \cap
\tau_2 = \{\Unq(N)\}$ leaves both keys available to the adversary, and the
forgery run survives it, so the intersection is not sufficient. If
$\Suff(\chi)$ had a least element $m$, it would satisfy $m \subseteq \tau_1
\cap \tau_2$, and the up-set property would force $\tau_1 \cap \tau_2 \in
\Suff(\chi)$, a contradiction. Section~\ref{sec:eval} recovers the same
minimal attacks mechanically.
\end{proof}

Thus, sufficient trusts are not closed under intersection. The same
property separates predicate transformers that have a weakest
precondition from those that do not \cite{dijkstra}, as
Section~\ref{sec:related} discusses. We address this by allowing either/or
trusts. They form the disjunctive completion of $\trustlat$, whose
elements are \emph{antichains}: families of trusts in which no member
contains another, each read as ``one of these trusts holds''.

\begin{definition}[Disjunctive completion]
\label{def:completion}
$\completion(\trustlat)$ is the lattice of antichains of subsets of $U$,
an antichain $M$ denoting the up-set it generates, ordered by
\[
M \le M' \iff \text{every } T' \in M' \text{ contains some } T \in M.
\]
Equivalently, $\completion(\trustlat)$ is the lattice of monotone Boolean
functions over $U$, with $M$ denoting
$\bigvee_{T \in M} \bigwedge_{a \in T} a$. Equivalently again, it is the
free bounded distributive lattice on $U$ \cite{birkhoff}. The embedding $\trustlat
\hookrightarrow \completion(\trustlat)$ sends $\tau$ to $\{\tau\}$.
\end{definition}

For an antichain $M$ write $\mathop{\uparrow} M = \{\tau \in \trustlat :
\tau \supseteq T \text{ for some } T \in M\}$ for the up-set it generates.
The order of Definition~\ref{def:completion} is reverse inclusion of these
up-sets: $M \le M'$ iff $\mathop{\uparrow} M' \subseteq \mathop{\uparrow}
M$. Every up-set $F$ of the finite lattice $\trustlat$ is generated by its
set of minimal elements, $F = \mathop{\uparrow} \min F$, and an antichain
is the set of minimal elements of the up-set it generates. Hence, $M
\mapsto \mathop{\uparrow} M$ is a bijection between antichains and up-sets, and
$\completion(\trustlat)$ is order isomorphic to the up-sets of $\trustlat$
under reverse inclusion.

In particular, $\completion(\trustlat)$ is a finite lattice. The join of
$M$ and $M'$ is the antichain generating $\mathop{\uparrow} M \cap
\mathop{\uparrow} M'$, namely the minimal elements of $\{T \cup T' : T \in
M, T' \in M'\}$, and their meet is the antichain generating
$\mathop{\uparrow} M \cup \mathop{\uparrow} M'$, namely the minimal
elements of $M \cup M'$. The least element is $\{\emptyset\}$, which
generates all of $\trustlat$ and denotes $\mathsf{true}$. The greatest is
the empty antichain, which generates the empty up-set and denotes
$\mathsf{false}$. The map $\tau \mapsto \{\tau\}$ is an order embedding,
since $\{\tau\} \le \{\tau'\}$ iff $\tau \subseteq \tau'$. It preserves
joins, since $\{\tau\} \vee \{\tau'\} = \{\tau \cup \tau'\}$, but not
meets. When $\tau$ and $\tau'$ are incomparable, $\{\tau\} \wedge \{\tau'\}
= \{\tau, \tau'\}$, whereas their meet in $\trustlat$ is $\tau \cap
\tau'$. The added meets are exactly what $\trustlat$ lacked in
Theorem~\ref{thm:nofilter}.

In the assumes-more order, an either/or trust assumes less than each of
its disjuncts. Thus, the antichain $\{\tau_1, \tau_2\}$ lies below
$\{\tau_1\}$ and $\{\tau_2\}$, and strictly above $\{\tau_1 \cap \tau_2\}$
when the two are incomparable. The added elements are meets in the order
of $\trustlat$ and joins in the implication order of formulas. We call
them disjunctions to avoid committing to either order. The completion
$\completion(\trustlat)$ is finite, but its number of elements is a
Dedekind number, the number of monotone Boolean functions of $|U|$
variables, which grows doubly exponentially in $|U|$. Finiteness makes the
adjoint computable but not cheap, and Section~\ref{sec:cost} states what is
known about the cost.

The disjunctive completion is the standard way to add the missing
disjunctions and nothing else. Abstract interpretation uses it to make a
domain precise enough for disjunctive analysis \cite{cousot}. In the goal
language, an element of $\completion(\trustlat)$ is a hypothesis in
monotone disjunctive normal form.

Stating the adjunction over $\completion(\trustlat)$ requires extending
$\ach$ to either/or trusts. For $M \in \completion(\trustlat)$ put
\begin{equation}
M \ach \chi \iff T \ach \chi \text{ for every } T \in M.
\label{eq:disjach}
\end{equation}
This is what achieving $\forall\bar{x}.\,(\sigma \wedge \bigvee_{T \in M}
\bigwedge T \Rightarrow \chi)$ means. A run satisfies the disjunctive hypothesis exactly when it
satisfies the hypothesis of some disjunct. Hence, every run satisfying it
satisfies $\chi$ exactly when this holds for each disjunct separately. For
channel assumptions, which restrict the model rather than the hypothesis,
\eqref{eq:disjach} means that either deployment is acceptable: the goal
must hold whichever disjunct is in force. On the embedded trusts,
$\{\tau\} \ach \chi$ exactly when $\tau \ach \chi$, so \eqref{eq:disjach}
extends $\ach$. It is monotone in the trust. If $M \le M'$ and $M \ach
\chi$, each $T' \in M'$ contains some $T \in M$ with $T \ach \chi$, so $T'
\ach \chi$ by Lemma~\ref{lem:monotone}, and $M' \ach \chi$. It is
antitone in the goal, since each $T \ach \chi$ is.

Over either/or trusts the weakest trust always exists and is unique.

\begin{theorem}[The left adjoint exists in the completion]
\label{thm:completion}
For every goal $\chi$, let $\weakest(\chi) \in \completion(\trustlat)$ be
the antichain of minimal elements of $\Suff(\chi)$. Then for every $\tau
\in \trustlat$,
\[
\tau \ach \chi \iff \weakest(\chi) \le \{\tau\},
\]
$\weakest(\chi)$ is the unique element of $\completion(\trustlat)$ with
this property, and $\weakest \dashv \strongest$ holds as in
Theorem~\ref{thm:galois} with $\weakest : \goals \to
\completion(\trustlat)$. Written as a formula, $\weakest(\chi)$ is the
monotone disjunctive normal form
$\bigvee_{T \in \weakest(\chi)} \bigwedge_{a \in T} a$, whose prime
implicants are exactly the minimal sufficient trusts.
\end{theorem}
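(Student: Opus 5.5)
The plan is to reduce everything to the bijection between antichains and up-sets recorded after Definition~\ref{def:completion}, together with the up-set property of $\Suff(\chi)$ from Lemma~\ref{lem:polarity}. First, since $\trustlat$ is finite and $\Suff(\chi)$ is an up-set, $\Suff(\chi) = \mathop{\uparrow}\min\Suff(\chi) = \mathop{\uparrow}\weakest(\chi)$. Then for $\tau \in \trustlat$, the relation $\weakest(\chi) \le \{\tau\}$ means $\mathop{\uparrow}\{\tau\} \subseteq \mathop{\uparrow}\weakest(\chi)$, that is, $\tau$ contains some minimal sufficient trust. This holds iff $\tau \in \Suff(\chi)$, iff $\tau \ach \chi$. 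This gives the displayed equivalence.

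For uniqueness, suppose some $M$ also satisfies $\tau \ach \chi \iff M \le \{\tau\}$ for all $\tau$. Now $M \le \{\tau\}$ iff $\tau \in \mathop{\uparrow}M$, so $\mathop{\uparrow}M = \Suff(\chi) = \mathop{\uparrow}\weakest(\chi)$. Since $M \mapsto \mathop{\uparrow}M$ is a bijection on antichains, $M = \weakest(\chi)$. Next I would check that $\weakest(\chi)$ is the least element of the extended $\Suff$ in $\completion(\trustlat)$, using \eqref{eq:disjach}. We have $\weakest(\chi) \ach \chi$ because each member is sufficient. Conversely, if $M \ach \chi$, then every $T \in M$ lies in $\Suff(\chi) = \mathop{\uparrow}\weakest(\chi)$, so $\mathop{\uparrow}M \subseteq \mathop{\uparrow}\weakest(\chi)$, i.e.\ $\weakest(\chi) \le M$. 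Together with the monotonicity of the extended $\ach$ already shown, this makes the extended $\Suff(\chi)$ a principal filter generated by $\weakest(\chi)$.

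For the adjunction I would invoke the proof of Theorem~\ref{thm:galois}, which, as that proof notes, uses only the polarity of $\ach$ and the existence of least and greatest elements. The right adjoint $\strongest$ must now be defined on $\completion(\trustlat)$. I would take $\strongest(M)$ to be the greatest goal achieved by every $T \in M$, namely the meet in $\goals$ (a disjunction of conclusions) of the $\strongest(T)$. This exists under the termination assumption, since $\goals$ has finite meets and a greatest element for the empty antichain. Its down-set property then follows from Lemma~\ref{lem:polarity} memberwise. This step is the main obstacle. It requires that $\strongest(T)$ exist for every $T \in \trustlat$, which I take from the termination assumption and Theorem~\ref{thm:sas}. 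It also requires that $\goals$ be closed under the needed meets, which was noted after Definition~\ref{def:goalorder}. With these in place, Theorem~\ref{thm:galois} applies verbatim with $\completion(\trustlat)$ in place of $\trustlat$.

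Finally, the formula reading follows from Definition~\ref{def:completion}: $M$ denotes $\bigvee_{T\in M}\bigwedge T$. The prime implicants of a monotone DNF given by an antichain of terms are exactly those terms, because a monotone implicant must contain some term, and no term of the antichain contains another. So the prime implicants are exactly the minimal sufficient trusts.
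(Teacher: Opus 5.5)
Your proposal is correct and follows essentially the same route as the paper's proof. Both arguments use $\Suff(\chi)=\mathop{\uparrow}\weakest(\chi)$ to get the displayed equivalence and, through the antichain/up-set bijection, uniqueness; both show that $\weakest(\chi)$ is the least element of the extended $\Suff(\chi)$, take $\strongest(M)$ to be the disjunction of the $\strongest(T)$ (with $\mathsf{false}$ for the empty antichain) under the termination assumption, and then reuse the proof of Theorem~\ref{thm:galois}; your prime-implicant step argues from the DNF terms rather than from the characteristic function of $\Suff(\chi)$, which is the same observation.
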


\begin{proof}
An up-set of a finite lattice is determined by its minimal elements, and
$\completion(\trustlat)$ is built so that this set of minimal elements is
one of its elements. We write $M = \weakest(\chi)$, the set of minimal
elements of $\Suff(\chi)$. It is an antichain, so $M \in
\completion(\trustlat)$, and since $\Suff(\chi)$ is an up-set of the finite
lattice $\trustlat$ (Lemma~\ref{lem:monotone}), $\Suff(\chi) =
\mathop{\uparrow} M$.

\emph{The displayed equivalence.} By Definition~\ref{def:completion},
$M \le \{\tau\}$ iff $\tau$ contains some $T \in M$, iff $\tau \in
\mathop{\uparrow} M = \Suff(\chi)$, iff $\tau \ach \chi$.

\emph{Uniqueness.} Suppose $M' \in \completion(\trustlat)$ also satisfies
$\tau \ach \chi \iff M' \le \{\tau\}$ for every $\tau \in \trustlat$. As
computed above, $M' \le \{\tau\}$ iff $\tau \in \mathop{\uparrow} M'$, so
$\mathop{\uparrow} M' = \Suff(\chi) = \mathop{\uparrow} M$. An antichain
is the set of minimal elements of the up-set it generates, so $M' = M$.

\emph{The adjunction.} Extend $\ach$ to $\completion(\trustlat)$ by
\eqref{eq:disjach}. For the left adjoint: for every $M' \in
\completion(\trustlat)$,
\[
M' \ach \chi
\iff M' \subseteq \Suff(\chi)
\iff \mathop{\uparrow} M' \subseteq \Suff(\chi) = \mathop{\uparrow} M
\iff M \le M',
\]
where the middle step uses that $\Suff(\chi)$ is an up-set. Therefore, the trusts
in $\completion(\trustlat)$ achieving $\chi$ form the up-set of
$\completion(\trustlat)$ with least element $M$, which is the existence
hypothesis of Theorem~\ref{thm:galois} on the $\weakest$ side. For the right
adjoint, assume as in Theorem~\ref{thm:galois} that $\strongest(T)$ exists
for every $T \in \trustlat$ (under the termination assumption above, by
Theorem~\ref{thm:sas}).
By \eqref{eq:disjach}, $M' \ach \chi$ iff $\chi \le \strongest(T)$ for every
$T \in M'$, iff $\chi$ is a lower bound of $\{\strongest(T) : T \in M'\}$ in
$\goals$. The greatest such lower bound is the disjunction of the
$\strongest(T)$: it is implied by each $\strongest(T)$, so it lies below
each in the strength order, and any goal implied by every $\strongest(T)$
is implied by their disjunction. The conclusion fragment of \cite{rowe} is
closed under disjunction, so this goal is in $\goals$, and it is achieved
under $M'$, as each disjunct $T$ achieves its own $\strongest(T)$ and hence
the disjunction. When $M'$ is the empty antichain, the greatest element of
$\completion(\trustlat)$ (not to be confused with $\{\emptyset\}$, the
embedded empty trust), the disjunction is empty. In this case,
$\strongest$ of the empty antichain is $\mathsf{false}$, the greatest goal,
which the empty disjunction of trusts achieves vacuously. Thus,
$\strongest$ is defined
on all of $\completion(\trustlat)$, with
$\strongest(M') = \bigvee_{T \in M'} \strongest(T)$ read as a disjunction
of formulas, which is the meet of the $\strongest(T)$ in $\goals$. Both
existence hypotheses of Theorem~\ref{thm:galois} now hold for $\ach$ on
$\completion(\trustlat) \times \goals$, and its proof, which used only these
and the monotonicity checked above, gives $\weakest \dashv \strongest$ with all
the stated consequences.

\emph{The formula.} The monotone Boolean function denoted by $M$ takes the
value $1$ on $\tau$ exactly when $\tau \in \mathop{\uparrow} M$, so it is
the characteristic function of $\Suff(\chi)$. An implicant of a monotone
function that is a conjunction of positive atoms $\bigwedge_{a \in T} a$ is
a set $T$ on which the function is $1$, and it is prime when no proper
subset is one. The prime implicants are therefore the minimal elements of
$\Suff(\chi)$, which is $M$.
\end{proof}

\paragraph{The completion on the two-key variant.} Abbreviate the atoms of
the two-key variant as $U = \{k_1, k_2, N\}$, for $\Non(k_1)$, $\Non(k_2)$,
and $\Unq(N)$. Then $\trustlat$ has $8$ elements and $\completion(\trustlat)$
has $20$, the Dedekind number for three atoms. By
Theorem~\ref{thm:nofilter} and Proposition~\ref{prop:hitting},
$\Suff(\chi)$ consists of the trusts that contain $N$ and at least one key:
\[
\Suff(\chi) = \{\{k_1, N\}, \{k_2, N\}, \{k_1, k_2, N\}\},
\qquad
\weakest(\chi) = \{\{k_1, N\}, \{k_2, N\}\}.
\]
As a formula, $\weakest(\chi)$ is $(k_1 \wedge N) \vee (k_2 \wedge N)$,
which is also $N \wedge (k_1 \vee k_2)$. The second form is the
conjunction, over the two minimal attacks, of the disjunction of what stops
each. Section~\ref{sec:blocker} turns this duality into the computation. In $\completion(\trustlat)$, $\weakest(\chi) = \{\{k_1,N\}\}
\wedge \{\{k_2,N\}\}$ is the meet of the two embedded sufficient
trusts. It lies strictly above $\{\{N\}\}$, the embedding of their meet
in $\trustlat$, which is not sufficient. The elements of
$\completion(\trustlat)$ above it are exactly the antichains all of whose
members contain $N$ and a key. These are the either/or trusts that achieve
$\chi$. By contrast, for the signed protocol of Section~\ref{sec:example},
$\weakest(\chi) = \{\{\Non(sk_A), \Unq(N)\}\}$ is a single trust, the image
of an element of $\trustlat$, and Theorem~\ref{thm:galois} applies without
the completion.

Read in the goal language, Theorem~\ref{thm:completion} says that the
missing disjunction is the only obstruction.

\begin{corollary}[Diagnosis]
\label{cor:diagnosis}
In the goal language of \cite{rowe} extended so that hypotheses may be
disjunctions of conjunctions of trust atoms, every goal has a unique
weakest sufficient hypothesis. For hypotheses built from trust atoms
(A2), the non-uniqueness reported in \cite[Section~6.4]{rowe} comes from
the conjunctive form of hypotheses: the same $\Suff(\chi)$ has a unique
weakest element when disjunctions are allowed.
\end{corollary}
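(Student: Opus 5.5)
The plan is to read Corollary~\ref{cor:diagnosis} as a translation of Theorem~\ref{thm:completion} into the goal language, with Theorem~\ref{thm:nofilter} supplying the contrast. The first step fixes the dictionary. A hypothesis of the extended language, built from trust atoms over the fixed situation $\sigma$ (A2), has the form $\sigma \wedge \bigvee_{T \in M} \bigwedge T$ for a finite family $M$ of subsets of $U$. Dropping non-minimal members of $M$ does not change which runs satisfy the hypothesis, and by \eqref{eq:disjach} it does not change achievement either, since a superset of a sufficient trust is sufficient (Lemma~\ref{lem:monotone}). So such hypotheses correspond, up to equivalence, to antichains, that is, to elements of $\completion(\trustlat)$. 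Under this correspondence ``weaker hypothesis'' is ``lower in $\completion(\trustlat)$''. This holds because $M \le M'$ says exactly that each disjunct of $M'$ implies some disjunct of $M$, which is implication of the monotone DNF formulas. I would check that this is the order the paper intends, since a formula with more runs is weaker.

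The second step is existence and uniqueness. Sufficiency of a hypothesis $M'$ is $M' \ach \chi$, and the adjunction part of the proof of Theorem~\ref{thm:completion} shows $M' \ach \chi \iff \weakest(\chi) \le M'$. Hence the sufficient hypotheses form a principal up-set of $\completion(\trustlat)$ with least element $\weakest(\chi)$, the antichain of minimal elements of $\Suff(\chi)$. A least element is unique in a poset, and $\completion(\trustlat)$ is antisymmetric, because antichains correspond bijectively to up-sets. So $\weakest(\chi)$ is the unique weakest sufficient hypothesis, up to logical equivalence of formulas. The formula is $\sigma \wedge \bigvee_{T \in \weakest(\chi)} \bigwedge T$. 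When $\Suff(\chi)$ is empty, this is the empty antichain, read as $\mathsf{false}$, which is still a well-defined least element.

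The third step is the diagnosis. The purely conjunctive hypotheses are exactly the embedded trusts $\{\tau\}$. Among these, a weakest sufficient one exists iff $\Suff(\chi)$ is a principal filter, that is, iff $\weakest(\chi)$ is a singleton. Theorem~\ref{thm:nofilter} shows this can fail: for the two-key variant, $\weakest(\chi)$ has two members, yet it still lies in $\completion(\trustlat)$. So for hypotheses built from trust atoms, non-uniqueness arises only because the meet $\{\tau_1\}\wedge\{\tau_2\}=\{\tau_1,\tau_2\}$ is not in the image of $\trustlat$. The same $\Suff(\chi)$ has a unique weakest element once disjunctions are allowed.

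I expect the main obstacle to be the first step, for channel atoms. A disjunction of channel assumptions restricts the model rather than the hypothesis. The identification of hypotheses with antichains therefore rests on reading \eqref{eq:disjach} as the meaning of a disjunctive hypothesis, so that either deployment is acceptable. I would state explicitly that this is adopted as the semantics, rather than derived from formula satisfaction.
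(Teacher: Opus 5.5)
Your proposal is correct and follows the paper's proof. Both reduce a disjunctive hypothesis to an antichain in $\completion(\trustlat)$ by deleting absorbed disjuncts, apply the adjunction $M' \ach \chi \iff \weakest(\chi) \le M'$ from Theorem~\ref{thm:completion}, and observe that in the conjunctive fragment a weakest hypothesis exists exactly when $\weakest(\chi)$ has one member. Your remark that, for channel atoms, \eqref{eq:disjach} is adopted as the meaning of a disjunctive hypothesis rather than derived from satisfaction matches how the paper treats it.
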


\begin{proof}
An either/or hypothesis is an element of $\completion(\trustlat)$, so
Theorem~\ref{thm:completion} applies to it directly. A disjunction of
conjunctions of trust atoms is, up to logical equivalence and after
deleting disjuncts that contain another disjunct, an antichain $M \in
\completion(\trustlat)$, and it is sufficient for $\chi$ exactly when $M
\ach \chi$ in the sense of \eqref{eq:disjach}. By
Theorem~\ref{thm:completion}, these are the $M$ with $\weakest(\chi) \le
M$, so $\weakest(\chi)$ is the unique weakest of them. In the conjunctive
fragment, the candidates are the embedded trusts $\{\tau\}$. The minimal
sufficient ones among them are the members of $\weakest(\chi)$. Hence, a
weakest conjunctive hypothesis exists exactly when $\weakest(\chi)$ has
exactly one member. Otherwise, there are none or several. The protocol enters only through
$\Suff(\chi)$, which is the same in both fragments.
\end{proof}

The corollary is an order-theoretic fact: an up-set of a finite lattice
has a unique antichain of minimal elements. Its scope is the trust atoms
of A2. Rowe et al.\ ask about all conjunctions of atoms, including the
role-position and parameter atoms that A2 excludes. The corollary says
nothing about the non-uniqueness they report for those hypotheses. Because
the obstruction over trust atoms is a missing disjunction, it can be
removed. Abduction in general has no such repair.
Section~\ref{sec:lim-completion} discusses what the extension costs.

\section{Weakest Trust from Minimal Attacks}
\label{sec:blocker}

Theorem~\ref{thm:completion} says that $\weakest(\chi)$ exists and is
unique when either/or trusts are allowed, but not how to find it. The
runs that violate the goal give the answer. The weakest trust is the set
of minimal ways to stop every minimal attack: the smallest trusts that
contain a member of every minimal attack's stopping set. It is a single
trust exactly when every minimal attack has one remedy, a single
assumption that stops it. We show that $\weakest(\chi) =
\blocker(\attacks(\chi))$, where
$\attacks(\chi)$ is the family of minimal attacks,
defined below, and the \emph{blocker}
$\blocker(\mathcal{F})$ of a family of sets is the family of minimal sets
that meet every member. Theorem~\ref{thm:unique} gives the uniqueness
criterion, and Section~\ref{sec:example} computed both by hand.

A goal speaks about a bundle through an assignment $\eta$ of its nodes and
values to the variables of $\sigma$ and $\chi$, and the atoms of $U$ name
values through the same variables. Thus, whether an atom rules a run out
depends on $\eta$ as well as on the bundle. For example, one bundle may
contain two challenger strands of $B$: one whose challenge originates once
and one whose challenge was used before. Then $\Unq(N)$ holds when $\eta$
sends $N$ to the first challenge and fails when it sends $N$ to the
second. From here on, a \emph{run} is a pair $x = (\mathcal{B}, \eta)$ of
a bundle $\mathcal{B}$ of the unrestricted model and such an assignment.
Say that a trust $\tau$ \emph{rules out} $x$ when $\mathcal{B} \notin
\mathcal{B}(\tau_C)$ or $\mathcal{B}, \eta$ violates one of $\tau$'s
origination atoms. For a run that satisfies $\sigma$, the second
disjunct is failure of the hypothesis $\sigma \wedge \tau$ at
$(\mathcal{B}, \eta)$.

\begin{definition}[Attacks and what stops them]
\label{def:attacks}
An \emph{attack} is a pair $x = (\mathcal{B}, \eta)$ of a bundle
$\mathcal{B}$ of $\Pi$ and an assignment $\eta$ such that $\mathcal{B},
\eta \models \sigma$, with nothing assumed beyond it, and $\mathcal{B},
\eta \not\models \chi$. For an attack $x$ put
\[
\stops(x) = \{a \in U : \text{assuming } a \text{ rules out } x\}.
\]
This set contains the origination atoms the run violates and the channel
assumptions whose restriction removes it. Let $\attacks(\chi)$ be the
family of members of $\{\stops(x) : x \text{ an attack}\}$ that are
minimal under $\subseteq$. By construction, $\attacks(\chi)$ is
a \emph{clutter}, a family of sets none of which contains another. We call
its members the \emph{minimal attacks}, identifying an attack with its
stopping set.
\end{definition}

We use the same formula for $\stops(x)$ when $x$ is any run. Because
$\{\stops(x)\}$ is a family of subsets of the finite set $U$, it is finite
even when there are infinitely many attacks, and each of its members
contains a minimal one.

Definition~\ref{def:attacks} asks of each atom separately whether it rules
out $x$. Assumption A6 states that a set of atoms rules out $x$ exactly
when one of them does, and the atoms of $U$ satisfy it.

\begin{lemma}[Separability]
\label{lem:separable}
For every run $x$ and trust $\tau$, $\tau$ rules out $x$ if and only if
$\tau \cap \stops(x) \neq \emptyset$.
\end{lemma}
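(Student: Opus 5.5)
The plan is to unfold ``$\tau$ rules out $x$'' into its two disjuncts and show that each is a disjunction over single atoms of $\tau$. Fix a run $x = (\mathcal{B}, \eta)$ and a trust $\tau = \tau_O \cup \tau_C$. By the definition preceding Definition~\ref{def:attacks}, $\tau$ rules out $x$ iff $\mathcal{B} \notin \mathcal{B}(\tau_C)$ or $\mathcal{B}, \eta$ violates some atom of $\tau_O$. The same definition applied to a singleton $\{a\}$ gives $a \in \stops(x)$ iff either $a$ is a channel assumption with $\mathcal{B} \notin \mathcal{B}(\{a\})$, or $a$ is an origination atom that $\mathcal{B}, \eta$ violates. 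So it suffices to prove two identities, one for each part:
\[
\mathcal{B} \notin \mathcal{B}(\tau_C) \iff \exists a \in \tau_C.\ \mathcal{B} \notin \mathcal{B}(\{a\}),
\]
and the corresponding identity for the origination part.

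\emph{Origination part.} The hypothesis reads $\tau_O$ as the conjunction $\bigwedge \tau_O$ (Definition~\ref{def:trustlat}). By Definition~\ref{def:achieve}, $(\mathcal{B},\eta)$ violates a conjunction iff it violates some conjunct. Each conjunct $\Non(v)$ or $\Unq(v)$ is evaluated in $\mathcal{B}$ at the value $\eta$ assigns, independently of the other atoms; this is the remark after Definition~\ref{def:origination}, and A5 makes the value well defined. Hence the origination part is a disjunction over single atoms.

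\emph{Channel part.} This is the only place where the structure of Definition~\ref{def:channels} matters. Each channel assumption $a$ forbids a fixed set $F(a)$ of capabilities, and the capabilities used by an edge are determined by the edge alone. By definition, $\mathcal{B} \in \mathcal{B}(R)$ iff no edge of $\mathcal{B}$ uses a capability in $\bigcup_{a \in R} F(a)$. Therefore $\mathcal{B} \notin \mathcal{B}(\tau_C)$ iff some edge uses a capability in some $F(a)$ with $a \in \tau_C$, iff $\mathcal{B} \notin \mathcal{B}(\{a\})$ for some $a \in \tau_C$. The argument applies in both the inject/read and the hijack readings, because in each reading $F(a)$ is a fixed set.

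\emph{Main obstacle.} The step that needs care is the channel part when $\tau$ mixes the two kinds of atom. In that case ``rules out'' has to be read on the unrestricted bundle $\mathcal{B}$, not relative to $\mathcal{B}(\tau_C)$, so that the two disjuncts do not interact. I would make this explicit by noting that runs are taken in the unrestricted model, as the definition preceding Definition~\ref{def:attacks} says. Combining the two parts then gives that $\tau$ rules out $x$ iff some $a \in \tau$ rules it out, which is $\tau \cap \stops(x) \neq \emptyset$.
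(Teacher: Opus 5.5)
Your proof is correct and takes essentially the same route as the paper. Both split $\tau$ into $\tau_O$ and $\tau_C$. A fixed pair $(\mathcal{B},\eta)$ violates the conjunction $\bigwedge\tau_O$ exactly when it violates a single conjunct. Likewise $\mathcal{B}\notin\mathcal{B}(\tau_C)$ exactly when some edge uses a capability in the union $\bigcup_{a\in\tau_C}F(a)$ of fixed forbidden sets, and such a capability lies in $F(a)$ for one particular $a$.
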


\begin{proof}
A run is a fixed bundle with a fixed assignment, so each atom either
removes it or does not, independently of the other atoms assumed. Let $x =
(\mathcal{B}, \eta)$, and split $\tau = \tau_O \cup \tau_C$ into its
origination part and its channel part (Definition~\ref{def:trustlat}). By
definition, $\tau$ rules out $x$ iff (i) $\mathcal{B}, \eta$ violates some
atom of $\tau_O$, or (ii) $\mathcal{B} \notin \mathcal{B}(\tau_C)$. These
are the two parts of the proof of Lemma~\ref{lem:monotone}.

(i) Whether $\mathcal{B}, \eta$ satisfies an atom $\Non(v)$ or $\Unq(v)$
is decided by where the value $\eta$ gives $v$ originates in
$\mathcal{B}$, and does not depend on which other atoms are assumed. The
origination part of the hypothesis is the conjunction of the atoms of
$\tau_O$, so $\mathcal{B}, \eta$ violates it iff it violates some $a \in
\tau_O$, iff some $a \in \tau_O$ rules out $x$ by itself. This step uses
that a run is one fixed bundle with one fixed assignment, as A6 requires.

(ii) A channel assumption $a$ forbids a set $F(a)$ of adversary
transitions, given in Definition~\ref{def:channels}: in the inject/read
reading, injection on $c$ for $\auth(c)$ and reading from $c$ for
$\conf(c)$. The model restricted by $\tau_C$
forbids every transition forbidden by some member, that is, the union
$\bigcup_{a \in \tau_C} F(a)$, and its runs are the runs of the unrestricted
model that use no forbidden transition (A4). A bundle uses a fixed set of
adversary transitions, which does not depend on $\eta$, so $\mathcal{B}
\notin \mathcal{B}(\tau_C)$ iff $\mathcal{B}$ uses a transition in
$\bigcup_{a \in \tau_C} F(a)$, iff it uses one in $F(a)$ for some $a \in
\tau_C$, iff some $a \in \tau_C$ removes $x$ by itself.

Combining, $\tau$ rules out $x$ iff some $a \in \tau$ rules out $x$ by
itself, iff some $a \in \tau$ lies in $\stops(x)$.
\end{proof}

Without separability, a run that only a
pair of atoms removes would have $\stops(x) =
\emptyset$, and Remark~\ref{rem:unrealisable} would wrongly report that no
trust suffices.

With separability, a trust suffices exactly when it stops every minimal
attack.

\begin{proposition}[Sufficiency is hitting]
\label{prop:hitting}
$\tau \ach \chi$ if and only if $\tau \cap E \neq \emptyset$ for every $E
\in \attacks(\chi)$.
\end{proposition}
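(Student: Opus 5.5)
The plan is to reduce achievement to a statement about attacks. First I would show that $\tau \ach \chi$ holds exactly when $\tau$ rules out every attack. Then Lemma~\ref{lem:separable} turns ``rules out'' into a nonempty intersection with a stopping set. Finally, minimality of the family $\attacks(\chi)$ lets me pass between all attacks and the minimal ones.

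\emph{Step 1: achievement as ruling out attacks.} Unfold Definition~\ref{def:ach}. $\tau \ach \chi$ fails iff there is a bundle $\mathcal{B} \in \mathcal{B}(\tau_C)$ and an assignment $\eta$ with $\mathcal{B},\eta \models \sigma \wedge \bigwedge\tau_O$ and $\mathcal{B},\eta \not\models \chi$. Such a pair $x=(\mathcal{B},\eta)$ is an attack in the sense of Definition~\ref{def:attacks}, since it satisfies $\sigma$ and violates $\chi$. It is also not ruled out by $\tau$, because $\mathcal{B} \in \mathcal{B}(\tau_C)$ and no atom of $\tau_O$ is violated. Conversely, take an attack not ruled out by $\tau$. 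It lies in $\mathcal{B}(\tau_C)$, satisfies $\sigma \wedge \bigwedge\tau_O$, and violates $\chi$, so it witnesses failure. One point needs care: attacks are bundles of the unrestricted model, while Definition~\ref{def:ach} quantifies over $\mathcal{B}(\tau_C)$. These match because $\mathcal{B}(\tau_C)$ is by definition a subset of the bundles of $\Pi$. Hence $\tau \ach \chi$ iff $\tau$ rules out every attack.

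\emph{Step 2: separability.} By Lemma~\ref{lem:separable}, $\tau$ rules out $x$ iff $\tau \cap \stops(x) \neq \emptyset$. So $\tau \ach \chi$ iff $\tau$ meets $\stops(x)$ for every attack $x$.

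\emph{Step 3: restricting to minimal attacks.} If $\tau$ meets every $\stops(x)$, it meets every member of $\attacks(\chi)$, since each member is some $\stops(x)$. Conversely, suppose $\tau$ meets every $E \in \attacks(\chi)$, and take any attack $x$. The family $\{\stops(x)\}$ is a finite family of subsets of $U$ (A1), so $\stops(x)$ contains some minimal member $E \in \attacks(\chi)$. Then $\emptyset \neq \tau \cap E \subseteq \tau \cap \stops(x)$.

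I expect the main obstacle to be Step 1, namely getting the bookkeeping of the two kinds of atoms exactly right. Channel atoms restrict the model, whereas origination atoms restrict the hypothesis, and the notion of ``run'' must carry $\eta$ so that origination atoms are evaluated correctly. The degenerate cases also need a check. If there are no attacks, then $\attacks(\chi)=\emptyset$ and every trust suffices. If some attack has $\stops(x)=\emptyset$, then $\emptyset \in \attacks(\chi)$ and no trust suffices. Both cases agree with the statement.
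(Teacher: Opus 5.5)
Your proposal is correct and follows the paper's proof step for step. Both arguments reduce $\tau \ach \chi$ to ruling out every attack, apply Lemma~\ref{lem:separable} to turn that into meeting every $\stops(x)$, and then pass to $\attacks(\chi)$ using that each $\stops(x)$ contains a minimal member. Your remark that $\mathcal{B}(\tau_C)$ consists of bundles of the unrestricted model is the same appeal to A4 that the paper makes at that point.
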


\begin{proof}
A trust fails exactly when some attack survives it, and by separability an
attack survives exactly when the trust misses its stopping set. First,
$\tau \ach \chi$ iff $\tau$ rules out every attack. Indeed, $\tau
\ach \chi$ fails iff some run $x = (\mathcal{B}, \eta)$ with $\mathcal{B}
\in \mathcal{B}(\tau_C)$ satisfies $\sigma \wedge \tau$ and violates $\chi$
(Definition~\ref{def:ach}). Such an $x$ is a run of the unrestricted model
(restriction only removes bundles, A4) that meets $\sigma$ and violates
$\chi$, so it is an attack, and it is not ruled out by $\tau$. Conversely,
an attack $(\mathcal{B}, \eta)$ that $\tau$ does not rule out has
$\mathcal{B} \in \mathcal{B}(\tau_C)$, satisfies $\sigma \wedge \tau$ and
violates $\chi$, so $\tau \ach \chi$ fails.

By Lemma~\ref{lem:separable}, $\tau$ rules out the attack $x$ iff $\tau \cap
\stops(x) \neq \emptyset$. So $\tau \ach \chi$ iff $\tau$ meets $\stops(x)$
for every attack $x$. It remains to replace the stopping sets by the
minimal ones. Every member of $\attacks(\chi)$ is some $\stops(x)$, so
meeting all stopping sets implies meeting all members of $\attacks(\chi)$.
Conversely, every $\stops(x)$ contains some $E \in \attacks(\chi)$, and a
set that meets $E$ meets every superset of $E$.
\end{proof}

The hitting form by itself holds for any up-set. What depends on the
protocol is which sets are hit.

\begin{remark}[The proposition is generic]
\label{rem:generic}
Let $F \subseteq 2^{U}$ be an up-set whose complement has maximal
elements $M_1, \dots, M_k$. Then $\tau \in F$ iff $\tau \not\subseteq M_i$
for all $i$, iff $\tau$ meets $U \setminus M_i$ for all $i$. Thus, every up-set is a hitting-set family, and
Proposition~\ref{prop:hitting} follows from Lemma~\ref{lem:monotone}
alone and has no protocol content by itself. Its content is that the clauses may be taken to be the sets
$\stops(x)$ indexed by attacks, which a protocol analysis produces. That
step is Lemma~\ref{lem:separable}. In terms of
Theorem~\ref{thm:completion}, the attacks are the conjunctive normal form
of the same monotone function whose disjunctive normal form is the weakest
trust.
\end{remark}

For a family $\mathcal{F}$ of subsets of $U$, write $\mathrm{Tr}(\mathcal{F})$
for the set of its \emph{transversals}, the sets $S \subseteq U$ that meet
every member of $\mathcal{F}$, and $\blocker(\mathcal{F})$ for its
blocker, the clutter of minimal transversals. Every transversal contains a
minimal one and supersets of transversals are transversals, so
$\mathrm{Tr}(\mathcal{F}) = \mathop{\uparrow} \blocker(\mathcal{F})$.
Proposition~\ref{prop:hitting} says that $\Suff(\chi) =
\mathrm{Tr}(\attacks(\chi))$. Taking minimal elements, and recalling from
Theorem~\ref{thm:completion} that $\weakest(\chi)$ is the set of minimal
elements of $\Suff(\chi)$,
\[
\weakest(\chi) = \blocker(\attacks(\chi)).
\]
We use two facts about transversals below. Lemma~\ref{lem:refinement}
compares two families, and Lemma~\ref{lem:involution} is the involution of
Edmonds and Fulkerson and of Isbell \cite{edmonds,isbell,berge}.

\begin{lemma}[Refinement]
\label{lem:refinement}
For families $\mathcal{A}, \mathcal{B}$ of subsets of $U$,
$\mathrm{Tr}(\mathcal{A}) \subseteq \mathrm{Tr}(\mathcal{B})$ if and only
if every member of $\mathcal{B}$ contains a member of $\mathcal{A}$.
\end{lemma}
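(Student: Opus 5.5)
The plan is to prove the two directions separately, using only the definition of $\mathrm{Tr}$ and the fact that $U$ is finite. Nothing about protocols enters. This is a purely set-theoretic statement about families of subsets of $U$.

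\emph{If.} Assume every member $B \in \mathcal{B}$ contains some $A_B \in \mathcal{A}$. Let $S \in \mathrm{Tr}(\mathcal{A})$ and fix $B \in \mathcal{B}$. Then $S$ meets $A_B$, and since $A_B \subseteq B$, $S$ meets $B$. As $B$ was arbitrary, $S \in \mathrm{Tr}(\mathcal{B})$. This is the same step as the last sentence of the proof of Proposition~\ref{prop:hitting}: a set that meets $E$ meets every superset of $E$.

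\emph{Only if.} Argue by contraposition. Suppose some $B \in \mathcal{B}$ contains no member of $\mathcal{A}$, and take the complement $S = U \setminus B$. This is the same device as in Remark~\ref{rem:generic}. For every $A \in \mathcal{A}$ we have $A \not\subseteq B$, so $A$ has an element outside $B$, that is, in $S$. Hence $S \in \mathrm{Tr}(\mathcal{A})$. But $S \cap B = \emptyset$, so $S \notin \mathrm{Tr}(\mathcal{B})$, and the inclusion $\mathrm{Tr}(\mathcal{A}) \subseteq \mathrm{Tr}(\mathcal{B})$ fails.

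The only real step is choosing the complement as the witness in the second direction. The rest is routine. The degenerate cases are consistent with the statement and need only a check:
\begin{itemize}
\item If $\emptyset \in \mathcal{A}$, then $\mathrm{Tr}(\mathcal{A}) = \emptyset$, and every $B$ contains $\emptyset \in \mathcal{A}$, so both sides hold.
\item If $\mathcal{A}$ is empty, then $\mathrm{Tr}(\mathcal{A}) = 2^{U}$. Both sides then say that $\mathcal{B}$ is empty: if some $B$ existed, it would contain no member of $\mathcal{A}$, and $U \setminus B$ would fail to be a transversal of $\mathcal{B}$.
\end{itemize}
The contrapositive argument covers these uniformly, so no separate treatment is needed.
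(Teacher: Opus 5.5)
Your proof is correct and is essentially the paper's: the ``if'' direction is the same superset observation, and for ``only if'' you use the same complement witness $U \setminus B$, phrased as a contrapositive where the paper argues directly that $U \setminus E \notin \mathrm{Tr}(\mathcal{B})$ forces some member of $\mathcal{A}$ to lie inside $E$. One small remark: the argument never uses finiteness of $U$, so you need not invoke it.
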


\begin{proof}
A set that meets a smaller set meets every larger one, and the complement
of a set is the largest set that misses it. ($\Leftarrow$) Let $S \in
\mathrm{Tr}(\mathcal{A})$ and $E \in \mathcal{B}$. Some $E' \in
\mathcal{A}$ has $E' \subseteq E$. Since $S$ meets $E'$, it meets $E$.
($\Rightarrow$) Let $E \in \mathcal{B}$. The set $U \setminus E$ misses
$E$, so it is not in $\mathrm{Tr}(\mathcal{B})$, hence not in
$\mathrm{Tr}(\mathcal{A})$: some $E' \in \mathcal{A}$ is disjoint from $U
\setminus E$, that is, $E' \subseteq E$.
\end{proof}

\begin{lemma}[Blocker involution]
\label{lem:involution}
For every clutter $\attacks$ over $U$,
\begin{equation}
\blocker(\blocker(\attacks)) = \attacks.
\label{eq:involution}
\end{equation}
\end{lemma}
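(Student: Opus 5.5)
The plan is to prove the two inclusions separately, using Lemma~\ref{lem:refinement} and the fact that $\mathrm{Tr}(\mathcal{F}) = \mathop{\uparrow}\blocker(\mathcal{F})$.

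First I show $\mathrm{Tr}(\blocker(\attacks)) = \mathop{\uparrow}\attacks$, where $\mathop{\uparrow}\attacks$ is the up-set of supersets of members of $\attacks$. For ($\supseteq$), let $S \supseteq E$ for some $E \in \attacks$. Every $T \in \blocker(\attacks)$ is a transversal of $\attacks$, so $T$ meets $E$ and hence meets $S$. For ($\subseteq$), let $S$ be a transversal of $\blocker(\attacks)$ and suppose $S$ contains no member of $\attacks$. Then $U \setminus S$ meets every $E \in \attacks$, since $E \not\subseteq S$. So $U \setminus S$ is a transversal of $\attacks$ and contains some minimal transversal $T \in \blocker(\attacks)$. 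But $T \subseteq U \setminus S$ is disjoint from $S$, which contradicts $S$ meeting $T$. This complementation step is the only real content. It is the same trick as the ($\Rightarrow$) direction of Lemma~\ref{lem:refinement}.

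Next I take minimal elements. The minimal elements of $\mathrm{Tr}(\blocker(\attacks))$ are by definition $\blocker(\blocker(\attacks))$. The minimal elements of $\mathop{\uparrow}\attacks$ are exactly the members of $\attacks$, because $\attacks$ is a clutter. Each member $E$ is minimal: a member of $\mathop{\uparrow}\attacks$ strictly below $E$ would contain some $E' \in \attacks$ with $E' \subsetneq E$, contradicting the clutter property. Conversely, every minimal element is some $E$, since it contains a member of $\attacks$. Equal up-sets have equal minimal elements, which gives \eqref{eq:involution}.

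Two degenerate cases need a check. If $\attacks = \emptyset$, then $\blocker(\attacks) = \{\emptyset\}$ and $\blocker(\{\emptyset\}) = \emptyset$. If $\attacks = \{\emptyset\}$, then $\blocker(\attacks)$ is empty and $\blocker(\emptyset) = \{\emptyset\}$. The general argument already covers both, for example $U \setminus S$ is a transversal of the empty family, so no separate treatment is needed. I only note them in passing. The main obstacle is the ($\subseteq$) direction above, and it is handled by the complement argument.
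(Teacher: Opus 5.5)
Your proof is correct and follows the paper's route: you establish $\mathrm{Tr}(\blocker(\attacks)) = \mathop{\uparrow}\attacks$, then use the clutter property to identify the minimal elements of $\mathop{\uparrow}\attacks$ with the members of $\attacks$. The only difference is that you inline the complement argument for $\mathrm{Tr}(\blocker(\attacks)) \subseteq \mathop{\uparrow}\attacks$. The paper instead notes that meeting every minimal transversal is the same as meeting every transversal, and then applies Lemma~\ref{lem:refinement} with $\mathcal{B} = \{S\}$.
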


\begin{proof}
The sets that meet every minimal transversal of $\attacks$ are exactly the
sets that contain a member of $\attacks$, and the minimal ones are the
members of $\attacks$. First, $\mathrm{Tr}(\blocker(\attacks))$ is the set
of $S \subseteq U$ that contain a member of $\attacks$. Indeed, $S$ meets
every member of $\blocker(\attacks)$ iff $S$ meets every member of
$\mathrm{Tr}(\attacks)$, because each transversal contains a minimal one.
Moreover, $S$ meets every member of $\mathrm{Tr}(\attacks)$ iff
$\mathrm{Tr}(\attacks) \subseteq \mathrm{Tr}(\{S\})$, iff, by
Lemma~\ref{lem:refinement} with $\mathcal{A} = \attacks$ and $\mathcal{B} =
\{S\}$, $S$ contains a member of $\attacks$. Hence,
$\mathrm{Tr}(\blocker(\attacks)) = \mathop{\uparrow} \attacks$, and
$\blocker(\blocker(\attacks))$, its set of minimal elements, is the set of
minimal elements of $\mathop{\uparrow} \attacks$. Because $\attacks$ is a
clutter, these are exactly the members of $\attacks$: each member of
$\attacks$ is minimal in $\mathop{\uparrow} \attacks$, as a smaller element
would contain another member, and each minimal element of
$\mathop{\uparrow} \attacks$ contains a member of $\attacks$ and so equals
it.
\end{proof}

In the degenerate cases, $\blocker(\emptyset) = \{\emptyset\}$, since
every set, including $\emptyset$, meets every member of the empty family,
and $\blocker(\{\emptyset\}) = \emptyset$, since no set meets $\emptyset$.
By the involution, the minimal attacks can be recovered from the weakest
trusts. Because $\blocker$ is its own inverse on clutters, it is injective
there: two clutters with the same blocker are equal. This settles the
uniqueness question in both directions.

\begin{theorem}[When the weakest trust is unique]
\label{thm:unique}
$|\weakest(\chi)| = 1$ if and only if every member of $\attacks(\chi)$ is
a singleton, in which case the unique weakest trust is
$\bigcup \attacks(\chi)$, one assumption per minimal attack. Equivalently,
two incomparable weakest trusts exist if and only if some minimal attack
has two or more stopping assumptions.
\end{theorem}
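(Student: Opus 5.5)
The plan is to work entirely at the level of clutters. We already have $\weakest(\chi) = \blocker(\attacks(\chi))$ and the involution of Lemma~\ref{lem:involution}, so no further protocol reasoning is needed. Write $\attacks = \attacks(\chi)$.

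For the forward-easy direction, suppose every member of $\attacks$ is a singleton, say $\attacks = \{\{a_1\},\dots,\{a_m\}\}$. A set meets every member exactly when it contains every $a_i$. The transversals are therefore the supersets of $A = \bigcup \attacks$, and $\blocker(\attacks) = \{A\}$. So $|\weakest(\chi)| = 1$, and the unique weakest trust is $\bigcup\attacks(\chi)$, as claimed. The degenerate case $\attacks = \emptyset$ is included: then $A = \emptyset$ and $\blocker(\emptyset) = \{\emptyset\}$.

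For the converse, suppose $\blocker(\attacks) = \{T\}$ for a single set $T$. Apply Lemma~\ref{lem:involution} to get $\attacks = \blocker(\{T\})$. Next compute the blocker of a one-member family directly. A set meets $T$ iff it contains some element of $T$, so the minimal such sets are the singletons $\{t\}$ for $t \in T$. This gives $\attacks = \{\{t\} : t \in T\}$, all singletons.

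One degenerate case needs a check. If $T = \emptyset$, then $\blocker(\{\emptyset\}) = \emptyset$, so $\attacks$ is empty, and the claim holds vacuously. This also uses that $\attacks$ is a clutter, which Definition~\ref{def:attacks} guarantees, so the involution applies. The case $\weakest(\chi)=\emptyset$, where some attack has an empty stopping set, has size $0 \ne 1$. It is consistent with the statement, since then $\attacks = \{\emptyset\}$, whose member is not a singleton.

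The ``equivalently'' clause follows by negation, with a little care. Two incomparable weakest trusts means $|\weakest(\chi)| \ge 2$; distinct members of a clutter are automatically incomparable. We must exclude the case $|\weakest(\chi)| = 0$, which is $\attacks = \{\emptyset\}$: a minimal attack with no stopping assumption rather than two or more. Under the negation of the singleton condition, and with $\emptyset\notin\attacks$, some $E \in \attacks$ has $|E|\ge 2$. Then $\weakest(\chi)$ is nonempty, because $\bigcup\attacks$ is a transversal, and by the first part it is not a single set, so it has at least two members. Conversely, $|\weakest(\chi)| \ge 2$ rules out both the singleton case and $\attacks=\{\emptyset\}$, so some minimal attack has size $\ge 2$.

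The main obstacle is only this bookkeeping of the empty cases, since the core is the involution. I would state explicitly that the second formulation presumes $\emptyset \notin \attacks(\chi)$, i.e.\ that some trust suffices.
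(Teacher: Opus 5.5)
Your proof is correct and is the paper's own argument: the singleton case by computing the transversals directly, the converse by applying the involution to get $\attacks(\chi) = \blocker(\{T\})$, and the same bookkeeping of the case $\weakest(\chi)=\emptyset$. Your closing caveat is unnecessary, though. A clutter containing $\emptyset$ is just $\{\emptyset\}$, so a minimal attack with two or more stopping assumptions already forces $\emptyset\notin\attacks(\chi)$, and the second formulation holds unconditionally, as the paper notes.
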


\begin{proof}
If every minimal attack has one remedy, every transversal must contain all
the remedies. Conversely, by the involution, the minimal attacks are the
blocker of the single weakest trust, whose members are singletons.

($\Leftarrow$) If every member of $\attacks(\chi)$ is a singleton, write
$T = \bigcup \attacks(\chi)$. Every transversal must contain each
singleton's element and hence contain $T$. Moreover, $T$ itself is a
transversal. Thus, $T$ is the unique minimal transversal and $\weakest(\chi) = \{T\}$. This
covers the case $\attacks(\chi) = \emptyset$, where $T = \emptyset$.

($\Rightarrow$) Suppose $\weakest(\chi) = \blocker(\attacks(\chi)) =
\{T\}$ for a single set $T$. Apply \eqref{eq:involution}:
$\attacks(\chi) = \blocker(\{T\})$. The minimal transversals of the
one-member clutter $\{T\}$ are exactly the singletons $\{a\}$ for $a \in
T$, since a set meets $T$ iff it contains some element of $T$, and the
minimal such sets are the singletons. Hence every member of
$\attacks(\chi)$ is a singleton.

(The equivalent form.) Distinct members of the antichain $\weakest(\chi)$
are incomparable, so the second sentence says $|\weakest(\chi)| \ge 2$ iff
some member of $\attacks(\chi)$ has at least two elements. The first
sentence excludes $|\weakest(\chi)| = 1$, so the remaining case is
$|\weakest(\chi)| = 0$, which happens exactly when no transversal exists,
that is, when $\emptyset \in \attacks(\chi)$, and then $\attacks(\chi) =
\{\emptyset\}$ since a clutter containing $\emptyset$ has no other member.
If some member of $\attacks(\chi)$ has two or more elements, then
$\emptyset \notin \attacks(\chi)$, so $\weakest(\chi) \neq \emptyset$, and not
every member is a singleton, so $|\weakest(\chi)| \neq 1$. Hence,
$|\weakest(\chi)| \ge 2$. Conversely, if $|\weakest(\chi)| \ge 2$, then
$\emptyset \notin \attacks(\chi)$ and, by the first sentence, some member is
not a singleton; being nonempty, it has two or more elements.
\end{proof}

The informal criterion, which leaves out the word \emph{minimal}, is
false, and Table~\ref{tab:example} is the counterexample. Its
double-forgery row has two entries but is not minimal, and the weakest
trust is unique. Pruning to minimal stopping sets is required for
correctness. Without it, \eqref{eq:involution} does not apply, and the
converse direction of
Theorem~\ref{thm:unique} is false, as the following computation exhibits.

\paragraph{The blocker on the running example.} Abbreviate the atoms of
the signed protocol as $U = \{sk_A, sk_B, N\}$, for $\Non(sk_A)$,
$\Non(sk_B)$, and $\Unq(N)$. The stopping sets of Table~\ref{tab:example} form
the family
\[
\mathcal{F} = \{\{sk_A\},\ \{N\},\ \{sk_A, sk_B\}\},
\]
which is not a clutter, since $\{sk_A\} \subseteq \{sk_A, sk_B\}$. Its
minimal members give the clutter $\attacks(\chi) = \{\{sk_A\},
\{N\}\}$. A transversal must contain $sk_A$ and $N$, and $\{sk_A, N\}$ is
one. Thus,
\[
\weakest(\chi) = \blocker(\attacks(\chi)) = \{\{sk_A, N\}\},
\]
in agreement with Theorem~\ref{thm:unique}. The unminimised family has the
same transversals, so $\blocker(\mathcal{F}) = \{\{sk_A, N\}\}$ as well,
but $\blocker(\blocker(\mathcal{F})) = \{\{sk_A\}, \{N\}\} \neq
\mathcal{F}$. Hence, the involution fails for $\mathcal{F}$, and
$\mathcal{F}$ has a two-element member although its blocker is a single
set. For the
two-key variant, with $U = \{k_1, k_2, N\}$, the minimal attacks are
forgery and replay,
\[
\attacks(\chi) = \{\{k_1, k_2\},\ \{N\}\},
\qquad
\blocker(\attacks(\chi)) = \{\{k_1, N\},\ \{k_2, N\}\},
\]
since a transversal contains $N$ and one of the keys. Applying $\blocker$
again, a set meets both $\{k_1, N\}$ and $\{k_2, N\}$ exactly when it
contains $N$ or contains both keys. Thus, the minimal such sets are $\{N\}$ and $\{k_1,
k_2\}$, which is $\attacks(\chi)$ again, as \eqref{eq:involution} requires.
The member $\{k_1, k_2\}$ has two elements, and Theorem~\ref{thm:unique}
predicts the two incomparable weakest trusts found in
Theorem~\ref{thm:nofilter}.

When some attack has no remedy at all, the answer is that no trust over
$U$ suffices.

\begin{remark}[Unrealisability]
\label{rem:unrealisable}
If an attack $x$ has $\stops(x) = \emptyset$, then $\attacks(\chi) =
\{\emptyset\}$. No set is a transversal, so $\weakest(\chi)$ is the empty
antichain. This represents the empty disjunction $\mathsf{false}$, the
greatest element of $\completion(\trustlat)$ in the assumes-more order.
Thus, no trust achieves $\chi$. This negative answer distinguishes ``you
must assume more'' from ``no assumption will save this protocol''. This
occurs for the flawed \textsc{pkinit} of Section~\ref{sec:intro} when $U$
contains only origination atoms: the attack there compromises no honest
key and replays no stale value \cite{cervesato}. Unrealisability is always
relative to $U$. Enlarging $U$ with a channel assumption that removes the
run turns the answer from $\mathsf{false}$ into a trust that names the
channel.
\end{remark}

\section{What Follows from the Galois Connection}
\label{sec:consequences}

Four facts about trust follow from the Galois connection of
Section~\ref{sec:galois} and from the identity $\weakest(\chi) =
\blocker(\attacks(\chi))$ of Section~\ref{sec:blocker}, with no
protocol-specific argument. Each answers
a question a protocol designer asks: what several goals require together,
whether separately analysed protocols can be combined, whether a stated
assumption is unneeded, and when one protocol is at least as secure as
another.

\subsection{Trust Distributes over Goal Conjunction}

A protocol usually has several goals. The trust required for all of them
is the combination of the trusts required for each, so a designer can
compute the trust per goal and combine the answers.

\begin{corollary}[Conjunction]
\label{cor:conjunction}
$\weakest(\chi_1 \wedge \chi_2) = \weakest(\chi_1) \vee \weakest(\chi_2)$,
where the join is taken in $\completion(\trustlat)$ and computed as the
antichain of minimal elements of
$\{T_1 \cup T_2 : T_i \in \weakest(\chi_i)\}$.
\end{corollary}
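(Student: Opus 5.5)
The plan is to work at the level of sufficient sets and use the bijection between antichains and up-sets from Section~\ref{sec:galois}. First I will show $\Suff(\chi_1 \wedge \chi_2) = \Suff(\chi_1) \cap \Suff(\chi_2)$. Here $\chi_1 \wedge \chi_2$ is the join of $\chi_1$ and $\chi_2$ in $\goals$, which is a conclusion after distributing $\wedge$ over $\vee$. A trust $\tau$ achieves $\chi_1 \wedge \chi_2$ iff every pair $(\mathcal{B}, \eta)$ with $\mathcal{B} \in \mathcal{B}(\tau_C)$ satisfying $\sigma \wedge \bigwedge\tau_O$ satisfies both conjuncts. By Definition~\ref{def:ach}, this holds iff $\tau \ach \chi_1$ and $\tau \ach \chi_2$. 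The conjunction is read pointwise at each $(\mathcal{B},\eta)$, so renaming the existential variables apart does not affect satisfaction.

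Next, Theorem~\ref{thm:completion} gives $\Suff(\chi_i) = \mathop{\uparrow}\weakest(\chi_i)$. Hence $\Suff(\chi_1 \wedge \chi_2) = \mathop{\uparrow}\weakest(\chi_1) \cap \mathop{\uparrow}\weakest(\chi_2)$. Under the isomorphism of $\completion(\trustlat)$ with up-sets ordered by reverse inclusion, an intersection of up-sets is the join. The paper already notes that this join is the antichain of minimal elements of $\{T_1 \cup T_2 : T_i \in \weakest(\chi_i)\}$. For completeness I will check that last identity directly. A trust $\tau$ lies in both up-sets iff it contains some $T_1 \in \weakest(\chi_1)$ and some $T_2 \in \weakest(\chi_2)$, iff it contains some $T_1 \cup T_2$. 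So the intersection is the up-set generated by those unions, and its minimal elements are the minimal unions. Taking minimal elements of $\Suff(\chi_1 \wedge \chi_2)$, which by Theorem~\ref{thm:completion} is $\weakest(\chi_1\wedge\chi_2)$, gives the claim.

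Alternatively, the identity follows from Theorem~\ref{thm:galois} applied over $\completion(\trustlat)$, as licensed in Theorem~\ref{thm:completion}. The left adjoint $\weakest$ preserves all joins that exist in $\goals$, and $\chi_1 \wedge \chi_2$ is such a join. I will give the direct argument as the main one, because it avoids the termination assumption that the adjunction needs for $\strongest$.

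The only delicate step is confirming that the conjunction of two conclusions, after normalisation, is a legitimate member of $\goals$ whose satisfaction is exactly the conjunction of the two satisfactions. This is stated after Definition~\ref{def:goalorder}. Once it holds, the rest is bookkeeping with up-sets and minimal elements. The degenerate cases behave correctly: if some $\weakest(\chi_i)$ is the empty antichain, the set of unions is empty and the join is $\mathsf{false}$.
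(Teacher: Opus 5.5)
Your proof is correct and follows the paper, which gives the same two arguments. The first is join preservation by the left adjoint $\weakest$. The second is the direct computation $\Suff(\chi_1\wedge\chi_2)=\mathop{\uparrow}\weakest(\chi_1)\cap\mathop{\uparrow}\weakest(\chi_2)=\mathop{\uparrow}\{T_1\cup T_2 : T_i\in\weakest(\chi_i)\}$, followed by taking minimal elements. Your remark that the direct route is the unconditional one is apt: the adjunction in Theorem~\ref{thm:completion} is stated under the termination assumption that makes $\strongest$ exist, whereas $\Suff(\chi)=\mathop{\uparrow}\weakest(\chi)$ needs only Lemma~\ref{lem:monotone} and the finiteness of $U$.
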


\begin{proof}
The identity holds because left adjoints preserve joins. A direct
computation gives it again, together with the formula.

\emph{From the adjunction.} In the strength order on $\goals$ the upper
bounds of $\chi_1$ and $\chi_2$ are the goals that imply both, that is,
the goals that imply $\chi_1 \wedge \chi_2$. Hence, $\chi_1 \wedge \chi_2$ is
the join $\chi_1 \vee \chi_2$ in $\goals$, and it lies in $\goals$ because
the conclusion fragment of \cite{rowe} is closed under conjunction. By
Theorem~\ref{thm:completion}, $\weakest : \goals \to \completion(\trustlat)$
is a left adjoint, and left adjoints preserve all existing joins
(Theorem~\ref{thm:galois}). Hence, $\weakest(\chi_1 \wedge \chi_2) =
\weakest(\chi_1) \vee \weakest(\chi_2)$ in $\completion(\trustlat)$.

\emph{Directly.} A run satisfies $\chi_1 \wedge \chi_2$ iff it satisfies
both, so a trust achieves the conjunction iff it achieves each conjunct:
$\Suff(\chi_1 \wedge \chi_2) = \Suff(\chi_1) \cap \Suff(\chi_2)$. Writing
$M_i = \weakest(\chi_i)$, we have $\Suff(\chi_i) = \mathop{\uparrow} M_i$,
and
\[
\mathop{\uparrow} M_1 \cap \mathop{\uparrow} M_2
= \{\tau : \tau \supseteq T_1 \text{ and } \tau \supseteq T_2
    \text{ for some } T_i \in M_i\}
= \mathop{\uparrow} \{T_1 \cup T_2 : T_i \in M_i\}.
\]
The minimal elements of this up-set are the minimal elements of
$\{T_1 \cup T_2 : T_i \in M_i\}$, which is the stated antichain, and the
antichain generating $\mathop{\uparrow} M_1 \cap \mathop{\uparrow} M_2$ is
by definition the join $M_1 \vee M_2$ in $\completion(\trustlat)$
(Section~\ref{sec:galois}). In terms of formulas this is distribution,
$(\bigvee_i \bigwedge T_i) \wedge (\bigvee_j \bigwedge T'_j)
= \bigvee_{i,j} \bigwedge (T_i \cup T'_j)$,
followed by deleting the implicants that are not prime. (The join in
$\completion(\trustlat)$'s order corresponds to conjunction of the
denoted monotone functions, since assuming more is higher in the trust
order.)
\end{proof}

For example, suppose the signed challenge-response of
Section~\ref{sec:example} must achieve both $B$'s authentication of $A$,
with $\weakest = \{\{\Non(sk_A), \Unq(N)\}\}$, and a hypothetical second
goal whose weakest trust is $\{\{\Non(sk_A)\}, \{\Non(sk_B)\}\}$. The
formula gives the minimal elements of
\[
\{\{\Non(sk_A), \Unq(N)\},\ \{\Non(sk_A), \Non(sk_B), \Unq(N)\}\},
\]
which is $\{\{\Non(sk_A), \Unq(N)\}\}$. Thus, the second goal costs
nothing beyond the first.

The dual half is a prediction about existing tools. The right adjoint
preserves meets, so $\strongest$ of the either/or trust ``$\tau_1$ or
$\tau_2$'', the meet $\{\tau_1\} \wedge \{\tau_2\}$ in
$\completion(\trustlat)$, is the meet of $\strongest(\tau_1)$ and
$\strongest(\tau_2)$ in $\goals$, which is the disjunction of the two
guarantees,
by Theorem~\ref{thm:completion}. An enrich-by-need analysis started from two
skeletons computes this when it returns the union of the shapes found from
each.

\subsection{Composition, and When It Is a Join}

Designers build systems from protocols that were analysed separately.
They need to know when the trust each component requires is all that the
combined system requires. In general, composition is not a join of
trusts. Running two protocols in parallel can create attacks that belong
to neither, which is the subject of protocol independence
\cite{guttmanthayer,guttmanestablishing}. Theorem~\ref{thm:composition}
states exactly when the join is the answer.

\begin{theorem}[Composition]
\label{thm:composition}
Let $\Pi_1$ and $\Pi_2$ be protocols over a common $U$, with situations
$\sigma_1$ and $\sigma_2$, and let the composition $\Pi_1 \parallel \Pi_2$
have situation $\sigma = \sigma_1 \cup \sigma_2$ and goal $\chi = \chi_1
\wedge \chi_2$, with $\chi_i$ a goal of $\Pi_i$. Suppose that every attack
$x$ of $\Pi_i$ on $\chi_i$, for $i = 1, 2$, extends to an attack $x'$ of
the composition, a run of $\Pi_1 \parallel \Pi_2$ that meets $\sigma$ and
still violates $\chi$, with $\stops(x') \subseteq \stops(x)$. Then
\[
\weakest_{\Pi_1 \parallel \Pi_2}(\chi) \;\ge\;
\weakest_{\Pi_1}(\chi_1) \vee \weakest_{\Pi_2}(\chi_2),
\]
with equality if and only if every member of
$\attacks_{\Pi_1 \parallel \Pi_2}(\chi)$ contains a member of
$\attacks_{\Pi_1}(\chi_1) \cup \attacks_{\Pi_2}(\chi_2)$: every minimal
attack of the composition is stopped by no less than what some minimal
attack of a component is stopped by.
\end{theorem}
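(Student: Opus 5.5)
The plan is to reduce everything to transversal families via Proposition~\ref{prop:hitting} and Lemma~\ref{lem:refinement}, and to read the order of $\completion(\trustlat)$ as reverse inclusion of up-sets. Write $\attacks = \attacks_{\Pi_1\parallel\Pi_2}(\chi)$ and $\attacks_i = \attacks_{\Pi_i}(\chi_i)$. By Proposition~\ref{prop:hitting}, applied to each protocol separately, $\Suff_{\Pi_1\parallel\Pi_2}(\chi) = \mathrm{Tr}(\attacks)$ and $\Suff_{\Pi_i}(\chi_i) = \mathrm{Tr}(\attacks_i)$. By the proof of Corollary~\ref{cor:conjunction}, the join $\weakest_{\Pi_1}(\chi_1)\vee\weakest_{\Pi_2}(\chi_2)$ is the antichain generating $\mathrm{Tr}(\attacks_1)\cap\mathrm{Tr}(\attacks_2) = \mathrm{Tr}(\attacks_1\cup\attacks_2)$. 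Since $M \le M'$ iff $\mathop{\uparrow} M' \subseteq \mathop{\uparrow} M$, the inequality of the theorem becomes
\[
\mathrm{Tr}(\attacks) \subseteq \mathrm{Tr}(\attacks_1 \cup \attacks_2),
\]
and equality of the antichains becomes equality of these transversal families.

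For the inequality, I would invoke Lemma~\ref{lem:refinement} with $\mathcal{A} = \attacks$ and $\mathcal{B} = \attacks_1\cup\attacks_2$. It then suffices to show that every $E \in \attacks_i$ contains a member of $\attacks$. Choose an attack $x$ of $\Pi_i$ with $\stops(x) = E$. The extension hypothesis gives an attack $x'$ of the composition with $\stops(x') \subseteq E$. Every stopping set contains a minimal one, so some $E' \in \attacks$ satisfies $E' \subseteq \stops(x') \subseteq E$.

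For the equality criterion, the inclusion just proved reduces equality to the reverse inclusion $\mathrm{Tr}(\attacks_1\cup\attacks_2) \subseteq \mathrm{Tr}(\attacks)$. By Lemma~\ref{lem:refinement}, this time with $\mathcal{A} = \attacks_1\cup\attacks_2$ and $\mathcal{B} = \attacks$, this holds iff every member of $\attacks$ contains a member of $\attacks_1\cup\attacks_2$, which is exactly the stated condition. The antichains are equal iff their generated up-sets are, since $M \mapsto \mathop{\uparrow} M$ is a bijection (Section~\ref{sec:galois}).

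The main obstacle is bookkeeping rather than depth. First, $\stops$ must be read over the common $U$ in each of the three settings, so that the stopping sets of $x$ and $x'$ are comparable. Second, the equality of up-sets in $\completion(\trustlat)$ must be matched with the equality of $\weakest$ values, which rests on $\mathrm{Tr}(\attacks_1)\cap\mathrm{Tr}(\attacks_2) = \mathrm{Tr}(\attacks_1\cup\attacks_2)$. This identity holds because a set meets every member of a union iff it meets every member of each part.
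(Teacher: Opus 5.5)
Your proposal is correct and follows the paper's proof essentially step for step. The paper likewise uses the extension hypothesis to show that every component minimal attack contains a minimal attack of the composition. It then applies Lemma~\ref{lem:refinement} in one direction for the inequality, identifying both sides with up-sets via Proposition~\ref{prop:hitting} and the description of joins in $\completion(\trustlat)$, and in the other direction for the equality criterion.
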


\begin{proof}
Component attacks are attacks of the composition, so the composition needs
at least the join, and Lemma~\ref{lem:refinement} turns each inclusion of
transversal families into a containment condition on attacks. Write
$\attacks_c = \attacks_{\Pi_1 \parallel \Pi_2}(\chi)$, $\attacks_i =
\attacks_{\Pi_i}(\chi_i)$, and $\mathcal{K} = \attacks_1 \cup \attacks_2$,
and write $\weakest_c$, $\weakest_i$ for the corresponding weakest trusts.

\emph{Step 1: every component stopping set contains a composition
stopping set.} Let $E \in \mathcal{K}$, say $E = \stops(x)$ for an attack
$x$ of $\Pi_i$ on $\chi_i$. By the hypothesis, $x$ extends to an attack
$x'$ of the composition with $\stops(x') \subseteq E$. The set
$\stops(x')$ contains a member of $\attacks_c$, so $E$ does too.

\emph{Step 2: the inequality.} By Step 1, every member $E$ of
$\mathcal{K}$ contains a member of $\attacks_c$. By Lemma~\ref{lem:refinement}, with
$\mathcal{A} = \attacks_c$ and $\mathcal{B} = \mathcal{K}$,
\[
\mathrm{Tr}(\attacks_c) \subseteq \mathrm{Tr}(\mathcal{K})
= \mathrm{Tr}(\attacks_1) \cap \mathrm{Tr}(\attacks_2).
\]
By Proposition~\ref{prop:hitting}, applied to each protocol, the left side
is $\Suff_{\Pi_1 \parallel \Pi_2}(\chi) = \mathop{\uparrow} \weakest_c$ and
the right side is $\Suff_{\Pi_1}(\chi_1) \cap \Suff_{\Pi_2}(\chi_2) =
\mathop{\uparrow} \weakest_1 \cap \mathop{\uparrow} \weakest_2$, which is
$\mathop{\uparrow} (\weakest_1 \vee \weakest_2)$ by the description of joins
in $\completion(\trustlat)$ (Section~\ref{sec:galois}). Since $M \le M'$ in
$\completion(\trustlat)$ iff $\mathop{\uparrow} M' \subseteq
\mathop{\uparrow} M$, this is $\weakest_1 \vee \weakest_2 \le \weakest_c$.

\emph{Step 3: the equality condition.} An antichain is determined by the
up-set it generates, so equality holds iff $\mathrm{Tr}(\attacks_c) =
\mathrm{Tr}(\mathcal{K})$. One inclusion is Step 2, so equality holds iff
$\mathrm{Tr}(\mathcal{K}) \subseteq \mathrm{Tr}(\attacks_c)$, which by
Lemma~\ref{lem:refinement}, now with $\mathcal{A} = \mathcal{K}$ and
$\mathcal{B} = \attacks_c$, holds iff every member of $\attacks_c$ contains
a member of $\mathcal{K}$. That is the displayed condition.
\end{proof}

The hypothesis is used once, in Step 1, and Step 1 is the only protocol
content. Steps 2 and 3 hold for any three clutters related as in Step 1.

When $\sigma_1 = \sigma_2 = \sigma$, the hypothesis holds with $x' = x$.
Every role of $\Pi_i$ is a role of $\Pi_1 \parallel \Pi_2$, so an attack
$x = (\mathcal{B}, \eta)$ of $\Pi_i$ is a run of the composition that
satisfies $\sigma$ and violates $\chi_i$, and hence $\chi$. Its stopping
set is the same in both protocols. When the situations differ, a run of
$\Pi_1$ alone does not satisfy $\sigma_2$, so the extension must add a run
$y$ of $\Pi_2$ that satisfies $\sigma_2$. Adding $y$ alongside $x$ is not
sufficient by itself, since it need not keep the stopping set within
$\stops(x)$ or keep $\chi$ false. The $\Unq$ assumption is not additive:
if $y$ originates a value that originates once in $x$, then $\Unq$ of that
value fails in the union although it held in $x$. Moreover, the strands of
$y$ might supply the witness that $\chi_i$ requires. One sufficient
condition has four parts. First, $y$ is a bundle of honest strands of
$\Pi_2$, disjoint from $x$ and with no edge to or from it, whose every edge
is a delivery on one channel. Second, $y$ satisfies $\sigma_2$ under an
extension $\eta'$ of $\eta$ and violates no atom of $U$ under $\eta'$.
Third, every value $y$ originates occurs nowhere in $x$. Fourth, no strand
of $y$ is an instance of a role named in $\chi_i$. Then an origination atom
of $U$ fails in $x \cup y$ only if it fails in $x$, and no edge of $y$ uses
a capability. Moreover, when every node that $\chi_i$ requires carries a
role atom, as in agreement goals, the witnesses $\chi_i$ could use are
nodes of $x$, whose order and parameters $y$ does not change. Thus,
$\chi_i$ still fails.

In the pairs of Table~\ref{tab:composition}, the situation of the
composition is the union of the components' situations, each naming a
completed run of its own protocol, so the hypothesis requires such a
$y$. For an attack on $\Pi_1$, we take for $y$ an honest run of the
attestation service described below, using its own fresh value $M$,
chosen to occur nowhere in the attack. It originates only $M$, which no atom of $U$ names,
emits no key, and has no strand of a role of $\Pi_1$. For an attack on
$\Pi_2$, we take for $y$ an honest run of the signed challenge-response on a
challenge chosen to occur nowhere in the attack, with $\eta'$ sending $N$
to it. It originates only that challenge, once, so it satisfies $\Unq(N)$
under $\eta'$, emits no key, and has no strand of a role of $\Pi_2$. Without
some such run, the inequality is not guaranteed, since the extra conjuncts
of $\sigma$ could exclude a component's attacks from the composition.

\begin{remark}[Disjoint encryption, as a conjecture]
\label{rem:disjoint}
We conjecture that disjoint encryption in the sense of
\cite{guttmanthayer} is a sufficient condition for the equality, and we
record why without claiming it as an instance. Their independence result transforms every run of
the composition into behaviour of the components, preserving the regular
strands and granting the adversary no capability it lacked. An attack on
$\chi_i$ would then map to a component attack whose stopping set is
contained in the original's, as the equality condition requires.
Verifying this requires tracking stopping sets through the transformation
of \cite{guttmanthayer}, which we have not yet done. We therefore state it
as a conjecture.
\end{remark}

\begin{table}[t]
\caption{Composition (Theorem~\ref{thm:composition}) of the signed challenge-response
($\chi_1$) with an attestation service ($\chi_2$). Last column: minimal attacks of the
composition that contain no component's minimal attack.}
\label{tab:composition}
\centering
\setlength{\tabcolsep}{3pt}
\footnotesize
\begin{tabular}{p{2.0cm} p{3.1cm} p{3.1cm} p{2.6cm}}
\toprule
pair & join of components & composition & cross-protocol\\
\midrule
tagged, shared $sk_A$ & $\{\Non(sk_A), \Unq(N)\}$ & $\{\Non(sk_A), \Unq(N)\}$ & none\\
untagged, shared $sk_A$ & $\{\Non(sk_A), \Unq(N)\}$ & $\mathsf{false}$ & $\{\Non(sk_C)\}$ on $\chi_1$, $\emptyset$ on $\chi_2$\\
separate keys & $\{\Non(sk_A), \Non(sk'_A),$ $\Unq(N)\}$ & $\{\Non(sk_A), \Non(sk'_A),$ $\Unq(N)\}$ & none\\
\bottomrule
\end{tabular}
\end{table}

Table~\ref{tab:composition} shows three pairs, computed by
Algorithm~\ref{alg:loop} over the analyser of Section~\ref{sec:eval}, each
confirmed by evaluating every trust. Each pair composes the signed
challenge-response of Section~\ref{sec:example} ($\Pi_1$, with $\chi_1$
recent agreement) with an attestation service ($\Pi_2$). In response to a
request signed with $sk_C$, $A$ signs $\langle M, A, B\rangle$ for $B$, and
$\chi_2$ is $B$'s agreement with $A$ on $M$. The situation is the union of
the two components' situations, with each goal tied to its own completed
$B$ run. Alone, the components require $\{\Non(sk_A), \Unq(N)\}$ and
$\{\Non(sk_A)\}$. When the two signed formats carry distinct tags, the
composition's minimal attacks are forgery and replay, both component
attacks. Hence, the condition of Theorem~\ref{thm:composition} holds
although the key is shared, and the weakest trust is the join. Without
tags, the analyser finds two multi-protocol attacks \cite{cremersares}, in
which messages of one protocol are accepted as messages of the other. In
the first, the adversary forges a request on $B$'s challenge and returns
$A$'s attestation as the response. Only $\Non(sk_C)$ removes this run, and
neither component needs this assumption. In the second, the adversary
submits its own value to $A$ as a challenge and delivers the signature to
$B$ as an attestation. No atom of $U$ removes this run. The composition's
weakest trust is $\mathsf{false}$, strictly above the join, and the gap is
these two runs. With a separate key $sk'_A$ for the attestation service,
the join is exact again, since no message of one protocol is accepted by
the other.

Trust is compositional exactly to the extent that attacks are. A designer
may compute the trust required by each component separately and join the
answers, and the result is correct provided the components do not
interfere. Where they do interfere, and component attacks still carry
over to the composition, the join is a lower bound on the trust required,
and the gap is the cross-protocol attacks.

\subsection{Tight Trusts and Attainable Goals}

A Galois connection restricts to an isomorphism between the fixed points
of its two composites, the elements that the round trip through the other
side leaves unchanged. For trusts and goals, the fixed points are the
tight trusts and the attainable goals below. The theorem holds as stated, but
the goal language limits what it tells a designer, as the paragraph after
the proof explains.

\begin{definition}[Tight and attainable]
\label{def:tight}
A trust $\tau \in \completion(\trustlat)$ is \emph{tight} if $\tau =
\weakest(\strongest(\tau))$: every assumption in it is required by the
strongest guarantee it yields. A goal $\chi$ is \emph{attainable} if
$\chi = \strongest(\weakest(\chi))$: it is the strongest goal achieved by
the weakest trust achieving it.
\end{definition}

\begin{theorem}[Correspondence]
\label{thm:correspondence}
$\weakest$ and $\strongest$ restrict to mutually inverse order
isomorphisms between the attainable goals and the tight trusts, both of
which are complete lattices. Every trust $\tau$ has a tight core
$\weakest(\strongest(\tau)) \le \tau$ achieving the same goals, and every
goal $\chi$ has an attainable strengthening $\strongest(\weakest(\chi))
\ge \chi$ requiring the same trust.
\end{theorem}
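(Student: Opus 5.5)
The plan is to use only the adjunction $\weakest \dashv \strongest$ of Theorem~\ref{thm:completion}, with $\weakest : \goals \to \completion(\trustlat)$ and $\strongest$ defined on all of $\completion(\trustlat)$. We then apply the standard fixed-point theory of a Galois connection, using \eqref{eq:unit}, \eqref{eq:counit} and \eqref{eq:triangle} from the proof of Theorem~\ref{thm:galois}. That proof used only the two existence hypotheses and the monotonicity checked for \eqref{eq:disjach}, so those identities hold verbatim in the completion.

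The first step is the last sentence of the statement. Put $\kappa = \weakest\circ\strongest$ and $c = \strongest\circ\weakest$. By \eqref{eq:counit}, $\kappa(\tau) \le \tau$, and by \eqref{eq:unit}, $c(\chi) \ge \chi$. By \eqref{eq:triangle}, $\kappa(\kappa(\tau)) = \kappa(\tau)$, so $\kappa(\tau)$ is tight; similarly $c(\chi)$ is attainable. Next, $\kappa(\tau)$ achieves the same goals as $\tau$. Using \eqref{eq:adjunction}, $\kappa(\tau) \ach \chi$ iff $\chi \le \strongest(\kappa(\tau))$. Also $\strongest\kappa = \strongest\weakest\strongest = \strongest$ by \eqref{eq:triangle}, so this holds iff $\chi \le \strongest(\tau)$, iff $\tau \ach \chi$. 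Dually, $\weakest(c(\chi)) = \weakest\strongest\weakest(\chi) = \weakest(\chi)$, so $c(\chi)$ requires the same trust as $\chi$.

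The second step is the isomorphism. If $\chi$ is attainable, then $\weakest(\chi)$ is tight, because $\weakest\strongest\weakest(\chi) = \weakest(\chi)$. Dually, if $\tau$ is tight, then $\strongest(\tau)$ is attainable. The restricted maps are mutually inverse: $\strongest(\weakest(\chi)) = \chi$ for attainable $\chi$, and $\weakest(\strongest(\tau)) = \tau$ for tight $\tau$, by definition. Both are monotone by Theorem~\ref{thm:galois}, so they are order isomorphisms. Indeed, a monotone bijection whose inverse is also monotone is an order isomorphism.

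The third step is completeness, which I expect to be the only delicate point. $\goals$ is a lattice but need not be complete, so completeness must come from the trust side. $\completion(\trustlat)$ is finite, so it is a complete lattice, and the tight trusts form a finite subset of it. I would show that the tight trusts are closed under all joins of $\completion(\trustlat)$. The image of a kernel operator is closed under existing joins: if each $\tau_i$ is tight, then $\kappa(\bigvee\tau_i) \ge \kappa(\tau_i) = \tau_i$ for every $i$ by monotonicity, so $\kappa(\bigvee\tau_i) \ge \bigvee\tau_i$, and deflation gives equality. The empty join is the least element $\{\emptyset\}$, so the tight trusts include a least element. A finite poset with all joins, including the empty one, is a complete lattice, with meets given as joins of lower bounds. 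Transporting this structure along the isomorphism makes the attainable goals a complete lattice too. Their meets need not coincide with those of $\goals$, and I would remark on that rather than claim it.
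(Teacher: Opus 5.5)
Your proposal is correct and follows the paper's proof essentially step for step: the triangle identities give the restriction and the mutual inverses, the fixed points of the kernel operator $\weakest\circ\strongest$ are closed under all joins of the finite lattice $\completion(\trustlat)$, and completeness is then transported along the isomorphism. One small correction to your closing aside: the attainable goals are the fixed points of the closure operator $\strongest\circ\weakest$, so they are closed under every meet that exists in $\goals$, and their meets do coincide with those of $\goals$; it is the joins that may differ, since the join of attainable goals among the attainable goals is $\strongest\circ\weakest$ of their conjunction.
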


\begin{proof}
Every claim follows from the unit, the counit and the identities
\eqref{eq:triangle}, which say that a second round trip changes nothing.
Throughout, $\weakest : \goals \to \completion(\trustlat)$ and $\strongest :
\completion(\trustlat) \to \goals$ are the adjoints of
Theorem~\ref{thm:completion}, so the unit \eqref{eq:unit}, the counit
\eqref{eq:counit} and the identities \eqref{eq:triangle} hold.

\emph{The maps restrict.} If $\tau$ is tight then $\strongest(\tau)$ is
attainable: $\strongest(\weakest(\strongest(\tau))) = \strongest(\tau)$ by
\eqref{eq:triangle}. If $\chi$ is attainable then $\weakest(\chi)$ is tight:
$\weakest(\strongest(\weakest(\chi))) = \weakest(\chi)$, again by
\eqref{eq:triangle}. (In fact \eqref{eq:triangle} says every value of
$\strongest$ is attainable and every value of $\weakest$ is tight, so the
attainable goals are exactly the image of $\strongest$ and the tight trusts
exactly the image of $\weakest$.)

\emph{They are mutually inverse order isomorphisms.} For tight $\tau$,
$\weakest(\strongest(\tau)) = \tau$ by definition, and for attainable
$\chi$, $\strongest(\weakest(\chi)) = \chi$ by definition. Hence, the two
restrictions are inverse bijections. Both are monotone
(Theorem~\ref{thm:galois}), and a monotone bijection whose inverse is
monotone is an order isomorphism.

\emph{The tight trusts form a complete lattice.} Write $\kappa = \weakest
\circ \strongest$, a kernel operator on $\completion(\trustlat)$
(Theorem~\ref{thm:galois}), whose fixed points are the tight trusts. Let
$\{\tau_j\}_{j \in J}$ be any family of fixed points, possibly empty, and
$\tau = \bigvee_j \tau_j$ its join in $\completion(\trustlat)$, which
exists because $\completion(\trustlat)$ is a finite lattice. For each $j$,
$\tau_j = \kappa(\tau_j) \le \kappa(\tau)$ by monotonicity, so $\tau \le
\kappa(\tau)$. Also $\kappa(\tau) \le \tau$ because $\kappa$ is deflationary.
Therefore, $\tau$ is a fixed point, and the fixed points are closed under all
joins of $\completion(\trustlat)$, including the empty join, its least
element. A
subset of a complete lattice closed under all joins is a complete lattice
in the induced order, with the same joins and with the meet of a family
given by the join of its lower bounds within the subset
\cite{birkhoff}. The attainable goals, being order isomorphic to the tight
trusts, form a complete lattice too, whose joins and meets are the images
under $\strongest$ of those of the tight trusts. The attainable goals are
the fixed points of the closure operator $\strongest \circ \weakest$, so
they are closed under every meet that exists in $\goals$: if each
$\chi_j$ is attainable and $\chi = \bigwedge_j \chi_j$ exists in
$\goals$, then $\strongest(\weakest(\chi)) \le
\strongest(\weakest(\chi_j)) = \chi_j$ for every $j$, so
$\strongest(\weakest(\chi)) \le \chi$, and $\chi \le
\strongest(\weakest(\chi))$ by \eqref{eq:unit}. In particular, the meet of
finitely many attainable goals is their disjunction. Joins, however, may
differ. The join of attainable goals among the attainable goals is
$\strongest \circ \weakest$ of their conjunction, which may be strictly
stronger than the conjunction.

\emph{The tight core.} By \eqref{eq:counit}, $\weakest(\strongest(\tau))
\le \tau$, and it is tight since $\kappa$ is idempotent. It achieves the
same goals as $\tau$: for every $\chi$, $\chi \le
\strongest(\weakest(\strongest(\tau)))$ iff $\chi \le \strongest(\tau)$, by
\eqref{eq:triangle}, and by \eqref{eq:adjunction} these say
$\weakest(\strongest(\tau)) \ach \chi$ and $\tau \ach \chi$.

\emph{The attainable strengthening.} By \eqref{eq:unit},
$\strongest(\weakest(\chi)) \ge \chi$, and it is attainable since
$\strongest \circ \weakest$ is idempotent. It requires the same trust:
$\weakest(\strongest(\weakest(\chi))) = \weakest(\chi)$ by
\eqref{eq:triangle}, so by \eqref{eq:adjunction} a trust achieves
$\strongest(\weakest(\chi))$ iff it achieves $\chi$.
\end{proof}

The correspondence has a limitation. Conclusions of the goal language may
contain origination atoms, and $\strongest(\tau)$ does. It is the shape
analysis sentence of Theorem~\ref{thm:sas}, and the shapes carry the
origination assumptions of $\tau$, so $\delta$ of each shape lists them. A
trust that omits an origination atom $a$ of $\tau$ admits, as a rule, a
run in which $a$ fails, and that run violates every disjunct of
$\strongest(\tau)$. As a result, $\weakest(\strongest(\tau))$ retains the
origination atoms of $\tau$. The tight core of a trust of origination
atoms is in general the trust itself, making nearly every such trust
tight. Thus, the tight core does not identify the assumptions a stated
guarantee can do without. Computing $\weakest(\chi)$ for the stated goal
$\chi$ does, as in the rest of the paper. A tight core that discards
unneeded assumptions would need conclusions free of trust atoms, for instance
$\strongest(\tau)$ with its origination atoms removed. Whether the
correspondence survives that change is open.
The attainable strengthening $\strongest(\weakest(\chi))$ is affected in
the same way, since it also contains the origination atoms of
$\weakest(\chi)$.

\subsection{Attack Containment and the Tenet}

A designer comparing two protocols, or a protocol and its repair, wants a
criterion stated in terms of attacks. Rowe, Guttman, and Liskov open with
a tenet: a system is at least as secure as another exactly when every
attack that succeeds against the first also succeeds against the second.
They present it as a starting point, define an attack as a failing goal,
and then define protocol comparison so as to respect the tenet.
Theorem~\ref{thm:tenet} restates the tenet by comparing attacks through
their stopping sets. It is a restatement of the tenet in our terms rather
than a derivation of it as Rowe et al.\ state it.

\begin{theorem}[Attack containment]
\label{thm:tenet}
Let $\Pi_1$ and $\Pi_2$ share a situation $\sigma$, a goal language (so
that every goal $\chi$ is a goal of both, and $\sigma$ and $\chi$ name
roles of both), and the candidate assumptions $U$. Write $\Pi_1 \preceq
\Pi_2$ when every trust sufficient for $\Pi_1$ is
sufficient for $\Pi_2$, that is, $\Suff_{\Pi_1}(\chi) \subseteq
\Suff_{\Pi_2}(\chi)$ for every goal $\chi$. Then $\Pi_1 \preceq \Pi_2$ if
and only if, for every $\chi$ and every $E \in \attacks_{\Pi_2}(\chi)$,
there is $E' \in \attacks_{\Pi_1}(\chi)$ with $E' \subseteq E$.
\end{theorem}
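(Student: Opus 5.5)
The plan is to reduce the statement, goal by goal, to Lemma~\ref{lem:refinement}, with Proposition~\ref{prop:hitting} supplying the bridge between sufficiency and transversals. Both sides of the claimed equivalence are conjunctions over all goals $\chi$: the relation $\Pi_1 \preceq \Pi_2$ is defined as an inclusion for each $\chi$, and the attack-containment condition is stated for each $\chi$. So it suffices to fix one goal $\chi$ and show
\[
\Suff_{\Pi_1}(\chi) \subseteq \Suff_{\Pi_2}(\chi)
\iff
\text{every } E \in \attacks_{\Pi_2}(\chi) \text{ contains some } E' \in \attacks_{\Pi_1}(\chi).
\]
The shared situation, goal language and candidate set $U$ are what make this comparison meaningful. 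Both families of stopping sets are clutters over the same $U$, and $\Suff_{\Pi_i}(\chi)$ are subsets of the same lattice $\trustlat = 2^U$.

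The first step applies Proposition~\ref{prop:hitting} separately to each protocol. This needs the modelling assumptions of Section~\ref{sec:lattice}, in particular separability (Lemma~\ref{lem:separable}), to hold for $\Pi_1$ and for $\Pi_2$. They do, because Lemma~\ref{lem:separable} is proved for any protocol over the atoms of $U$. The result is $\Suff_{\Pi_i}(\chi) = \mathrm{Tr}(\attacks_{\Pi_i}(\chi))$ for $i = 1, 2$.

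The second step applies Lemma~\ref{lem:refinement} with $\mathcal{A} = \attacks_{\Pi_1}(\chi)$ and $\mathcal{B} = \attacks_{\Pi_2}(\chi)$. That lemma says $\mathrm{Tr}(\mathcal{A}) \subseteq \mathrm{Tr}(\mathcal{B})$ exactly when every member of $\mathcal{B}$ contains a member of $\mathcal{A}$, which is the displayed condition for this $\chi$. Quantifying over all $\chi$ then gives the theorem. As a remark, the same argument works with the unminimised families $\{\stops(x)\}$, since they have the same transversals. Alternatively, one can state the equivalent condition $\weakest_{\Pi_2}(\chi) \ge \weakest_{\Pi_1}(\chi)$ in $\completion(\trustlat)$, using that $M \le M'$ iff $\mathop{\uparrow} M' \subseteq \mathop{\uparrow} M$.

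No step here is technically hard; the proof is two citations. The point needing care is the degenerate cases, which I would check explicitly. If $\attacks_{\Pi_1}(\chi) = \{\emptyset\}$, then $\Suff_{\Pi_1}(\chi)$ is empty and the inclusion holds vacuously. The right-hand side also holds, since $\emptyset$ is contained in every $E$. If $\attacks_{\Pi_2}(\chi) = \emptyset$, both sides hold trivially. Lemma~\ref{lem:refinement} already covers both cases, because its proof does not assume the families are nonempty. Finally, I would note in the text that this is a restatement of the tenet and not a derivation of it: the content lies in reading ``attack'' as a stopping set and ``at least as secure'' as $\preceq$.
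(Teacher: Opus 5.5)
Your proof is correct and is the paper's argument: Proposition~\ref{prop:hitting}, applied to each protocol, makes both sufficiency sets transversal families. Lemma~\ref{lem:refinement} with $\mathcal{A} = \attacks_{\Pi_1}(\chi)$ and $\mathcal{B} = \attacks_{\Pi_2}(\chi)$ then gives the equivalence goal by goal.

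Your side remark has one slip. Since $M \le M'$ iff $\mathop{\uparrow} M' \subseteq \mathop{\uparrow} M$, the inclusion $\Suff_{\Pi_1}(\chi) \subseteq \Suff_{\Pi_2}(\chi)$ says $\weakest_{\Pi_2}(\chi) \le \weakest_{\Pi_1}(\chi)$, not $\ge$: the more secure protocol needs less trust.
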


\begin{proof}
Sufficient trusts are the transversals of the minimal attacks, and
Lemma~\ref{lem:refinement} compares transversal families. Fix $\chi$ and
write $\attacks_i = \attacks_{\Pi_i}(\chi)$. By
Proposition~\ref{prop:hitting}, applied to each protocol,
$\Suff_{\Pi_i}(\chi) = \mathrm{Tr}(\attacks_i)$. By
Lemma~\ref{lem:refinement} with $\mathcal{A} = \attacks_1$ and
$\mathcal{B} = \attacks_2$, $\mathrm{Tr}(\attacks_1) \subseteq
\mathrm{Tr}(\attacks_2)$ iff every member of $\attacks_2$ contains a member
of $\attacks_1$. Quantifying over $\chi$ gives the theorem. Only the
refinement lemma is needed, not the involution. The involution adds that
the condition, for clutters, is a partial order on clutters, since mutual
refinement of two clutters forces them to be equal (if $E \in \attacks_2$
contains $E' \in \attacks_1$, which contains $E'' \in \attacks_2$, then $E
= E''$ because $\attacks_2$ is a clutter, so $E = E' \in \attacks_1$).
\end{proof}

The right-hand side reads as follows. Every minimal attack on $\Pi_2$
contains, among its stopping assumptions, the whole stopping set of some
minimal attack on $\Pi_1$. Thus, an attack on $\Pi_2$ is, up to what stops
it, an attack on $\Pi_1$. In $\Pi_1 \preceq \Pi_2$, the more secure
protocol is on the right, whereas the tenet names the more secure system
first. Hence, $\Pi_2$ plays the tenet's first system and $\Pi_1$ its
second. With that reading, and with attacks identified by their stopping
sets, the tenet holds as Theorem~\ref{thm:tenet} for protocols that share
$\sigma$, the goal language, and $U$.

\section{Computing the Weakest Trust}
\label{sec:computing}

Section~\ref{sec:blocker} described the weakest trust in terms of the
minimal attacks, which are not known in advance. A loop that uses a
protocol verifier as its only oracle finds both. It keeps the stopping sets
of the attacks found so far and, as candidates, the minimal ways to stop
all of them. At the start, no attack is known, and the only candidate is
the empty trust. The loop asks the verifier whether a candidate achieves
the goal. If the verifier finds an attack, the loop records the attack's
stopping set, recomputes the candidates, and asks again. It stops when
every candidate passes. Theorem~\ref{thm:correctness} shows that the
candidates are then $\weakest(\chi)$ and the recorded stopping sets are
the minimal attacks. Two steps require care. Attacks must be found one
candidate at a time, because a single run of the verifier from the empty
trust misses some. Moreover, a verifier reports an attack but not its
stopping set, which requires further queries. Sections~\ref{sec:obvious}
and~\ref{sec:backend} treat these two steps. None of the parts is new, as
Section~\ref{sec:loop} records.

\subsection{The One-Shot Method, and Why It Fails}
\label{sec:obvious}

The one-shot method runs the verifier once. An attack is a run that violates
the goal, and an enrich-by-need analysis returns the minimal runs
compatible with a situation, its shapes. Thus, one could run the analysis
once with nothing assumed, keep the shapes that violate the goal, and read
off $\attacks(\chi)$.

The running example shows that this fails. Start the analysis from $B$'s
completed challenger strand with nothing assumed uncompromised or fresh.
The reception of $\sig{sk_A}{N,A,B}$ is explained at once. With $sk_A$
compromised, the adversary builds the signature itself, and no regular
strand has to be added. The starting skeleton is already realised, the
analysis ends with a single shape, and that shape's stopping set is
$\{\Non(sk_A)\}$. The replay attack never appears. Replay obtains the
signature from an earlier genuine session, and the analysis adds that
session only after forgery is barred, that is, after $\Non(sk_A)$ is
assumed. Thus, the one-shot method returns $\{\{\Non(sk_A)\}\}$ and drops
$\Unq(N)$ without warning.

The reason is general. Enrich-by-need adds regular behaviour only to
explain receptions the adversary cannot produce, and assumptions are what
make a reception unexplained. With nothing assumed, it has the least to
explain, and it finds the fewest runs. Attacks must therefore be found relative
to the trust being tested, which is how enrich-by-need is designed to be
run. CPSA fails in the same way on the flawed \textsc{pkinit} of
Section~\ref{sec:trace}. With nothing assumed, the only failing shape is
ruled out by $\Non(sk(S))$ and by no other atom. A one-shot reading would
report $\{\Non(sk(S))\}$ as the weakest trust, whereas the answer is
$\mathsf{false}$.

\subsection{A Counterexample-Guided Loop}
\label{sec:loop}

Algorithm~\ref{alg:loop} states the loop. It is Reiter's hitting-set tree
\cite{reiter,greiner}, which computes the minimal hitting sets of a family
of sets (the family's minimal transversals, in the terms of
Section~\ref{sec:blocker}), with a protocol verifier in place of the
theorem prover that supplies the sets. Equivalently, it is
counterexample-guided inductive synthesis \cite{cegis} with achievement as
the specification.

\begin{algorithm}[t]
\caption{The weakest trust by counterexample-guided search.}
\label{alg:loop}
\begin{algorithmic}[1]
\Statex \textbf{Input:} situation $\sigma$, goal $\chi$, candidate assumptions $U$,
  and an oracle $O$ that decides $T \ach \chi$ and returns a witness run on
  failure.
\Statex \textbf{Output:} $\weakest(\chi)$, a clutter over $U$.
\State $H \gets \emptyset$
\Loop
  \State $C \gets \blocker(H)$
  \State \textbf{if} $O(T) = \textit{achieves}$ for every $T \in C$
    \textbf{then return} $C$
  \State pick $T \in C$ with $O(T) = \textit{fails}$, with witness run $x$
  \State $H \gets \min_{\subseteq}(H \cup \{\stops(x)\})$
\EndLoop
\end{algorithmic}
\end{algorithm}

$H$ holds the stopping sets found so far, with non-minimal ones removed, so
$H$ is a clutter. $C$ holds the candidates, the minimal sets that meet
every member of $H$. At the first iteration, $\blocker(\emptyset) =
\{\emptyset\}$, so the first candidate is the empty trust. If $H$ ever
contains the empty set, no set meets every member, and $C$ is the empty
clutter. The test of line~4 then holds vacuously, and the loop returns the
empty clutter, which is $\mathsf{false}$.
Lines~4 and~5 are implemented together: the candidates are queried in a
fixed order until one fails, and that one is picked. With a memo table on
$O$, each distinct trust costs one oracle call (Listing~\ref{lst:loop}).

Any tool that decides achievement on the fragment and returns a
counterexample can serve as the oracle. An enrich-by-need analysis
started from $\sigma \wedge T$ does so whenever it terminates, which the
termination assumption of Section~\ref{sec:galois} supplies. The CPSA
driver below reports an error, not a verdict, when a search is cut off.
Line~6 requires more than a counterexample. It requires the
counterexample's stopping set, which Section~\ref{sec:backend} computes
from a second query to the backend. The correctness proof uses only one
property of the set that line~6 adds. The set must be a \emph{conflict} for the failing trust $T$: a set of
atoms that contains no atom of $T$ and that every sufficient trust meets.

\begin{definition}[Conflict]
\label{def:conflict}
A set $K \subseteq U$ is a \emph{conflict} for a trust $T$ if
$K \cap T = \emptyset$ and every sufficient trust meets $K$.
\end{definition}

\begin{lemma}[Stopping sets are conflicts]
\label{lem:conflict}
If $x = (\mathcal{B}, \eta)$ is a run with $\mathcal{B} \in
\mathcal{B}(T_C)$ that meets $\sigma \wedge T$ and violates $\chi$, then
$\stops(x)$ is a conflict for $T$.
\end{lemma}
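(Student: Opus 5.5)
We verify the two clauses of Definition~\ref{def:conflict} separately, using only Lemma~\ref{lem:separable} and Proposition~\ref{prop:hitting}.

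\emph{Disjointness from $T$.} The run $x = (\mathcal{B}, \eta)$ is not ruled out by $T$. Indeed, $\mathcal{B} \in \mathcal{B}(T_C)$, so no channel atom of $T$ removes it. Moreover, $\mathcal{B}, \eta \models \sigma \wedge T$, so $x$ violates no origination atom of $T_O$. By Lemma~\ref{lem:separable}, a trust rules out $x$ iff it meets $\stops(x)$. Since $T$ does not rule out $x$, we get $T \cap \stops(x) = \emptyset$. One can also see this atom by atom: each $a \in T$ alone does not rule out $x$, because neither its channel restriction nor its origination condition fails at $x$.

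\emph{Every sufficient trust meets $\stops(x)$.} Because $\mathcal{B} \in \mathcal{B}(T_C)$, the bundle is a bundle of the unrestricted model $\Pi$ (A4). It satisfies $\sigma$ under $\eta$ and violates $\chi$, so $x$ is an attack in the sense of Definition~\ref{def:attacks}. Let $\tau \ach \chi$. By the first paragraph of the proof of Proposition~\ref{prop:hitting}, a sufficient trust rules out every attack. If $\tau$ did not rule out $x$, then $\mathcal{B} \in \mathcal{B}(\tau_C)$ and $\mathcal{B}, \eta \models \sigma \wedge \bigwedge \tau_O$ while $\chi$ fails, which contradicts Definition~\ref{def:ach}. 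Hence $\tau$ rules out $x$, and Lemma~\ref{lem:separable} gives $\tau \cap \stops(x) \neq \emptyset$.

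The only point needing care is that $x$ was found in the model restricted by $T_C$ but must count as an attack of the unrestricted model. This holds because restriction only removes bundles: $\mathcal{B}(T_C)$ is a subset of the bundles of $\Pi$. Also, $\stops(x)$ is computed over the unrestricted notion of ruling out, so Lemma~\ref{lem:separable} applies unchanged. The rest is direct unfolding of the definitions.
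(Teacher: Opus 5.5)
Your proof is correct and follows the paper's argument: both clauses come from applying separability (Lemma~\ref{lem:separable}) to the attack $x$, once with $T$ and once with an arbitrary trust $\tau$. The only difference is direction: you argue the second clause directly (a sufficient trust must rule out $x$, hence meets $\stops(x)$), while the paper argues the contrapositive (a trust that misses $\stops(x)$ leaves $x$ standing and so is insufficient).
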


\begin{proof}
The run $x$ is an attack that $T$ does not rule out, and by separability
any trust that rules it out must meet $\stops(x)$.
In detail, $x$ satisfies $\sigma$ and violates $\chi$, so it is an attack
in the sense of Definition~\ref{def:attacks}. It satisfies $T$, so $T$ does
not rule it out, and by Lemma~\ref{lem:separable}, $T \cap \stops(x) =
\emptyset$.
If a trust $\tau$ misses $\stops(x)$, then by Lemma~\ref{lem:separable}
$\tau$ does not rule out $x$, so $x$ is a run of the model restricted by
$\tau$ that meets $\sigma \wedge \tau$ and violates $\chi$, and $\tau$ is
not sufficient. Hence every sufficient trust meets $\stops(x)$.
\end{proof}

Recall from Section~\ref{sec:blocker} that $\mathrm{Tr}(H)$ is the set of
transversals of $H$, the sets that meet every member, and that
$\mathrm{Tr}(H)$ is the up-set generated by $\blocker(H)$. Moreover,
$\mathrm{Tr}(\min_{\subseteq}(H \cup \{K\})) = \mathrm{Tr}(H) \cap
\mathrm{Tr}(\{K\})$, since a set meets every member of $H \cup \{K\}$
exactly when it meets every minimal one.

\begin{theorem}[Correctness]
\label{thm:correctness}
If $O$ is sound and complete for achievement on the fragment, then
Algorithm~\ref{alg:loop} terminates after at most $2^{|U|}$ failing
oracle calls, returns $\weakest(\chi)$, and ends with $H =
\attacks(\chi)$. The same holds if line~6 adds, in place of $\stops(x)$,
any conflict for $T$.
\end{theorem}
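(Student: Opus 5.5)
The plan is to maintain one invariant and derive termination and correctness from it. Write $S = \Suff(\chi) = \mathrm{Tr}(\attacks(\chi))$ (Proposition~\ref{prop:hitting}). \emph{Invariant:} every member of $H$ is a conflict-type set that every sufficient trust meets, so $S \subseteq \mathrm{Tr}(H)$. It holds initially since $\mathrm{Tr}(\emptyset) = \trustlat$. It is preserved at line~6 because the added set $K$ (either $\stops(x)$, a conflict by Lemma~\ref{lem:conflict}, or any conflict) is met by every sufficient trust, and $\mathrm{Tr}(\min_{\subseteq}(H \cup \{K\})) = \mathrm{Tr}(H) \cap \mathrm{Tr}(\{K\})$. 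Only the conflict property of $K$ is used, which gives the generalised statement for free.

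\emph{Termination.} At line~5 the picked $T \in \blocker(H)$ lies in $\mathrm{Tr}(H)$ but misses $K$, since $K \cap T = \emptyset$. So $T \notin \mathrm{Tr}(H')$ for the new $H'$, and $\mathrm{Tr}(H)$ strictly shrinks at each failing call. It is a subset of $2^U$ and never loses a sufficient trust, so there are at most $2^{|U|}$ failing calls. This count is a little crude, since $T$ itself is removed, but it suffices. Soundness and completeness of $O$ guarantee that ``fails'' really means $T \notin S$ and yields a run as in Lemma~\ref{lem:conflict}.

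\emph{Output.} On return, every $T \in C = \blocker(H)$ achieves $\chi$, by soundness of $O$. Then $\mathrm{Tr}(H) = \mathop{\uparrow}\blocker(H) \subseteq S$ by the up-set property (Lemma~\ref{lem:polarity}), and with the invariant we get $\mathrm{Tr}(H) = S$. Taking minimal elements gives $C = \weakest(\chi)$ (Theorem~\ref{thm:completion} and the identity $\weakest(\chi) = \blocker(\attacks(\chi))$). The degenerate case $\emptyset \in H$ is consistent: $C$ is empty, and the invariant forces $S = \emptyset$.

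\emph{$H = \attacks(\chi)$.} We have $\mathrm{Tr}(H) = \mathrm{Tr}(\attacks(\chi))$ with both sides clutters. Applying Lemma~\ref{lem:refinement} in both directions shows that each member of one contains a member of the other. Since both are clutters, the argument at the end of the proof of Theorem~\ref{thm:tenet} shows they are equal. Alternatively, take blockers and apply Lemma~\ref{lem:involution}: $H = \blocker(\blocker(H)) = \blocker(\weakest(\chi)) = \attacks(\chi)$.

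The main obstacle is this last step for arbitrary conflicts. It needs $H$ to be a clutter, which the $\min_{\subseteq}$ in line~6 ensures, and the equation $\mathrm{Tr}(H) = S$ holds regardless of which conflicts were added. Whether members of $H$ are literally stopping sets of attacks is then not needed: equality as clutters follows from the involution alone. I would check carefully that the invariant uses only the conflict property, so that this transfer goes through.
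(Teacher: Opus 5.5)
Your proof is correct and follows the paper's argument. It keeps the invariant $\Suff(\chi) \subseteq \mathrm{Tr}(H)$ using only the conflict property, gets termination from the strict shrinking of $\mathrm{Tr}(H)$, combines soundness with the up-set property to obtain $\mathrm{Tr}(H) = \Suff(\chi)$ at return, and applies the blocker involution to conclude $H = \attacks(\chi)$. Your first route to the last step, mutual refinement via Lemma~\ref{lem:refinement} plus the clutter antisymmetry argument from the proof of Theorem~\ref{thm:tenet}, is also valid and slightly more elementary; your second route is the one the paper uses.
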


\begin{proof}
Each failure removes the failing candidate from $\mathrm{Tr}(H)$, which can
happen only finitely often, and at return $\mathrm{Tr}(H)$ is exactly the
set of sufficient trusts, so its minimal elements are both $\blocker(H)$
and $\weakest(\chi)$.

By Lemma~\ref{lem:conflict}, the set $\stops(x)$ added at line~6 is a
conflict for $T$, since soundness and completeness of $O$ make the witness
$x$ a run that satisfies $\sigma \wedge T$ and violates $\chi$. It therefore
suffices to prove the claims when every set added at line~6 is a conflict
$K$ for the trust $T$ picked at line~5. Two invariants hold throughout.
First, $H$ is a clutter, because line~6 removes non-minimal sets. Second,
every member of $H$ is a conflict for some trust, so every sufficient trust meets every member
of $H$, that is, $\Suff(\chi) \subseteq \mathrm{Tr}(H)$.

\emph{Termination.} At line~6, $T$ is a member of $\blocker(H)$ and hence
a transversal of $H$, while $T \cap K = \emptyset$. The new clutter $H' =
\min_{\subseteq}(H \cup \{K\})$ has $\mathrm{Tr}(H') = \mathrm{Tr}(H) \cap
\mathrm{Tr}(\{K\})$, which is contained in $\mathrm{Tr}(H)$ and does not
contain $T$. Hence, $\mathrm{Tr}(H)$ strictly decreases at every failure.
It is a family of subsets of the finite set $U$, so there are at most
$2^{|U|}$ failures, and each iteration without a failure returns.

\emph{Correctness.} Consider the state at return, with final clutter $H$
and answer $C = \blocker(H)$. Every member of $C$ achieves $\chi$, since
line~4 checked it and $O$ is sound. Hence, $C \subseteq \Suff(\chi)$. Since
$\Suff(\chi)$ is an up-set (Lemma~\ref{lem:monotone}), it contains the
up-set generated by $C$, which is $\mathrm{Tr}(H)$. With the invariant,
$\Suff(\chi) = \mathrm{Tr}(H)$. The minimal elements of $\Suff(\chi)$ are
$\weakest(\chi)$ (Theorem~\ref{thm:completion}), and those of
$\mathrm{Tr}(H)$ are $\blocker(H) = C$, so $C = \weakest(\chi)$. Finally,
$H$ is a clutter with $\blocker(H) = \weakest(\chi) =
\blocker(\attacks(\chi))$ (Proposition~\ref{prop:hitting}), and applying
$\blocker$ to both sides and using the involution \eqref{eq:involution}
gives $H = \attacks(\chi)$. Thus, the loop finds the minimal attacks
without listing them in advance.
\end{proof}

The proof uses neither that the sets added are minimal nor that they are
stopping sets of single runs. It uses only that they are conflicts. Real
backends require both relaxations. The first affects the cost
bound of Theorem~\ref{thm:cost}, and the second is required for the CPSA
backend of Section~\ref{sec:backend}.

\begin{remark}[A hand analysis that follows the loop]
\label{rem:byhand}
Rowe et al.\ \cite[Section~6.1]{rowe} analyse \textsc{pkinit} as follows.
They start from $\Non(sk(S))$, rather than from the empty trust, and
observe that the skeleton is realised, as the adversary may recover the
session key. They restart with $\Non(sk(C))$ and $\Unq(k)$ added and obtain
a new unexplained reception. Finally, they add $\Unq(n_2)$ to force two
client strands to coincide. Their analysis follows the pattern of
Algorithm~\ref{alg:loop}: query a trust, inspect the counterexample, and
add assumptions. It differs in that each assumption is picked by
inspection, whereas the loop records a stopping set and recomputes the
candidates. They write that they know of no tool designed for this kind of
reasoning \cite[Section~6.4]{rowe}. Algorithm~\ref{alg:loop} carries it out
with a verifier as the oracle.
\end{remark}

\subsection{Computing \texorpdfstring{$\stops(x)$}{stops(x)} from a Backend}
\label{sec:backend}

A verdict and a counterexample are not enough for line~6 of
Algorithm~\ref{alg:loop}. A backend must answer two kinds of queries.
\begin{description}
\item[Verify$(T)$.] Decide whether $T \ach \chi$. On failure, return one
  or more counterexamples $x$, runs $(\mathcal{B}, \eta)$ or symbolic
  descriptions of runs that meet $\sigma \wedge T$ and violate $\chi$. The
  assignment is part of the counterexample, since the atoms of $U$ name
  values through it.
\item[Survives$(x, T')$.] For a counterexample $x$ returned by
  Verify$(T)$ and a trust $T' \supseteq T$, decide whether $x$ remains a
  counterexample under the additional assumptions in $T'$.
\end{description}
Verify$(T)$ is the oracle $O$. Survives is needed because a counterexample
does not, in general, show its stopping set. A backend reports the regular
strands and the bindings of a run. Whether
an assumption rules the run out may depend on what the adversary had to do
to produce a reception, which the backend need not print. No other access
to the backend is needed, neither the adversary's derivations nor the way
the backend represents assumptions.

\begin{algorithm}[t]
\caption{The clause of a counterexample $x$ found at a failing trust $T$.}
\label{alg:stops}
\begin{algorithmic}[1]
\State $M \gets \emptyset$
\For{$a \in U \setminus T$, in a fixed order}
  \If{Survives$(x, T \cup M \cup \{a\})$}
    \State $M \gets M \cup \{a\}$
  \EndIf
\EndFor
\State \Return $U \setminus (T \cup M)$
\end{algorithmic}
\end{algorithm}

Algorithm~\ref{alg:stops} computes what we call the \emph{clause} of $x$,
the set of atoms that line~6 adds. It accumulates assumptions under which
the attack survives. Starting with $M = \emptyset$, it adds each atom $a$
to $M$ whenever $x$ survives $T \cup M \cup \{a\}$, and finally returns $U
\setminus (T \cup M)$. This requires $|U \setminus T|$ Survives queries per
counterexample. When Verify returns several counterexamples, each yields a
clause, and each clause may be added to $H$. Our implementation adds the
smallest.

\begin{proposition}[The clause is a conflict]
\label{prop:clause}
Suppose that Survives$(x, T')$ answers yes only if $T'$ is not
sufficient. Then the clause $K$ returned by Algorithm~\ref{alg:stops} is a
conflict for $T$. If, moreover, $x$ is a single run $(\mathcal{B}, \eta)$
and Survives$(x, T')$
answers yes exactly when $T'$ does not rule out $x$, then $K = \stops(x)$,
whatever the order of the loop.
\end{proposition}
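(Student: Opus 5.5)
We prove the two claims in turn. Both rest on one observation about Algorithm~\ref{alg:stops}: whenever an atom $a$ is added to $M$, the preceding query Survives$(x, T \cup M \cup \{a\})$ answered yes. So after the loop, the last yes-query (if any) was made on exactly $T \cup M$, with the final $M$.

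\emph{First claim: the clause is a conflict.} Put $K = U \setminus (T \cup M)$. Disjointness $K \cap T = \emptyset$ holds by construction. For the hitting property, we show that every trust $\tau$ missing $K$ is insufficient. Such a $\tau$ satisfies $\tau \subseteq T \cup M$. If $M \neq \emptyset$, the final yes-answer was on $T \cup M$, so by the hypothesis on Survives, $T \cup M$ is not sufficient. Since $\Suff(\chi)$ is an up-set (Lemma~\ref{lem:polarity}), its subset $\tau$ is not sufficient either.

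If $M = \emptyset$, we need that $T$ itself is insufficient. This holds because $x$ was returned by Verify$(T)$ on failure and the oracle is sound, so $T$ fails. Equivalently, one may regard $x$ surviving $T$ as an implicit yes. Then $\tau \subseteq T$ is insufficient by the up-set property. Hence every sufficient trust meets $K$.

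\emph{Second claim: $K = \stops(x)$ under exact Survives.} Here Survives$(x, T')$ is yes iff $T'$ does not rule out $x$. By Lemma~\ref{lem:separable}, this holds iff $T' \cap \stops(x) = \emptyset$. We also know $T \cap \stops(x) = \emptyset$, since $x$ meets $\sigma \wedge T$ (Lemma~\ref{lem:conflict}).

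We prove by induction over the loop that $M \cap \stops(x) = \emptyset$ is invariant. At the step for atom $a$, the invariant makes $(T \cup M \cup \{a\}) \cap \stops(x)$ equal to $\{a\} \cap \stops(x)$. So the query answers yes iff $a \notin \stops(x)$. Thus $a$ is added exactly when $a \notin \stops(x)$, and the invariant is preserved.

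At the end, $M = (U \setminus T) \setminus \stops(x)$. Therefore $K = U \setminus (T \cup M) = \stops(x) \setminus T = \stops(x)$. Each decision depends only on $a$, so the result is independent of the order of the loop.

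\emph{Main obstacle.} The delicate point is the case $M = \emptyset$ in the first claim, where no yes-answer certifies insufficiency. We resolve it through the failing verdict of Verify$(T)$, with the soundness of $O$ assumed as in Theorem~\ref{thm:correctness}. The separability step is routine given Lemma~\ref{lem:separable}.
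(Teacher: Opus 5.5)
Your proof is correct and follows the paper's argument. The paper phrases the first claim as an invariant that $T \cup M$ stays insufficient throughout the loop: the base case is that $x$ witnesses the failure of $T$, which needs only Verify's specification that $x$ meets $\sigma \wedge T$ and violates $\chi$, and the inductive step is each yes-answer. That invariant is exactly your case split on $M = \emptyset$ versus the last yes-query. Your second claim is the paper's induction on $(T \cup M) \cap \stops(x) = \emptyset$ via Lemma~\ref{lem:separable}.
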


\begin{proof}
The trust $T \cup M$ stays insufficient throughout, and every trust that
misses the returned clause is contained in it.

$K \cap T = \emptyset$ by construction. For the second condition of
Definition~\ref{def:conflict}, we show that the trust $T \cup M$ is not
sufficient at every point of the loop. Initially $M = \emptyset$, and $x$ witnesses that
$T$ fails. Afterwards $M$ grows only to a set $T \cup M \cup \{a\}$ for
which Survives answered yes, which by hypothesis is not sufficient. At
return, a trust $\tau$ with $\tau \cap K = \emptyset$ satisfies $\tau
\subseteq T \cup M$. Since $\Suff(\chi)$ is an up-set
(Lemma~\ref{lem:monotone}) and $T \cup M \notin \Suff(\chi)$, $\tau \notin
\Suff(\chi)$. Therefore, every sufficient trust meets $K$.

For the second claim, Lemma~\ref{lem:separable} says that $T'$ rules out
$x$ iff $T' \cap \stops(x) \neq \emptyset$. Since $x$ meets $T$, $T \cap
\stops(x) = \emptyset$, and inductively $(T \cup M) \cap \stops(x) =
\emptyset$. The test at line~3 therefore succeeds iff $a \notin \stops(x)$. Hence,
$M = (U \setminus T) \setminus \stops(x)$ and $K = (U \setminus T) \cap
\stops(x) = \stops(x)$, independently of the order.
\end{proof}

Theorem~\ref{thm:correctness} requires only conflicts. Thus,
Algorithm~\ref{alg:loop} with line~6 computed by Algorithm~\ref{alg:stops}
returns $\weakest(\chi)$ under the first hypothesis of
Proposition~\ref{prop:clause} alone. That hypothesis asks only that every
positive Survives result be correct. A wrong negative result keeps an atom
in the clause and makes the clause larger, which costs oracle calls but
not correctness.

\paragraph{Families of runs.} A counterexample need not be a single run.
A CPSA shape is a skeleton, and a realised skeleton represents every
bundle that realises it. Survives$(x, T')$ then reports whether some member
of the family is a counterexample under $T'$, and different members may
survive different atoms. Survival of a family need not decompose atom by
atom, so testing each atom independently, taking the clause $\{a \in U \setminus
T : \text{not Survives}(x, T \cup \{a\})\}$, is unsound. Suppose $U =
\{a, b\}$ and the only attacks are two runs $x_1$, $x_2$ with $\stops(x_1)
= \{b\}$ and $\stops(x_2) = \{a\}$, reported together as one symbolic
counterexample $x$ at $T = \emptyset$. Then $x$ survives $\{a\}$ through
$x_1$ and $\{b\}$ through $x_2$, so the per-atom clause is empty. The loop
would then return $\mathsf{false}$, although $\{a, b\}$ is sufficient.
Algorithm~\ref{alg:stops} keeps $a$ in $M$, then finds that $x$ does not
survive $\{a, b\}$, and returns $\{b\}$, which is $\stops(x_1)$. Adding
atoms to $M$ one at a time selects a member of the family that survives
everything in $M$, and the returned clause is one that this member
satisfies. The loop then queries $\{b\}$, obtains $\{a\}$, and returns
$\{\{a, b\}\}$. When $x$ is a single run, the two procedures agree, by the second half of
Proposition~\ref{prop:clause}.

\paragraph{The CPSA instantiation.} For CPSA~4 \cite{cpsaspec}, the two
queries are runs of the tool.
\begin{itemize}
\item \emph{Verify$(T)$.} Each atom of $U$ is a goal fact, \texttt{(non
  $t$)} or \texttt{(uniq $t$)}, over the goal's universally quantified
  variables. The facts of $T$ are added to the hypothesis of the goal, and
  CPSA is run once. A run counts only if the search ends with
  \texttt{Nothing left to do} and reports no step-limit or strand-bound
  abort. Otherwise, the driver stops with an error rather than a verdict. Every
  shape annotated \texttt{(satisfies (no \dots))} is a counterexample, and
  the annotation lists the bindings of the goal's variables to terms of
  the shape.
\item \emph{Survives$(x, T')$.} The shape $x$ is printed back as an input
  skeleton, keeping its strands, its ordering, and its origination
  declarations. The atoms of $T$ are already among these, since CPSA
  carries the hypothesis facts into every skeleton it derives. Each atom
  of $T' \setminus T$ is instantiated by the bindings of the verdict and
  appended to the skeleton's \texttt{non-orig} or \texttt{uniq-orig} list,
  and CPSA is run on the skeleton alone. The shape survives exactly when
  the first skeleton CPSA prints, the input itself, is marked realised and is
  marked neither as a preskeleton nor as dead. Output on the error stream,
  which signals an ill-formed input, counts as not surviving.
\end{itemize}
A realised input skeleton satisfies the hypothesis of
Proposition~\ref{prop:clause}. Being realised, it has a realisation that
adds only adversary strands, so the regular strands and the bindings are
those of $x$. The realisation satisfies $\sigma \wedge T'$, since the
added declarations hold in it, and it violates $\chi$, since the conclusion
fails on $x$'s regular strands with these bindings. Hence, $T'$ is not
sufficient.

The preskeleton and dead marks matter for the following reason. When an
added \texttt{uniq} atom names a value that the shape originates at two
points, the input is not a skeleton, and $\Unq(v)$ rules out $x$ by
Definition~\ref{def:attacks}. CPSA does not reject such an input. Instead,
it marks the first skeleton realised and a preskeleton and then merges the
two strands into a different skeleton, or it marks the input realised and
dead. In both cases, the realised mark does not describe a run with $x$'s
strands. Reading it as survival removes from the clause an atom that rules
out $x$, so the clause need not be a conflict.
Section~\ref{sec:eval-impl} gives a case where this changed the answer. Reading the
two marks as not surviving is safe by the remark after
Proposition~\ref{prop:clause}, since a negative result only keeps the atom
in the clause.

This case occurs on the evaluated rows. On the fixed Blanchet exchange,
the first failing shape is replayed with $\Unq(d)$, and CPSA answers with
a preskeleton. Thus, $\Unq(d)$ stays in the clause, and exhaustive
evaluation lists $\{\Unq(d)\}$ among the minimal attacks.
Listing~\ref{lst:oracle} is the driver, abridged.

Attack traces from ProVerif or Tamarin can be checked in the same way, by
asking whether the trace still satisfies the strengthened hypothesis. A
bounded model checker answers Survives by replaying the run under the
added restrictions.

\subsection{A Worked Trace: The Flawed \textnormal{\textsc{pkinit}}}
\label{sec:trace}

The flawed \textsc{pkinit} row of Table~\ref{tab:results} has $U =
\{\Non(sk(S)), \Non(sk(C))\}$, written in the artifact as \texttt{non\_as}
for \texttt{(non (privk as))} and \texttt{non\_c} for \texttt{(non (privk
c))}, and has the client's authentication of the server as its goal.
Listing~\ref{lst:pkinit} is the part of the artifact's CPSA log that
records the loop on this row, with witnesses as CPSA returns them (file
\texttt{cpsa\_log.txt} in \texttt{results/final}); Table~\ref{tab:pkinit}
reads it.

\begin{lstlisting}[caption={The run of Algorithm~\ref{alg:loop} on the flawed \textsc{pkinit}, verbatim from the CPSA log.},label={lst:pkinit}]
    O({}) fails: shape 0: stops = {non_as}  [non_as: unrealized; non_c: realized]
    O({non_as}) fails: shape 1: stops = {}  [non_c: realized]
  weakest[as-found]: W = [], H = {}, distinct 2 (2g+3r), rejected 0
\end{lstlisting}

\begin{table}[t]
\caption{The trace of Listing~\ref{lst:pkinit}. Each row is one oracle
call: one CPSA goal run at the candidate $T$, followed by one replay per atom of
$U \setminus T$ in the order of $U$ (Algorithm~\ref{alg:stops}), each with
$M \cup \{a\}$ added to the shape.}
\label{tab:pkinit}
\centering
\small
\begin{tabular}{llllll}
\toprule
call & candidate $T$ & failing shape & replays & clause & $H$ after\\
\midrule
1 & $\emptyset$ & shape 0 & $\Non(sk(S))$: unrealised & $\{\Non(sk(S))\}$ & $\{\{\Non(sk(S))\}\}$\\
  & & & $\Non(sk(C))$: realised & & \\
2 & $\{\Non(sk(S))\}$ & shape 1 & $\Non(sk(C))$: realised & $\emptyset$ & $\{\emptyset\}$\\
\bottomrule
\end{tabular}
\end{table}

At the first iteration, $H = \emptyset$ and the only candidate is the
empty trust. CPSA reports one failing shape. Replaying it with $\Non(sk(S))$
added leaves it unrealised, so the atom is not kept. Replaying it with
$\Non(sk(C))$ added (with $M$ still empty) leaves it realised, so $M =
\{\Non(sk(C))\}$ and the clause is $\{\Non(sk(S))\}$. The adversary in
this shape therefore needs the server's private key but not the client's. At the
second iteration, $\blocker(\{\{\Non(sk(S))\}\}) = \{\{\Non(sk(S))\}\}$,
and CPSA reports one failing shape under $\Non(sk(S))$. Its one replay,
with $\Non(sk(C))$ added, is realised, so the clause is empty. No atom of
$U$ rules the shape out, which is the situation
Remark~\ref{rem:unrealisable} describes for the attack of
\cite{cervesato}. Now $H = \{\emptyset\}$,
whose blocker is the empty clutter. Line~4 holds vacuously, and the loop
returns the empty clutter, $\mathsf{false}$. The log prints a clutter as
its members separated by semicolons. In this notation, \texttt{H = \{\}} is the clutter
whose one member is the empty set, and \texttt{W = []} is the empty
clutter. The run made two goal runs and three replay runs, the counts of
Table~\ref{tab:calls}, and no replay was marked a preskeleton or dead.
The exhaustive evaluation in the same log finds all four trusts failing.

The first witness is not minimal: its clause $\{\Non(sk(S))\}$ strictly
contains the only minimal attack, $\emptyset$. Thus, the loop makes two
calls where $|\attacks(\chi)| + |\weakest(\chi)| = 1$, a case where the
bound of Theorem~\ref{thm:cost} fails without its hypothesis. The first
iteration alone is the one-shot method of Section~\ref{sec:obvious}, which
would have answered $\{\Non(sk(S))\}$.

\subsection{Cost}
\label{sec:cost}

An oracle call is a goal query, a call of Verify. The replays of
Algorithm~\ref{alg:stops} are counted separately. A \emph{distinct} oracle
call is a call on a trust that has not been queried before. The memo table of
Listing~\ref{lst:loop} counts these. Throughout, lines~4
and~5 query candidates in a fixed order until the first failure.
\emph{Monotone dualisation} is
the problem of deciding, given two clutters, whether each is the blocker of
the other. An enumeration is \emph{output-polynomial} if it runs in time
polynomial in the combined size of its input and its output.

\begin{figure}[t]
\centering
\begin{tikzpicture}[
    x=1cm, y=1cm, >=Stealth,
    dot/.style={circle, fill, inner sep=1.4pt},
    msg/.style={font=\small},
    lay/.style={font=\small, inner sep=2pt}]
  \node[font=\small\bfseries] at (0,0.6) {$B_j$};
  \node[font=\small\bfseries] at (11,0.6) {$A_j$};
  \node[dot] (b1) at (0,0) {};
  \node[dot] (b2) at (0,-2.6) {};
  \node[dot] (a1) at (11,-0.3) {};
  \node[dot] (a2) at (11,-1.9) {};
  \draw[double, -{Implies}, double distance=1.2pt] (b1) -- (b2);
  \draw[double, -{Implies}, double distance=1.2pt] (a1) -- (a2);
  \draw[->] (b1) -- node[msg, above, sloped] {$N_j$} (a1);
  \draw[->] (a2) -- node[msg, above, sloped, pos=0.55]
    {$t_j(N_j)$: regular route, the goal holds} (b2);
  \node[lay] (c0) at (10,-4.3) {$\langle N_j, \mathsf{tag}_j\rangle$};
  \node[lay] (c1) at (6.4,-4.3)
    {$\{\!|\langle N_j, \mathsf{tag}_j\rangle|\!\}_{k_{i_1}}$};
  \node[lay] (c2) at (3.6,-4.3) {$\cdots$};
  \node[lay] (c3) at (1.3,-4.3) {$t_j(N_j)$};
  \draw[->] (c0) -- node[msg, above] {$k_{i_1}$} (c1);
  \draw[->] (c1) -- node[msg, above] {$k_{i_2}$} (c2);
  \draw[->] (c2) -- node[msg, above] {$k_{i_r}$} (c3);
  \draw[->, dashed] (c3) -- node[msg, left, pos=0.5]
    {adversary route} (b2);
  \draw[dashed, rounded corners] (0.6,-3.75) rectangle (11.0,-4.85);
  \node[font=\small] at (5.8,-5.25)
    {adversary: originates $k_{i_1}, \dots, k_{i_r}$ and encrypts layer by layer};
\end{tikzpicture}
\caption{The gadget of the hardness reduction for one set $S_j = \{i_1 <
\dots < i_r\}$, with $t_j(N) = \{\!|\cdots\{\!|\langle N,
\mathsf{tag}_j\rangle|\!\}_{k_{i_1}}\cdots|\!\}_{k_{i_r}}$. Double
arrows are strand succession. The reception of $B_j$ is explained either
by a regular $A_j$ strand with parameter $N_j$, which satisfies the goal,
or by the adversary building the term, which requires every key of $S_j$.
Hence, the attack on this component has stopping set $\{\Non(k_i) : i \in
S_j\}$, and the tag keeps the components apart.}
\label{fig:gadget}
\end{figure}

\begin{theorem}[Cost]
\label{thm:cost}
Let $N = |\attacks(\chi)| + |\weakest(\chi)|$. Given $\attacks(\chi)$
and a candidate clutter $C$, deciding whether $C =
\blocker(\attacks(\chi))$ is an instance of monotone dualisation and can be
solved in time $N^{o(\log N)}$ \cite{fredman}. Given only the oracle,
checking $C$ requires computing $\blocker(C)$ and making the $|C| +
|\blocker(C)|$ goal queries $T$ for $T \in C$ and $U \setminus E$ for $E
\in \blocker(C)$. Then $C = \weakest(\chi)$ exactly when every $T$ passes
and every $U \setminus E$ fails. Whether the minimal transversals can be
enumerated in output-polynomial time is a long-standing open question
\cite{eiter}. If every witness returned by the oracle
has a $\subseteq$-minimal stopping set, Algorithm~\ref{alg:loop} makes at
most $N$ distinct goal queries. With the backend of
Section~\ref{sec:backend}, it also makes at most $|U|$ replays for each
counterexample of the at most $|\attacks(\chi)|$ failing queries. With
arbitrary witnesses, it still terminates by
Theorem~\ref{thm:correctness}, every distinct passing
candidate is still a member of $\weakest(\chi)$, and the calls number at
most $|\weakest(\chi)| + |S_O|$, where $S_O$ is the set of stopping sets
the oracle can return. Minimalising each witness costs at most $|U|$
further goal queries and gives the bound $|\weakest(\chi)| +
|\attacks(\chi)|(1+|U|)$. Deciding whether some trust of size at most $k$
achieves $\chi$ is NP-hard.
\end{theorem}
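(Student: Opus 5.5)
The statement bundles six claims, and I would prove them in order. Each uses only the hitting characterisation (Proposition~\ref{prop:hitting}), the refinement lemma and involution (Lemmas~\ref{lem:refinement} and~\ref{lem:involution}), and the invariants from the proof of Theorem~\ref{thm:correctness}.

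\emph{Dualisation and the oracle check.} Given $\attacks(\chi)$, deciding $C = \blocker(\attacks(\chi))$ is by definition an instance of monotone dualisation on the pair $(\attacks(\chi), C)$. The quasi-polynomial bound is then quoted from Fredman and Khachiyan, with input size $N$ up to the factor $|U|$.

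For the oracle-only check, I would argue as follows. $C = \weakest(\chi)$ iff $\mathop{\uparrow} C = \Suff(\chi)$. Every $T \in C$ passing gives $\mathop{\uparrow} C \subseteq \Suff(\chi)$. For the converse, take $\tau \notin \mathop{\uparrow} C$. Then $\tau$ contains no member of $C$, so $U \setminus \tau$ meets every member of $C$ and hence contains some $E \in \blocker(C)$; thus $\tau \subseteq U \setminus E$. If every $U \setminus E$ fails, the up-set property makes $\tau$ fail. Necessity has two parts. If $C = \weakest(\chi)$, each $U \setminus E$ misses $E$, and by the involution $E$ is a minimal transversal of $C$; so $U \setminus E$ contains no member of $C$ and fails. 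Each $T \in C$ passes by definition. The open-question sentence is only a citation.

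\emph{The query bound with minimal witnesses.} Each distinct query either passes or fails. For passing queries: a candidate $T \in \blocker(H)$ that passes remains in every later $\blocker(H')$. This holds because later clauses are conflicts for failing trusts, and every sufficient trust meets them, so $T$ stays a transversal. $T$ also stays minimal, since $\mathrm{Tr}$ only shrinks and $T$ was minimal in a superset. At return $C = \weakest(\chi)$, so the distinct passing queries are members of $\weakest(\chi)$, at most $|\weakest(\chi)|$ of them; the memo table prevents repeats. For failing queries: each adds some $\stops(x)$, and by hypothesis this is $\subseteq$-minimal among stopping sets, hence a member of $\attacks(\chi)$. Each failure strictly shrinks $\mathrm{Tr}(H)$, and $T$ misses the added set while meeting all of $H$, so the set is new. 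Hence the failures are at most $|\attacks(\chi)|$, and Algorithm~\ref{alg:stops} adds at most $|U|$ replays for each.

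\emph{Arbitrary witnesses.} The same two arguments apply, except that the added sets now range over $S_O$, are distinct, and number at most $|S_O|$. This gives the bound $|\weakest(\chi)| + |S_O|$. To minimalise a witness clause $K$, I would try dropping each atom $a$: query the trust $U \setminus (K \setminus \{a\})$, and if it fails, the oracle gives a witness whose stopping set lies within $K \setminus \{a\}$, so replace $K$ by it. Greedy descent uses at most $|U|$ queries and yields a minimal set, which is a member of $\attacks(\chi)$. The bound $|\weakest(\chi)| + |\attacks(\chi)|(1+|U|)$ follows. This step is the main obstacle, since witnesses returned mid-descent must still be checked to be conflicts, by Lemma~\ref{lem:conflict}.

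\emph{NP-hardness.} I would reduce from hitting set using the gadget of Fig.~\ref{fig:gadget}. Given sets $S_1, \dots, S_m$ over $\{1, \dots, n\}$, build one tagged component per $S_j$, with $U = \{\Non(k_i)\}$ and goal the conjunction of the agreements. The tags keep components apart. A forged reception for component $j$ requires originating every $k_i$ with $i \in S_j$, and by Theorem~\ref{thm:correctness} replay is excluded by fixing $\Unq(N_j)$ in $\sigma$. So the minimal attacks are exactly $\{\Non(k_i) : i \in S_j\}$. By Proposition~\ref{prop:hitting}, a trust of size at most $k$ suffices iff a hitting set of size at most $k$ exists. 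The main technical burden is verifying that no other attacks exist, and I would defer that protocol-level check to the gadget's analysis.
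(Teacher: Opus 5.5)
Apart from the hardness claim, your plan tracks the paper's proof, with two valid variations. For the oracle-only check you argue directly that any $\tau \notin \mathop{\uparrow} C$ lies below some $U \setminus E$ with $E \in \blocker(C)$, where the paper applies Lemma~\ref{lem:refinement} against $\attacks(\chi)$. For passing candidates you show that a passing $T$ persists into every later $\blocker(H)$ and then invoke Theorem~\ref{thm:correctness}. The paper argues directly instead: a passing minimal transversal $T$ of $H$ equals the member $T^* \in \weakest(\chi)$ it contains, since $T^*$ is itself a transversal of $H$.

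The genuine gap is NP-hardness. You state the reduction but explicitly defer its correctness, and that verification is the whole content of this part. The target is also not that ``no other attacks exist''. Other attacks do exist, for example runs that originate keys they never use, or that forge several components. What must be shown is that (i) every attack has $\stops(x) \supseteq K_j = \{\Non(k_i) : i \in S_j\}$ for some violated conjunct $j$, and (ii) each $K_j$ is exactly the stopping set of some attack.

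The paper gets (i) from three features of the construction:
\begin{itemize}
\item $N$ has the sort of nonces, so only $A_{j'}$ strands transmit encryptions;
\item the tags are distinct, so only $A_j$ strands transmit encryptions carrying $\mathsf{tag}_j$;
\item keys occur only in key position, so no regular strand originates any $k_i$.
\end{itemize}
Hence, when no $A_j$ strand has parameter $N_j$, no regular transmission, and no layer decrypted from one, is a layer of $t_j(N_j)$. The adversary must build every layer from the inside out, using each $k_i$ with $i \in S_j$, and it can hold $k_i$ only by originating it, which violates $\Non(k_i)$.

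For (ii), one exhibits the run $x_j$. In it the adversary originates exactly the keys of $S_j$, every nonce originates once, and every other $B_{j'}$ is answered by an honest $A_{j'}$. Then $\attacks(\chi)$ consists of the $\subseteq$-minimal members of $\{K_1, \dots, K_m\}$, not all of them as you state ($S_1 \subseteq S_2$ makes $K_2$ non-minimal). These have the same transversals as the whole family. The case $S_j = \emptyset$ and the polynomial size of the construction also need a line.

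Your appeal to Theorem~\ref{thm:correctness} to exclude replay is misplaced. That theorem is about the loop, and a replayed genuine $A_j$ response would satisfy the non-injective conjunct anyway. $\Unq(N_j)$ is in $\sigma$ only so that $U$ consists of the key atoms.

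Two smaller steps are asserted rather than proved.
\begin{itemize}
\item \emph{Minimalisation.} The conflict property you flag as the obstacle is automatic from Lemma~\ref{lem:conflict}. What needs an argument is why a single pass, testing each atom once, ends at a member of $\attacks(\chi)$. The final $S$ is a conflict, so $U \setminus S$ fails, and by Proposition~\ref{prop:hitting} $S \supseteq E$ for some $E \in \attacks(\chi)$. Any $a \in S \setminus E$ was tested against a set $S_0 \supseteq S \supseteq E$. So $U \setminus (S_0 \setminus \{a\})$ missed $E$ and failed, and $a$ was removed and never re-entered. The counting also needs that the minimalised set lies inside the original clause, hence is disjoint from the failing candidate.
\item \emph{Distinct clauses with arbitrary witnesses.} This needs more than ``$T$ meets all of the current $H$'', because minimisation can delete an earlier clause $K_0$ from $H$. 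It suffices that $H$ thereafter always contains a subset of $K_0$, so every later candidate meets $K_0$ while its new clause misses it. With minimal witnesses nothing is ever deleted, since every clause lies in the clutter $\attacks(\chi)$, so your argument there is complete.
\end{itemize}
Finally, in the necessity half of the oracle check, each $E \in \blocker(C)$ is a transversal of $C$ by definition, so the involution is not needed.
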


\begin{proof}
Every passing candidate is a weakest trust and every failing candidate adds
a clause not seen before, so the calls are bounded by the number of weakest
trusts plus the number of distinct clauses. Hardness follows by encoding
Hitting Set as a protocol with nested encryptions.

All clauses below are conflicts, by Lemma~\ref{lem:conflict} or
Proposition~\ref{prop:clause}.

\emph{Dualisation and enumeration.} Given $\attacks(\chi)$ and a candidate
clutter $C$, deciding $C = \blocker(\attacks(\chi))$ is the monotone
dualisation problem, for which Fredman and Khachiyan give an algorithm
running in time $N^{o(\log N)}$ in the combined size \cite{fredman}.
Whether $\blocker(\attacks)$ can be enumerated in time polynomial in the
sizes of $\attacks$ and $\blocker(\attacks)$ is open \cite{eiter}.
Fredman and Khachiyan's algorithm requires both clutters as input. With
only the oracle, $\attacks(\chi)$ is not given. Suppose $C$ is an antichain.
If $C = \weakest(\chi)$, then every $T \in C$ passes, and by the
involution $\blocker(C) = \attacks(\chi)$, so for each $E \in \blocker(C)$
the trust $U \setminus E$ misses $E$ and fails. Conversely, if every $T
\in C$ passes, then $\mathop{\uparrow} C \subseteq \Suff(\chi)$. If every
$U \setminus E$ fails, then for $E \in \blocker(C)$ some member of
$\attacks(\chi)$ is disjoint from $U \setminus E$
(Proposition~\ref{prop:hitting}), that is, contained in $E$, and
Lemma~\ref{lem:refinement} gives $\Suff(\chi) = \mathrm{Tr}(\attacks(\chi))
\subseteq \mathrm{Tr}(\blocker(C)) = \mathop{\uparrow} C$. Therefore,
$\Suff(\chi) = \mathop{\uparrow} C$ and $C = \weakest(\chi)$.

\emph{Passing candidates.} Let $T$ be a candidate for which $O$ answers
achieves. $T$ is a minimal transversal of the current $H$ and is
sufficient, so it contains some $T^* \in \weakest(\chi)$. Being sufficient,
$T^*$ meets every member of $H$, since each is a conflict. Hence, $T^*$ is
a transversal of $H$, and minimality of $T$ forces $T = T^* \in
\weakest(\chi)$. Distinct passing calls therefore number at most
$|\weakest(\chi)|$, with or without minimal witnesses.

\emph{Failing candidates.} A failing candidate $T$ is picked at line~5,
because candidates are queried until the first failure, so every distinct
failing call adds a clause $K$ at line~6. Afterwards $T \notin
\mathrm{Tr}(H)$, so $T$ is never a candidate again and its memoised
answer is never reused. Each such $K$ is new: $T$ meets every member of the
current $H$ and misses $K$, so $K$ contains no member of $H$. A set
added earlier and later removed by minimisation was removed because a
subset of it entered $H$, and $H$ thereafter always contains a subset of
it, so $K$ is not one of those either. Hence the failing calls number at
most the number of distinct clauses the oracle can return, $|S_O|$, and
the total is at most $|\weakest(\chi)| + |S_O|$.

\emph{Minimal witnesses.} If every clause is a member of
$\attacks(\chi)$, the distinct clauses added are distinct members of
$\attacks(\chi)$, so the failing calls number at most $|\attacks(\chi)|$
and all calls at most $N$. The hypothesis is a property of the oracle: a
shape is minimal in CPSA's homomorphism order on skeletons, which says
nothing about inclusion of stopping sets, and Section~\ref{sec:trace}
gives a CPSA run that exceeds the bound.

\emph{Minimalising a witness.} By Proposition~\ref{prop:hitting}, a trust
fails exactly when it misses a member of $\attacks(\chi)$. Thus, for $S
\subseteq U$, the trust $U \setminus S$ fails exactly when $S$ contains a
member of $\attacks(\chi)$. Given a failing clause $s$, set $S := s$ and, for each
$a \in s$ in turn that is still in $S$, query $U \setminus (S \setminus
\{a\})$. If it fails with clause $s'$, then $s'$ is disjoint from that
trust, so $s' \subseteq S \setminus \{a\}$, and $S := s'$. If it passes,
$a$ stays. Each element of $s$ is tested at most once, so this costs at
most $|s| \le |U|$ calls. $S$ is always a clause the oracle returned, hence
a conflict. At the end, $U \setminus S$ misses $S$ and is therefore not
sufficient, so $S$ contains some $E \in \attacks(\chi)$. If $E \neq S$,
take $a \in S \setminus E$: when $a$ was tested, the current set $S_0
\supseteq S$ had $S_0 \setminus \{a\} \supseteq E$, so the query failed
and replaced $S_0$ by a subset of $S_0 \setminus \{a\}$, and $a$, once
removed, never returns, contradicting $a \in S$. Therefore, $S = E \in
\attacks(\chi)$, and $S \subseteq s$ is disjoint from the failing
candidate, so the argument for minimal witnesses applies to the clauses
added. Each failure then costs at most $1 + |U|$ distinct calls, and the
total is at most $|\weakest(\chi)| + |\attacks(\chi)|(1 + |U|)$.

\emph{Hardness.} By reduction from Hitting Set \cite{karp}: given sets
$S_1, \dots, S_m \subseteq \{1, \dots, n\}$ and a bound $k$, decide
whether some $I \subseteq \{1, \dots, n\}$ with $|I| \le k$ meets every
$S_j$, that is, whether the family has a hitting set of size at most $k$.
The construction generalises the two-key variant of
Section~\ref{sec:example} (Fig.~\ref{fig:gadget}). Take symmetric keys
$k_1, \dots, k_n$, distinct constants $\mathsf{tag}_1, \dots,
\mathsf{tag}_m$, and nonces $N_1, \dots, N_m$. For each $j$ with $S_j =
\{i_1 < \dots < i_r\}$, let
\[
t_j(N) \;=\; \{\!|\cdots\{\!|\{\!|\langle N,
\mathsf{tag}_j\rangle|\!\}_{k_{i_1}}|\!\}_{k_{i_2}}\cdots|\!\}_{k_{i_r}},
\]
the tagged nonce encrypted successively under the keys of $S_j$. For each
$j$ the protocol has a role $B_j$ that transmits a nonce $N$ and then
receives $t_j(N)$, and a role $A_j$, played by the agent $A$, that
receives a nonce $N$ and transmits $t_j(N)$. The role $A_j$ is executable since
$A$ holds every key. The variable $N$ has the sort of nonces, so no
regular strand transmits a term containing an encryption except an $A_j$
strand, and only an $A_j$ strand transmits an encryption carrying
$\mathsf{tag}_j$. Keys are used only as encryption keys, so no regular strand
originates any $k_i$. Let $\sigma$ consist of one completed $B_j$ strand
with parameter $N_j$ for each $j$, together with $\Unq(N_j)$, so that $U
= \{\Non(k_1), \dots, \Non(k_n)\}$. Let $\chi$ be the conjunction over $j$
of non-injective agreement of $B_j$ with $A$ on $N_j$: an $A_j$ strand with
parameter $N_j$ exists.

Consider a run meeting $\sigma$ that violates the conjunct for $j$, so no
$A_j$ strand has parameter $N_j$. The $B_j$ strand receives $t_j(N_j)$.
Every encryption carrying $\langle N_j, \mathsf{tag}_j\rangle$ that a
regular strand transmits, and every layer obtained from one by
decryption, comes from an $A_j$ strand with parameter $N_j$, and there is
none. Thus, the adversary constructs every layer of $t_j(N_j)$, from the
innermost outwards, and each layer under $k_i$ requires $k_i$. The
adversary holds $k_i$ only if it originates it, since no regular strand
does. Hence, the run originates every $k_i$ with $i \in S_j$, and each $\Non(k_i)$ with
$i \in S_j$ rules it out. Hence every attack $x$ has $\stops(x) \supseteq
K_j = \{\Non(k_i) : i \in S_j\}$ for some violated conjunct $j$.
Conversely, for each $j$ there is an attack $x_j$: the adversary
originates the keys of $S_j$, builds $t_j(N_j)$ and delivers it to $B_j$,
while every other $B_{j'}$ is answered by a regular $A_{j'}$ strand with
parameter $N_{j'}$. In $x_j$ the only keys that originate are those of
$S_j$ and every nonce originates once, so $\stops(x_j) = K_j$. The tags
make the $t_j$ pairwise non-unifiable, so no message of one component is
accepted in another. Therefore $\attacks(\chi)$ is the family of
$\subseteq$-minimal members of $\{K_1, \dots, K_m\}$, and a set meets
every $K_j$ exactly when it meets every minimal one. By
Proposition~\ref{prop:hitting}, a trust $\tau$ achieves $\chi$ exactly when
$\{i : \Non(k_i) \in \tau\}$ meets every $S_j$, and $|\tau|$ is the size of
that set. Therefore, a trust of size at most $k$ achieves $\chi$ exactly when
the instance has a hitting set of size at most $k$. If some $S_j$ is empty, $t_j(N_j)$ is a
plaintext the adversary builds without keys, $\emptyset$ is a minimal
attack, and both sides are false. The protocol has $2m$ roles whose
messages have total size $O(\sum_j |S_j|)$, so the construction is
polynomial.
\end{proof}

The bound of $N$ goal queries is an instance of known results on
learning monotone Boolean functions with membership queries and on
dualisation, which Section~\ref{sec:related} cites.

The number of assumptions is a poor measure of what a trust costs. A
secure channel to a hardware root of trust and the freshness of a nonce
are one assumption each, but their deployment costs differ widely.
A better formulation gives the atoms a weight $w : U \to \mathbb{R}_{\ge
0}$ and asks for a transversal of minimum total weight. That problem is
still NP-hard, since unit weights give back the cardinality version.
Since $\weakest(\chi)$ does not depend on the objective, we compute all of
it instead of optimising over it. It is computed once, and the weights can
be chosen later, at deployment.

\section{Implementation and Evaluation}
\label{sec:eval}

Theorems~\ref{thm:correctness} and~\ref{thm:cost} prove that
Algorithm~\ref{alg:loop} computes the weakest trust and bound the number of
verifier calls it makes. The experiments test both results on real
protocols and a real verifier, and interpret what the answers say about
the protocols. The questions were fixed with the experiment design.
\begin{itemize}
\item Does the loop return the weakest trust on real protocols, with no
  assumption chosen by hand, and which assumptions does each goal require?
  (Section~\ref{sec:eval-rq1})
\item How many verifier calls does it make, against the bound
  $|\attacks(\chi)| + |\weakest(\chi)|$ of Theorem~\ref{thm:cost}, and what
  do they cost with CPSA? (Section~\ref{sec:eval-rq2})
\item Does the implementation keep to the call bound, and how does its time
  grow, as the numbers of assumptions, minimal attacks, and weakest trusts
  grow? (Section~\ref{sec:eval-scale})
\item Which answers depend on the bounds of our own analyser?
  (Section~\ref{sec:eval-bound})
\item Which answers depend on the model of channels, and does trust compose
  as Theorem~\ref{thm:composition} predicts?
  (Section~\ref{sec:eval-channels})
\end{itemize}

\subsection{Setup}
\label{sec:eval-impl}
\label{sec:eval-protocols}

\paragraph{Implementation.} We implemented Algorithm~\ref{alg:loop} as an
OCaml library of \numLLibLines{} lines (implementation files, without
interfaces), with no dependency beyond the standard library and
\texttt{unix}. The loop and the blocker computation are independent of the
protocol. Only the \emph{oracle}, the verifier the loop queries, knows about
protocols, and we instantiated it in two ways. Listing~\ref{lst:loop} shows
the main loop. Its memo table makes each distinct candidate trust cost one
oracle call, as Theorem~\ref{thm:cost} counts. Besides the loop, the
library contains operations on families of sets (removing non-minimal members,
and Berge's algorithm for the blocker), a Dolev--Yao term algebra with a
check of what the adversary can derive, a bounded analyser with the protocol
models of its rows, and a driver for CPSA. A Python prototype written
earlier serves as a reference for differential testing and is not part of
the artifact. The artifact, described at the end of this subsection,
contains the code, the protocol models, and the scripts that regenerate every table
and every number in this section from the raw results.

\begin{lstlisting}[language={[Objective]Caml},caption={The loop of Algorithm~\ref{alg:loop}.},label={lst:loop}]
let weakest ~u oracle =
  let memo = ref Amap.empty in             (* trust -> answer *)
  let ask t = match Amap.find_opt t !memo with
    | Some r -> r                           (* one call per trust *)
    | None -> let r = oracle t in
              memo := Amap.add t r !memo; r in
  let failure t =
    match ask t with Fails w -> Some w | Achieves -> None in
  let rec loop h =                          (* minimal attacks so far *)
    let c = Clutter.blocker h in            (* blocker [] = [empty] *)
    match List.find_map failure c with
    | None -> c                             (* every candidate achieves *)
    | Some w -> loop (Clutter.minimalize (w.stops :: h)) in
  loop []
\end{lstlisting}

\paragraph{Correctness checks.} No experiment was reported before the
following checks passed, and the final run repeated them. Unit and property
tests (\numLNTestsPassed{} checks, \numLNTestsFailed{} failures) cover the
blocker involution $\blocker(\blocker(H)) = \min_\subseteq H$, agreement of
the blocker with brute-force minimal transversals on random families, and
agreement of the loop with the brute-force minimal sufficient trusts on
random monotone oracles in three witness modes (minimal, padded, and padded
then shrunk). Each case also checks $H = \attacks(\chi)$ and the call
bounds of Theorem~\ref{thm:cost}. A differential test runs \numLNDiffCore{}
random families through the OCaml library and the Python prototype and
compares the answers and the call counts exactly (\numLNDiffCoreMismatch{}
mismatches). The two implementations of the bounded analyser print identical
results on all analyser rows (\numLNDiffBoundedLines{} lines), and the CPSA
driver agrees with the corrected Python CPSA oracle on all
\numLNDiffCpsaRows{} CPSA rows in each of
\numLNDiffCpsaFields{} fields (exhaustive $\weakest$ and $\attacks$, and,
per witness mode, $\weakest$, $H$, distinct calls, goal runs, and replay
runs).

\paragraph{The bounded analyser.} One oracle is a bounded Dolev--Yao
analyser of our own. It searches the runs that a fixed \emph{pool} of
strands can produce, with the adversary acting as the network. A strand is
one principal's part in one session. The search covers all interleavings,
depth first, and skips states it has already seen. Variables are
instantiated by atoms only, so received messages are no deeper than the
role templates. Whether the adversary can derive a message is decided
exactly over the subterms of its knowledge and of the target. Thus,
type-flaw attacks, which require a variable bound to a compound message,
are outside the model. The trust enters in
three places: the adversary's initial knowledge (it knows every key not
assumed uncompromised), the channels on which it may not inject or read,
and whether the test nonce may be a stale one. Appendix~\ref{app:protocols}
lists the pools. For the challenge-response rows, the pool contains one
instance of each role for each ordered pair of $A$ and $B$, a prover $A$
for the adversary, and one earlier completed session whose transcript the
adversary holds. Without $\Unq(N)$, the test nonce may be that session's.
The analyser's answers about freshness therefore depend on this earlier session
being in the pool. For Needham--Schroeder
and its fix, the pool contains the initiator $A$ with peers $B$ and $I$,
and the test responder $B$. The stopping set of a run is computed
by replaying the run under each further atom.

A \emph{witness} is the attack the oracle returns when a candidate trust
fails. The analyser offers three witness modes. \emph{As found} is the first
attack state the search reaches. \emph{Pruned} is that run cut down to a
minimal run by dropping trailing events of strands other than the test
strand, as long as it remains an attack with no larger stopping set.
\emph{Least} is a run whose stopping set is smallest among all attack
states the search reaches, and is therefore minimal under inclusion. A
minimal run need not have a minimal stopping set.
Least witnesses satisfy the hypothesis of the call bound by construction,
at the cost of exploring every attack state. For the bound study, the search uses two reductions that
preserve verdicts and minimal stopping sets. First, a strand's receives
are executed together with its next send. Second, copies of a role without
fresh values of their own are treated as interchangeable.

\paragraph{CPSA.} The second oracle is CPSA~\numLCpsaVersion{}
\cite{cpsaspec} with its default bounds (step limit \numLCpsaStepLimit{},
strand bound \numLCpsaStrandBound{}) and a limit of \numLCpsaTimeout\,s per
run. The trust is added as \texttt{non} and \texttt{uniq} facts to the
hypothesis of a CPSA goal, and the verdict is read from CPSA's
\texttt{(satisfies)} annotation (Listing~\ref{lst:oracle}). For each failing
shape, the replay procedure of Section~\ref{sec:backend}
(Algorithm~\ref{alg:stops}) computes its clause, the atoms whose assumption
removes it, and the oracle returns the shape with the smallest clause. A
replay counts as surviving only when CPSA reports the input realised and
marks it neither a preskeleton nor dead, for the reason given in
Section~\ref{sec:backend}. An earlier version of our driver read both
marks as survival. On Otway--Rees with $A \neq B$, it then returned
$\mathsf{false}$, whereas evaluating every trust gives $\{\Non(k_{AS}),
\Non(k_{BS}), \Unq(n_B)\}$. In the final runs, the two marks decided
\numLNCpsaRejected{} replays on \numLNCpsaRejectedRows{} rows. Every CPSA
run in this section exhausted its search without reaching the step or
strand bound, and none wrote to standard error.

\paragraph{Protocols.} Besides the \numLNAnalyserRows{} rows on our
analyser (the running example, its two-key variant, Needham--Schroeder with
and without channel assumptions, and Lowe's fix), we took protocols shipped
with CPSA in its \texttt{tst/} directory, using a selection rule fixed
before any run. The rule requires the following.
\begin{itemize}
\item The protocol text is shipped and used unchanged.
\item The goal is a safety goal over the fragment, either agreement of one
  role with another or secrecy. It is the file's \texttt{defgoal} if it has
  one. Otherwise, it is the standard agreement goal for the role of the
  file's analysis (marked in the template as written by us), or secrecy
  where the file's analysis is a secrecy query. No file the rule selected
  has a \texttt{defgoal} for its protocol, so all \numLNCpsaRows{} CPSA
  goals are ours.
\item $U$ consists of all \texttt{non} facts on long-term keys of the
  principals the observed role names, and all \texttt{uniq} facts on its
  nonces, timestamps, and session keys that no role declares
  \texttt{uniq-orig}. Names, lifetimes, and free-text fields are not in
  $U$, and role-level \texttt{non-orig} and \texttt{uniq-orig} declarations
  stay in the situation.
\item The candidate assumptions satisfy $2 \le |U| \le 8$.
\item There is no equational theory (A5).
\item CPSA terminates under its defaults at every point of the trust
  lattice, within \numLSelectTimeout\,s per run during selection (the final
  runs allowed \numLCpsaTimeout\,s).
\item There are at most \numLRowCap{} rows in total, analyser rows included.
\end{itemize}
Protocols on a list fixed in advance took precedence. Blanchet's flawed and
fixed exchange was admitted from outside the list because it supplies the
only secrecy goal, and with it the cap was reached. The remaining shipped
protocols outside the list were not evaluated. The adopted \textsc{pkinit}
fix is the only row whose protocol text is not shipped. We wrote it in the
style of the shipped flawed version (Appendix~\ref{app:protocols-cpsa}).
Wide-Mouthed Frog, Woo--Lam, the two-pass unilateral ISO 9798-3 exchange,
and the Diffie--Hellman and station-to-station files fail the rule, and
Appendix~\ref{app:protocols-cpsa} gives the reason for each. The analyser
rows cover the CPSA versions of Needham--Schroeder and Lowe's fix.

Two rows were added after the first results, outside the rule and the cap,
because the rule's goal failed on them for a reason we traced to the model
or the goal. Under the full trust, every failing run of Otway--Rees has $A
= B$, with the server accepting $B$'s own block in the initiator's
position. We report the protocol as modelled and add a row with $A \neq B$
in the situation, which adds a rule to the protocol text. CPSA's simplified
Kerberos issues two tickets of the same form, so the initiator's agreement
on the names, the goal the rule produced, fails at every trust. We report
that row, with answer $\mathsf{false}$, and add a row for the initiator's
agreement on the session key alone. Both added rows are second readings of
a rule row, and Table~\ref{tab:results} shows both readings.
Appendix~\ref{app:protocols} gives each CPSA row's file, goal (verbatim),
$U$ with its CPSA facts, and the declarations that stay in the situation.

\paragraph{Reference answers and baselines.} Every row was run in every
witness mode its oracle offers. As a reference, we evaluated achievement
at all $2^{|U|}$ trusts using the same oracle and took the minimal
sufficient trusts from that evaluation. The minimal attacks are their
blocker. This reference shares the oracle with the loop, so agreement with
it tests the loop and the stopping-set computation, but not the oracle.
Calls are distinct oracle calls, the goal
queries that Theorem~\ref{thm:cost} counts. They do not include the replay
runs of Algorithm~\ref{alg:stops}. On CPSA, we also count the runs of CPSA,
split into goal runs (one per call) and replay runs (the runs that compute
stopping sets). We compare against three baselines. B1 evaluates all
$2^{|U|}$ trusts. B2 enumerates trusts by size and skips supersets of
trusts found sufficient. B3 is the conjunct-dropping method of
\cite{rowe}. Starting from a sufficient trust, it drops atoms one at a time
as long as the goal still holds, and returns one minimal sufficient trust.

\paragraph{Machine and timing.} All runs except those on CPSA were made in
one session from \numLRunStart{} to \numLRunEnd{} on one machine
(\numLMachine, \numLOS), with most of the time spent in the scale and
bound studies.
The CPSA step (all CPSA rows, their call and run counts, and their timing)
was rerun alone from \numLCpsaRunStart{} to \numLCpsaRunEnd{}, on the same
machine with the same tool versions, after the Kerberos name-agreement row
was added. Each CPSA row was timed
sequentially, \numLCpsaReps{} times after one warm-up, for the loop with
witnesses as found and with shrinking and for the three baselines. We
also recorded the share of wall time spent inside CPSA. Each run of the
scale study had a limit of \numLScaleTimeout\,s and \numLScaleHeapCap\,MB
of heap. Its call counts come from a pass with \numLScaleWorkers{} runs in
parallel, and its times come from a separate sequential pass, on seed 0 of every instance
and on all seeds of a few rows. The bound study allowed
\numLBoundTimeout\,s per run of Algorithm~\ref{alg:loop}. Times are
reported as medians with interquartile ranges.

\paragraph{Reproduction.}
\label{sec:eval-repro}
The artifact, including the library, the experiment drivers, the CPSA
templates, and the raw results of the runs reported here, is available at
\url{https://github.com/aalok-thakkar/minimal-trust-tool}. It requires
OCaml~\numLOcamlVersion{} and dune~\numLDuneVersion{}, Python~3 for the
table scripts, and
CPSA~\numLCpsaVersion{}, found as \texttt{cpsa4} through the variable
\texttt{CPSA4}, the \texttt{PATH}, or \verb|~/.local/bin|. The commands are:
\begin{lstlisting}[language={}]
dune build --profile release
dune test --profile release            # unit and property tests
scripts/run_all.sh                     # all experiments -> results/final
python3 scripts/make_tables.py    results/final
python3 scripts/scale_table.py    results/final
python3 scripts/make_tables_long.py results/final OUT_DIR
\end{lstlisting}
\texttt{run\_all.sh} runs, in order, the tests, the differential checks, the
analyser rows, the channel and composition studies, the CPSA rows with their
timing, the scale study (counts, then timing), the gadget check, and the
bound study. It also logs the load average around each of its
\numLNSteps{} steps. It writes a manifest with the machine, the tool
versions, and a SHA-256 digest of the sources (\texttt{\numLSourcesSha\dots} for the runs
reported here). The table scripts read only \texttt{results/final} and the
templates, and write every table and every number in this section.

\subsection{Answers on Protocols}
\label{sec:eval-rq1}

On \numLNRowsAgree{} of the \numLNRows{} rows, and in every witness mode,
the loop returned the weakest trust found by evaluating every trust with
the same oracle. Table~\ref{tab:results} lists all rows, with $|U|$ from
\numLMinU{} to \numLMaxU{}. It also shows the channel row at two instances
per role, which was not checked exhaustively. The stopping sets the loop
collected were then the minimal attacks, but this is not a separate check.
The loop returns $\blocker(H)$, and $\blocker(H) = \weakest(\chi)$ gives
$\min_\subseteq H = \attacks(\chi)$ by the involution. On all
\numLNCpsaMonotone{} CPSA rows, achievement was monotone over the
\numLNCpsaPoints{} trusts evaluated, as Lemma~\ref{lem:monotone} requires
of a sound and complete oracle. This comparison tests the loop and the
stopping-set computation, but not the oracle, which both sides use. It
also cannot detect stopping sets that are too large, since these cost
extra calls but leave the answer unchanged (Section~\ref{sec:backend}).

On the running example, the loop takes \numLSignedPruned{} calls with
pruned witnesses. The forgery fails the empty trust and forces
$\Non(sk_A)$, the replay fails the next candidate and forces $\Unq(N)$, and
the last candidate achieves the goal. This is the unique weakest trust of
Table~\ref{tab:example}. The two-key variant takes \numLTwoKeyPruned{}
calls and returns the two incomparable weakest trusts, with minimal
attacks $\{\Unq(N)\}$ and $\{\Non(k_1),\Non(k_2)\}$. This is the example of
Theorem~\ref{thm:nofilter}, computed by the loop. The two rows cover both
cases of Theorem~\ref{thm:unique}, one with a single weakest trust and one
with two.

The answer is $\mathsf{false}$, meaning that no trust over $U$ suffices,
on \numLNFalse{} rows: Needham--Schroeder,
the flawed \textsc{pkinit} \cite{cervesato}, the flawed ISO 9798-3 exchange
and Blanchet's example shipped with CPSA, Otway--Rees as modelled, and the
simplified Kerberos with agreement on the names. On the first four, the
answer comes from the known attack, whose stopping set is empty. On
Otway--Rees, it comes from the $A = B$ run. On Kerberos, it comes from a
run in which the two tickets, which have the same form, are swapped, so
that $A$'s own responder strand answers $A$'s initiator
(Appendix~\ref{app:protocols-cpsa}). This is a property of the simplified
model, not a known attack on Kerberos~V5.

An answer names the assumptions a goal requires, for the given situation
$\sigma$, goal, and $U$. A candidate assumption that appears in no weakest
trust is one the goal does not need, although another goal may need it.
Two of our choices fix part of every CPSA
answer in advance. First, values that a role declares \texttt{uniq-orig},
such as the session keys of Yahalom and Kerberos and the nonces of
\textsc{pkinit}, are already fresh in $\sigma$ and are not in $U$. Second,
every agreement goal is non-injective, so no goal asks for recency. Thus,
an answer that needs no freshness atom is an answer for this $\sigma$ and
goal. It does not mean that the protocol needs no freshness. Within these
choices, most answers leave out much of $U$, which contains every
long-term key and every nonce not declared fresh, as a uniform hypothesis such as ``all honest keys
are uncompromised and all nonces are fresh'' does.
\begin{itemize}
\item \textsc{pkinit}, adopted fix: the client's authentication of the
  server requires only the server's key. The client's own key, which the
  deployed hypothesis also includes, is not needed for this goal. The claim
  concerns our model of the fix, which is not shipped with CPSA. The model
  simplifies the standard's keyed hash to an unkeyed hash of the client's
  request, with the freshness of the client's nonces and of the session key
  in $\sigma$ (Appendix~\ref{app:protocols-cpsa}). It is a claim about this
  goal only. The client's key may be needed for other goals, such as
  secrecy of the session key, which we did not evaluate. What each repair
  relies on is the question the debate of Section~\ref{sec:intro} turned
  on.
\item In the signed challenge-response and in Lowe's fix of
  Needham--Schroeder, $B$'s goal requires $\Non(sk_A)$, and the
  challenge-response also requires $\Unq(N)$, given the earlier session in
  the pool. Neither requires $B$'s own key $sk_B$.
\item ISO 9798-3, corrected, and Denning--Sacco: the responder's agreement
  requires only $\Non(sk_A)$. On Denning--Sacco, the omission of the server's
  key is an artefact of CPSA's \texttt{pubk}, which is a function of the
  name, so the server's certificates bind nothing that the key does not
  already bind.
\item Yahalom, Neuman--Stubblebine, and the simplified Kerberos with
  agreement on the session key require both long-term keys and no
  freshness atom of $U$, for the reasons above.
\item Otway--Rees with $A \neq B$ requires both long-term keys and the
  freshness of $B$'s nonce, since without it, $B$ accepts the server's
  reply for another session that reused $n_B$.
\item The fixed Blanchet exchange is the only row that requires every
  candidate assumption.
\end{itemize}
The $\mathsf{false}$ answers separate ``assume more'' from ``no assumption
over these keys and values saves the goal''. Four of these protocols have a
fix in the table, namely Lowe's fix, the adopted \textsc{pkinit} fix, the
corrected ISO exchange, and the fixed Blanchet exchange. Each fix turns
$\mathsf{false}$ into a trust, and that trust is what the fix relies on. On
Otway--Rees and Kerberos, the second reading, rather than a fix, turns
$\mathsf{false}$ into a trust. Channel assumptions added to the $U$ of Needham--Schroeder also turn
$\mathsf{false}$ into trusts that name channels, under the inject/read
reading of Section~\ref{sec:channels} and with one or two instances per
role in the pool, as Section~\ref{sec:eval-channels} reports.

\begin{table}[tbp]
\caption{Every protocol row, with candidate assumptions $U$, the weakest trust $\weakest(\chi)$ returned by Algorithm~\ref{alg:loop}, and the number of minimal attacks $|\attacks|$, using our analyser (A) or CPSA (C). ``agrees'' means that in every witness mode (Table~\ref{tab:calls}), the loop returned the minimal sufficient trusts found by evaluating all $2^{|U|}$ trusts with the same oracle, and n/c means not checked. $k_{XS}$ is $X$'s long-term key shared with the server. The goals are (recent) agreement \cite{lowe} of the responder with the initiator, the client's authentication of the server, secrecy of $d$, and the initiator's agreement on the names and the session key $K$ (Kerberos), or on $K$ alone (agreement on $K$). $\mathsf{false}$ marks an unrealisable goal (Remark~\ref{rem:unrealisable}). The channel row is shown with one instance per role (the pool on which it was checked) and with two (Section~\ref{sec:eval-bound}). The goals and $U$ of the CPSA rows are in Appendix~\ref{app:protocols}.}
\label{tab:results}
\centering
\setlength{\tabcolsep}{2.5pt}
\footnotesize
\begin{tabular}{l c l r l l r c}
\toprule
protocol & & goal & $|U|$ & $U$ & $\weakest(\chi)$ & $|\attacks|$ & agrees\\
\midrule
signed CR & A & rec.\ agr. & 3 & $\Non(sk_A)$, $\Non(sk_B)$ & $\{\Non(sk_A), \Unq(N)\}$ & 2 & yes\\
 & & & & $\Unq(N)$ &  & & \\
\quad two-key variant & A & rec.\ agr. & 3 & $\Non(k_1)$, $\Non(k_2)$ & $\{\Non(k_1), \Unq(N)\}$, & 2 & yes\\
 & & & & $\Unq(N)$ & $\{\Non(k_2), \Unq(N)\}$ & & \\
Needham--Schroeder & A & agr. & 2 & $\Non(sk_A)$, $\Non(sk_B)$ & $\mathsf{false}$ & 1 & yes\\
\quad with channels & A & agr. & 6 & $\Non(sk_A)$, $\Non(sk_B)$ & $\{\auth(A{\to}B)\}$, & 1 & yes\\
 & & & & $\auth(A{\to}B)$, $\auth(B{\to}A)$ & $\{\conf(B{\to}A)\}$ & & \\
 & & & & $\conf(A{\to}B)$, $\conf(B{\to}A)$ &  & & \\
\quad\quad two instances & A & agr. & 6 & & $\{\conf(B{\to}A)\}$, & 2 & n/c\\
 & & & & & $\{\Non(sk_A), \auth(A{\to}B)\}$, & & \\
 & & & & & $\{\auth(A{\to}B), \auth(B{\to}A)\}$ & & \\
\quad Lowe's fix & A & agr. & 2 & $\Non(sk_A)$, $\Non(sk_B)$ & $\{\Non(sk_A)\}$ & 1 & yes\\
\textsc{pkinit} (flawed) & C & auth. & 2 & $\Non(sk(S))$, $\Non(sk(C))$ & $\mathsf{false}$ & 1 & yes\\
\quad adopted fix & C & auth. & 2 & $\Non(sk(S))$, $\Non(sk(C))$ & $\{\Non(sk(S))\}$ & 1 & yes\\
ISO 9798-3 (flawed) & C & agr. & 4 & $\Non(sk_A)$, $\Non(sk_B)$ & $\mathsf{false}$ & 1 & yes\\
 & & & & $\Unq(n_A)$, $\Unq(n_C)$ &  & & \\
\quad corrected & C & agr. & 4 & $\Non(sk_A)$, $\Non(sk_B)$ & $\{\Non(sk_A)\}$ & 1 & yes\\
 & & & & $\Unq(n_A)$, $\Unq(n_C)$ &  & & \\
Blanchet & C & secr.\ $d$ & 4 & $\Non(sk_A)$, $\Non(sk_B)$ & $\mathsf{false}$ & 1 & yes\\
 & & & & $\Unq(s)$, $\Unq(d)$ &  & & \\
\quad fixed & C & secr.\ $d$ & 4 & $\Non(sk_A)$, $\Non(sk_B)$ & $\{\Non(sk_A), \Non(sk_B),$ & 4 & yes\\
 & & & & $\Unq(s)$, $\Unq(d)$ & $\Unq(d), \Unq(s)\}$ & & \\
Otway--Rees & C & agr. & 4 & $\Non(k_{AS})$, $\Non(k_{BS})$ & $\mathsf{false}$ & 1 & yes\\
 & & & & $\Unq(M)$, $\Unq(n_B)$ &  & & \\
\quad with $A \neq B$ & C & agr. & 4 & $\Non(k_{AS})$, $\Non(k_{BS})$ & $\{\Non(k_{AS}), \Non(k_{BS}),$ & 3 & yes\\
 & & & & $\Unq(M)$, $\Unq(n_B)$ & $\Unq(n_B)\}$ & & \\
Yahalom & C & agr. & 4 & $\Non(k_{AS})$, $\Non(k_{BS})$ & $\{\Non(k_{AS}), \Non(k_{BS})\}$ & 2 & yes\\
 & & & & $\Unq(n_A)$, $\Unq(n_B)$ &  & & \\
Denning--Sacco & C & agr. & 5 & $\Non(sk_A)$, $\Non(sk_B)$ & $\{\Non(sk_A)\}$ & 1 & yes\\
 & & & & $\Non(sk_{KS})$, $\Unq(K)$ &  & & \\
 & & & & $\Unq(t_A)$ &  & & \\
Neuman--Stubblebine & C & agr. & 5 & $\Non(k_{AS})$, $\Non(k_{BS})$ & $\{\Non(k_{AS}), \Non(k_{BS})\}$ & 2 & yes\\
 & & & & $\Unq(r_A)$, $\Unq(r_B)$ &  & & \\
 & & & & $\Unq(t_B)$ &  & & \\
Kerberos (simplified) & C & init.\ agr. & 4 & $\Non(k_{AS})$, $\Non(k_{BS})$ & $\mathsf{false}$ & 1 & yes\\
 & & & & $\Unq(t)$, $\Unq(t')$ &  & & \\
\quad agreement on $K$ & C & agr.\ $K$ & 4 & $\Non(k_{AS})$, $\Non(k_{BS})$ & $\{\Non(k_{AS}), \Non(k_{BS})\}$ & 2 & yes\\
 & & & & $\Unq(t)$, $\Unq(t')$ &  & & \\
\bottomrule
\end{tabular}
\end{table}

\subsection{Oracle Calls and Cost}
\label{sec:eval-rq2}
\label{sec:eval-cost}

When the oracle's witnesses have minimal stopping sets, the loop makes
exactly $N = |\attacks(\chi)| + |\weakest(\chi)|$ calls, matching the
bound of Theorem~\ref{thm:cost}. Calls here are goal queries, and the
replay runs that compute stopping sets are counted separately. Otherwise,
the excess is the number of non-minimal stopping sets the oracle returned.
Table~\ref{tab:calls} gives the counts per row and witness mode. On the
analyser rows, least witnesses meet the bound on all
\numLNAnalyserLeastAtBound{} rows, and pruned witnesses meet it on
\numLNAnalyserPrunedAtBound{}. The exceptions are the two
Needham--Schroeder rows, where a pruned run still has a non-minimal
stopping set that a later witness replaces. With witnesses as found, the
signed challenge-response takes \numLSignedAsFound{} calls. The first attack
the search reaches there is the double forgery of Table~\ref{tab:example},
with stopping set $\{\Non(sk_A), \Non(sk_B)\}$, which the forgery's
$\{\Non(sk_A)\}$ later replaces.

CPSA's shapes are minimal runs but need not have minimal stopping sets. On
\numLNCpsaOverBound{} of the \numLNCpsaRows{} CPSA rows, the calls exceed
the bound, as Theorem~\ref{thm:cost} allows. Over all CPSA rows, the loop
made
\numLNCpsaGoalRuns{} goal runs and \numLNCpsaReplayRuns{} replay runs.
Shrinking the witnesses keeps the answer correct on all \numLNShrinkAgree{}
CPSA rows. It lowers the replay runs on \numLNShrinkFewerReplays{} rows
(\numLNCpsaShrinkReplayRuns{} in all), where the first failing goal run
returns many shapes, but raises the distinct calls on
\numLNShrinkMoreCalls{}, since each shrink step is itself a call. B2 makes
fewer CPSA runs than the loop on \numLNLevelwiseFewerCalls{} rows and finds
$\weakest(\chi)$ on all \numLNLevelwiseAgree. B3 is the cheapest method,
and its answer is consistent with $\weakest(\chi)$ on all \numLNGreedyInW{} CPSA rows (a
member of it, or no answer when it is $\mathsf{false}$), as it must be for
an oracle whose verdicts are monotone in the trust.

With a real verifier, the cost is the number of verifier runs, replay runs
included. CPSA accounts for \numLCpsaShareMin--\numLCpsaShareMax\% of the
wall time in every method, and
every method takes at most \numLCpsaMaxAny\,s per row (median), with
interquartile ranges of at most \numLCpsaMaxIqr\,s
(Table~\ref{tab:timing}). The loop is faster than evaluating all
$2^{|U|}$ trusts on \numLNLoopFasterThanExh{} of the \numLNCpsaRows{}
rows, B2 is faster than the loop on \numLNLevelwiseFasterThanLoop, and
shrinking is faster than plain witnesses on \numLNShrinkFasterThanLoop. B3
is the fastest method on \numLNGreedyFastest{} rows. At these sizes, the
choice of method does not matter in practice, and
the loop is not the fastest.

The call bound is a property of the oracle's witnesses, and the counts
follow it. At $|U| \le \numLMaxU$, the replay runs dominate. They are what
the loop pays for computing $\attacks(\chi)$ as well as $\weakest(\chi)$,
and on lattices this small, they cost as much as the goal runs the loop
saves. The loop gains in calls over enumeration only at larger $|U|$,
which we measured only with the hypergraph oracle of
Section~\ref{sec:eval-scale}. In return, it collects the minimal attacks,
which show what each assumption is there to stop, along with the weakest
trust. B3 returns only one trust, so it cannot detect when the weakest
trust is not unique, as in the two-key variant.

\begin{table}[tbp]
\caption{Distinct oracle calls of Algorithm~\ref{alg:loop} per witness mode, compared with $N = |\attacks(\chi)| + |\weakest(\chi)|$, the bound of Theorem~\ref{thm:cost} for witnesses with minimal stopping sets. The as-found witness is the first attack the oracle reports (for CPSA, the failing shape with the smallest clause). The pruned witness is the analyser's run pruned to a minimal run. The least witness is a run whose stopping set is least among all attacks the analyser reaches. Shrink uses as-found witnesses minimalised by further calls, which are counted. CPSA runs are goal runs $+$ replay runs that compute $\stops(x)$. B1 evaluates all $2^{|U|}$ trusts, B2 enumerates trusts by size and skips supersets of sufficient ones, and B3 is the conjunct-dropping method of~\cite{rowe}, which returns one minimal sufficient trust. B2 and B3 were run on the CPSA rows only.}
\label{tab:calls}
\centering
\setlength{\tabcolsep}{3pt}
\footnotesize
\begin{tabular}{l c r r r r r r r r r r}
\toprule
protocol & & $N$ & \multicolumn{4}{c}{calls of Algorithm~\ref{alg:loop}} & \multicolumn{2}{c}{CPSA runs} & B1 & B2 & B3\\
\cmidrule(lr){4-7}\cmidrule(lr){8-9}
 & & & as found & pruned & least & shrink & as found & shrink & & &\\
\midrule
signed CR & A & 3 & 5 & 3 & 3 & 6 & -- & -- & 8 & -- & --\\
\quad two-key variant & A & 4 & 4 & 4 & 4 & 5 & -- & -- & 8 & -- & --\\
Needham--Schroeder & A & 1 & 2 & 2 & 1 & 2 & -- & -- & 4 & -- & --\\
\quad with channels & A & 3 & 4 & 4 & 3 & 6 & -- & -- & 64 & -- & --\\
\quad Lowe's fix & A & 2 & 2 & 2 & 2 & 3 & -- & -- & 4 & -- & --\\
\textsc{pkinit} (flawed) & C & 1 & 2 & -- & -- & 2 & 2{+}3 & 2{+}2 & 4 & 4 & 1\\
\quad adopted fix & C & 2 & 2 & -- & -- & 3 & 2{+}2 & 3{+}2 & 4 & 3 & 3\\
ISO 9798-3 (flawed) & C & 1 & 2 & -- & -- & 2 & 2{+}7 & 2{+}4 & 16 & 16 & 1\\
\quad corrected & C & 2 & 2 & -- & -- & 3 & 2{+}4 & 3{+}4 & 16 & 9 & 5\\
Blanchet & C & 1 & 4 & -- & -- & 3 & 4{+}12 & 3{+}5 & 16 & 16 & 1\\
\quad fixed & C & 5 & 6 & -- & -- & 6 & 6{+}13 & 6{+}11 & 16 & 16 & 5\\
Otway--Rees & C & 1 & 2 & -- & -- & 2 & 2{+}37 & 2{+}4 & 16 & 16 & 1\\
\quad with $A \neq B$ & C & 4 & 4 & -- & -- & 5 & 4{+}47 & 5{+}47 & 16 & 15 & 5\\
Yahalom & C & 3 & 3 & -- & -- & 4 & 3{+}10 & 4{+}10 & 16 & 13 & 5\\
Denning--Sacco & C & 2 & 3 & -- & -- & 4 & 3{+}37 & 4{+}11 & 32 & 17 & 6\\
Neuman--Stubblebine & C & 3 & 3 & -- & -- & 4 & 3{+}13 & 4{+}13 & 32 & 25 & 6\\
Kerberos (simplified) & C & 1 & 4 & -- & -- & 3 & 4{+}20 & 3{+}6 & 16 & 16 & 1\\
\quad agreement on $K$ & C & 3 & 4 & -- & -- & 5 & 4{+}16 & 5{+}12 & 16 & 13 & 5\\
\bottomrule
\end{tabular}
\end{table}

\begin{table}[tbp]
\caption{Wall-clock time in seconds per CPSA row and method, as the median of 5 repetitions after one warm-up, with the interquartile range in parentheses. Methods are as in Table~\ref{tab:calls}. B3 returns one trust, not $\weakest(\chi)$. The last column gives the range over the methods of the median share of the wall time spent in CPSA.}
\label{tab:timing}
\centering
\setlength{\tabcolsep}{3pt}
\footnotesize
\begin{tabular}{l r@{\,}l r@{\,}l r@{\,}l r@{\,}l r@{\,}l r}
\toprule
protocol & \multicolumn{2}{c}{Alg.~1} & \multicolumn{2}{c}{shrink} & \multicolumn{2}{c}{B1} & \multicolumn{2}{c}{B2} & \multicolumn{2}{c}{B3} & CPSA \%\\
\midrule
\textsc{pkinit} (flawed) & 0.091 & (0.005) & 0.072 & (0.003) & 0.073 & (0.000) & 0.071 & (0.001) & 0.017 & (0.001) & 97--98\\
\quad adopted fix & 0.073 & (0.001) & 0.091 & (0.004) & 0.074 & (0.001) & 0.056 & (0.001) & 0.056 & (0.002) & 97--98\\
ISO 9798-3 (flawed) & 0.161 & (0.001) & 0.108 & (0.006) & 0.285 & (0.003) & 0.292 & (0.004) & 0.019 & (0.001) & 97--98\\
\quad corrected & 0.108 & (0.001) & 0.127 & (0.003) & 0.281 & (0.013) & 0.157 & (0.007) & 0.090 & (0.003) & 97--98\\
Blanchet & 0.288 & (0.001) & 0.139 & (0.002) & 0.284 & (0.006) & 0.286 & (0.001) & 0.018 & (0.000) & 97--98\\
\quad fixed & 0.344 & (0.001) & 0.311 & (0.015) & 0.287 & (0.004) & 0.291 & (0.007) & 0.091 & (0.005) & 97--98\\
Otway--Rees & 1.266 & (0.026) & 0.255 & (0.004) & 0.842 & (0.015) & 0.846 & (0.023) & 0.057 & (0.001) & 98--99\\
\quad with $A \neq B$ & 1.419 & (0.034) & 1.423 & (0.015) & 0.652 & (0.019) & 0.623 & (0.012) & 0.193 & (0.005) & 98\\
Yahalom & 0.330 & (0.004) & 0.466 & (0.013) & 0.686 & (0.006) & 0.282 & (0.009) & 0.448 & (0.002) & 97--98\\
Denning--Sacco & 0.911 & (0.042) & 1.303 & (0.002) & 0.860 & (0.010) & 0.361 & (0.011) & 0.173 & (0.015) & 96--99\\
Neuman--Stubblebine & 0.364 & (0.021) & 0.393 & (0.005) & 0.741 & (0.013) & 0.454 & (0.012) & 0.169 & (0.003) & 97--98\\
Kerberos (simplified) & 0.940 & (0.012) & 0.791 & (0.017) & 1.138 & (0.023) & 1.137 & (0.041) & 0.322 & (0.002) & 98--99\\
\quad agreement on $K$ & 0.734 & (0.011) & 1.185 & (0.003) & 1.126 & (0.008) & 0.366 & (0.009) & 0.665 & (0.005) & 98--99\\
\bottomrule
\end{tabular}
\end{table}

\subsection{Scaling}
\label{sec:eval-scale}

This study checks the implementation against the cost bound of
Theorem~\ref{thm:cost} using an oracle on families of sets, and makes no
verifier calls on protocols. In all \numLNAlgRuns{} completed runs with
minimal witnesses, up to $n = \numLMaxN$ candidate assumptions, the loop
made exactly $|\attacks(\chi)| + |\weakest(\chi)|$ calls. Its running time
grows with the numbers of minimal attacks and weakest trusts rather than
with $n$.

Every protocol row has at most \numLMaxU{} candidate assumptions. To run
the loop with more, we used the reduction in the proof of
Theorem~\ref{thm:cost} (Fig.~\ref{fig:gadget}) to generate instances. The
reduction turns a family of sets over $n$ candidate assumptions into a
protocol whose minimal attacks are exactly those sets. Thus, the oracle
can be replaced by the check ``does $T$ meet every set, and if not, which
set does it miss''. This hypergraph oracle is exact and monotone by
construction. For small $n$, an exhaustive bounded search over all trusts
confirms this replacement on the gadget protocol itself. On the
\numLNGadget{} gadget instances, with $n \le \numLGadgetMaxN$ and
\numLNGadgetTrusts{} trusts in all, the gadget protocol and the check on
sets agreed on every trust. The loop over the gadget oracle returned the
blocker of the sets and recovered the sets as $\attacks$ (\numLNGadgetAgree{} of \numLNGadget{}
instances, Table~\ref{tab:gadget}).

The families are random families with $m \in \{n, 2n, 4n\}$ sets of size
one to three, with non-minimal sets removed (\numLNSeeds{} seeds per
point, $n = 5, 10, \dots, \numLMaxN$), disjoint pairs (with $|\weakest| =
2^{n/2}$), the dual of disjoint pairs (with $|\attacks| = 2^{n/2}$), and
the threshold families of all subsets of size $n/2$ and of size $n/2+1$.
Each instance was run through Algorithm~\ref{alg:loop} with minimal
witnesses, with witnesses padded by two atoms outside the trust, and with
padded witnesses shrunk. It was also run through the baselines B1 (for $n
\le 24$), B2 (stopped after \numLBtwoBudget{} calls), and B3. A series (family and method) stopped at the smallest $n$ with
a timeout.

Tables~\ref{tab:scale} and~\ref{tab:scale-struct} give one row per family
and $n$. Of \numLNScaleRuns{} runs, \numLNScaleTimeouts{} hit the time limit,
and the tables record all of them. In all \numLNAlgRuns{} completed runs of
Algorithm~\ref{alg:loop} with minimal witnesses, the distinct calls equal
$|\attacks(\chi)| + |\weakest(\chi)|$ (\numLNAlgAtBound{} runs), up to $n =
\numLMaxN$, $|\weakest| = \numLMaxW$, and $|\attacks| = \numLMaxH$. These
runs also recover $\attacks(\chi)$, which follows from the correct
$\weakest$ by the involution. Padded witnesses leave the loop correct, but
the calls exceed the bound in all \numLNPaddedRuns{} completed runs. On
every random family, the padded series times out from $n =
\numLPaddedTimeoutN$ (``t/o'' in Table~\ref{tab:scale}), where the series with minimal and with
shrunk witnesses still complete. With shrinking, all \numLNShrinkRuns{}
completed runs stay within $|\weakest| + |\attacks|(1+|U|)$. Wherever two
or more methods completed on an instance, they returned the same
$\weakest$ (\numLNDigestInstances{} instances, \numLNDigestMismatch{}
disagreements). B1 is out of reach beyond $n = 24$ by design, and B2
exhausted its budget in \numLNBtwoBudget{} runs. B3's single answer was in
$\weakest$ in all \numLNBthreeRuns{} runs, which is guaranteed for an
exact monotone oracle and so tests only the implementation of B3.

More of the running time is spent in the blocker computation than in the
oracle. Of the
\numLNScaleSlowRuns{} timed runs over one second, the blocker took longer
than the oracle in \numLNScaleBlockerDominates. Random instances at $n =
\numLMaxN$ with few minimal trusts finish in milliseconds, while the
threshold families, whose answers grow binomially, time out at small $n$
(Table~\ref{tab:scale-struct}). Seeds of the same random family spread
widely, because $|\weakest|$ varies by orders of magnitude between instances
of the same size. The longest timed run took \numLScaleMaxTime\,s.

\begin{table}[p]
\caption{Algorithm~\ref{alg:loop} on instances from the reduction in the proof of Theorem~\ref{thm:cost} (Fig.~\ref{fig:gadget}), random families, one row per $n$. Seeds counts the runs of Algorithm~\ref{alg:loop} completed within 600\,s over the runs started (a series stops at the first $n$ with a timeout, and empty cells were not run). $|\attacks|$, $|\weakest|$ (maximum in parentheses when there are several seeds), and calls are medians over completed seeds. Calls are distinct oracle calls with minimal witnesses. Time is the median wall-clock time of Algorithm~\ref{alg:loop} for seed 0 over its repetitions, with their interquartile range. $^{*}$ marks rows timed on every seed, where time and IQR are over the per-seed medians. Padded witnesses are padded with two atoms outside the trust, and shrink minimalises padded witnesses. $^{\dagger}$ marks a median over the seeds that completed. B1, B2, and B3 are as in Table~\ref{tab:calls}. B2 stops after $2^{21}$ calls, and B3 counts the seeds whose single answer is in $\weakest$. Each random instance draws $m$ edges of size one to three over $n$ candidate assumptions and minimalises them.}
\label{tab:scale}
\centering
\setlength{\tabcolsep}{2.6pt}
\scriptsize
\begin{tabular}{r r r r r r r r r r r r r}
\toprule
$n$ & seeds & $|\attacks|$ & $|\weakest|$ & calls & time (s) & IQR & padded & shrink & B1 & B2 & \multicolumn{2}{c}{B3}\\
\cmidrule(lr){12-13}
 & & & & & & & & & & & calls & in $\weakest$\\
\midrule
\multicolumn{13}{l}{random, $m=n$}\\
5 & 10/10 & 3 & 2 (3) & 5 & $2.6{\cdot}10^{-5}$ & $4.0{\cdot}10^{-6}$ & 19 & 12 & 32 & 24 & 6 & 10/10\\
10 & 10/10 & 6 & 4 (11) & 9 & $1.9{\cdot}10^{-5}$ & $4.0{\cdot}10^{-6}$ & 146 & 28 & 1\,024 & 870 & 11 & 10/10\\
15 & 10/10 & 8 & 6 (20) & 13 & $5.7{\cdot}10^{-5}$ & $3.0{\cdot}10^{-6}$ & 500 & 39 & 32\,768 & 31\,880 & 16 & 10/10\\
20 & 10/10 & 11 & 18 (40) & 28 & $1.9{\cdot}10^{-4}$$^{*}$ & $2.8{\cdot}10^{-4}$ & 1\,246 & 64 & 1\,048\,576 & 1\,041\,651 & 21 & 10/10\\
25 & 10/10 & 15 & 64 (105) & 79 & $6.0{\cdot}10^{-4}$ & $2.5{\cdot}10^{-5}$ & t/o & 134 &  & $>2^{21}$ & 26 & 10/10\\
30 & 10/10 & 18 & 91 (1\,089) & 112 & 0.233 & $7.7{\cdot}10^{-4}$ &  & 187 &  & $>2^{21}$ & 31 & 10/10\\
35 & 10/10 & 20 & 56 (296) & 76 & 0.014 & $3.2{\cdot}10^{-4}$ &  & 153 &  & $>2^{21}$ & 36 & 10/10\\
40 & 10/10 & 24 & 81 (415) & 103 & 0.0085 & $1.2{\cdot}10^{-4}$ &  & 178 &  & $>2^{21}$ & 41 & 10/10\\
45 & 10/10 & 26 & 486 (6\,930) & 510 & 0.064 & $5.6{\cdot}10^{-4}$ &  & 596 &  & $>2^{21}$ & 46 & 10/10\\
50 & 10/10 & 27 & 796 (2\,928) & 824 & 0.056 & 0.0020 &  & 919 &  & $>2^{21}$ & 51 & 10/10\\
55 & 10/10 & 33 & 684 (29\,886) & 716 & 0.081$^{*}$ & 14.0 &  & 830 &  & $>2^{21}$ & 56 & 10/10\\
60 & 9/10 & 34 & 588 (7\,216) & 622 & 4.6 & 0.0059 &  & 739$^{\dagger}$ &  & $>2^{21}$ & 61 & 10/10\\
65 &  &  &  &  &  &  &  &  &  & $>2^{21}$ & 66 & 10/10\\
70 &  &  &  &  &  &  &  &  &  & $>2^{21}$ & 71 & 10/10\\
75 &  &  &  &  &  &  &  &  &  & $>2^{21}$ & 76 & 10/10\\
80 &  &  &  &  &  &  &  &  &  & $>2^{21}$ & 81 & 10/10\\
\midrule
\multicolumn{13}{l}{random, $m=2n$}\\
5 & 10/10 & 4 & 1 (3) & 5 & $1.7{\cdot}10^{-5}$ & $1.0{\cdot}10^{-6}$ & 23 & 13 & 32 & 29 & 6 & 10/10\\
10 & 10/10 & 7 & 2 (6) & 8 & $7.8{\cdot}10^{-5}$ & $3.0{\cdot}10^{-6}$ & 168 & 26 & 1\,024 & 1\,012 & 11 & 10/10\\
15 & 10/10 & 11 & 4 (10) & 15 & $2.5{\cdot}10^{-4}$ & $3.9{\cdot}10^{-5}$ & 554 & 49 & 32\,768 & 32\,702 & 16 & 10/10\\
20 & 10/10 & 14 & 4 (10) & 16 & $1.3{\cdot}10^{-4}$$^{*}$ & $7.8{\cdot}10^{-5}$ & 1\,305 & 52 & 1\,048\,576 & 1\,047\,940 & 21 & 10/10\\
25 & 10/10 & 18 & 10 (24) & 28 & $8.8{\cdot}10^{-4}$ & $4.5{\cdot}10^{-4}$ & t/o & 80 &  & $>2^{21}$ & 26 & 10/10\\
30 & 10/10 & 21 & 26 (92) & 48 & $4.7{\cdot}10^{-4}$ & $7.2{\cdot}10^{-5}$ &  & 114 &  & $>2^{21}$ & 31 & 10/10\\
35 & 10/10 & 26 & 30 (120) & 56 & $9.3{\cdot}10^{-4}$ & $3.1{\cdot}10^{-4}$ &  & 131 &  & $>2^{21}$ & 36 & 10/10\\
40 & 10/10 & 29 & 44 (480) & 72 & $6.9{\cdot}10^{-4}$ & $1.9{\cdot}10^{-4}$ &  & 164 &  & $>2^{21}$ & 41 & 10/10\\
45 & 10/10 & 34 & 120 (1\,852) & 156 & 0.0019 & $3.3{\cdot}10^{-4}$ &  & 266 &  & $>2^{21}$ & 46 & 10/10\\
50 & 10/10 & 38 & 150 (1\,652) & 190 & 0.0019 & $2.9{\cdot}10^{-4}$ &  & 316 &  & $>2^{21}$ & 51 & 10/10\\
55 & 10/10 & 40 & 210 (5\,438) & 252 & 0.0026 & $2.2{\cdot}10^{-4}$ &  & 375 &  & $>2^{21}$ & 56 & 10/10\\
60 & 10/10 & 43 & 334 (1\,525) & 380 & 0.024 & $1.6{\cdot}10^{-4}$ &  & 522 &  & $>2^{21}$ & 61 & 10/10\\
65 & 10/10 & 48 & 898 (4\,032) & 950 & 0.0055 & $1.3{\cdot}10^{-4}$ &  & 1\,108 &  & $>2^{21}$ & 66 & 10/10\\
70 & 10/10 & 55 & 2\,070 (10\,056) & 2\,125 & 1.2$^{*}$ & 23.2 &  & 2\,302 &  & $>2^{21}$ & 71 & 10/10\\
75 & 9/10 & 59 & 768 (23\,792) & 819 & 137 & 2.3 &  & 967$^{\dagger}$ &  & $>2^{21}$ & 76 & 10/10\\
80 &  &  &  &  &  &  &  &  &  & $>2^{21}$ & 81 & 10/10\\
\midrule
\multicolumn{13}{l}{random, $m=4n$}\\
5 & 10/10 & 4 & 1 (4) & 5 & $2.1{\cdot}10^{-5}$ & $1.3{\cdot}10^{-5}$ & 24 & 13 & 32 & 30 & 6 & 10/10\\
10 & 10/10 & 8 & 1 (3) & 10 & $4.5{\cdot}10^{-5}$ & $5.0{\cdot}10^{-6}$ & 173 & 28 & 1\,024 & 1\,021 & 11 & 10/10\\
15 & 10/10 & 12 & 2 (8) & 15 & $9.9{\cdot}10^{-5}$ & $4.0{\cdot}10^{-6}$ & 570 & 44 & 32\,768 & 32\,748 & 16 & 10/10\\
20 & 10/10 & 16 & 2 (21) & 18 & $2.1{\cdot}10^{-4}$ & $1.0{\cdot}10^{-5}$ & 1\,337 & 58 & 1\,048\,576 & 1\,048\,556 & 21 & 10/10\\
25 & 10/10 & 21 & 2 (6) & 23 & $3.4{\cdot}10^{-4}$ & $1.4{\cdot}10^{-4}$ & t/o & 71 &  & $>2^{21}$ & 26 & 10/10\\
30 & 10/10 & 26 & 4 (8) & 30 & $6.6{\cdot}10^{-4}$ & $2.6{\cdot}10^{-4}$ &  & 94 &  & $>2^{21}$ & 31 & 10/10\\
35 & 10/10 & 29 & 5 (26) & 34 & $9.2{\cdot}10^{-4}$ & $4.5{\cdot}10^{-4}$ &  & 107 &  & $>2^{21}$ & 36 & 10/10\\
40 & 10/10 & 34 & 8 (22) & 41 & $9.3{\cdot}10^{-4}$ & $3.1{\cdot}10^{-4}$ &  & 128 &  & $>2^{21}$ & 41 & 10/10\\
45 & 10/10 & 38 & 15 (361) & 51 & 0.0013 & $7.8{\cdot}10^{-5}$ &  & 141 &  & $>2^{21}$ & 46 & 10/10\\
50 & 10/10 & 42 & 6 (80) & 48 & 0.0017 & $9.7{\cdot}10^{-5}$ &  & 152 &  & $>2^{21}$ & 51 & 10/10\\
55 & 10/10 & 47 & 10 (36) & 57 & 0.0019 & $1.1{\cdot}10^{-5}$ &  & 171 &  & $>2^{21}$ & 56 & 10/10\\
60 & 10/10 & 52 & 8 (24) & 58 & 0.0022 & $2.0{\cdot}10^{-4}$ &  & 179 &  & $>2^{21}$ & 61 & 10/10\\
65 & 10/10 & 54 & 14 (157) & 68 & 0.0020 & $4.2{\cdot}10^{-5}$ &  & 206 &  & $>2^{21}$ & 66 & 10/10\\
70 & 10/10 & 59 & 22 (291) & 84 & 0.0029 & $4.0{\cdot}10^{-5}$ &  & 236 &  & $>2^{21}$ & 71 & 10/10\\
75 & 10/10 & 64 & 36 (1\,072) & 102 & 0.0023 & $3.9{\cdot}10^{-5}$ &  & 265 &  & $>2^{21}$ & 76 & 10/10\\
80 & 10/10 & 68 & 40 (375) & 108 & 0.0042$^{*}$ & 0.021 &  & 275 &  & $>2^{21}$ & 81 & 10/10\\
\bottomrule
\end{tabular}
\end{table}

\begin{table}[p]
\caption{Algorithm~\ref{alg:loop} on instances from the reduction in the proof of Theorem~\ref{thm:cost} (Fig.~\ref{fig:gadget}), structured families, one row per $n$. Seeds counts the runs of Algorithm~\ref{alg:loop} completed within 600\,s over the runs started (a series stops at the first $n$ with a timeout, and empty cells were not run). $|\attacks|$, $|\weakest|$ (maximum in parentheses when there are several seeds), and calls are medians over completed seeds. Calls are distinct oracle calls with minimal witnesses. Time is the median wall-clock time of Algorithm~\ref{alg:loop} for seed 0 over its repetitions, with their interquartile range. $^{*}$ marks rows timed on every seed, where time and IQR are over the per-seed medians. Padded witnesses are padded with two atoms outside the trust, and shrink minimalises padded witnesses. $^{\dagger}$ marks a median over the seeds that completed. B1, B2, and B3 are as in Table~\ref{tab:calls}. B2 stops after $2^{21}$ calls, and B3 counts the seeds whose single answer is in $\weakest$. Disjoint pairs have $n/2$ edges $\{2i, 2i{+}1\}$, so $|\weakest| = 2^{n/2}$. The dual of pairs is its blocker, so $|\attacks| = 2^{n/2}$. The threshold families consist of all subsets of size $n/2$, and of size $n/2+1$, of $n$ candidate assumptions. There is one instance per $n$.}
\label{tab:scale-struct}
\centering
\setlength{\tabcolsep}{2.6pt}
\scriptsize
\begin{tabular}{r r r r r r r r r r r r r}
\toprule
$n$ & seeds & $|\attacks|$ & $|\weakest|$ & calls & time (s) & IQR & padded & shrink & B1 & B2 & \multicolumn{2}{c}{B3}\\
\cmidrule(lr){12-13}
 & & & & & & & & & & & calls & in $\weakest$\\
\midrule
\multicolumn{13}{l}{disjoint pairs}\\
4 & 1/1 & 2 & 4 & 6 & $1.6{\cdot}10^{-5}$ & $3.0{\cdot}10^{-6}$ & 11 & 11 & 16 & 11 & 5 & 1/1\\
6 & 1/1 & 3 & 8 & 11 & $2.8{\cdot}10^{-5}$ & $2.0{\cdot}10^{-6}$ & 38 & 21 & 64 & 45 & 7 & 1/1\\
8 & 1/1 & 4 & 16 & 20 & $5.1{\cdot}10^{-5}$ & $7.0{\cdot}10^{-6}$ & 98 & 36 & 256 & 191 & 9 & 1/1\\
10 & 1/1 & 5 & 32 & 37 & $9.9{\cdot}10^{-5}$ & $9.0{\cdot}10^{-6}$ & 207 & 57 & 1\,024 & 813 & 11 & 1/1\\
12 & 1/1 & 6 & 64 & 70 & $2.4{\cdot}10^{-4}$ & $6.3{\cdot}10^{-5}$ & 385 & 94 & 4\,096 & 3\,431 & 13 & 1/1\\
14 & 1/1 & 7 & 128 & 135 & $6.6{\cdot}10^{-4}$ & $2.7{\cdot}10^{-4}$ & 660 & 163 & 16\,384 & 14\,325 & 15 & 1/1\\
16 & 1/1 & 8 & 256 & 264 & 0.0019 & $5.4{\cdot}10^{-4}$ & 1\,076 & 296 & 65\,536 & 59\,231 & 17 & 1/1\\
18 & 1/1 & 9 & 512 & 521 & 0.0029 & $3.1{\cdot}10^{-4}$ & 1\,709 & 557 & 262\,144 & 242\,973 & 19 & 1/1\\
20 & 1/1 & 10 & 1\,024 & 1\,034 & 0.0053 & $2.8{\cdot}10^{-4}$ & t/o & 1\,074 & 1\,048\,576 & 990\,551 & 21 & 1/1\\
22 & 1/1 & 11 & 2\,048 & 2\,059 & 0.012 & $1.8{\cdot}10^{-4}$ &  & 2\,103 & 4\,194\,304 & $>2^{21}$ & 23 & 1/1\\
24 & 1/1 & 12 & 4\,096 & 4\,108 & 0.026 & $1.8{\cdot}10^{-4}$ &  & 4\,156 & 16\,777\,216 & $>2^{21}$ & 25 & 1/1\\
\midrule
\multicolumn{13}{l}{dual of pairs}\\
4 & 1/1 & 4 & 2 & 6 & $4.4{\cdot}10^{-5}$ & $6.0{\cdot}10^{-6}$ & 11 & 13 & 16 & 11 & 5 & 1/1\\
6 & 1/1 & 8 & 3 & 11 & $8.0{\cdot}10^{-5}$ & $1.2{\cdot}10^{-5}$ & 29 & 33 & 64 & 30 & 7 & 1/1\\
8 & 1/1 & 16 & 4 & 20 & $3.6{\cdot}10^{-4}$ & $4.2{\cdot}10^{-5}$ & 76 & 80 & 256 & 85 & 9 & 1/1\\
10 & 1/1 & 32 & 5 & 37 & 0.0024 & $5.5{\cdot}10^{-4}$ & 197 & 190 & 1\,024 & 248 & 11 & 1/1\\
12 & 1/1 & 64 & 6 & 70 & 0.0088 & $9.8{\cdot}10^{-5}$ & 502 & 422 & 4\,096 & 735 & 13 & 1/1\\
14 & 1/1 & 128 & 7 & 135 & 0.047 & $4.3{\cdot}10^{-4}$ & 1\,255 & 979 & 16\,384 & 2\,194 & 15 & 1/1\\
16 & 1/1 & 256 & 8 & 264 & 0.286 & 0.0014 & 3\,080 & 2\,198 & 65\,536 & 6\,569 & 17 & 1/1\\
18 & 1/1 & 512 & 9 & 521 & 1.9 & 0.010 & t/o & 4\,864 & 262\,144 & 19\,692 & 19 & 1/1\\
20 & 1/1 & 1\,024 & 10 & 1\,034 & 14.1 & 0.050 &  & 10\,724 & 1\,048\,576 & 59\,059 & 21 & 1/1\\
22 & 1/1 & 2\,048 & 11 & 2\,059 & 109 & 0.376 &  & t/o & t/o & 177\,158 & 23 & 1/1\\
24 & 0/1 & \multicolumn{5}{c}{timeout} &  &  &  & 531\,453 & 25 & 1/1\\
\midrule
\multicolumn{13}{l}{all $\frac{n}{2}$-sets}\\
4 & 1/1 & 6 & 4 & 10 & $4.5{\cdot}10^{-5}$ & $1.1{\cdot}10^{-5}$ & 15 & 14 & 16 & 15 & 5 & 1/1\\
6 & 1/1 & 20 & 15 & 35 & $7.5{\cdot}10^{-4}$ & $2.1{\cdot}10^{-4}$ & 56 & 49 & 64 & 57 & 7 & 1/1\\
8 & 1/1 & 70 & 56 & 126 & 0.017 & $1.2{\cdot}10^{-4}$ & 210 & 176 & 256 & 219 & 9 & 1/1\\
10 & 1/1 & 252 & 210 & 462 & 0.760 & 0.0020 & 792 & 644 & 1\,024 & 848 & 11 & 1/1\\
12 & 1/1 & 924 & 792 & 1\,716 & 43.6 & 0.101 & 3\,003 & 2\,388 & 4\,096 & 3\,302 & 13 & 1/1\\
14 & 0/1 & \multicolumn{5}{c}{timeout} & t/o & t/o & 16\,384 & 12\,911 & 15 & 1/1\\
16 &  &  &  &  &  &  &  &  & 65\,536 & 50\,643 & 17 & 1/1\\
18 &  &  &  &  &  &  &  &  & 262\,144 & 199\,140 & 19 & 1/1\\
20 &  &  &  &  &  &  &  &  & t/o & t/o & 21 & 1/1\\
\midrule
\multicolumn{13}{l}{all $(\frac{n}{2}{+}1)$-sets}\\
4 & 1/1 & 4 & 6 & 10 & $3.1{\cdot}10^{-5}$ & $9.0{\cdot}10^{-6}$ & 11 & 11 & 16 & 11 & 5 & 1/1\\
6 & 1/1 & 15 & 20 & 35 & $4.1{\cdot}10^{-4}$ & $4.2{\cdot}10^{-5}$ & 42 & 41 & 64 & 42 & 7 & 1/1\\
8 & 1/1 & 56 & 70 & 126 & 0.011 & $3.4{\cdot}10^{-4}$ & 162 & 153 & 256 & 163 & 9 & 1/1\\
10 & 1/1 & 210 & 252 & 462 & 0.530 & 0.0013 & 627 & 574 & 1\,024 & 638 & 11 & 1/1\\
12 & 1/1 & 792 & 924 & 1\,716 & 31.2 & 0.122 & 2\,431 & 2\,166 & 4\,096 & 2\,510 & 13 & 1/1\\
14 & 0/1 & \multicolumn{5}{c}{timeout} & t/o & t/o & 16\,384 & 9\,908 & 15 & 1/1\\
16 &  &  &  &  &  &  &  &  & 65\,536 & 39\,203 & 17 & 1/1\\
18 &  &  &  &  &  &  &  &  & 262\,144 & 155\,382 & 19 & 1/1\\
20 &  &  &  &  &  &  &  &  & t/o & t/o & 21 & 1/1\\
\bottomrule
\end{tabular}
\end{table}

\begin{table}[tbp]
\caption{Gadget check. For each hypergraph (edges over candidate assumptions $0, \dots, n-1$), the protocol of the reduction in the proof of Theorem~\ref{thm:cost} is decided by exhaustive bounded Dolev--Yao search and compared with the hypergraph oracle (``does $T$ meet every edge''). Agree counts the trusts on which both give the same verdict and, on failure, the gadget's stopping set is an edge disjoint from $T$. Loop means that Algorithm~\ref{alg:loop} over the gadget oracle returns the blocker of the edges and recovers the edges as $\attacks$, and calls are its distinct calls. Time is in seconds for the whole check.}
\label{tab:gadget}
\centering
\setlength{\tabcolsep}{3pt}
\scriptsize
\begin{tabular}{l r l r r r c r r}
\toprule
instance & $n$ & edges & $|\attacks|$ & $|\weakest|$ & agree & loop & calls & time (s)\\
\midrule
fixed-1 & 4 & $\{3\}, \{0,1\}, \{1,2\}$ & 3 & 2 & 16/16 & yes & 5 & 0.060\\
fixed-2 & 4 & $\{0\}, \{1,2,3\}$ & 2 & 3 & 16/16 & yes & 5 & 0.0060\\
fixed-3 & 4 & $\{0,1\}, \{2,3\}$ & 2 & 4 & 16/16 & yes & 6 & 0.0060\\
fixed-4 & 5 & $\{0,4\}, \{2,3\}, \{0,1,2\}$ & 3 & 4 & 32/32 & yes & 7 & 0.204\\
fixed-5 & 5 & $\{0,3\}, \{2,3\}, \{3,4\}, \{0,2,4\}$ & 4 & 4 & 32/32 & yes & 8 & 5.3\\
rand-n4-s0 & 4 & $\{2\}, \{1,3\}$ & 2 & 2 & 16/16 & yes & 4 & 0.0050\\
rand-n4-s1 & 4 & $\{1,2\}, \{0,2,3\}$ & 2 & 3 & 16/16 & yes & 5 & 0.0070\\
rand-n4-s2 & 4 & $\{3\}$ & 1 & 1 & 16/16 & yes & 2 & 0.0010\\
rand-n4-s3 & 4 & $\{0\}, \{1,2\}$ & 2 & 2 & 16/16 & yes & 4 & 0.0040\\
rand-n4-s4 & 4 & $\{2\}, \{0,1\}$ & 2 & 2 & 16/16 & yes & 4 & 0.0040\\
rand-n4-s5 & 4 & $\{1,3\}, \{0,2,3\}$ & 2 & 3 & 16/16 & yes & 5 & 0.0070\\
rand-n4-s6 & 4 & $\{1\}, \{2\}$ & 2 & 1 & 16/16 & yes & 3 & 0.0030\\
rand-n4-s7 & 4 & $\{0,3\}, \{0,1,2\}$ & 2 & 3 & 16/16 & yes & 5 & 0.0070\\
rand-n5-s0 & 5 & $\{0\}, \{1\}, \{2,4\}$ & 3 & 2 & 32/32 & yes & 5 & 0.069\\
rand-n5-s1 & 5 & $\{0\}, \{1\}, \{2,3,4\}$ & 3 & 3 & 32/32 & yes & 6 & 0.090\\
rand-n5-s2 & 5 & $\{1,3\}, \{0,1,2\}, \{1,2,4\}$ & 3 & 3 & 32/32 & yes & 6 & 0.246\\
rand-n5-s3 & 5 & $\{2\}, \{0,4\}, \{1,3,4\}$ & 3 & 3 & 32/32 & yes & 6 & 0.131\\
rand-n5-s4 & 5 & $\{0\}, \{4\}, \{2,3\}$ & 3 & 2 & 32/32 & yes & 5 & 0.069\\
rand-n5-s5 & 5 & $\{0,3\}, \{0,1,2\}, \{1,2,4\}$ & 3 & 5 & 32/32 & yes & 8 & 0.260\\
rand-n5-s6 & 5 & $\{0,2\}, \{3,4\}, \{1,2,4\}$ & 3 & 4 & 32/32 & yes & 7 & 0.201\\
rand-n5-s7 & 5 & $\{0\}, \{1,3\}, \{2,3,4\}$ & 3 & 3 & 32/32 & yes & 6 & 0.132\\
rand-n6-s0 & 6 & $\{1\}, \{3,4\}$ & 2 & 2 & 64/64 & yes & 4 & 0.015\\
rand-n6-s1 & 6 & $\{0\}, \{1\}, \{3\}, \{5\}$ & 4 & 1 & 64/64 & yes & 5 & 1.1\\
rand-n6-s2 & 6 & $\{3\}, \{1,2,5\}$ & 2 & 3 & 64/64 & yes & 5 & 0.018\\
rand-n6-s3 & 6 & $\{3\}, \{4\}$ & 2 & 1 & 64/64 & yes & 3 & 0.011\\
rand-n6-s4 & 6 & $\{0\}, \{2\}, \{3\}$ & 3 & 1 & 64/64 & yes & 4 & 0.083\\
rand-n6-s5 & 6 & $\{1,5\}, \{3,5\}, \{0,2,3\}$ & 3 & 4 & 64/64 & yes & 7 & 0.390\\
rand-n6-s6 & 6 & $\{0\}, \{1\}$ & 2 & 1 & 64/64 & yes & 3 & 0.012\\
rand-n6-s7 & 6 & $\{4\}$ & 1 & 1 & 64/64 & yes & 2 & 0.0030\\
stress-n6-m6 & 6 & $\{0,1\}, \{0,5\}, \{1,4\}, \{0,3,4\}, \{2,3,5\}$ & 5 & 8 & 64/64 & yes & 13 & 329\\
\bottomrule
\end{tabular}
\end{table}

\subsection{Sensitivity to the Bound}
\label{sec:eval-bound}

Every attack the loop reports is a run of the protocol model, and its
stopping set is computed by replaying that run, so neither depends on
the bound. Therefore, every trust that suffices without a bound meets each
stopping set the loop collected. When the reported weakest trust is
unique, every minimal attack found has a single stopping assumption, and
each of its members is needed regardless of the bound. When
$|\weakest(\chi)| > 1$, a member of one reported weakest trust need not be
needed. What holds is only that every sufficient trust meets each
collected stopping set. Sufficiency, however, holds only as far as the
oracle is complete. Enlarging our analyser's pool changed only answers
that involve channel assumptions. The rows over keys and freshness kept
their answers, and every configuration with channel assumptions changed.

We reran each analyser row with one, two, and three instances per role,
with and without the earlier completed session, using pruned witnesses.
Table~\ref{tab:bound} gives the answers, the states explored, and the
time. With three instances, every analyser row except Needham--Schroeder
exceeds the time limit, and the $\mathsf{false}$ answer for
Needham--Schroeder is unchanged (\numLNBoundTimeouts{} of
\numLNBoundRuns{} runs time out). The state space grows from at most
\numLBoundMaxStatesOne{} states at one instance to \numLBoundMaxStatesTwo{}
at two. Without the earlier session, the challenge-response rows lose
$\Unq(N)$, since the replay attack needs an earlier session to replay
from. The pool must contain the
attacks the goal is about. Attacks outside the analyser's model, such as
type-flaw attacks, are not found at any pool size.

The CPSA rows are theorems about the CPSA model of each protocol, for our
goal, $U$, and $\sigma$, as far as CPSA is complete on the fragment. For
instance, the Denning--Sacco answer reflects CPSA's \texttt{pubk}.
The analyser's $\mathsf{false}$ answers come
from a run with an empty stopping set and hold for its model without a
bound. The other analyser rows give, for each collected stopping set, a
constraint that every sufficient trust meets, as above. They also give
observations, relative to the pool, about which trusts suffice. The
channel row changes at two instances, so its answer at one instance is not
the answer for the protocol model without a bound.

\begin{table}[tbp]
\caption{Sensitivity of $\weakest(\chi)$ to the analyser's pool: instances per role, with or without one earlier completed session whose transcript the adversary holds (pruned witnesses, reduced search). ``same'' means equal to the one-instance answer with the same session setting. States are summed over all calls of one run of Algorithm~\ref{alg:loop}. Time is the wall-clock time of that run in seconds, with a limit of 600\,s.}
\label{tab:bound}
\centering
\setlength{\tabcolsep}{3pt}
\footnotesize
\begin{tabular}{l c c l r r r}
\toprule
row & earlier & inst. & $\weakest(\chi)$ & calls & states & time (s)\\
\midrule
signed CR & yes & 1 & $\{\Non(sk_A), \Unq(N)\}$ & 3 & 3\,202 & 0.011\\
 &  & 2 & same & 3 & 3\,309\,263 & 35.3\\
 &  & 3 & \multicolumn{3}{l}{timeout} & 600\\
\addlinespace[2pt]
signed CR & no & 1 & $\{\Non(sk_A)\}$ & 2 & 976 & 0.0038\\
 &  & 2 & same & 2 & 799\,929 & 7.6\\
 &  & 3 & \multicolumn{3}{l}{timeout} & 600\\
\addlinespace[2pt]
two-key variant & yes & 1 & $\{\Non(k_1), \Unq(N)\}$, & 4 & 3\,247 & 0.014\\
 & & & $\{\Non(k_2), \Unq(N)\}$ & & & \\
 &  & 2 & same & 4 & 2\,327\,293 & 29.7\\
 &  & 3 & \multicolumn{3}{l}{timeout} & 600\\
\addlinespace[2pt]
two-key variant & no & 1 & $\{\Non(k_1)\}$, & 3 & 1\,324 & 0.0051\\
 & & & $\{\Non(k_2)\}$ & & & \\
 &  & 2 & same & 3 & 759\,831 & 8.7\\
 &  & 3 & \multicolumn{3}{l}{timeout} & 600\\
\addlinespace[2pt]
Needham--Schroeder & yes & 1 & $\mathsf{false}$ & 2 & 412 & 0.0010\\
 &  & 2 & same & 2 & 863 & 0.0030\\
 &  & 3 & same & 2 & 1\,546 & 0.0069\\
\addlinespace[2pt]
Needham--Schroeder & no & 1 & $\mathsf{false}$ & 2 & 149 & 0.0005\\
 &  & 2 & same & 2 & 414 & 0.0016\\
 &  & 3 & same & 2 & 863 & 0.0039\\
\addlinespace[2pt]
N--S with channels & yes & 1 & $\{\auth(A{\to}B)\}$, & 4 & 1\,226 & 0.0054\\
 & & & $\{\conf(B{\to}A)\}$ & & & \\
 &  & 2 & $\{\conf(B{\to}A)\}$, & 7 & 1\,570\,351 & 14.0\\
 & & & $\{\Non(sk_A), \auth(A{\to}B)\}$, & & & \\
 & & & $\{\auth(A{\to}B), \auth(B{\to}A)\}$ & & & \\
 &  & 3 & \multicolumn{3}{l}{timeout} & 600\\
\addlinespace[2pt]
N--S with channels & no & 1 & $\{\auth(A{\to}B)\}$, & 4 & 545 & 0.0024\\
 & & & $\{\conf(B{\to}A)\}$ & & & \\
 &  & 2 & $\{\conf(B{\to}A)\}$, & 7 & 482\,837 & 3.5\\
 & & & $\{\Non(sk_A), \auth(A{\to}B)\}$, & & & \\
 & & & $\{\auth(A{\to}B), \auth(B{\to}A)\}$ & & & \\
 &  & 3 & \multicolumn{3}{l}{timeout} & 600\\
\addlinespace[2pt]
Lowe's fix & yes & 1 & $\{\Non(sk_A)\}$ & 2 & 639 & 0.0020\\
 &  & 2 & same & 2 & 1\,406\,661 & 10.9\\
 &  & 3 & \multicolumn{3}{l}{timeout} & 600\\
\addlinespace[2pt]
Lowe's fix & no & 1 & $\{\Non(sk_A)\}$ & 2 & 289 & 0.0009\\
 &  & 2 & same & 2 & 404\,625 & 2.5\\
 &  & 3 & \multicolumn{3}{l}{timeout} & 600\\
\bottomrule
\end{tabular}
\end{table}

\subsection{Channels and Composition}
\label{sec:eval-channels}
\label{sec:eval-composition}

Under the inject/read reading of channels, and in a pool of one or two
instances per role, channel assumptions turn Needham--Schroeder's
$\mathsf{false}$ into trusts that name channels. Which channels appear
depends on both the pool and the channel reading, so these answers are
observations about the bounded model. To the
$U$ of Needham--Schroeder and Lowe's fix, we added the atoms $\auth(c)$ and
$\conf(c)$ for the two channels between the honest principals. We
interpret them under both the inject/read reading and the hijack reading
of Section~\ref{sec:channels}, and ran each configuration with one and two
instances per role. Table~\ref{tab:channels} gives the answers. At one
instance, all \numLNChannelAgree{} configurations agree with evaluating
every trust using the same oracle, and \numLNChannelChanged{} of the \numLNChannelRows{}
answers change at two instances.

On Needham--Schroeder with one instance per role, Lowe's run is the only
attack. Its stopping set is $\{\auth(A{\to}B), \conf(B{\to}A)\}$, so either
channel assumption suffices. Two instances per role bring further attacks,
and the authentic channel alone no longer suffices. It also requires
$\Non(sk_A)$ or $\auth(B{\to}A)$. Under the hijack reading, the
confidential branch disappears. With one instance, the unique answer is
then $\{\auth(A{\to}B)\}$, and with two, $\auth(A{\to}B)$ is in every
weakest trust but no longer suffices alone. On Lowe's fix, the same atoms
give $\{\Non(sk_A)\}$, $\{\auth(A{\to}B)\}$, or $\{\conf(B{\to}A)\}$ with
one instance. With two, the authentic branch again requires
$\auth(B{\to}A)$ as well. Hijacking changes nothing there. Thus, channel
assumptions enter the trust lattice as candidate assumptions
like any other, and in this bounded model they turn an unrealisable goal
into a realisable one.

Trust composes as Theorem~\ref{thm:composition} predicts. The three pairs of
Section~\ref{sec:consequences} compose the signed challenge-response with an
attestation service that signs with the same key, with or without tags on
the two signed formats, or with a separate key. For each pair, we ran
Algorithm~\ref{alg:loop} on each component with its own goal, and on the
composition with each goal and with their conjunction.
Table~\ref{tab:composition} gives the answers, and every one agrees with
evaluating every trust using the same oracle. We call a minimal attack of
the composition \emph{cross-protocol} when its stopping set contains no
stopping set of a component's minimal attack. With tags, and with separate
keys, no minimal attack is cross-protocol, and the composition's weakest
trust is the join of the components' answers. Without tags, the condition
of Theorem~\ref{thm:composition} fails. The composition's answer is then
$\mathsf{false}$, strictly above the join, and two of its minimal attacks
are cross-protocol, with stopping sets $\{\Non(sk_C)\}$ for $\chi_1$ and
$\emptyset$ for $\chi_2$. Both protocols run unchanged in these attacks, and
the adversary passes a signature from one off as a message of the other.
These are therefore multi-protocol attacks, not attacks with a protocol the
adversary designs. In all three pairs, the weakest trust for the
conjunction of the goals equals the join of the weakest trusts for each, as
Corollary~\ref{cor:conjunction} states. The gap between the join and the
composition's answer is the cross-protocol attacks, and the loop finds them
without being told to look for them.

\begin{table}[tbp]
\caption{Channel assumptions on Needham--Schroeder and Lowe's fix. $U$ contains the two long-term keys and, in the channel rows, $\auth(c)$ and $\conf(c)$ for $c \in \{A{\to}B, B{\to}A\}$. ``hijack'' adds the transition that re-ascribes a message to another channel without learning it. Results are given for one instance per role (least witnesses, with calls, $|\attacks|$, and agreement with evaluating every trust using the same oracle, as in Table~\ref{tab:results}), and for two instances per role, with the states explored by the reduced search.}
\label{tab:channels}
\centering
\setlength{\tabcolsep}{3pt}
\footnotesize
\begin{tabular}{l l c l r r l r}
\toprule
protocol & $U$ & hijack & $\weakest$, 1 instance & $|\attacks|$ & calls & $\weakest$, 2 instances & states\\
\midrule
Needham--Schroeder & keys & no & $\mathsf{false}$ & 1 & 2 & $\mathsf{false}$ & 414\\
 & keys, channels & no & $\{\auth(A{\to}B)\}$, & 1 & 3 & $\{\conf(B{\to}A)\}$, & 482\,837\\
 & & & $\{\conf(B{\to}A)\}$ & & & $\{\Non(sk_A), \auth(A{\to}B)\}$, & \\
 & & &  & & & $\{\auth(A{\to}B), \auth(B{\to}A)\}$ & \\
 & keys, channels & yes & $\{\auth(A{\to}B)\}$ & 1 & 2 & $\{\Non(sk_A), \auth(A{\to}B)\}$, & 147\,942\\
 & & &  & & & $\{\auth(A{\to}B), \auth(B{\to}A)\}$, & \\
 & & &  & & & $\{\auth(A{\to}B), \conf(B{\to}A)\}$ & \\
\midrule
Lowe's fix & keys & no & $\{\Non(sk_A)\}$ & 1 & 2 & $\{\Non(sk_A)\}$ & 404\,625\\
 & keys, channels & no & $\{\Non(sk_A)\}$, & 1 & 4 & $\{\Non(sk_A)\}$, & 817\,327\\
 & & & $\{\auth(A{\to}B)\}$, & & & $\{\conf(B{\to}A)\}$, & \\
 & & & $\{\conf(B{\to}A)\}$ & & & $\{\auth(A{\to}B), \auth(B{\to}A)\}$ & \\
 & keys, channels & yes & $\{\Non(sk_A)\}$, & 1 & 4 & $\{\Non(sk_A)\}$, & 817\,327\\
 & & & $\{\auth(A{\to}B)\}$, & & & $\{\conf(B{\to}A)\}$, & \\
 & & & $\{\conf(B{\to}A)\}$ & & & $\{\auth(A{\to}B), \auth(B{\to}A)\}$ & \\
\bottomrule
\end{tabular}
\end{table}

\subsection{Threats to Validity}
\label{sec:eval-threats}

\paragraph{Bounded pools.} Section~\ref{sec:eval-bound} states what the
analyser's answers support. Within the time limit, the pool could be
enlarged only to two instances per role (Section~\ref{sec:lim-bound}),
since three instances timed out on every row except Needham--Schroeder. The channel
rows change at two instances, and nothing is known about three. The
analyser's answers about freshness depend on the earlier session in the
pool, and type-flaw attacks are outside its model.

\paragraph{Row selection and goals written by us.} The CPSA rows follow a
rule fixed before any run, with a cap of \numLRowCap{} rows in total and a
limit of \numLSelectTimeout\,s per lattice point during selection.
Protocols that need longer at some point would have been excluded. Every
CPSA row with answer $\mathsf{false}$ is a row of the rule. Otway--Rees
with $A \neq B$ and Kerberos with agreement on the session key were added
after the first results as second readings of the two rows whose goal
fails at every trust for a reason in the model or the goal, and we report
both readings. No shipped \texttt{defgoal} was used, so all
\numLNCpsaRows{} CPSA goals are ours. Appendix~\ref{app:protocols} gives
every goal verbatim so that a reader can check what was asked. The
candidate assumptions $U$ and the situation $\sigma$ follow the rule too,
and a different choice gives a different lattice and possibly different
answers. In particular, freshness that a role declares stays in $\sigma$
and is never reported as needed.

\paragraph{CPSA's fragment.} A CPSA verdict is a theorem only as far as CPSA
is complete for the goal language on these protocols and its default bounds
are not reached. Every run ended with its search exhausted. The replay
check depends on how CPSA reports inputs it cannot treat as skeletons,
which we established by reading its output rather than from a
specification. The reading
of the preskeleton and dead marks is tested on
saved outputs and checked against evaluating every trust on every row.
The CPSA rows concern the CPSA model of each protocol, with our goal, $U$,
and $\sigma$.

\paragraph{Timing.} All times come from one machine. The load average was
sampled only at the start and end of every step, so it says nothing about
the load during a step. Over the timed steps of the main session these
samples ranged from \numLLoadMin{} to \numLLoadMax{}. For the CPSA step,
rerun alone later, they were
\numLCpsaLoadMin{} and \numLCpsaLoadMax{}. Other processes were not controlled. Timing runs were
sequential and repeated. The call counts, on which the comparisons of
Sections~\ref{sec:eval-rq2} and~\ref{sec:eval-scale} rest, do not depend on
load. The scale times are wall-clock times of an OCaml process without
flambda optimisation.

\paragraph{Synthetic instances.} The scale families are instances of the
reduction, whose components do not interact by construction, and their
oracle is a check on families of sets, exact and monotone by construction.
They check the implementation against Theorem~\ref{thm:cost} and measure
the blocker computation and the number of oracle calls. They say nothing
about the cost of one oracle call on a large protocol, which is where an
analyser spends its time. The gadget check confirms the reduction only on
instances with $n \le \numLGadgetMaxN$, because exhaustive search over the
gadget is exponential in the number of sets.

\paragraph{Implementation errors.} Agreement with evaluating every trust
checks the loop against the same oracle, and the differential tests check
the OCaml code against a prototype written by the same authors. Both would
miss an error shared by the two implementations or present in the oracle.
The agreement also cannot detect stopping sets that are too large. An atom
wrongly kept in a clause costs calls but leaves the answer unchanged
(Section~\ref{sec:backend}). Stopping sets that are too small do change
the answer, which is how the replay error on Otway--Rees with $A \neq B$
was found. The only external check is that
the answers on the CPSA rows agree with the attacks and fixes known from
the literature.

\section{Scope and Limitations}
\label{sec:limits}

The results rest on the six modelling assumptions of
Section~\ref{sec:lattice} and on the oracles of Section~\ref{sec:eval}.
Every result concerns the symbolic model, and none bears on
computational security.

\subsection{Finitely Many Candidates and Trust Hypotheses Only (A1, A2)}
\label{sec:lim-a2}

We assume that the part of the hypothesis that varies is the trust, a
subset of a finite set $U$ of atoms whose values the situation fixes.
Thus, the answer concerns a bounded scenario rather than a protocol
quantified over unboundedly many sessions. Rowe, Guttman, and Liskov ask
their question for every conjunction of atoms of the goal language,
including role-position, parameter, and ordering atoms \cite{rowe}. We
answer it only for hypotheses built from trust atoms.

Without A2, A1 fails. Atoms that mention nodes and parameters are not
bounded by the situation, since a hypothesis may name any number of
further strands. With infinitely many candidate assumptions,
Theorem~\ref{thm:completion} and the dualisation of
Section~\ref{sec:blocker} lose the finiteness they use. Two things
survive. First, achievement is still monotone in the hypothesis and
antitone in the conclusion, so the Galois connection exists when the
hypotheses are completed by their up-sets, which form a complete lattice.
Second, Lemma~\ref{lem:separable} does not use finiteness: a conjunctive
hypothesis fails on a run exactly when one of its conjuncts does. What is
lost is the description of $\weakest(\chi)$ as a finite antichain. An
up-set of hypotheses need not be generated by finitely many minimal
elements. In that case, $\weakest(\chi)$ exists only as an up-set and has
no formula in the goal language extended by finite disjunctions.

Lifting the restriction requires two steps. The first is order-theoretic.
If the hypotheses, ordered by strength modulo the situation, form a
well-quasi-order, every up-set is the up-closure of a finite set, and
Theorem~\ref{thm:completion} holds with finite antichains. We do not know
whether the conjunctions of atoms of \cite{rowe} under their order have
this property. The second is computational. Line~3 of
Algorithm~\ref{alg:loop} takes a blocker over $U$, and over an infinite
set of atoms the candidate step must change, for instance to a finite $U$
grown from the atoms that appear in the attacks found so far. The Galois
connection needs neither step, but computing with it needs both.

\subsection{Either/Or Hypotheses}
\label{sec:lim-completion}

Corollary~\ref{cor:diagnosis} obtains a unique answer by allowing hypotheses
that are disjunctions of conjunctions of trust atoms. This extends the
goal language of \cite{rowe}. Without the extension, a weakest hypothesis
exists only under the condition of Theorem~\ref{thm:unique}.

Deciding achievement for an either/or hypothesis requires nothing new. The
goal $\forall\bar{x}.\,((\Phi_1 \vee \Phi_2) \Rightarrow \Psi)$ is
equivalent to $\forall\bar{x}.\,(\Phi_1 \Rightarrow \Psi) \wedge
\forall\bar{x}.\,(\Phi_2 \Rightarrow \Psi)$, so it is achieved exactly when
each of finitely many goals of the original fragment is, and one
enrich-by-need run per disjunct decides it. Thus, results stated goal by
goal carry over. However, results that rely on the shape analysis sentence
of a single hypothesis need not carry over. Enrich-by-need starts from one
skeleton, an either/or hypothesis has one set of skeletons per disjunct,
and its strongest guarantee is the disjunction of theirs (the
remark after Corollary~\ref{cor:conjunction}). The strength orderings of
\cite{rowe} and the goal-preservation results of
\cite{guttmanestablishing} are of this kind.

Lifting this limit means restating those results for sets of skeletons and
checking which survive. We have not done so. The claim is also narrow in
another direction. The completion gives a unique answer here because
sufficiency is monotone (A4) and the hypotheses form a finite lattice
(A1). An abduction problem without these properties need not have a
unique weakest explanation, and we claim nothing about abduction in
general.

\subsection{Safety Goals (A3)}
\label{sec:lim-safety}

We assume that every goal violation has a finite witness, as for secrecy,
aliveness, and non-injective agreement. A1 alone makes $\attacks(\chi)$
and $\weakest(\chi)$ finite. A3 makes failure semi-decidable: a violation,
when there is one, is a finite run that a search can exhibit. Deciding
achievement requires more, namely an oracle that terminates, as CPSA does on
every row we report. Without termination,
line~4 of Algorithm~\ref{alg:loop} may not return on a sufficient
candidate, and the procedure is a semi-algorithm. The description
$\weakest(\chi) = \blocker(\attacks(\chi))$ and Theorem~\ref{thm:unique}
depend on neither. Only the oracle does.

Injective agreement is excluded for a different reason. Its statement
requires an equality between the partner strands matched to two runs of
the observed role, which a goal over one fixed situation $\sigma$ does not
state. The exclusion concerns expressibility rather than the length of a
violating run.

Lifting A3 requires an oracle that certifies success for the goal class,
such as an unbounded verifier whose analysis may not terminate. The loop
needs nothing else from it. A partial run is still useful. Every member of
$H$ at any point of Algorithm~\ref{alg:loop} is a conflict computed from a
real counterexample, by Proposition~\ref{prop:clause}. Thus, if the oracle
does not return on a candidate, every sufficient trust still meets every member
of $H$. The partial output is a sound statement of necessity, of the same
kind as the analyser rows of Table~\ref{tab:results}.

\subsection{No Equational Theory (A5)}
\label{sec:lim-algebra}

We assume perfect cryptography over a free term algebra. The proof of
Lemma~\ref{lem:separable} uses the free algebra only through the fact
that a run either violates an atom or does not. With an equational
theory, new atoms, for instance about exponents, would have to be checked
against that reading before Proposition~\ref{prop:hitting} applies, and
the oracle would have to decide achievement modulo the theory. For this
reason, the Diffie--Hellman protocols shipped with CPSA are not evaluated
in Section~\ref{sec:eval}.

Lifting A5 requires an oracle for the equational theory and a new proof of
Lemma~\ref{lem:separable} for the atoms the theory adds.

\subsection{The Strongest Goal and Termination}
\label{sec:lim-channels}

The strongest goal $\strongest(\tau)$ exists when enrich-by-need
terminates from $\sigma \wedge \tau$, and we assume this for every trust,
over origination and channel atoms alike (the termination assumption of
Section~\ref{sec:galois}). Rowe et al.\ \cite[Theorem~2]{rowe}
characterise $\strongest(\tau)$ as the shape analysis sentence
\cite{guttmantosca,ramsdell} when the analysis terminates, but they do not
supply termination. For channel atoms, we also assume that the
characterisation carries over to the restricted model, which \cite{rowe}
does not cover. Without these assumptions, $\strongest$ may be undefined on
some trusts, and the results that use it (the right adjoint in
Theorems~\ref{thm:galois} and~\ref{thm:completion}, and
Theorem~\ref{thm:correspondence}) are conditional there. The left adjoint
$\weakest$, its description as a blocker, and Algorithm~\ref{alg:loop} use
only $\Suff(\chi)$ and do not depend on it. An enrich-by-need analysis
with channels in the sense of \cite{kamillowe}, with a termination
argument, would remove the assumption.

\subsection{Tight Trusts}
\label{sec:lim-tight}

Theorem~\ref{thm:correspondence} pairs tight trusts with attainable goals,
but it does not discard assumptions. Conclusions of the goal language may
contain origination atoms, and $\strongest(\tau)$ retains those of $\tau$.
As a result, $\weakest(\strongest(\tau))$ retains them as well, and nearly
every trust is tight. Computing
$\weakest(\chi)$ for a goal identifies which stated assumptions that goal
can do without. A tight core that discards unneeded assumptions
would need conclusions free of trust atoms, for instance $\strongest(\tau)$ with its
origination atoms removed. Whether the correspondence, or some
replacement for it, holds for such conclusions is open.

\subsection{A Bounded Analyser}
\label{sec:lim-bound}

Table~\ref{tab:results} has \numLNAnalyserRows{} rows on our bounded
analyser (the channel row is shown at two pool sizes), whose sufficiency verdicts hold only within its pool of strands.
On these rows, a claim of sufficiency is an observation about the bounded
model. Within \numLBoundTimeout\,s per run, the pool could be enlarged
only to two instances per role, as Section~\ref{sec:eval-bound} reports.

Lifting this limit requires an oracle that is complete without a bound for
the atoms in question. CPSA, when its search is exhausted, is such an
oracle for the \texttt{non} and \texttt{uniq} facts our encoding uses. The
channel assumptions would require an encoding of the channel levels in a complete
analysis, which we have not written. Short of that, a pool rule fixed per
goal, stating which earlier sessions and how many instances the goal's
attacks need, would turn the sensitivity table into a check.

\subsection{Cost on Large Protocols}
\label{sec:lim-scale}

Every protocol row has at most \numLMaxU{} candidate assumptions, and the
scale study uses synthetic instances whose components do not interact.
Hence, neither measures the cost of the loop when each oracle call is a
full analysis of a large protocol. That
cost grows with $|U|$ in a second way. Algorithm~\ref{alg:stops} makes $|U
\setminus T|$ replays per counterexample, and on the CPSA rows, the
replays already dominate the verifier runs.
Measuring the combined cost would require a protocol with ten or more
candidate assumptions over keys, nonces, and channels, analysed by CPSA.
We have not run this experiment.

\subsection{The Composition Conjecture}
\label{sec:lim-composition}

The equality condition of Theorem~\ref{thm:composition} is a condition on
the composition's minimal attacks, which a designer cannot read off the
components. Remark~\ref{rem:disjoint} conjectures that disjoint encryption
\cite{guttmanthayer} implies it. Until the conjecture is settled, the
theorem gives a lower bound for any pair in which every component attack
extends to an attack on the composition that still violates the goal and
whose stopping set is contained in the original's. It gives equality for
a given pair only by computing the composition's minimal attacks, as
Table~\ref{tab:composition} does. Protocols that share a key
infrastructure can fail the condition, as the untagged pair there does,
and for them the lower bound is strict.

Proving the conjecture means following stopping sets through the
transformation by which \cite{guttmanthayer} maps a run of the
composition to behaviour of a component. It also means showing that the
image of an attack is ruled out by no atom that does not also rule out the
original.

\section{Related Work}
\label{sec:related}

The weakest trust, the minimal ways to stop every minimal attack, draws on
three bodies of work: protocol analysis in strand spaces, the search for
weakest assumptions in verification, and the combinatorics of hitting
sets. The closest work evaluates security claims over lattices of
adversary or threat models.

\subsection{The Forward Direction}

Our setting is the strand-space model of Thayer, Herzog, and Guttman
\cite{thayer} with the goal language of Guttman
\cite{guttmanshapes,guttmanestablishing}, in which a security goal is a
geometric formula and a protocol achieves it when every run satisfies
it. Guttman's enrich-by-need analysis computes the shapes of a skeleton,
the minimal executions compatible with it, and CPSA \cite{cpsaspec}
implements it. The shape analysis sentence, which states everything one
terminating run of the analysis establishes, is due to Guttman
\cite{guttmantosca} and Ramsdell \cite{ramsdell}. Rowe, Guttman, and
Liskov \cite{rowe} use it to order goals and protocols by strength. They
show that for a fixed hypothesis on which the analysis terminates, the
achieved goals have a strongest member, the shape analysis sentence. For a
fixed conclusion, however, the sufficient hypotheses need not have a
weakest one. They report knowing of no tools designed to support the
search for the maximal hypotheses and leave it for future research. We
take their fragment and achievement relation as given and answer that
question for hypotheses built from trust atoms. Their own method for the
inverse direction drops conjuncts from a sufficient hypothesis until the
goal fails and returns one maximal hypothesis. Section~\ref{sec:eval} compares
it with Algorithm~\ref{alg:loop}.

Dougherty, Guttman, and Ramsdell \cite{dougherty} compute minimal
executions of a protocol in a context, combining CPSA with an SMT-based
model finder. Their minimality concerns runs, while ours concerns the
assumptions under which no run violates the goal. The two meet in
Algorithm~\ref{alg:loop}, whose witnesses are minimal runs and which
minimises their stopping sets. Theorem~\ref{thm:cost} records that a
minimal run need not have a minimal stopping set. Lowe's hierarchy of
authentication goals \cite{lowe} supplies the goals of
Table~\ref{tab:results}.

\subsection{Weakest Assumptions}

Several lines of work ask which assumptions a property needs. Yatapanage
and Jones \cite{yatapanage} use rely/guarantee reasoning to make explicit,
by hand, the environment assumptions on which the Needham--Schroeder
public-key protocol depends. Alves-Foss and Soule \cite{alvesfoss}
generate necessary preconditions for protocol goals in the SvO belief
logic, by a weakest-precondition calculus.
Belief logics such as BAN \cite{ban} require the initial assumptions to be
stated and derive goals from them, but they do not minimise the assumptions.
None of these works uses a Dolev--Yao semantics with a verifier as
oracle, and none treats the non-uniqueness of the answer.

Outside security, assumption generation for compositional verification
asks for the weakest environment under which a component satisfies a
property. Giannakopoulou, P\u{a}s\u{a}reanu and Barringer
\cite{giannakopoulou} construct it from the component's transition
system, and Cobleigh, Giannakopoulou and P\u{a}s\u{a}reanu \cite{cobleigh}
learn sufficient assumptions with an automata-learning algorithm. There,
the assumptions are sets of traces, the environments under which a safety
property holds are closed under union, and the weakest assumption is
unique. For trusts, Theorem~\ref{thm:nofilter} shows that the
corresponding closure fails.

Chatterjee, Henzinger and Jobstmann \cite{chatterjee} compute environment
assumptions that make an unrealisable specification realisable in a game,
in polynomial time for safety assumptions, and show that minimal liveness
assumptions are NP-hard. They also show that weakest environment
assumptions need not be unique, with an example in which either of two
assumptions suffices and neither is implied by the other. Our two-key
variant is an instance of the same phenomenon. Their assumptions are
conditions on environment behaviour in a game graph, whereas ours are
atoms about keys, values, and channels.
Their safety and liveness cases differ in complexity, while here the
distinction affects only the oracle.

Bi-abduction \cite{calcagno} infers, during a compositional
separation-logic analysis, the missing part of a precondition together
with the unused part of the state, and chooses among the possible answers
heuristically. Its hypothesis language is much richer than ours, and we do
not claim that a completion turns it into an adjunction.

Dijkstra's weakest precondition \cite{dijkstra} exists because the
predicate transformer is universally conjunctive, which makes it a right
adjoint. Achievement is monotone in the trust but does not preserve
meets, so $\weakest$ requires the completion.

\subsection{Hitting Sets and Minimal Cut Sets}

The minimal ways to stop every minimal attack are minimal hitting sets,
and the combinatorics is Reiter's theory of diagnosis \cite{reiter}. The
minimal diagnoses of a system are the minimal hitting sets of its minimal
conflict sets. The hitting-set tree, corrected by Greiner, Smith, and
Wilkerson \cite{greiner}, computes them with a theorem prover that returns
conflicts. Algorithm~\ref{alg:loop} is that loop with a protocol verifier
in place of the theorem prover. A stopping set plays the role of a
conflict, and a trust plays that of a diagnosis. The loop is also
counterexample-guided inductive synthesis \cite{cegis} with achievement as
the specification. In satisfiability, the same duality relates minimal
correction sets and minimal unsatisfiable subsets, and Liffiton and
Sakallah \cite{liffiton} compute both on this basis. MARCO \cite{marco}
enumerates both by alternating between them, much as
Algorithm~\ref{alg:loop} alternates between candidates and attacks.

The bound of Theorem~\ref{thm:cost} on goal queries is an instance of
known results. Achievement is a monotone Boolean function of the trust,
and a goal query is a membership query on it. Angluin \cite{angluin}
learns monotone DNF formulas exactly from membership and equivalence
queries, and Bshouty \cite{bshouty} learns a Boolean function with a
number of queries polynomial in the sizes of its DNF and CNF. For a
monotone function, these are the minimal true points and the minimal
false clauses. Bioch and Ibaraki \cite{biochibaraki} show that
identifying a monotone function from membership queries, which yields
both families, has an incrementally polynomial algorithm exactly when
dualisation does. Gunopulos et
al.\ \cite{gunopulos} give the same connection for finding the most
specific sentences of a monotone predicate, with Dualize-and-Advance,
whose number of predicate evaluations is bounded in terms of the sizes of
the positive and negative borders. In those terms, $\weakest(\chi)$ is the
set of minimal true points, and the complements of the members of
$\attacks(\chi)$ are the maximal false points. What is specific here is
the oracle: a protocol verifier, with a replay query that recovers a
stopping set from a run.

The involution of blockers is due to Edmonds and Fulkerson \cite{edmonds}
and to Isbell \cite{isbell}, and is in Berge's treatment of hypergraphs
\cite{berge}. Checking a candidate dualisation is quasi-polynomial by
Fredman and Khachiyan \cite{fredman}. Whether minimal transversals can be
enumerated in output-polynomial time is open, and Eiter and Gottlob
\cite{eiter} relate it to equivalent problems in logic and databases.
Minimum hitting set is NP-complete \cite{karp}. We use these results as
stated. What we add is that $\weakest(\chi)$ and $\attacks(\chi)$ are a
blocker pair, and Theorem~\ref{thm:unique} is a direct consequence of the
involution.

In fault-tree analysis \cite{vesely,ruijters}, a minimal cut set is a
minimal set of basic events whose joint occurrence causes the top event,
and a minimal path set is a minimal set of components whose functioning
prevents it. The two families are blockers of each other. With the failure
of an assumption as a basic event, $\attacks(\chi)$ is the family of
minimal cut sets, and $\weakest(\chi)$ is the family of minimal path sets.
Attack trees \cite{schneier,mauw} and attack--defence trees \cite{kordy}
carry the same duality into security. Mauw and Oostdijk give an attack
tree a semantics as a set of multisets of basic attack steps, and choosing
defences that counter every attack is a hitting-set problem. In all three,
the tree is written by an analyst and given to the analysis. Here, no tree
is given. The minimal attacks follow from the protocol's semantics and are
reached only through a verifier, and Algorithm~\ref{alg:loop} finds them
one counterexample at a time.

The order theory is also standard. Birkhoff's polarities
\cite{birkhoff}, on which formal concept analysis builds \cite{ganter},
give an antitone Galois connection between sets of trusts and sets of
goals. The isotone connection between single trusts and single goals of
Section~\ref{sec:galois} exists exactly when every goal has a single
weakest trust. The disjunctive completion that supplies it in general is
the construction of abstract interpretation \cite{cousot}.

\subsection{Lattices of Adversaries and Threat Models}

Basin and Cremers \cite{basincremersesorics,basincremers} define adversary
models by combining compromise rules: reveal of long-term keys, under
conditions on whether the owner is a partner of the session under attack
and on whether the reveal happens after that session ends; reveal of
session keys; reveal of local state; and reveal of random values. These
rules generate a lattice of adversary models. For each protocol and
claim, they use an extension of Scyther to determine the models against
which the claim holds. They order protocols by these sets into
protocol-security hierarchies, in which protocols can be incomparable.

Their work began the line that is closest to ours. The set of their models
against which a claim holds is down-closed in adversary capability. Read
in the opposite order, this set corresponds to $\Suff(\chi)$, and its
maximal adversaries correspond to the members of $\weakest(\chi)$. An
adversary that may not reveal a key corresponds to assuming that key
uncompromised. Their hierarchies order protocols by containment of these
sets, as $\preceq$ does in Theorem~\ref{thm:tenet}. Thus, for their
lattice of adversary models, the inverse question is already answered by
tool support, by checking each model.

The two settings differ in three ways. First, the candidate assumptions
differ. Theirs are uniform rules over all agents and sessions, some with
temporal conditions, and their lattice is fixed in advance for every
protocol. Ours are atoms about particular values and channels in a
situation, and the lattice has $2^{|U|}$ points that depend on the
protocol and the situation. Neither set subsumes the other. A reveal
permitted only after the session ends is not expressible by our
non-origination atoms, which constrain the whole run, and freshness of a
particular value or authenticity of a particular channel is not one of
their rules. Second, the method differs. They evaluate each model. We
describe the minimal elements as a blocker, give the uniqueness criterion
of Theorem~\ref{thm:unique}, and compute them with a number of goal
queries bounded, for minimal witnesses, by the number of minimal attacks
and weakest trusts, by Theorem~\ref{thm:cost}, independent of the size of
the lattice. Third, the use differs. They rank protocols by the
adversaries they resist, whereas we compute, for one protocol and goal,
which assumptions the goal requires.

Later work evaluates claims across lattices of threat models and computes
their extreme elements. Girol et al.\ \cite{girol} analyse every Noise
handshake pattern under combinations of compromise capabilities, and
Jackson et al.\ \cite{jackson} analyse protocols under a range of symbolic
models of signatures weaker than the ideal one. Jacomme and Kremer
\cite{jacommemfa} analyse multi-factor authentication protocols under a
lattice of threat models built from atomic capabilities, such as malware
on a device, phishing, and users who omit protocol steps, and compare
protocols by
the threat models under which each property holds. Cheval et al.\
\cite{chevalhash} build a lattice of threat models from weaknesses of
hash functions. Jacomme, Klein, Kremer, and Racouchot \cite{jacommeedhoc}
compute, for each property of EDHOC, the maximal threat models under which
it holds and the minimal ones under which it fails. They work over the
inclusion lattice of atomic capabilities, using enumeration with monotone
pruning. Rothstein-Morris, Sun, and Chattopadhyay \cite{rothsteinmorris}
generate attackers, as sets of capabilities, by bounded model checking, and
classify them by the requirements each breaks, with pruning that uses the
same monotonicity. Cremers and Dehnel-Wild \cite{cremers5g} make the
channel assumptions of 5G-AKA explicit and show which of them its goals
rely on.

If a capability is read as the negation of one of our atoms, a maximal
threat model under which a property holds corresponds to a member of
$\weakest(\chi)$, and a minimal one under which it fails corresponds to a
member of $\attacks(\chi)$. Thus, the dual question has already been
computed for several protocols, by enumeration over fixed capability
lattices. Three things differ here. First, the question is posed in the
goal language of \cite{rowe}, with atoms about particular values and
channels of a situation. Second, the answer is characterised as a
blocker, with the uniqueness criterion of Theorem~\ref{thm:unique}. Third,
it is computed by a counterexample-guided loop over a verifier rather than
by enumeration. The loop's goal queries are bounded by $|\attacks(\chi)| +
|\weakest(\chi)|$ when witnesses are minimal, which is the learning bound
recalled above.

Rowe \cite{roweorderings} studies orderings used in security models.
Rowe, Guttman, and Ramsdell \cite{rowedb} analyse distance-bounding
protocols with CPSA from an assumption-based perspective and compare the
protocols by the assumptions each needs. Neither work computes a weakest
assumption set.

\subsection{Channels as Assumptions}

Dilloway and Lowe \cite{dilloway} specify secure transport channels by
the adversary behaviours each forbids and order the specifications in a
hierarchy. M\"odersheim and Vigan\`o \cite{modersheim}, building on that
line, give authentic, confidential, and secure channels as assumptions in
a protocol notation, with a pseudonymous variant in which an endpoint is
identified only by a pseudonym. Kamil and Lowe \cite{kamillowe} specify
channels in strand spaces as restrictions on the adversary, including
faking messages and hijacking, which changes the sender or receiver
identity of a message on the same channel. These works analyse a protocol
under given channel assumptions, and Cremers and Dehnel-Wild
\cite{cremers5g} show for 5G-AKA which channel assumptions its goals rely
on. We use the authentic and confidential levels as candidate
assumptions. The inject/read reading of Section~\ref{sec:channels}, and
its hijack variant, which moves a message between channels, are ours.
Under both readings, Lemma~\ref{lem:generators} shows that the secure
level is the conjunction of the other two, so the trust lattice remains
Boolean.

\subsection{Composition}

Guttman and Thayer \cite{guttmanthayer} prove that protocols satisfying
disjoint encryption, in which the regular strands of one protocol neither
create nor open the encrypted units of the other, do not interfere. A run
of the composition can be transformed into behaviour of the primary
protocol, so the primary's goals are preserved. Guttman
\cite{guttmanestablishing} develops goal preservation in the goal language
we use. Kelsey, Schneier, and Wagner \cite{kelsey} show the opposite
direction: for a protocol whose keys are shared, an adversary can design a
second protocol that breaks it. Cremers \cite{cremersares} shows that
attacks using messages of another protocol already in use, called
multi-protocol attacks, are common among published protocols that share
keys. Theorem~\ref{thm:composition} states interference in terms of
trust. When every component attack extends to an attack on the
composition with no larger stopping set, the join of the components'
weakest trusts is a lower bound for the composition. Equality holds when
every minimal attack on the composition is stopped by no less than some
component attack. Remark~\ref{rem:disjoint} conjectures that disjoint
encryption is a sufficient condition. The untagged pair of
Table~\ref{tab:composition} has two multi-protocol attacks, in the sense
of Cremers, that make the composition's weakest trust
strictly greater than the join.

\section{Conclusion}
\label{sec:conclusion}

The weakest trust sufficient for a goal is exactly the set of minimal
hitting sets of all minimal attacks, $\weakest(\chi) =
\blocker(\attacks(\chi))$. The answer is unique when either/or hypotheses
over trust atoms are allowed. A single
weakest trust exists exactly when every minimal attack has one stopping
assumption. The weakest trust and the strongest
goal of Rowe, Guttman, and Liskov are the two sides of one Galois
connection, and the goal language allows disjunction on one side only.
Reiter's hitting-set loop computes this answer
using a protocol verifier as an oracle. Using our analyser and CPSA, for
all \numNRows{} goals in Table~\ref{tab:results}, the results agree with
evaluating every trust using the same oracle.

The main open question is whether these results extend to the full goal
language of \cite{rowe}, where hypotheses may also contain role-position
and parameter atoms. The Galois connection exists there in the lattice of
up-sets of hypotheses. It is not known whether the weakest hypothesis is
always a finite disjunction, and thus a formula of the extended goal
language, or whether it can be computed.
Section~\ref{sec:limits} lists narrower open questions. These include
whether disjoint encryption implies the equality condition of
Theorem~\ref{thm:composition}, and whether a tight core that discards
unneeded assumptions exists for conclusions free of trust atoms.

The advice in Section~\ref{sec:intro} was to trust a few. For a protocol
and a goal, the few are now computable: they are the minimal ways to stop
every minimal attack.

\section*{Acknowledgments}
Aalok Thakkar is supported by the Anusandhan National Research Foundation, under the Prime Minister Early Career Research Grant ANRF/ECRG/2025/001136/ENS, and by the Vachani School of Advanced Computing, Ashoka University, under the Young Faculty Grant (00281).



\appendix
\section{The CPSA Oracle}
\label{app:oracle}

Listing~\ref{lst:oracle} is the CPSA oracle, abridged from the artifact. The
candidate trust is added to the hypothesis of a CPSA goal. On failure, each
failing shape is replayed under further atoms to compute its clause, the
atoms whose assumption removes the shape (Algorithm~\ref{alg:stops}), and
the smallest clause is returned. One call of \texttt{oracle} is one goal
query, the unit counted as a call in Section~\ref{sec:eval}. The CPSA runs
made by \texttt{replay} inside it are counted separately, as replay runs.

\begin{lstlisting}[language={[Objective]Caml},caption={The CPSA oracle: the trust is the goal's hypothesis.},label={lst:oracle}]
let stops tmpl trust shape =                (* grow M while x survives *)
  let rest = Aset.diff (universe tmpl) trust in
  let m = Aset.fold (fun a m ->
      if survives (replay tmpl shape (Aset.add a m))
      then Aset.add a m else m) rest Aset.empty in
  Aset.diff rest m                          (* the clause U \ (T u M) *)

let oracle tmpl trust =
  let g = goal_run tmpl trust in            (* trust -> non/uniq facts *)
  if achieves g then Loop.Achieves
  else
    let clauses = List.map (stops tmpl trust) g.failing in
    Loop.Fails { stops = smallest clauses; descr = "" }
\end{lstlisting}

\section{Protocol Models}
\label{app:protocols}

For every row of Table~\ref{tab:results}, this appendix gives what was
analysed. For the analyser rows, it lists the strand pool and the goal. For
the CPSA rows, it lists the template file of the artifact, the source of the protocol
text, the candidate assumptions $U$ with the CPSA facts they stand for, the role-level
declarations that stay in the situation $\sigma$, and the goal verbatim. In a
template, \texttt{@TRUST@} marks where the facts of the candidate trust are
inserted into the goal's hypothesis, and the empty trust inserts nothing. CPSA
runs each template with no options.

\subsection{Analyser Rows}
\label{app:protocols-analyser}

In the analyser every strand is a sequence of sends and receives over message
templates. A variable is bound by the earliest receive that mentions it, to
an atom of a fixed list. A starred strand is the test strand, whose
completion the goal is about. Copy $j \ge 2$ of a pool (Table~\ref{tab:bound})
repeats every strand with fresh nonces of its own and is never a test
strand.

\paragraph{Signed challenge-response and the two-key variant.} Write
$r(P,Q,n)$ for the response: $\sig{sk_P}{n, P, Q}$ in the signed row and
$\{\!|\{\!|n, P, Q|\!\}_{k_1}|\!\}_{k_2}$ in the two-key variant. The pool is
\begin{center}
\footnotesize
\setlength{\tabcolsep}{4pt}
\begin{tabular}{ll}
\toprule
strand & steps\\
\midrule
$\mathrm{Chal}(B,A)^*$ & send $N_t$; receive $r(A,B,N_t)$\\
$\mathrm{Prov}(A,B)$ & receive $n$; send $r(A,B,n)$\\
$\mathrm{Prov}(B,A)$ & receive $n$; send $r(B,A,n)$\\
$\mathrm{Chal}(A,B)$ & send $N_a$; receive $r(B,A,N_a)$\\
$\mathrm{Prov}(A,I)$ & receive $n$; send $r(A,I,n)$\\
\bottomrule
\end{tabular}
\end{center}
with $U = \{\Non(sk_A), \Non(sk_B), \Unq(N)\}$ (respectively $\Non(k_1),
\Non(k_2)$). The adversary knows the names, the public keys, $sk_I$, and every
key of $U$ not assumed. With the earlier session, it also holds the transcript
$N_{old}, r(A,B,N_{old})$ of a completed $\mathrm{Prov}(A,B)$ run, and the
test nonce $N_t$ is either the fresh $N$ or, unless $\Unq(N)$ is assumed, the
stale $N_{old}$. The goal is recent agreement: when the test strand completes,
some $\mathrm{Prov}(A,B)$ strand sent $r(A,B,N_t)$ after the test strand sent
$N_t$.

\paragraph{Needham--Schroeder and Lowe's fix.} The pool is
\begin{center}
\footnotesize
\setlength{\tabcolsep}{4pt}
\begin{tabular}{ll}
\toprule
strand & steps\\
\midrule
$\mathrm{Init}(A,B)$ & send $\{N_{a0}, A\}_{pk_B}$; receive $\{N_{a0}, N_b\}_{pk_A}$; send $\{N_b\}_{pk_B}$\\
$\mathrm{Init}(A,I)$ & send $\{N_{a1}, A\}_{pk_I}$; receive $\{N_{a1}, N_b\}_{pk_A}$; send $\{N_b\}_{pk_I}$\\
$\mathrm{Resp}(B,A)^*$ & receive $\{N_a, A\}_{pk_B}$; send $\{N_a, N_{b2}\}_{pk_A}$; receive $\{N_{b2}\}_{pk_B}$\\
\bottomrule
\end{tabular}
\end{center}
and in Lowe's fix the second message carries the responder's name,
$\{N_a, N_b, B\}_{pk_A}$. $U = \{\Non(sk_A), \Non(sk_B)\}$, and in the channel
rows also $\auth(c)$ and $\conf(c)$ for $c \in \{A{\to}B, B{\to}A\}$. Each
message travels on the channel from its sender to its intended receiver. The
adversary knows the names, the public keys, $sk_I$, and every key of $U$ not
assumed. The goal is the responder's non-injective agreement: when the test
strand completes, some $\mathrm{Init}(A,B)$ strand has completed with the same
$N_a$ and $N_b$. With the earlier session, the adversary also holds the three
messages of a completed run between $A$ and $B$ on fresh nonces of its own.

\paragraph{Composition pairs.} $\Pi_1$ is the signed challenge-response
with one strand per role, and $\Pi_2$ has three strands: $C$ sends
$\sig{sk_C}{m_C, B}$; $A$ receives $\sig{sk_C}{m, B}$ and sends
$\sig{k}{m, A, B}$; $B$ receives $\sig{k}{M, A, B}$. In the tagged pair the
two signed formats carry the constants $\mathsf{cr}$ and $\mathsf{att}$. The key $k$
is $sk_A$ in the shared-key pairs and a separate $sk'_A$ in the third.
The candidate assumptions are $\Non(sk_A)$, $\Non(sk_C)$, and $\Unq(N)$, and also
$\Non(sk'_A)$ for separate keys. $\chi_1$ is recent agreement of $\Pi_1$'s $B$ with $A$ on $N$, and
$\chi_2$ is agreement of $\Pi_2$'s $B$ with $A$ on $M$.

\subsection{CPSA Rows}
\label{app:protocols-cpsa}

Every CPSA row except the adopted \textsc{pkinit} fix uses a protocol
shipped in the \texttt{tst/} directory of CPSA~\numLCpsaVersion{}, with the protocol text unchanged except where noted.
None of the shipped files has a \texttt{defgoal} for these protocols, so we
wrote every goal, by the rule of Section~\ref{sec:eval-protocols}, and no
shipped \texttt{defgoal} was used. The
agreement goals fix the observed role's completed strand and its parameters
in the hypothesis and ask for a strand of the partner role that agrees on
the parameters named. They are non-injective. The atom names in $U$ are those
of Table~\ref{tab:results}.

\paragraph{\textnormal{\textsc{pkinit}}, flawed and fixed.} The flawed row is
\texttt{tst/pkinit.scm}, renamed. The fix is not shipped with CPSA. It is
our model of the adopted fix, written in the same style, in which the
server signs the session key together with a hash of the client's request.
The standard's fix uses a keyed checksum of the request; our model
simplifies it to the unkeyed \texttt{hash} of the whole request. In both
rows, the client declares its nonces $n_1, n_2$ \texttt{uniq-orig}, and the
server declares its keys $k, ak$, so their freshness is part of $\sigma$
and not of $U$. The goal is the client's authentication of the server,
that is, agreement with an \texttt{auth} strand on the client, the server,
and the session key, and the answers are for this goal only.
\par\smallskip\noindent\begin{tabular}{@{}l p{0.8\textwidth}@{}}
file & \texttt{cpsa/protocols/pkinit-flawed.scm}\\
$U$ & $\Non(sk(S))$ = \texttt{(non (privk as))}, $\Non(sk(C))$ = \texttt{(non (privk c))}\\
in $\sigma$ & \texttt{(uniq-orig n1 n2)} in \texttt{client}; \texttt{(uniq-orig k ak)} in \texttt{auth}\\
\end{tabular}
\begin{lstlisting}[language={},aboveskip=4pt]
(defgoal pkinit-flawed
  (forall ((c as name) (k skey) (z strd))
    (implies
      (and (p "client" z 2)
           (p "client" "c" z c)
           (p "client" "as" z as)
           (p "client" "k" z k)
           @TRUST@)
      (exists ((z-0 strd))
        (and (p "auth" z-0 1) (p "auth" "as" z-0 as)
             (p "auth" "k" z-0 k) (p "auth" "c" z-0 c))))))
\end{lstlisting}

\par\smallskip\noindent\begin{tabular}{@{}l p{0.8\textwidth}@{}}
file & \texttt{cpsa/protocols/pkinit-fix2.scm}\\
$U$ & $\Non(sk(S))$ = \texttt{(non (privk as))}, $\Non(sk(C))$ = \texttt{(non (privk c))}\\
in $\sigma$ & \texttt{(uniq-orig n1 n2)} in \texttt{client}; \texttt{(uniq-orig k ak)} in \texttt{auth}\\
\end{tabular}
\begin{lstlisting}[language={},aboveskip=4pt]
(defgoal pkinit-fix2
  (forall ((c as name) (k skey) (z strd))
    (implies
      (and (p "client" z 2)
           (p "client" "c" z c)
           (p "client" "as" z as)
           (p "client" "k" z k)
           @TRUST@)
      (exists ((z-0 strd))
        (and (p "auth" z-0 1) (p "auth" "as" z-0 as)
             (p "auth" "k" z-0 k) (p "auth" "c" z-0 c))))))
\end{lstlisting}

\paragraph{ISO 9798-3, flawed and corrected.} The flawed protocol
is shipped as \texttt{tst/isoreject.scm}. The corrected version,
\texttt{tst/isoreject-corrected.scm}, adds distinct tags to the two signed
messages. The goal is the responder's agreement with
the initiator on the names and the responder's nonce. In the flawed version,
$B$'s responder strand accepts a signature that $A$ produced as a responder
in a session with $B$ started by the adversary, since the two signed messages
have the same format.
\par\smallskip\noindent\begin{tabular}{@{}l p{0.8\textwidth}@{}}
file & \texttt{cpsa/protocols/isoreject.scm}\\
$U$ & $\Non(sk_A)$ = \texttt{(non (privk a))}, $\Non(sk_B)$ = \texttt{(non (privk b))}, $\Unq(n_A)$ = \texttt{(uniq na)}, $\Unq(n_C)$ = \texttt{(uniq nc)}\\
in $\sigma$ & \texttt{(uniq-orig nb)} in \texttt{resp}\\
\end{tabular}
\begin{lstlisting}[language={},aboveskip=4pt]
(defgoal isoreject
  (forall ((a b name) (na nb nc text) (z strd))
    (implies
      (and (p "resp" z 3) (p "resp" "a" z a) (p "resp" "b" z b)
           (p "resp" "na" z na) (p "resp" "nb" z nb) (p "resp" "nc" z nc)
           @TRUST@)
      (exists ((z-0 strd))
        (and (p "init" z-0 3) (p "init" "a" z-0 a) (p "init" "b" z-0 b)
             (p "init" "nb" z-0 nb))))))
\end{lstlisting}

\par\smallskip\noindent\begin{tabular}{@{}l p{0.8\textwidth}@{}}
file & \texttt{cpsa/protocols/isofix.scm}\\
$U$ & $\Non(sk_A)$ = \texttt{(non (privk a))}, $\Non(sk_B)$ = \texttt{(non (privk b))}, $\Unq(n_A)$ = \texttt{(uniq na)}, $\Unq(n_C)$ = \texttt{(uniq nc)}\\
in $\sigma$ & \texttt{(uniq-orig nb)} in \texttt{resp}\\
\end{tabular}
\begin{lstlisting}[language={},aboveskip=4pt]
(defgoal isofix
  (forall ((a b name) (na nb nc text) (z strd))
    (implies
      (and (p "resp" z 3) (p "resp" "a" z a) (p "resp" "b" z b)
           (p "resp" "na" z na) (p "resp" "nb" z nb) (p "resp" "nc" z nc)
           @TRUST@)
      (exists ((z-0 strd))
        (and (p "init" z-0 3) (p "init" "a" z-0 a) (p "init" "b" z-0 b)
             (p "init" "nb" z-0 nb))))))
\end{lstlisting}

\paragraph{Blanchet, flawed and fixed.} Both protocols come from
\texttt{tst/blanchet.scm}, with comments dropped. The goal is the secrecy
of $d$ from the responder's point of view, expressed as the file's analysis
of the responder with a listener for $d$, written as a goal whose
conclusion is \texttt{false}. In the flawed version, $A$ runs the initiator with a compromised peer, and the
adversary re-encrypts $A$'s signed key for $B$. The fix puts $B$'s name inside
the signature.
\par\smallskip\noindent\begin{tabular}{@{}l p{0.8\textwidth}@{}}
file & \texttt{cpsa/protocols/blanchet.scm}\\
$U$ & $\Non(sk_A)$ = \texttt{(non (privk a))}, $\Non(sk_B)$ = \texttt{(non (privk b))}, $\Unq(s)$ = \texttt{(uniq s)}, $\Unq(d)$ = \texttt{(uniq d)}\\
\end{tabular}
\begin{lstlisting}[language={},aboveskip=4pt]
(defgoal blanchet
  (forall ((a b name) (s skey) (d data) (z z-0 strd))
    (implies
      (and (p "resp" z 2) (p "resp" "a" z a) (p "resp" "b" z b)
           (p "resp" "s" z s) (p "resp" "d" z d)
           (p "" z-0 1) (p "" "x" z-0 d)
           @TRUST@)
      (false))))
\end{lstlisting}

\par\smallskip\noindent\begin{tabular}{@{}l p{0.8\textwidth}@{}}
file & \texttt{cpsa/protocols/blanchet-fixed.scm}\\
$U$ & $\Non(sk_A)$ = \texttt{(non (privk a))}, $\Non(sk_B)$ = \texttt{(non (privk b))}, $\Unq(s)$ = \texttt{(uniq s)}, $\Unq(d)$ = \texttt{(uniq d)}\\
\end{tabular}
\begin{lstlisting}[language={},aboveskip=4pt]
(defgoal blanchet-fixed
  (forall ((a b name) (s skey) (d data) (z z-0 strd))
    (implies
      (and (p "resp" z 2) (p "resp" "a" z a) (p "resp" "b" z b)
           (p "resp" "s" z s) (p "resp" "d" z d)
           (p "" z-0 1) (p "" "x" z-0 d)
           @TRUST@)
      (false))))
\end{lstlisting}

\paragraph{Otway--Rees, and with $A \neq B$.} \texttt{tst/or.scm}. The goal
is the responder's agreement with the initiator on $(a, b, s, m)$. Under the
full trust, every failing shape has $a = b$. $B$ runs the responder with
itself, and the server accepts $B$'s own block in the initiator's position,
because with $a = b$ the two blocks have the same form and key. These are runs
of the model, but we know of no published attack on Otway--Rees that uses
them. They arise because the goal does not exclude $a = b$. The second row
adds $a \neq b$ to the situation, using a rule \texttt{neq} that makes
\texttt{(fact neq x x)} contradictory, appended to the protocol, and
\texttt{(fact neq a b)} in the hypothesis. Its answer requires the
freshness of $B$'s nonce, since otherwise $B$ accepts the server's reply
for another session that reused $n_B$.
\par\smallskip\noindent\begin{tabular}{@{}l p{0.8\textwidth}@{}}
file & \texttt{cpsa/protocols/otway-rees.scm}\\
$U$ & $\Non(k_{AS})$ = \texttt{(non (ltk a s))}, $\Non(k_{BS})$ = \texttt{(non (ltk b s))}, $\Unq(M)$ = \texttt{(uniq m)}, $\Unq(n_B)$ = \texttt{(uniq nb)}\\
in $\sigma$ & \texttt{(uniq-orig k)} in \texttt{serv}\\
\end{tabular}
\begin{lstlisting}[language={},aboveskip=4pt]
(defgoal or
  (forall ((a b s name) (m nb text) (k skey) (z strd))
    (implies
      (and (p "resp" z 4) (p "resp" "a" z a) (p "resp" "b" z b)
           (p "resp" "s" z s) (p "resp" "m" z m) (p "resp" "nb" z nb)
           (p "resp" "k" z k)
           @TRUST@)
      (exists ((z-0 strd))
        (and (p "init" z-0 1) (p "init" "a" z-0 a) (p "init" "b" z-0 b)
             (p "init" "s" z-0 s) (p "init" "m" z-0 m))))))
\end{lstlisting}

\par\smallskip\noindent\begin{tabular}{@{}l p{0.8\textwidth}@{}}
file & \texttt{cpsa/protocols/otway-rees-neq.scm}\\
$U$ & $\Non(k_{AS})$ = \texttt{(non (ltk a s))}, $\Non(k_{BS})$ = \texttt{(non (ltk b s))}, $\Unq(M)$ = \texttt{(uniq m)}, $\Unq(n_B)$ = \texttt{(uniq nb)}\\
in $\sigma$ & \texttt{(uniq-orig k)} in \texttt{serv}\\
rules & \texttt{neq}\\
\end{tabular}
\begin{lstlisting}[language={},aboveskip=4pt]
(defgoal or
  (forall ((a b s name) (m nb text) (k skey) (z strd))
    (implies
      (and (p "resp" z 4) (p "resp" "a" z a) (p "resp" "b" z b)
           (p "resp" "s" z s) (p "resp" "m" z m) (p "resp" "nb" z nb)
           (p "resp" "k" z k) (fact neq a b)
           @TRUST@)
      (exists ((z-0 strd))
        (and (p "init" z-0 1) (p "init" "a" z-0 a) (p "init" "b" z-0 b)
             (p "init" "s" z-0 s) (p "init" "m" z-0 m))))))
\end{lstlisting}

\paragraph{Yahalom.} \texttt{tst/yahalom.scm}. The goal is the responder's
agreement with the initiator on the names, the server and the session key.
\par\smallskip\noindent\begin{tabular}{@{}l p{0.8\textwidth}@{}}
file & \texttt{cpsa/protocols/yahalom.scm}\\
$U$ & $\Non(k_{AS})$ = \texttt{(non (ltk a c))}, $\Non(k_{BS})$ = \texttt{(non (ltk b c))}, $\Unq(n_A)$ = \texttt{(uniq n-a)}, $\Unq(n_B)$ = \texttt{(uniq n-b)}\\
in $\sigma$ & \texttt{(uniq-orig k)} in \texttt{serv}\\
\end{tabular}
\begin{lstlisting}[language={},aboveskip=4pt]
(defgoal yahalom
  (forall ((a b c name) (n-a n-b text) (k skey) (z strd))
    (implies
      (and (p "resp" z 4) (p "resp" "a" z a) (p "resp" "b" z b) (p "resp" "c" z c)
           (p "resp" "n-a" z n-a) (p "resp" "n-b" z n-b) (p "resp" "k" z k)
           @TRUST@)
      (exists ((z-0 strd))
        (and (p "init" z-0 3) (p "init" "a" z-0 a) (p "init" "b" z-0 b)
             (p "init" "c" z-0 c) (p "init" "k" z-0 k))))))
\end{lstlisting}

\paragraph{Denning--Sacco.} \texttt{tst/denning-sacco.scm}, the shipped
version, which already includes both names in the signed message. The goal is
the responder's agreement with the initiator on the names and the session key.
CPSA's \texttt{pubk} is a function of the name, so the server's certificates
bind nothing that the key does not already bind.
\par\smallskip\noindent\begin{tabular}{@{}l p{0.8\textwidth}@{}}
file & \texttt{cpsa/protocols/denning-sacco.scm}\\
$U$ & $\Non(sk_A)$ = \texttt{(non (privk a))}, $\Non(sk_B)$ = \texttt{(non (privk b))}, $\Non(sk_{KS})$ = \texttt{(non (privk ks))}, $\Unq(K)$ = \texttt{(uniq k)}, $\Unq(t_A)$ = \texttt{(uniq ta)}\\
\end{tabular}
\begin{lstlisting}[language={},aboveskip=4pt]
(defgoal denning-sacco
  (forall ((a b ks name) (k skey) (ta text) (z strd))
    (implies
      (and (p "resp" z 1) (p "resp" "a" z a) (p "resp" "b" z b)
           (p "resp" "ks" z ks) (p "resp" "k" z k) (p "resp" "ta" z ta)
           @TRUST@)
      (exists ((z-0 strd))
        (and (p "init" z-0 3) (p "init" "a" z-0 a) (p "init" "b" z-0 b)
             (p "init" "k" z-0 k))))))
\end{lstlisting}

\paragraph{Neuman--Stubblebine.} \texttt{tst/neuman-stubblebine.scm}. The goal
is the responder's agreement with the initiator on the names and the session
key.
\par\smallskip\noindent\begin{tabular}{@{}l p{0.8\textwidth}@{}}
file & \texttt{cpsa/protocols/neuman-stubblebine.scm}\\
$U$ & $\Non(k_{AS})$ = \texttt{(non (ltk a ks))}, $\Non(k_{BS})$ = \texttt{(non (ltk b ks))}, $\Unq(r_A)$ = \texttt{(uniq ra)}, $\Unq(r_B)$ = \texttt{(uniq rb)}, $\Unq(t_B)$ = \texttt{(uniq tb)}\\
in $\sigma$ & \texttt{(uniq-orig k)} in \texttt{keyserver}\\
\end{tabular}
\begin{lstlisting}[language={},aboveskip=4pt]
(defgoal neuman-stubblebine
  (forall ((a b ks name) (ra rb tb text) (k skey) (z strd))
    (implies
      (and (p "resp" z 3) (p "resp" "a" z a) (p "resp" "b" z b)
           (p "resp" "ks" z ks) (p "resp" "ra" z ra) (p "resp" "rb" z rb)
           (p "resp" "tb" z tb) (p "resp" "k" z k)
           @TRUST@)
      (exists ((z-0 strd))
        (and (p "init" z-0 3) (p "init" "a" z-0 a) (p "init" "b" z-0 b)
             (p "init" "k" z-0 k))))))
\end{lstlisting}

\paragraph{Kerberos (simplified), and agreement on $K$.}
\texttt{tst/kerberos.scm}, protocol text unchanged, in both rows. The key
server declares the session key $k$ \texttt{uniq-orig}. The lifetime $l$ is
not a nonce and has no \texttt{uniq} atom. The first row has the goal the
rule produced: the initiator's agreement with the responder on the names
and the session key. Its answer is $\mathsf{false}$. The key server's two
tickets $\{t, l, k, b\}_{ltk(a,ks)}$ and $\{t, l, k, a\}_{ltk(b,ks)}$ have
the same form, and the reply $\{t'\}_k$ names no one. The adversary hands
the key server's message to an initiator $B$ with peer $a$ with the tickets
swapped, forwards $B$'s authenticator to $A$ acting as responder, and $A$'s
own responder strand answers $A$'s initiator. The initiator completes, but
the responder that answered is $A$ with peer $b$, so agreement on the names
fails. It fails at every trust, since no atom of $U$ rules the run out.
This is a property of the simplified model and not a known attack on
Kerberos~V5.
\par\smallskip\noindent\begin{tabular}{@{}l p{0.8\textwidth}@{}}
file & \texttt{cpsa/protocols/kerberos-names.scm}\\
$U$ & $\Non(k_{AS})$ = \texttt{(non (ltk a ks))}, $\Non(k_{BS})$ = \texttt{(non (ltk b ks))}, $\Unq(t)$ = \texttt{(uniq t)}, $\Unq(t')$ = \texttt{(uniq t-prime)}\\
in $\sigma$ & \texttt{(uniq-orig k)} in \texttt{keyserver}\\
\end{tabular}
\begin{lstlisting}[language={},aboveskip=4pt]
(defgoal kerberos
  (forall ((a b ks name) (t t-prime l text) (k skey) (z strd))
    (implies
      (and (p "init" z 4) (p "init" "a" z a) (p "init" "b" z b)
           (p "init" "ks" z ks) (p "init" "t" z t) (p "init" "t-prime" z t-prime)
           (p "init" "l" z l) (p "init" "k" z k)
           @TRUST@)
      (exists ((z-0 strd))
        (and (p "resp" z-0 2) (p "resp" "a" z-0 a) (p "resp" "b" z-0 b)
             (p "resp" "k" z-0 k))))))
\end{lstlisting}

The second row, added after the first results, weakens the conclusion to
the initiator's agreement with some responder strand on the session key
alone.
\par\smallskip\noindent\begin{tabular}{@{}l p{0.8\textwidth}@{}}
file & \texttt{cpsa/protocols/kerberos-key.scm}\\
$U$ & $\Non(k_{AS})$ = \texttt{(non (ltk a ks))}, $\Non(k_{BS})$ = \texttt{(non (ltk b ks))}, $\Unq(t)$ = \texttt{(uniq t)}, $\Unq(t')$ = \texttt{(uniq t-prime)}\\
in $\sigma$ & \texttt{(uniq-orig k)} in \texttt{keyserver}\\
\end{tabular}
\begin{lstlisting}[language={},aboveskip=4pt]
(defgoal kerberos
  (forall ((a b ks name) (t t-prime l text) (k skey) (z strd))
    (implies
      (and (p "init" z 4) (p "init" "a" z a) (p "init" "b" z b)
           (p "init" "ks" z ks) (p "init" "t" z t) (p "init" "t-prime" z t-prime)
           (p "init" "l" z l) (p "init" "k" z k)
           @TRUST@)
      (exists ((z-0 strd))
        (and (p "resp" z-0 2) (p "resp" "k" z-0 k))))))
\end{lstlisting}

\paragraph{Excluded candidates.} Wide-Mouthed Frog
(\texttt{tst/wide-mouth-frog.lsp}): at the four points of the lattice that
assume both long-term keys, CPSA does not terminate, since each server output
is again a valid server input with the roles swapped, and the file's own header
notes infinitely many shapes. Woo--Lam (\texttt{tst/woolam.scm}): the
responder role declares its long-term key \texttt{non-orig} and its nonce
\texttt{uniq-orig}, which leaves $|U| = 1$. The two-pass unilateral ISO 9798-3
exchange (\texttt{tst/unilateral.scm}): its text fields are not nonces, which
leaves $|U| = 1$. The Diffie--Hellman and station-to-station files require
an equational theory (A5). Needham--Schroeder with symmetric keys, Andrew's
secure RPC, and a TLS-like handshake are not shipped, and the CPSA versions of
Needham--Schroeder and Lowe's fix are covered by the analyser rows.

\end{document}